\documentclass[11pt]{article}
\newcounter{ourcount}
\usepackage{comment}
\usepackage{amsmath,amsfonts,amssymb,graphicx,epsfig,pdflscape,multirow,gensymb,soul}
\usepackage{ntheorem}
\usepackage[matrix,arrow,curve]{xy}
\numberwithin{equation}{section}
\graphicspath{{figpdf/}}
\usepackage[nosort]{cite}
\usepackage[usenames]{color}
\usepackage{pstricks}
\usepackage[backref=false]{hyperref}
\usepackage[capitalise,noabbrev]{cleveref}

\hypersetup{
colorlinks=true,
citecolor=red,
linkcolor=darkblue,
urlcolor=darkblue
}
\definecolor{darkblue}{rgb}{0,0,.8}
\definecolor{red}{rgb}{1,0,0}

\theorembodyfont{\itshape} 
\theoremheaderfont{\scshape}
\theoremstyle{plain}
\newtheorem{Proposition}{Proposition}[section]
\newtheorem{Lemma}[Proposition]{Lemma}
\newtheorem{Theorem}{Theorem}
\newtheorem{Corollary}[Proposition]{Corollary}

\numberwithin{equation}{section}

\newcommand{\nc}{\newcommand}

\crefname{Proposition}{Proposition}{Propositions}
\crefname{Lemma}{Lemma}{Lemmas}

\nc{\ir}{\mathrm{i}}
\nc{\dd}{\mathrm{d}} 
\nc{\eE}{\mathsf{e}}
\renewcommand{\ge}{\geqslant}
\renewcommand{\le}{\leqslant}
\renewcommand{\geq}{\geqslant}
\renewcommand{\leq}{\leqslant}
\renewcommand{\mod}{\textrm{ mod }}
\nc{\mmod}{\,\mathrm{mod}\,}
\nc{\ba}{{\boldsymbol a}}
\nc{\pa}{\mathsf{a}}
\nc{\pb}{\mathsf{b}}
\nc{\pc}{\mathsf{c}}
\nc{\pd}{\mathsf{d}}
\nc{\ppi}{\mathsf{i}}
\nc{\ppj}{\mathsf{j}}
\nc{\g}{\mathfrak{g}}
\nc{\qq}{\mathfrak{q}}
\nc{\qqb}{\bar\mathfrak{q}}

\nc{\Cbb}{\mathbb{C}}
\nc{\Nbb}{\mathbb{N}}
\nc{\Rbb}{\mathbb{R}}
\nc{\Zbb}{\mathbb{Z}}
\nc{\wh}{\widehat}
\nc{\wt}{\widetilde}

\nc{\proof}{{\scshape Proof.\ }} 				
\nc{\eproof}{{\hfill \rule{0.5em}{0.5em}\medskip}}
\nc{\be}{\begin{equation}}
\nc{\ee}{\end{equation}}

\newcommand{\nn}{\nonumber}
\nc{\repB}{\mathsf{B}}
\nc{\repBa}{\mathsf{B}\mathrm{a}}
\nc{\repBd}{\mathsf{B}\mathrm{d}}
\nc{\repE}{\mathsf{E}}
\nc{\repI}{\mathsf{I}}
\nc{\repK}{\mathcal{K}}
\nc{\repL}{\mathsf{L}}
\nc{\repM}{\mathsf{M}}
\nc{\repN}{\mathsf{N}}
\nc{\repQ}{\mathsf{Q}}
\nc{\repR}{\mathsf{R}}
\nc{\repS}{\mathsf{S}}
\nc{\repV}{\mathsf{V}}
\nc{\repW}{\mathsf{W}}
\nc{\repX}{\mathsf{X}}
\nc{\Verma}{\mathcal{V}}
\nc{\tl}{\mathsf{TL}}
\nc{\eptl}{\mathsf{\mathcal EPTL}}
\nc{\bb}{\bar{b}}
\nc{\mb}{\boldsymbol{m}}
\nc{\Fb}{\overline{F}}
\nc{\Tb}{\boldsymbol{T}}
\nc{\Db}{\boldsymbol{D}}
\nc{\id}{\mathbf{1}}
\nc{\Op}{\mathcal{O}}
\nc{\timesf}{\times_{\!f}}
\nc{\tr}{\mathrm{tr}}
\nc{\eps}{\varepsilon}
\nc{\chib}{\raisebox{0.25ex}{$\bar\chi$}}
\nc{\chit}{\raisebox{0.25ex}{$\chi$}}
\nc{\chiK}[1]{\chit_{#1}} 
\nc{\chiKb}[1]{\bar\chit_{#1}} 
\nc{\cL}{\mbox{$\mathcal L$}}
\nc{\cLa}{\mbox{$\mathcal L\mathrm{a}$}}
\nc{\cLd}{\mbox{$\mathcal L\mathrm{d}$}}
\nc{\cLaa}{\mbox{$\Lambda\mathrm{a}$}}
\nc{\cLad}{\mbox{$\Lambda\mathrm{d}$}}
\nc{\cLk}[1]{\mbox{$\mathcal L^{\tinyx {#1}}$}}
\nc{\cLkd}[1]{\mbox{$\mathcal L\mathrm{d}^{\tinyx {#1}}$}}
\nc{\tinyL}{\textrm{\tiny$(\ell)$}}
\nc{\tinyx}[1]{\textrm{\tiny$(#1)$}}
\nc{\tinyz}[1]{\textrm{\tiny$[#1]$}}

\newcommand{\ket}[1]{| {#1} \rangle}

\newcommand{\kket}[1]{|\hspace{-0.03cm}| {#1} \rangle\!\rangle}
\newcommand{\aver}[1]{\langle {#1} \rangle}
\newcommand{\smallaver}[1]{\langle {#1} \rangle}

\newcommand{\aaver}[1]{\langle\!\langle {#1} \rangle\!\rangle}

\definecolor{lightblue}{rgb}{.7,.7,1}
\definecolor{lightestblue}{rgb}{.95,.95,1}
\definecolor{lightlightblue}{rgb}{.9,.9,1}
\definecolor{midblue}{rgb}{.7,.7,1}
\definecolor{babyblue}{rgb}{0.2, 0.75, 1}
\definecolor{purple}{rgb}{0.5,0,0.5}
\definecolor{peach}{rgb}{1, 0.854902, 0.72549}
\definecolor{pastelgreen}{rgb}{0.467, 0.867, 0.467}
\newrgbcolor{darkgreen}{0., 0.733333, 0.0621042}
\definecolor{darkpink}{rgb}{1,.2,1}
\definecolor{pastelpink}{rgb}{1, 0.82, 0.863}

\nc{\elegant}{1.5pt}
\nc{\moyen}{1.0pt}
\nc{\mince}{0.5pt}

\def\facegrid#1#2{
\psframe[fillstyle=solid,fillcolor=lightlightblue,linewidth=0pt]#1#2
\psgrid[gridlabels=0pt,subgriddiv=1]#1#2}

\def\loopa{
\psframe[linewidth=.25pt](0,0)(1,1)
\psarc[linewidth=1.5pt,linecolor=blue](1,0){.5}{90}{180}
\psarc[linewidth=1.5pt,linecolor=blue](0,1){.5}{-90}{0}
}
\def\loopb{
\psframe[linewidth=.25pt](0,0)(1,1)
\psarc[linewidth=1.5pt,linecolor=blue](0,0){.5}{0}{90}
\psarc[linewidth=1.5pt,linecolor=blue](1,1){.5}{180}{270}
}

\begin{document}

\topmargin -5mm
\oddsidemargin 5mm

\makeatletter 
\newcommand\Larger{\@setfontsize\semiHuge{19.00}{21.78}}
\makeatother 

\vspace*{-2cm}
\setcounter{page}{1}
\vspace{22mm}

\begin{center}
\Larger\bf Temperley--Lieb modules and local operators \\
for critical ADE models
\end{center}

\vspace{0.6cm}

\begin{center}
{\vspace{-5mm}\Large Yacine Ikhlef$^{\,\dagger}$ \qquad Alexi Morin-Duchesne}
\\[.4cm]
{\em $^{\dagger}$Sorbonne Universit\'{e}, CNRS, Laboratoire de Physique Th\'{e}orique}
\\
{\em et Hautes \'{E}nergies, LPTHE, F-75005 Paris, France}
\\[.5cm] 
{\tt ikhlef\,@\,lpthe.jussieu.fr \qquad alexi.morin.duchesne\,@\,gmail.com}
\end{center}

\vspace{10pt}

%
%
 
\begin{abstract}

We investigate critical restricted solid-on-solid models associated to Dynkin diagrams of type $A$, $D$ and $E$, with fixed, periodic and twisted periodic boundary conditions. These models are endowed with an action of the diagrams of the Temperley--Lieb category. For each model, we obtain the decomposition of the state space as a direct sum of irreducible modules over the Temperley--Lieb algebra $\tl_N(\beta)$ or its periodic incarnation $\eptl_N(\beta)$. This allows us to recover the known conformal partition functions for these models in the continuum scaling limit.\medskip

For each irreducible factor arising in the decompositions, we define an associated local operator on the lattice, which behaves like a connectivity operator. Using knowledge from the Temperley--Lieb representation theory at roots of unity, we show that these operators satisfy certain linear difference relations, which are lattice counterparts of the singular-vector relations in conformal field theory.
\end{abstract}
%
%
\tableofcontents
\clearpage

%
\section{Introduction}
%

Solid-on-solid (SOS) models are a class of exactly solvable models of statistical mechanics. They are defined by height variables living on the vertices of a square or rhombic lattice, with an interaction energy associated to the height configuration around each face. In the seminal papers \cite{ABF84,BF85}, a series of \emph{restricted} solid-on-solid (RSOS) models was introduced, where each height takes the values $\{1,2,\dots,n\}$, heights sitting on neighbouring vertices differ by $+1$ or $-1$, and the face interaction satisfies the star-triangle relation. The authors obtained exact formulas for local height probabilities in these models in terms of elliptic theta functions and generalized Rogers--Ramanujan $q$-series. Soon after, Huse realised \cite{Huse84} that the critical exponents associated to these local height probabilities coincide with the conformal dimensions of the unitary minimal models of Conformal Field Theory (CFT) \cite{BPZ84,FQS84}.\medskip

In the RSOS models of \cite{ABF84,BF85}, a coupling constant drives the system through a continuous phase transition. In \cite{PasquierADE87}, Pasquier focused on the unitary cases 
and showed that, at the critical point, these models admit a graph expansion in terms of a dense loop model, whose underlying symmetry is encoded in the Temperley--Lieb algebra \cite{TL71}, with the loop weight $\beta$ fixed to $2\cos(\pi/(n+1))$. He then constructed the \emph{ADE lattice models}, a series of critical RSOS models, whose heights take values on the nodes of a Dynkin diagram of type $A_n$, $D_n$ or $E_n$, and whose face interactions are designed to obey a similar graph expansion --- see also \cite{Pearce90} for an early review. The RSOS models of \cite{ABF84} simply correspond to the case of $A_n$. In \cite{PasquierOpContent87}, Pasquier defined a collection of local operators in the ADE lattice models, and determined their composition rules and their scaling dimensions. Based on these results, he argued that, in the continuum scaling limit, each ADE lattice model scales to the corresponding CFT in the ADE classification of modular invariant minimal models \cite{CIZ87}. This argument was also supported by the derivation of the torus partition functions \cite{PasquierZTorus87} through a mapping to the six-vertex model, assuming that the latter scales to a compact scalar field. This result in the continuum scaling limit motivated the intuition \cite{PasquierOpContent87,SZ90} that, for each ADE lattice model, the operator algebra translates into a complete decomposition of the state space in terms of indecomposable Temperley--Lieb modules, which can be deduced from the content of the corresponding minimal CFT.\medskip

Following this intense activity on solvable lattice models, mathematical physicists, algebraists and combinatorists started to unravel the representation theory of the Temperley--Lieb algebra $\tl_N(\beta)$ and its many variants. As nicely reviewed in \cite{RSA14}, these studies have a wide range of applications, including in statistical mechanics \cite{M91}, knot theory \cite{J83}, quantum spin chains \cite{GRS13}, and more. The representation theory turns out to be significantly harder in the cases where the loop weight is of the form $\beta=-q-q^{-1}$ with $q$ a root of unity. In this context, the approach introduced by Graham and Lehrer \cite{GL98}, based on the concept of cellular algebras \cite{GL96}, has the clear advantage of unifying a number of results concerning the algebra $\tl_N(\beta)$ and its periodic version, which we here denote as $\eptl_N(\beta)$. In this approach, one considers not only diagrams of non-intersecting curves on a system of $N$ inner and $N$ outer nodes, but its generalisation to the Temperley--Lieb \emph{category}, whereby the diagrams are the morphisms of the category, and have $N$ inner nodes and $N'$ outer nodes. Thus, in the framework of \cite{GL98}, the study of \emph{families of modules} \cite{IMD25} over the sequence of Temperley--Lieb algebras with varying numbers $N$ of nodes is natural. One of their most relevant results for the application to ADE lattice models is the structure of the {\it standard modules}, also called \emph{cell modules} in \cite{GL98}. These results cover all values of $q$, including when it is a root of unity.\medskip

Our present work is a systematic study of the relation between the critical ADE lattice models and the Temperley--Lieb algebras. We consider the transfer matrix of the ADE lattice model, acting on the space of states with fixed, periodic or twisted periodic boundary conditions. We prove from first principles that this space has the structure of a family of modules over the Temperley--Lieb category, and we determine its decomposition over the algebras' indecomposable modules. The models with fixed boundary conditions are described by the ordinary Temperley--Lieb category, whereas the models with periodic or twisted periodic boundary conditions involve the enlarged periodic (or {\it affine}) Temperley--Lieb category. As is expected for rational models, the only modules that arise in these decompositions are irreducible modules, in contrast with the case of logarithmic models where reducible yet indecomposable modules are common. From these decompositions, we obtain two results: (i) we check that the cylinder and torus partition functions scale to their known formulas in the minimal CFTs \cite{CIZ87,CardyBCFT89,BPZ98},
and (ii) we construct operators that generalise Pasquier's construction of local operators. For the periodic and twisted periodic cases, for each irreducible module $\repQ_{k,x}(N)$ arising in the decomposition of the state space, we construct a local operator living on a closed curve visiting $2k$ vertices of the lattice. The operators constructed by Pasquier correspond to the case $k=0$, and are defined directly in terms of the adjacency matrix of the Dynkin diagram. For the twisted case, our construction produces certain quasi-local operators whose insertion points are the endpoints of defect lines. For the boundary case, we instead construct a boundary operator associated to each module $\repQ_k(N)$ arising in the decomposition of the state space. In all the cases, we derive a set of discrete linear equations satisfied by the corresponding local operators. This is a major achievement of this paper, as these linear equations are the lattice analogs of the singular-vector relations satisfied by primary fields in minimal CFTs. More work will be needed to show that the newly derived lattice relations converge to the Virasoro singular-vector relations in the scaling limit.\medskip

The paper is organised as follows. \cref{sec:ADE} reviews the definition of the critical ADE lattice models, with some extra information given in \cref{app:Samu}. \cref{sec:EPTL} describes the diagram spaces associated to the Temperley--Lieb category, and the properties of its modules for $q$ a root of unity. \cref{sec:scaling.limit} describes the continuum scaling limit and the characters associated to these modules. In \cref{sec:boundaryADE,sec:periodicADE}, we investigate the action of the diagrams of the Temperley--Lieb category on the state space of the ADE models, for fixed boundary conditions and for periodic boundary conditions (both twisted and untwisted), respectively. In these sections, we obtain the modules decompositions and compute the corresponding modular covariant partition functions, with some details relegated to \cref{app:E678,app:dims,app:insertion,app:Z}. In \cref{sec:correlators}, we discuss Pasquier's local operators, construct its generalisations, and derive the relations satisfied by their correlation functions. Concluding remarks are presented in \cref{sec:conclusion}.

%
\section{Definition of the ADE lattice models}
\label{sec:ADE}
%

In this section, we review the definition of the critical ADE lattice models.

\subsection{Dynkin diagrams and automorphisms}\label{sec:Dynkin}

Let $\mathcal G$ be the Dynkin diagram with $n$ nodes associated to the Lie algebra $\g=A_n$, $D_n$, $E_6$, $E_7$ or $E_8$, and $A$ be the adjacency matrix of $\mathcal G$. These Dynkin diagrams are given in \cref{fig:Dynkin}. The eigenvalues of $A$ are of the form
\begin{equation}
\label{eq:beta.RSOS}
\beta_\mu = 2\cos \frac{\pi m_\mu}{p'} \,,
\end{equation}
where $p'$ is the {\it Coxeter number} associated to the algebra $\g$, $\mu$ is an {\it index} in $\{1,2,\dots,n\}$, and $m_\mu$ is an integer {\it exponent} satisfying $1 \leq m_\mu \leq p'-1$. The values of $p'$ and the possible exponents $m_\mu$ are listed in \cref{tab:Dynkin.data}.\medskip

The matrix $A$ is real symmetric and can thus be diagonalised in an orthonormal basis $S_{a\mu}$. The eigenvalue equation and the orthonormality conditions read
\begin{equation}
\sum_{b=1}^n A_{ab} S_{b\mu} = \beta_\mu \, S_{a\mu} \,,
\end{equation}
and
\begin{equation}
\sum_{a=1}^n S_{a\mu} \, S_{a\nu}^* = \delta_{\mu\nu} \,,
\qquad 
\sum_{\mu=1}^n S_{a\mu} \, S_{b\mu}^* = \delta_{ab}\,.
\end{equation}
For each algebra $\g$ in the ADE series, we give in \cref{app:Samu} the components $S_{a\mu}$ of an orthonormal basis of eigenstates.
\medskip

\begin{figure}[t] 
\centering
\begin{tabular}{ll}
\begin{tabular}{l}
$
A_n: \quad
\begin{pspicture}[shift=-0.9](0,-1)(6,1)
\multiput(0,0)(1,0){4}{\psline{-}(0,0)(1,0)}
\multiput(0,0)(1,0){5}{\pscircle[linewidth=1.5pt,linecolor=black,fillstyle=solid,fillcolor=white](0,0){.175}}
\rput(0,-0.375){\scriptsize$1$}\rput(1,-0.375){\scriptsize$2$}\rput(2,-0.375){\scriptsize$3$}\rput(3,-0.375){\scriptsize$\dots$}\rput(4,-0.375){\scriptsize$n$}
\end{pspicture}
$
\\
\\
$
D_n: \quad
\begin{pspicture}[shift=-1.9](0,-2)(5,1)
\multiput(0,0)(1,0){4}{\psline{-}(0,0)(1,0)}
\psline{-}(4,0)(4.7,0.7)\psline{-}(4,0)(4.7,-0.7)
\multiput(0,0)(1,0){5}{\pscircle[linewidth=1.5pt,linecolor=black,fillstyle=solid,fillcolor=white](0,0){.175}}
\rput(0,-0.375){\scriptsize$1$}\rput(1,-0.375){\scriptsize$2$}\rput(2,-0.375){\scriptsize$3$}\rput(3,-0.375){\scriptsize$\dots$}\rput(3.9,-0.375){\scriptsize$n\!-\!2$}
\pscircle[linewidth=1.5pt,linecolor=black,fillstyle=solid,fillcolor=white](4.7,0.7){.175}
\pscircle[linewidth=1.5pt,linecolor=black,fillstyle=solid,fillcolor=white](4.7,-0.7){.175}
\rput(4.7,1.07){\scriptsize$n$}
\rput(4.7,-1.10){\scriptsize$n\!-\!1$}
\end{pspicture}
$
\end{tabular}
\qquad
\begin{tabular}{l}
$
E_6: \quad
\begin{pspicture}[shift=-0.9](0,-1)(6,1.5)
\multiput(0,0)(1,0){4}{\psline{-}(0,0)(1,0)}\psline{-}(2,0)(2,1)
\multiput(0,0)(1,0){5}{\pscircle[linewidth=1.5pt,linecolor=black,fillstyle=solid,fillcolor=white](0,0){.175}}\pscircle[linewidth=1.5pt,linecolor=black,fillstyle=solid,fillcolor=white](2,1){.175}
\rput(0,-0.375){\scriptsize$1$}\rput(1,-0.375){\scriptsize$2$}\rput(2,-0.375){\scriptsize$3$}\rput(3,-0.375){\scriptsize$4$}\rput(4,-0.375){\scriptsize$5$}\rput(2,1.37){\scriptsize$6$}
\end{pspicture}
$
\\
$
E_7: \quad
\begin{pspicture}[shift=-0.9](0,-1)(6,1.5)
\multiput(0,0)(1,0){5}{\psline{-}(0,0)(1,0)}\psline{-}(2,0)(2,1)
\multiput(0,0)(1,0){6}{\pscircle[linewidth=1.5pt,linecolor=black,fillstyle=solid,fillcolor=white](0,0){.175}}\pscircle[linewidth=1.5pt,linecolor=black,fillstyle=solid,fillcolor=white](2,1){.175}
\rput(0,-0.375){\scriptsize$1$}\rput(1,-0.375){\scriptsize$2$}\rput(2,-0.375){\scriptsize$3$}\rput(3,-0.375){\scriptsize$4$}\rput(4,-0.375){\scriptsize$5$}\rput(5,-0.375){\scriptsize$6$}\rput(2,1.37){\scriptsize$7$}
\end{pspicture}
$
\\
$
E_8: \quad
\begin{pspicture}[shift=-0.9](0,-1)(6,1.5)
\multiput(0,0)(1,0){6}{\psline{-}(0,0)(1,0)}\psline{-}(2,0)(2,1)
\multiput(0,0)(1,0){7}{\pscircle[linewidth=1.5pt,linecolor=black,fillstyle=solid,fillcolor=white](0,0){.175}}\pscircle[linewidth=1.5pt,linecolor=black,fillstyle=solid,fillcolor=white](2,1){.175}
\rput(0,-0.375){\scriptsize$1$}\rput(1,-0.375){\scriptsize$2$}\rput(2,-0.375){\scriptsize$3$}\rput(3,-0.375){\scriptsize$4$}\rput(4,-0.375){\scriptsize$5$}\rput(5,-0.375){\scriptsize$6$}\rput(6,-0.375){\scriptsize$7$}\rput(2,1.37){\scriptsize$8$}
\end{pspicture}
$
\end{tabular}
\end{tabular}
\caption{The Dynkin diagrams for the ADE series.}
\label{fig:Dynkin}
\end{figure}
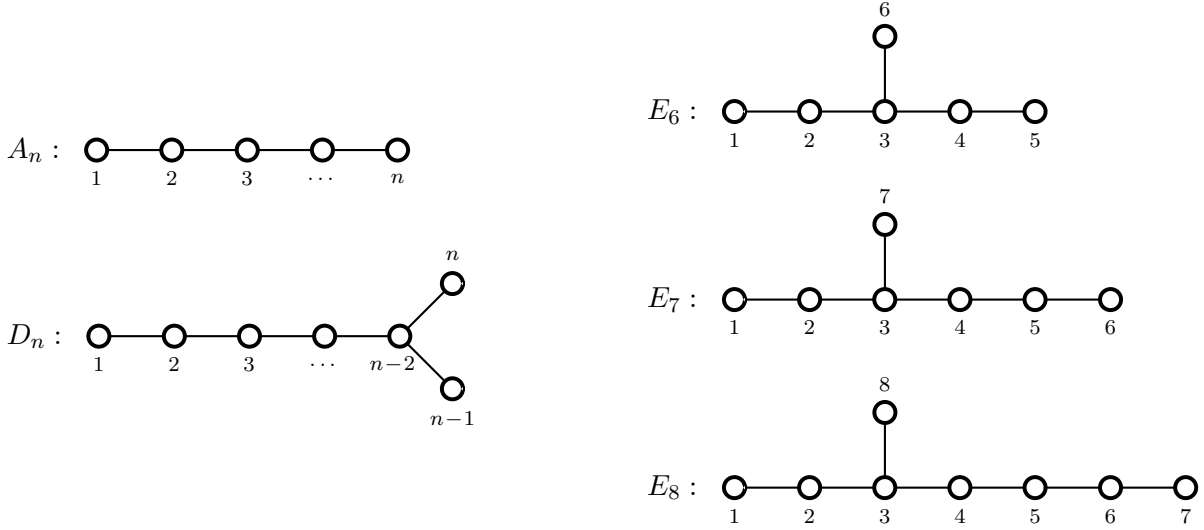

\begin{table}[t]
\begin{center}
\begin{tabular}{c|c|c}
Algebra $\mathfrak{\g}$ & Coxeter number $p'$ & Exponents $m_1,m_2,\dots,m_n$\\ 
\hline
$A_n$ & $n+1$ & $1,2,\dots,n$
\\
$D_n$ & $2n-2$ & $1,3,\dots,2n-3,n-1$
\\
$E_6$ & $12$ & $1,4,5,7,8,11$
\\
$E_7$ & $18$ & $1,5,7,9,11,13,17$ 
\\ 
$E_8$ & $30$ & $1,7,11,13,17,19,23,29$
\end{tabular}
\end{center}
\caption{Coxeter numbers and exponents for the ADE series.}
\label{tab:Dynkin.data}
\end{table}

\begin{table}[t]
\begin{center}
\begin{tabular}{c|c|c}
$(\g,K)$ & Number of fixed points $\wt{n}$ & Exponents $\wt m_1,\wt m_2,\dots, \wt m_{\wt n}$ \\ 
\hline
$(A_n,R)$ with $n$ odd & $1$ & $(n+1)/2$ \\
$(D_n,P_{(n-1,n)})$ & $n-2$ & $2,4,\dots,2n-4$ \\
$(D_4,P_{(134)})$ & $1$ & $3$ \\
$(E_6,P_{(15)(24)})$ & $2$ & $4,8$
\end{tabular}
\end{center}
\caption{The exponents for the subgraph of fixed points of the pairs $(\g,K)$.}
\label{tab:Dynkin.fixed.points}
\end{table}

If the graph $\mathcal G$ admits an automorphism $K$, namely a bijection $a \mapsto K(a)$ from $\mathcal G$ to itself such that each pair of adjacent vertices is mapped to a pair of adjacent vertices, then the matrix $K_{ab}=\delta_{K(a),b}$ commutes with $A$. Thus, the orthogonal basis can be chosen to diagonalise both $A$ and $K$: this is the case for the eigenstates of \cref{app:Samu}.
We denote by $\kappa_\mu$ the eigenvalue of $K$ associated to the eigenstate of $A$ with eigenvalue $\beta_\mu$. We have
\begin{equation}
S_{K(a)\mu} = \sum_{b=1}^n \delta_{K(a),b} \, S_{b\mu}
= (KS)_{a\mu} = \kappa_\mu \, S_{a\mu} \,.
\end{equation}
For the ADE series, the non-trivial homomorphisms are as follows:
\begin{itemize}
\item[1)] For all $n$, the Dynkin diagram of $A_n$ has a non-trivial reflection automorphism $K = R$ that maps $a \mapsto n+1 - a$. The eigenvalues of $R$ are $\kappa_\mu = (-1)^{\mu+1}$.
\item[2)] For all $n$, the Dynkin diagram of $D_n$ has the permutation automorphism $K = P_{(n-1,n)}$ that permutes the nodes $n-1$ and $n$ and leaves the other nodes unchanged. The eigenvalues of $P_{(n-1,n)}$ are $\kappa_\mu = 1$ for $\mu=1,2, \dots, n-1$, and $\kappa_n = -1$.
\item[3)] The Dynkin diagram of $D_4$ is invariant under the symmetric group $S_3$ of permutations of the nodes $1$, $3$ and $4$. This group has three conjugacy classes, with representing elements $\id$, $P_{(34)}$ and $P_{(134)}$. For $P_{(134)}$, the eigenvalues are $\kappa_\mu = 1,\omega,1,\omega^2$ for $\mu = 1,2,3,4$.
\item[4)] The Dynkin diagram of $E_6$ has the automorphism $K=P_{(15),(24)}$. Its eigenvalues are $\kappa_\mu = 1$ for $\mu=1,3,4,6$ and $\kappa_\mu=-1$ for $\mu = 2,5$.
\end{itemize}
The graphs $E_7$ and $E_8$ have no non-trivial automorphisms.
\medskip

The fixed points under $K$ of the Dynkin diagram $\mathcal G$ form a subgraph~$\wt{\mathcal G}$ of~$\mathcal G$. In each case, this subgraph is the Dynkin diagram of a Lie algebra $A_{\wt n}$, where $\wt n$ is the cardinality of $\wt{\cal G}$, namely the number of fixed points. We denote by $\wt A$ the adjacency matrix associated to $\wt{\cal G}$, and by $(\alpha_1,\alpha_2,\dots,\alpha_{\wt n})$ the nodes of $\wt{\cal G}$, so that $\alpha_j$ and $\alpha_{j+1}$ are adjacent for each $j$. The eigenvalues and eigenvector components of $\wt A$ read
\begin{equation} \label{eq:St.a.mu}
\wt{S}_{\alpha_j,\nu} = \sqrt{\frac{2}{\wt{n}+1}} \, \sin \left( \frac{\pi j\wt{m}_\nu}{p'}\right) \,,
\qquad \wt{\beta}_\nu =2 \cos \left( \frac{\pi \wt{m}_\nu}{p'}\right) \,,
\end{equation}
where $p'$ is the Coxeter number of the original graph $\cal G$, $\nu \in \{1,2,\dots, \wt n\}$, and $\wt{m}_\nu$ is an integer satisfying $1\leq \wt{m}_\nu\leq p'-1$. These eigenvectors form an orthonormal basis of $\Rbb^{\wt n}$. \cref{tab:Dynkin.fixed.points} lists this data for each pair $(\g,K)$ for which the subgraph $\wt{\cal G}$ is not empty.\medskip

\subsection{Interaction round-a-face models and transfer matrices}\label{sec:lattice.RSOS}

The critical ADE model $(\mathfrak{\g},\mu)$ is an interaction round-a-face model defined by a choice of an algebra $\mathfrak{\g}$ in the ADE series, and an index $\mu \in \{1, 2, \dots, n\}$, with the condition that the corresponding eigenvector of~$A$ has only non-vanishing entries $S_{a\mu}$. By the Perron--Froebenius theorem, the leading eigenvalue~$\beta_1$ has multiplicity one, and the corresponding eigenvector has real positive entries: $S_{a1}>0$. Thus the value $\mu=1$ is always a valid choice. The explicit formulas for the eigenvector components in \cref{app:Samu} reveal that $S_{a\mu} \neq 0$ for all $a$ if and only if $\mathrm{gcd}(m_\mu,p') = 1$.\medskip

We consider a square lattice $\Sigma$ embedded on a connected surface. This surface may have an arbitrary genus, as well as a boundary $B$ that may consist of multiple components 
$\ell$. Each vertex of $\Sigma$ is assigned a height $a \in \mathcal G$, with the constraint that the heights on neighbouring vertices are adjacent on $\mathcal G$. The Boltzmann weight of a face $f$ of $\Sigma$ is 
\begin{equation}
W
\left(
\psset{unit=0.5cm}
\begin{pspicture}[shift=-0.3cm](-0.4,-0.3)(1.4,1.3)
\facegrid{(0,0)}{(1,1)}
\psarc(0,0){0.25}{0}{90}
\rput(-0.2,1.2){$_a$}
\rput(1.2,1.2){$_b$}
\rput(1.2,-0.2){$_c$}
\rput(-0.2,-0.2){$_d$}
\end{pspicture}
\right)
= \, \delta_{bd} + \, \delta_{ac}\, \frac{S_{d\mu}}{S_{a\mu}} \,,
\end{equation}
where $a,b,c,d$ are the four heights attached to the corners of $f$.\medskip

Each boundary component $\ell$ consists in a closed curve that visits $M_\ell$ nodes of the lattice, for some non-negative even integer $M_\ell$. Let $u_\ba = \ket{a_0, a_1, \dots, a_{M_\ell}}$ be the state made of the heights on this boundary component in a given configuration, with $a_{M_\ell}=a_0$. We assign to each such configuration $u_\ba$ a boundary Boltzmann weight $\alpha_{\ba(\ell)}$. Some of these weights may be zero, thus restricting the possible boundary states. The resulting partition function for the critical ADE model $(\g,\mu)$ on the lattice $\Sigma$ is
\be
Z(\Sigma) = \sum_{\sigma} \prod_{\ell\, \in B}\alpha_{\pa(\ell)}\prod_{f \in \Sigma} 
W_f\left(
\psset{unit=0.5cm}
\begin{pspicture}[shift=-0.3cm](-0.4,-0.3)(1.4,1.3)
\facegrid{(0,0)}{(1,1)}
\psarc(0,0){0.25}{0}{90}
\rput(-0.2,1.2){$_a$}
\rput(1.2,1.2){$_b$}
\rput(1.2,-0.2){$_c$}
\rput(-0.2,-0.2){$_d$}
\end{pspicture}
\right).
\ee

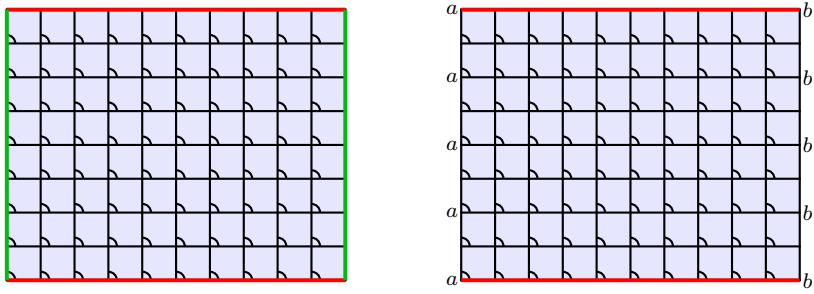
\begin{figure}[t]
\begin{equation*}
\psset{unit=0.45cm}
\begin{pspicture}(0,0)(10,8)
\facegrid{(0,0)}{(10,8)}
\multiput(0,0)(0,1){8}{\multiput(0,0)(1,0){10}{\psarc(0,0){0.25}{0}{90}}}
\psline[linewidth=1.5pt,linecolor=red](0,0)(10,0)
\psline[linewidth=1.5pt,linecolor=red](0,8)(10,8)
\psline[linewidth=1.5pt,linecolor=darkgreen](0,0)(0,8)
\psline[linewidth=1.5pt,linecolor=darkgreen](10,0)(10,8)
\end{pspicture}
\qquad
\qquad
\begin{pspicture}(0,0)(10,8)
\facegrid{(0,0)}{(10,8)}
\multiput(0,0)(0,1){8}{\multiput(0,0)(1,0){10}{\psarc(0,0){0.25}{0}{90}}}
\psline[linewidth=1.5pt,linecolor=red](0,0)(10,0)
\psline[linewidth=1.5pt,linecolor=red](0,8)(10,8)
\multiput(0,0)(0,2){5}{\rput(-0.25,0){$_a$}}
\multiput(10,0)(0,2){5}{\rput(0.25,0){$_b$}}
\end{pspicture}
\end{equation*}
\caption{The lattices corresponding to the $8\times 10$ torus and cylinder. The red and green segments indicate periodic boundary conditions.}
\label{fig:lattices}
\end{figure}

We now consider lattices consisting in rectangular array of $M\times N$ square faces, with toroidal and cylindrical boundary conditions, where the width $M$ and the height $N$ are non-negative even integers. These lattices are illustrated in \cref{fig:lattices}. The corresponding partition functions can be computed using transfer matrices. To compute the torus partition function, we use the single-row transfer matrix $T_N$, whose matrix elements are the Boltzmann weights associated to one row of the lattice
\be
\label{eq:Tab}
(T_N)_{\boldsymbol{a},\boldsymbol{b}} = W\left(
\psset{unit=0.5cm}
\begin{pspicture}[shift=-0.5cm](-0.4,-0.6)(8.4,1.6)
\facegrid{(0,0)}{(8,1)}
\multiput(0,0)(1,0){8}{\psarc(0,0){0.25}{0}{90}}
\rput(0,-0.3){$_{a_0}$}\rput(0,1.4){$_{b_0}$}
\rput(1,-0.3){$_{a_1}$}\rput(1,1.4){$_{b_1}$}
\rput(2,-0.3){$_{...}$}\rput(2,1.4){$_{...}$}
\rput(8,-0.3){$_{a_N}$}\rput(8,1.4){$_{b_N}$}
\end{pspicture}
\right)
=
\prod_{j=0}^{N-1} 
\,W
\left(
\psset{unit=0.5cm}
\begin{pspicture}[shift=-0.5cm](-0.7,-0.6)(1.9,1.6)
\facegrid{(0,0)}{(1,1)}
\psarc(0,0){0.25}{0}{90}
\rput(-0.2,1.4){$_{b_{j}}$}
\rput(1.3,1.4){$_{b_{j+1}}$}
\rput(1.3,-0.4){$_{a_{j+1}}$}
\rput(-0.2,-0.4){$_{a_{j}}$}
\end{pspicture}
\right)\,,
\ee
where $a_0 = a_N$ and $b_0 = b_N$. The torus partition function $Z_{\textrm{tor}}(M,N)$ is computed as
\be
Z_{\textrm{tor}}(M,N) = \textrm{tr}\, T_N^M\,.
\ee

For the cylinder, we choose the boundary condition on the left edge to be restricted to configuration $\ket{a,a_1,a,a_3,a,\dots,a,a_M}$, where $a$ takes a unique fixed value, whereas $a_1$, $a_3$, \dots, $a_M$ are free to take any values in $\mathcal G$ that are neighbors to $a$. Each of these configurations is assigned the same unit weight. Similarly, the boundary condition on the right boundary consists of states $\ket{b,b_1,b,b_3,b,\dots,b,b_M}$. The vertices with fixed heights $a$ and $b$ are then chosen to be aligned. The partition function of the model~$(g,\mu)$ on the cylinder with this boundary condition of type $(a,b)$ is computed using the double-row transfer matrix $D_N$, defined as
\be
\label{eq:Dab}
(D_N)_{\boldsymbol{a},\boldsymbol{b}} = 
W\left(
\psset{unit=0.5cm}
\begin{pspicture}[shift=-0.7cm](-0.4,-0.4)(8.4,2.4)
\facegrid{(0,0)}{(8,2)}
\multiput(0,0)(1,0){8}{\psarc(0,0){0.25}{0}{90}}
\multiput(0,1)(1,0){8}{\psarc(0,0){0.25}{0}{90}}
\rput(0,-0.3){$_{a_0}$}\rput(0,2.4){$_{b_0}$}
\rput(1,-0.3){$_{a_1}$}\rput(1,2.4){$_{b_1}$}
\rput(2,-0.3){$_{...}$}\rput(2,2.4){$_{...}$}
\rput(8,-0.3){$_{a_N}$}\rput(8,2.4){$_{b_N}$}
\end{pspicture}
\right)
=
\sum_{\boldsymbol{a'}}\prod_{j=0}^{N-1} 
\,
W\left(
\psset{unit=0.5cm}
\begin{pspicture}[shift=-0.5cm](-0.7,-0.6)(1.9,1.6)
\facegrid{(0,0)}{(1,1)}
\psarc(0,0){0.25}{0}{90}
\rput(-0.2,1.4){$_{a'_{j}}$}
\rput(1.3,1.4){$_{a'_{j+1}}$}
\rput(1.3,-0.4){$_{a_{j+1}}$}
\rput(-0.2,-0.4){$_{a_{j}}$}
\end{pspicture}
\right)
W
\left(
\psset{unit=0.5cm}
\begin{pspicture}[shift=-0.5cm](-0.7,-0.6)(1.9,1.6)
\facegrid{(0,0)}{(1,1)}
\psarc(0,0){0.25}{0}{90}
\rput(-0.2,1.4){$_{b_{j}}$}
\rput(1.3,1.4){$_{b_{j+1}}$}
\rput(1.3,-0.4){$_{a'_{j+1}}$}
\rput(-0.2,-0.4){$_{a'_{j}}$}
\end{pspicture}
\right)
\,,
\ee
where $a_0=b_0=a$ and $a_N=b_N=b$, and the sum over $\boldsymbol{a'}$ runs over all possible heights of the intermediate state $\ket{a'_0,a'_1,\dots, a'_{N}}$. The cylinder partition function $Z_{\textrm{cyl}}(M,N)$ is computed as
\be
Z_{\textrm{cyl}}(M,N) = \textrm{tr}\, D_{N}^{M/2}\,.
\ee

%
\section{Diagram spaces and families of modules}\label{sec:EPTL}
%

In this section, we review the definitions of the diagram spaces and Temperley--Lieb algebras, and describe their standard modules at roots of unity.

\subsection{Diagram spaces}\label{sec:LNN'}

The {\it diagram space} $\cL(N,N')$ is the vector space spanned by diagrams made of non-intersecting loop segments drawn on a ring that connect pairwise $N$ and $N'$ nodes on the outer and inner boundary, respectively. Diagrams may also contain non-contractible loops, namely loops winding around the inner boundary. A dashed segment is drawn connecting the two circles. Two diagrams that differ only by continuous deformations of the loop segments are considered equal. For example,
\begin{equation}
\label{eq:first diagrams}
\psset{unit=0.8cm}
\lambda_1 = \
\begin{pspicture}[shift=-1.7](-1.6,-1.6)(1.6,1.6)
\psarc[linecolor=black,linewidth=0.5pt,fillstyle=solid,fillcolor=lightlightblue]{-}(0,0){1.5}{0}{360}
\psarc[linecolor=black,linewidth=0.5pt,fillstyle=solid,fillcolor=white]{-}(0,0){0.7}{0}{360}
\rput{30}(0,0){\psbezier[linecolor=blue,linewidth=1.5pt]{-}(-0.388229, -1.44889)(-0.310583, -1.15911)(0.310583, -1.15911)(0.388229, -1.44889)
\psbezier[linecolor=blue,linewidth=1.5pt]{-}(-1.06066, -1.06066)(-0.848528, -0.948528)(0.848528, -0.948528)(1.06066, -1.06066)}
\rput{-90}(0,0){\psbezier[linecolor=blue,linewidth=1.5pt]{-}(-0.388229, -1.44889)(-0.310583, -1.15911)(0.310583, -1.15911)(0.388229, -1.44889)}
\rput{-150}(0,0){\psbezier[linecolor=blue,linewidth=1.5pt]{-}(-0.388229, -1.44889)(-0.310583, -1.15911)(0.310583, -1.15911)(0.388229, -1.44889)}
\rput{120}(0,0){\psbezier[linecolor=blue,linewidth=1.5pt]{-}(-0.388229, -1.44889)(-0.310583, -1.15911)(0.310583, -1.15911)(0.388229, -1.44889)}
\psline[linecolor=blue,linewidth=1.5pt]{-}(-1.06066, -1.06066)(-0.494975, -0.494975)
\psbezier[linecolor=blue,linewidth=1.5pt]{-}(0.494975, 0.494975)(0.989949, 0.989949)(-0.989949, 0.989949)(-0.494975, 0.494975)
\psbezier[linecolor=blue,linewidth=1.5pt]{-}(0.494975, -0.494975)(0.636396, -0.636396)(1.06252, -0.284701)(0.965926, 0.258819)
\psbezier[linecolor=blue,linewidth=1.5pt]{-}(0.968, 0.25)(0.777817, 0.777817)(0.310583, 1.15911)(0.388229, 1.44889)
\psline[linestyle=dashed, dash= 1.5pt 1.5pt,linewidth=0.5pt]{-}(0,-1.5)(0,-0.7)
\end{pspicture}
\ , \qquad \lambda_2 = \
\begin{pspicture}[shift=-1.5](-1.6,-1.6)(1.6,1.6)
\psarc[linecolor=black,linewidth=0.5pt,fillstyle=solid,fillcolor=lightlightblue]{-}(0,0){1.5}{0}{360}
\psarc[linecolor=black,linewidth=0.5pt,fillstyle=solid,fillcolor=white]{-}(0,0){0.7}{0}{360}
\rput{180}(0,0){\psbezier[linecolor=blue,linewidth=1.5pt]{-}(-1.06066, -1.06066)(-0.848528, -0.948528)(0.848528, -0.948528)(1.06066, -1.06066)}
\psbezier[linecolor=blue,linewidth=1.5pt]{-}(0.7, 0.)(1.1, 0.)(0.55, 0.952628)(0.35, 0.606218)
\psbezier[linecolor=blue,linewidth=1.5pt]{-}(-0.7, 0.)(-1.1, 0.)(-0.55, -0.952628)(-0.35, -0.606218)
\psbezier[linecolor=blue,linewidth=1.5pt]{-}(-0.35, 0.606218)(-0.7, 1.3)(-1.0,0.3)(-1.0, 0)
\psbezier[linecolor=blue,linewidth=1.5pt]{-}(-1.0, 0)(-1.0,-0.3)(-1, -1)(-1.06066, -1.06066)
\psbezier[linecolor=blue,linewidth=1.5pt]{-}(0.35, -0.606218)(0.55, -0.952628)(0.777817, -0.777817)(1.06066, -1.06066)
\psline[linestyle=dashed, dash= 1.5pt 1.5pt,linewidth=0.5pt]{-}(0,-1.5)(0,-0.7)
\end{pspicture}
\ , \qquad \lambda_3 = \
\begin{pspicture}[shift=-1.5](-1.6,-1.6)(1.6,1.6)
\psarc[linecolor=black,linewidth=0.5pt,fillstyle=solid,fillcolor=lightlightblue]{-}(0,0){1.5}{0}{360}
\psarc[linecolor=black,linewidth=0.5pt,fillstyle=solid,fillcolor=white]{-}(0,0){0.7}{0}{360}
\psbezier[linecolor=blue,linewidth=1.5pt]{-}(-0.75, 1.3)(-0.6, 1.04)(-1.2, 0.)(-1.5, 0.)
\psbezier[linecolor=blue,linewidth=1.5pt]{-}(-0.75, -1.3)(-0.6, -1.04)(0.6, -1.04)(0.75, -1.3)
\psbezier[linecolor=blue,linewidth=1.5pt]{-}(0.75, 1.3)(0.65, 1.13)(0.27, 1.01)(0., 1.)
\psbezier[linecolor=blue,linewidth=1.5pt]{-}(0., 1.)(-0.6, 1.04)(-1.2, 0.)(-0.87, -0.5)
\psbezier[linecolor=blue,linewidth=1.5pt]{-}(1.5, 0.)(1.3, 0.)(1.01, -0.27)(0.87, -0.5)
\psbezier[linecolor=blue,linewidth=1.5pt]{-}(0.87, -0.5)(0.6, -1.04)(-0.6, -1.04)(-0.87, -0.5)
\psline[linestyle=dashed, dash= 1.5pt 1.5pt,linewidth=0.5pt]{-}(0,-1.5)(0,-0.7)
\end{pspicture}\ ,
\ee
are connectivity diagrams in $\cL(12,4)$, $\cL(4,6)$ and $\cL(6,0)$, respectively. The product $\lambda_1 \lambda_2$ of two diagrams $\lambda_1 \in \cL(N,N')$ and $\lambda_2 \in \cL(N',N'')$ is obtained by drawing $\lambda_2$ inside $\lambda_1$. The resulting diagram in $\cL(N,N'')$ is read from the connectivities of the outer and inner boundaries. Each contractible loop is removed and replaced by a multiplicative factor of $\beta$. In contrast, non-contractible loops are not removed and remain in the diagram. For instance, we have
\be
\psset{unit=0.8cm}
\lambda_1 \lambda_2 = \
\psset{unit=1.2}
\begin{pspicture}[shift=-1.4](-1.5,-1.5)(1.5,1.5)
\psarc[linecolor=black,linewidth=0.5pt,fillstyle=solid,fillcolor=lightlightblue]{-}(0,0){1.5}{0}{360}
\psarc[linecolor=black,linewidth=0.5pt,fillstyle=solid,fillcolor=white]{-}(0,0){0.7}{0}{360}
\rput{30}(0,0){\psbezier[linecolor=blue,linewidth=1.5pt]{-}(-0.388229, -1.44889)(-0.310583, -1.15911)(0.310583, -1.15911)(0.388229, -1.44889)
\psbezier[linecolor=blue,linewidth=1.5pt]{-}(-1.06066, -1.06066)(-0.848528, -0.948528)(0.848528, -0.948528)(1.06066, -1.06066)}
\rput{-90}(0,0){\psbezier[linecolor=blue,linewidth=1.5pt]{-}(-0.388229, -1.44889)(-0.310583, -1.15911)(0.310583, -1.15911)(0.388229, -1.44889)}
\rput{-150}(0,0){\psbezier[linecolor=blue,linewidth=1.5pt]{-}(-0.388229, -1.44889)(-0.310583, -1.15911)(0.310583, -1.15911)(0.388229, -1.44889)}
\rput{120}(0,0){\psbezier[linecolor=blue,linewidth=1.5pt]{-}(-0.388229, -1.44889)(-0.310583, -1.15911)(0.310583, -1.15911)(0.388229, -1.44889)}
\psline[linecolor=blue,linewidth=1.5pt]{-}(-1.06066, -1.06066)(-0.494975, -0.494975)
\psbezier[linecolor=blue,linewidth=1.5pt]{-}(0.494975, 0.494975)(0.989949, 0.989949)(-0.989949, 0.989949)(-0.494975, 0.494975)
\psbezier[linecolor=blue,linewidth=1.5pt]{-}(0.494975, -0.494975)(0.636396, -0.636396)(1.06252, -0.284701)(0.965926, 0.258819)
\psbezier[linecolor=blue,linewidth=1.5pt]{-}(0.968, 0.25)(0.777817, 0.777817)(0.310583, 1.15911)(0.388229, 1.44889)
\psline[linestyle=dashed, dash= 1.5pt 1.5pt,linewidth=0.5pt]{-}(0,-1.5)(0,-0.7)
\psset{unit=0.4666}
\psarc[linecolor=black,linewidth=0.5pt,fillstyle=solid,fillcolor=lightlightblue]{-}(0,0){1.5}{0}{360}
\psarc[linecolor=black,linewidth=0.5pt,fillstyle=solid,fillcolor=white]{-}(0,0){0.7}{0}{360}
\rput{180}(0,0){\psbezier[linecolor=blue,linewidth=1.5pt]{-}(-1.06066, -1.06066)(-0.848528, -0.948528)(0.848528, -0.948528)(1.06066, -1.06066)}
\psbezier[linecolor=blue,linewidth=1.5pt]{-}(0.7, 0.)(1.1, 0.)(0.55, 0.952628)(0.35, 0.606218)
\psbezier[linecolor=blue,linewidth=1.5pt]{-}(-0.7, 0.)(-1.1, 0.)(-0.55, -0.952628)(-0.35, -0.606218)
\psbezier[linecolor=blue,linewidth=1.5pt]{-}(-0.35, 0.606218)(-0.7, 1.3)(-1.0,0.3)(-1.0, 0)
\psbezier[linecolor=blue,linewidth=1.5pt]{-}(-1.0, 0)(-1.0,-0.3)(-1, -1)(-1.06066, -1.06066)
\psbezier[linecolor=blue,linewidth=1.5pt]{-}(0.35, -0.606218)(0.55, -0.952628)(0.777817, -0.777817)(1.06066, -1.06066)
\psline[linestyle=dashed, dash= 1.5pt 1.5pt,linewidth=0.5pt]{-}(0,-1.5)(0,-0.7)
\end{pspicture}
\ = \beta \
\begin{pspicture}[shift=-1.4](-1.6,-1.5)(1.6,1.5)
\psarc[linecolor=black,linewidth=0.5pt,fillstyle=solid,fillcolor=lightlightblue]{-}(0,0){1.5}{0}{360}
\psarc[linecolor=black,linewidth=0.5pt,fillstyle=solid,fillcolor=white]{-}(0,0){0.7}{0}{360}
\rput{30}(0,0){\psbezier[linecolor=blue,linewidth=1.5pt]{-}(-0.388229, -1.44889)(-0.310583, -1.15911)(0.310583, -1.15911)(0.388229, -1.44889)
\psbezier[linecolor=blue,linewidth=1.5pt]{-}(-1.06066, -1.06066)(-0.848528, -0.948528)(0.848528, -0.948528)(1.06066, -1.06066)}
\rput{-90}(0,0){\psbezier[linecolor=blue,linewidth=1.5pt]{-}(-0.388229, -1.44889)(-0.310583, -1.15911)(0.310583, -1.15911)(0.388229, -1.44889)}
\rput{-150}(0,0){\psbezier[linecolor=blue,linewidth=1.5pt]{-}(-0.388229, -1.44889)(-0.310583, -1.15911)(0.310583, -1.15911)(0.388229, -1.44889)}
\rput{120}(0,0){\psbezier[linecolor=blue,linewidth=1.5pt]{-}(-0.388229, -1.44889)(-0.310583, -1.15911)(0.310583, -1.15911)(0.388229, -1.44889)}
\psbezier[linecolor=blue,linewidth=1.5pt]{-}(0.7, 0.)(1.1, 0.)(0.55, 0.952628)(0.35, 0.606218)
\psbezier[linecolor=blue,linewidth=1.5pt]{-}(-0.7, 0.)(-1.1, 0.)(-0.55, -0.952628)(-0.35, -0.606218)
\psbezier[linecolor=blue,linewidth=1.5pt]{-}(0.35, -0.606218)(0.5, -0.866025)(1.06252, -0.284701)(0.965926, 0.258819)
\psbezier[linecolor=blue,linewidth=1.5pt]{-}(0.968, 0.25)(0.777817, 0.777817)(0.310583, 1.15911)(0.388229, 1.44889)
\psbezier[linecolor=blue,linewidth=1.5pt]{-}(-0.35, 0.606218)(-0.5, 0.866025)(-1.1, 0.284701)(-0.965926,-0.258819)
\psbezier[linecolor=blue,linewidth=1.5pt]{-}(-0.968,-0.25)(-0.76,-0.9)(-0.919239, -0.919239)(-1.06066, -1.06066)
\psline[linestyle=dashed, dash= 1.5pt 1.5pt,linewidth=0.5pt]{-}(0,-1.5)(0,-0.7)
\end{pspicture}\ ,
\quad
\lambda_2\lambda_3 = \
\begin{pspicture}[shift=-1.4](-1.5,-1.5)(1.5,1.5)
\psarc[linecolor=black,linewidth=0.5pt,fillstyle=solid,fillcolor=lightlightblue]{-}(0,0){1.5}{0}{360}
\psarc[linecolor=black,linewidth=0.5pt,fillstyle=solid,fillcolor=white]{-}(0,0){0.7}{0}{360}
\psarc[linecolor=black,linewidth=0.5pt,fillstyle=solid,fillcolor=lightlightblue]{-}(0,0){1.5}{0}{360}
\psarc[linecolor=black,linewidth=0.5pt,fillstyle=solid,fillcolor=white]{-}(0,0){0.7}{0}{360}
\rput{180}(0,0){\psbezier[linecolor=blue,linewidth=1.5pt]{-}(-1.06066, -1.06066)(-0.848528, -0.948528)(0.848528, -0.948528)(1.06066, -1.06066)}
\psbezier[linecolor=blue,linewidth=1.5pt]{-}(0.7, 0.)(1.1, 0.)(0.55, 0.952628)(0.35, 0.606218)
\psbezier[linecolor=blue,linewidth=1.5pt]{-}(-0.7, 0.)(-1.1, 0.)(-0.55, -0.952628)(-0.35, -0.606218)
\psbezier[linecolor=blue,linewidth=1.5pt]{-}(-0.35, 0.606218)(-0.7, 1.3)(-1.0,0.3)(-1.0, 0)
\psbezier[linecolor=blue,linewidth=1.5pt]{-}(-1.0, 0)(-1.0,-0.3)(-1, -1)(-1.06066, -1.06066)
\psbezier[linecolor=blue,linewidth=1.5pt]{-}(0.35, -0.606218)(0.55, -0.952628)(0.777817, -0.777817)(1.06066, -1.06066)
\psline[linestyle=dashed, dash= 1.5pt 1.5pt,linewidth=0.5pt]{-}(0,-1.5)(0,-0.7)
\psset{unit=0.4666}
\psarc[linecolor=black,linewidth=0.5pt,fillstyle=solid,fillcolor=lightlightblue]{-}(0,0){1.5}{0}{360}
\psarc[linecolor=black,linewidth=0.5pt,fillstyle=solid,fillcolor=white]{-}(0,0){0.7}{0}{360}
\psbezier[linecolor=blue,linewidth=1.5pt]{-}(-0.75, 1.3)(-0.6, 1.04)(-1.2, 0.)(-1.5, 0.)
\psbezier[linecolor=blue,linewidth=1.5pt]{-}(-0.75, -1.3)(-0.6, -1.04)(0.6, -1.04)(0.75, -1.3)
\psbezier[linecolor=blue,linewidth=1.5pt]{-}(0.75, 1.3)(0.65, 1.13)(0.27, 1.01)(0., 1.)
\psbezier[linecolor=blue,linewidth=1.5pt]{-}(0., 1.)(-0.6, 1.04)(-1.2, 0.)(-0.87, -0.5)
\psbezier[linecolor=blue,linewidth=1.5pt]{-}(1.5, 0.)(1.3, 0.)(1.01, -0.27)(0.87, -0.5)
\psbezier[linecolor=blue,linewidth=1.5pt]{-}(0.87, -0.5)(0.6, -1.04)(-0.6, -1.04)(-0.87, -0.5)
\psline[linestyle=dashed, dash= 1.5pt 1.5pt,linewidth=0.5pt]{-}(0,-1.5)(0,-0.7)
\end{pspicture}
\ = \
\begin{pspicture}[shift=-1.4](-1.6,-1.5)(1.6,1.5)
\psarc[linecolor=black,linewidth=0.5pt,fillstyle=solid,fillcolor=lightlightblue]{-}(0,0){1.5}{0}{360}
\psarc[linecolor=black,linewidth=0.5pt,fillstyle=solid,fillcolor=white]{-}(0,0){0.5}{0}{360}
\psbezier[linecolor=blue,linewidth=1.5pt]{-}(-1.06066, 1.06066)(-0.848528, 0.948528)(0.848528, 0.948528)(1.06066, 1.06066)
\psbezier[linecolor=blue,linewidth=1.5pt]{-}(-1.06066, -1.06066)(-0.848528, -0.948528)(0.848528, -0.948528)(1.06066, -1.06066)
\psarc[linecolor=blue,linewidth=1.5pt]{-}(0,0){0.7}{0}{360}
\psline[linestyle=dashed, dash= 1.5pt 1.5pt,linewidth=0.5pt]{-}(0,-1.5)(0,-0.5)
\end{pspicture}\ .
\ee
A loop segment is called an {\it arch} if it connects two nodes of the same boundary, and a {\it bridge} if it connects the two boundaries. The elements 
\be
\label{eq:def.cj}
c_{N,j} = \ 
\psset{unit=0.8cm}
\begin{pspicture}[shift=-1.25](-1.2,-1.4)(1.5,1.2)
\psarc[linecolor=black,linewidth=0.5pt,fillstyle=solid,fillcolor=lightlightblue]{-}(0,0){1.2}{0}{360}
\psarc[linecolor=black,linewidth=0.5pt,fillstyle=solid,fillcolor=white]{-}(0,0){0.7}{0}{360}
\psline[linecolor=blue,linewidth=1.5pt]{-}(0.676148, 0.181173)(1.15911, 0.310583)
\psbezier[linecolor=blue,linewidth=1.5pt]{-}(0.494975, 0.494975)(0.671751, 0.671751)(0.245878, 0.91763)(0.181173, 0.676148)
\psline[linecolor=blue,linewidth=1.5pt]{-}(-0.181173, 0.676148)(-0.310583, 1.15911)
\psline[linecolor=blue,linewidth=1.5pt]{-}(-0.494975, 0.494975)(-0.848528, 0.848528)
\psline[linecolor=blue,linewidth=1.5pt]{-}(-0.676148, 0.181173)(-1.15911, 0.310583)
\psline[linecolor=blue,linewidth=1.5pt]{-}(-0.676148, -0.181173)(-1.15911, -0.310583)
\psline[linecolor=blue,linewidth=1.5pt]{-}(-0.494975, -0.494975)(-0.848528, -0.848528)
\psline[linecolor=blue,linewidth=1.5pt]{-}(-0.181173, -0.676148)(-0.310583, -1.15911)
\psline[linecolor=blue,linewidth=1.5pt]{-}(0.181173, -0.676148)(0.310583, -1.15911)
\psline[linecolor=blue,linewidth=1.5pt]{-}(0.494975, -0.494975)(0.848528, -0.848528)
\psline[linecolor=blue,linewidth=1.5pt]{-}(0.676148, -0.181173)(1.15911, -0.310583)
\psline[linestyle=dashed, dash= 1.5pt 1.5pt,linewidth=0.5pt]{-}(0,-1.2)(0,-0.7)
\rput(0.362347, -1.3523){$_1$}
\rput(0.989949, -0.989949){$_2$}
\rput(1.3523, -0.362347){$_{...}$}
\rput(1.57, 0.362347){$_{j-1}$}
\rput(-0.362347, 1.4023){$_{j}$}
\rput(-1.13066, -1.13066){$_{N-3}$}
\rput(-0.362347, -1.4423){$_{N-2}$}
\end{pspicture}\ , 
\qquad 
c_{N,j}^\dagger =\
\psset{unit=0.8cm}
\begin{pspicture}[shift=-1.1](-1.2,-1.4)(1.4,1.2)
\psarc[linecolor=black,linewidth=0.5pt,fillstyle=solid,fillcolor=lightlightblue]{-}(0,0){1.2}{0}{360}
\psarc[linecolor=black,linewidth=0.5pt,fillstyle=solid,fillcolor=white]{-}(0,0){0.7}{0}{360}
\psline[linecolor=blue,linewidth=1.5pt]{-}(0.676148, 0.181173)(1.15911, 0.310583)
\psbezier[linecolor=blue,linewidth=1.5pt]{-}(0.848528, 0.848528)(0.671751, 0.671751)(0.245878, 0.91763)(0.310583, 1.15911)
\psline[linecolor=blue,linewidth=1.5pt]{-}(-0.181173, 0.676148)(-0.310583, 1.15911)
\psline[linecolor=blue,linewidth=1.5pt]{-}(-0.494975, 0.494975)(-0.848528, 0.848528)
\psline[linecolor=blue,linewidth=1.5pt]{-}(-0.676148, 0.181173)(-1.15911, 0.310583)
\psline[linecolor=blue,linewidth=1.5pt]{-}(-0.676148, -0.181173)(-1.15911, -0.310583)
\psline[linecolor=blue,linewidth=1.5pt]{-}(-0.494975, -0.494975)(-0.848528, -0.848528)
\psline[linecolor=blue,linewidth=1.5pt]{-}(-0.181173, -0.676148)(-0.310583, -1.15911)
\psline[linecolor=blue,linewidth=1.5pt]{-}(0.181173, -0.676148)(0.310583, -1.15911)
\psline[linecolor=blue,linewidth=1.5pt]{-}(0.494975, -0.494975)(0.848528, -0.848528)
\psline[linecolor=blue,linewidth=1.5pt]{-}(0.676148, -0.181173)(1.15911, -0.310583)
\psline[linestyle=dashed, dash= 1.5pt 1.5pt,linewidth=0.5pt]{-}(0,-1.2)(0,-0.7)
\rput(0.362347, -1.3523){$_1$}
\rput(0.989949, -0.989949){$_2$}
\rput(1.3523, -0.362347){$_3$}
\rput(1.3523, 0.362347){$_{...}$}
\rput(0.989949, 0.989949){$_j$}
\rput(0.362347, 1.3923){$_{j+1}$}
\rput(-1.13066, -1.13066){$_{N-1}$}
\rput(-0.362347, -1.3523){$_N$}
\end{pspicture}\ ,
\qquad
j = 0, 1,\dots, N-1,
\ee
in $\cL(N-2,N)$ and $\cL(N,N-2)$, respectively, are generators for the diagram spaces. We usually omit the label $N$ and write them as $c_j$ and $c^\dag_j$. These operators satisfy a number of relations, given in \cite[Eq.~(2.6)]{IMD25}. The diagram spaces are then equivalently defined as words $w$ in the generators $c_j$ and $c^\dag_j$, with the numbers $n(w)$ and $n^\dag(w)$ counting the occurences of $c_j$ and $c^\dag_j$ operators in $w$ satisfying $n^\dag(w)-n(w) = \frac12(N-N')$. In particular, for all $N$, the space $\cL(N,N)$ includes the identity diagram~$\id_N$, which we usually write as $\id$.
\medskip

We define four subspaces of $\cL(N,N')$: 
\begin{itemize}
\item[1)] The subspace $\cLk{k}(N,N')$ of $\cL(N,N')$ is spanned by the diagrams with exactly $2k$ bridges, where $k \in \frac12 \Zbb_{\ge 0}$ and $2k$ has the same parity as $N$ and $N'$. 
\item[2)] For $k>0$, the subspace $\cLk{k,0}(N,N')$ of $\cLk{k}(N,N')$ is spanned by the diagrams without bridges crossing the dashed line. For $k=0$, the subspace $\cLk{0,0}(N,N')$ of $\cLk{0}(N,N')$ is spanned by the diagrams without non-contractible loops. 
\item[3)] The subspace $\cL_0(N,N')$ of $\cL(N,N')$ is spanned by the diagrams whose loop segments do not cross the dashed segment. For instance, $\lambda_2$ is in $\cL_0(4,6)$. It is common to draw diagrams in $\cL_0(N,N')$ inside a rectangular box with $N$ and $N'$ nodes on the bottom and top segments. For example, we have
\be
\lambda_2 = \
\begin{pspicture}[shift=-0.6](0,-0.7)(2.4,0.7)
\pspolygon[fillstyle=solid,fillcolor=lightlightblue,linewidth=0pt](0,-0.5)(2.4,-0.5)(2.4,0.5)(0,0.5)
\psarc[linecolor=blue,linewidth=1.5pt]{-}(0.8,0.5){0.2}{180}{0}
\psarc[linecolor=blue,linewidth=1.5pt]{-}(2.0,0.5){0.2}{180}{0}
\psarc[linecolor=blue,linewidth=1.5pt]{-}(1.2,-0.5){0.2}{0}{180}
\psbezier[linecolor=blue,linewidth=1.5pt]{-}(0.6,-0.5)(0.6,0)(0.2,0)(0.2,0.5)
\psbezier[linecolor=blue,linewidth=1.5pt]{-}(1.8,-0.5)(1.8,0)(1.4,0)(1.4,0.5)
\end{pspicture}\ .
\ee
The product $\lambda_1\lambda_2$ of diagrams $\lambda_1 \in \cL_0(N,N')$ and $\lambda_1 \in \cL_0(N',N'')$ is computed by stacking $\lambda_2$ above $\lambda_1$, and produces a diagram in $\cL_0(N,N'')$ times a power of $\beta$ for each loop removed. The diagram space $\cL_0(N,N')$ is equivalently defined as the vector space spanned by words $w$ in the generators $c_j$ and $c^\dag_j$ with $j \ge 1$, and satisfying $n^\dag(w)-n(w) = \frac12(N-N')$.
\item[4)] The subspace $\cL^{\tinyx k}_0(N,N')$ is spanned by the diagrams in $\cL_0(N,N')$ with exactly $2k$ bridges.
\end{itemize}

We parameterise the loop weight $\beta$ as
\be
\beta = -q-q^{-1}\,, \qquad q \in \mathbb C^\times.
\ee
For the applications to the ADE lattice models, we are particularly interested in the case where $q$ is a root of unity, which we parameterise as 
\begin{subequations}
\label{eq:q.beta.standard}
\be
\label{eq:q.standard}
q=\eE^{-\ir\pi p/p'} \qquad \textrm{with} 
\qquad 
p,p' \in \mathbb Z_{>0}\,, 
\qquad 
1\le p<p'\,,
\qquad 
\mathrm{gcd}(p,p') = 1\,,
\ee
so that
\be
\label{eq:beta.standard}
\beta = 2 \cos\bigg(\frac{\pi (p'-p)}{p'}\bigg)\,.
\ee
\end{subequations}
Values of $q$ which are not roots of unity are referred to as {\it generic}.

\subsection{Temperley--Lieb algebras and families of modules}\label{sec:TL}

\paragraph{The algebra $\boldsymbol{\eptl_N(\beta)}$.}
The enlarged periodic Temperley--Lieb algebra is spanned by the diagrams in $\cL(N,N)$. Its generators are
\begin{subequations}
\begin{alignat}{4}
&\Omega = \ 
\psset{unit=0.8cm}
\begin{pspicture}[shift=-1.3](-1.2,-1.4)(1.4,1.2)
\psarc[linecolor=black,linewidth=0.5pt,fillstyle=solid,fillcolor=lightlightblue]{-}(0,0){1.2}{0}{360}
\psarc[linecolor=black,linewidth=0.5pt,fillstyle=solid,fillcolor=white]{-}(0,0){0.7}{0}{360}
\psbezier[linecolor=blue,linewidth=1.5pt]{-}(0.676148, 0.181173)(0.91763, 0.245878)(0.91763, -0.245878)(1.15911, -0.310583)
\psbezier[linecolor=blue,linewidth=1.5pt]{-}(0.494975, 0.494975)(0.671751, 0.671751)(0.91763, 0.245878)(1.15911, 0.310583)
\psbezier[linecolor=blue,linewidth=1.5pt]{-}(0.181173, 0.676148)(0.245878, 0.91763)(0.671751, 0.671751)(0.848528, 0.848528)
\psbezier[linecolor=blue,linewidth=1.5pt]{-}(-0.181173, 0.676148)(-0.245878, 0.91763)(0.245878, 0.91763)(0.310583, 1.15911)
\psbezier[linecolor=blue,linewidth=1.5pt]{-}(-0.494975, 0.494975)(-0.671751, 0.671751)(-0.245878, 0.91763)(-0.310583, 1.15911)
\psbezier[linecolor=blue,linewidth=1.5pt]{-}(-0.676148, 0.181173)(-0.91763, 0.245878)(-0.671751, 0.671751)(-0.848528, 0.848528)
\psbezier[linecolor=blue,linewidth=1.5pt]{-}(-0.676148, -0.181173)(-0.91763, -0.245878)(-0.91763, 0.245878)(-1.15911, 0.310583)
\psbezier[linecolor=blue,linewidth=1.5pt]{-}(-0.494975, -0.494975)(-0.671751, -0.671751)(-0.91763, -0.245878)(-1.15911, -0.310583)
\psbezier[linecolor=blue,linewidth=1.5pt]{-}(-0.181173, -0.676148)(-0.245878, -0.91763)(-0.671751, -0.671751)(-0.848528, -0.848528)
\psbezier[linecolor=blue,linewidth=1.5pt]{-}(0.181173, -0.676148)(0.245878, -0.91763)(-0.245878, -0.91763)(-0.310583, -1.15911)
\psbezier[linecolor=blue,linewidth=1.5pt]{-}(0.494975, -0.494975)(0.671751, -0.671751)(0.245878, -0.91763)(0.310583, -1.15911)
\psbezier[linecolor=blue,linewidth=1.5pt]{-}(0.676148, -0.181173)(0.91763, -0.245878)(0.671751, -0.671751)(0.848528, -0.848528)
\psline[linestyle=dashed, dash= 1.5pt 1.5pt,linewidth=0.5pt]{-}(0,-1.2)(0,-0.7)
\rput(0.362347, -1.3523){$_1$}
\rput(0.989949, -0.989949){$_2$}
\rput(1.3523, -0.362347){$_3$}
\rput(1.3523, 0.362347){$_{...}$}
\rput(-1.13066, -1.13066){$_{N-1}$}
\rput(-0.362347, -1.3523){$_N$}
\end{pspicture} \ \ ,
\qquad
&&\Omega^{-1}\,= \
\psset{unit=0.8cm}
\begin{pspicture}[shift=-1.3](-1.2,-1.4)(1.4,1.2)
\psarc[linecolor=black,linewidth=0.5pt,fillstyle=solid,fillcolor=lightlightblue]{-}(0,0){1.2}{0}{360}
\psarc[linecolor=black,linewidth=0.5pt,fillstyle=solid,fillcolor=white]{-}(0,0){0.7}{0}{360}
\psbezier[linecolor=blue,linewidth=1.5pt]{-}(0.676148, 0.181173)(0.91763, 0.245878)(0.671751, 0.671751)(0.848528, 0.848528)
\psbezier[linecolor=blue,linewidth=1.5pt]{-}(0.494975, 0.494975)(0.671751, 0.671751)(0.245878, 0.91763)(0.310583, 1.15911)
\psbezier[linecolor=blue,linewidth=1.5pt]{-}(0.181173, 0.676148)(0.245878, 0.91763)(-0.245878, 0.91763)(-0.310583, 1.15911)
\psbezier[linecolor=blue,linewidth=1.5pt]{-}(-0.181173, 0.676148)(-0.245878, 0.91763)(-0.671751, 0.671751)(-0.848528, 0.848528)
\psbezier[linecolor=blue,linewidth=1.5pt]{-}(-0.494975, 0.494975)(-0.671751, 0.671751)(-0.91763, 0.245878)(-1.15911, 0.310583)
\psbezier[linecolor=blue,linewidth=1.5pt]{-}(-0.676148, 0.181173)(-0.91763, 0.245878)(-0.91763, -0.245878)(-1.15911, -0.310583)
\psbezier[linecolor=blue,linewidth=1.5pt]{-}(-0.676148, -0.181173)(-0.91763, -0.245878)(-0.671751, -0.671751)(-0.848528, -0.848528)
\psbezier[linecolor=blue,linewidth=1.5pt]{-}(-0.494975, -0.494975)(-0.671751, -0.671751)(-0.245878, -0.91763)(-0.310583, -1.15911)
\psbezier[linecolor=blue,linewidth=1.5pt]{-}(-0.181173, -0.676148)(-0.245878, -0.91763)(0.245878, -0.91763)(0.310583, -1.15911)
\psbezier[linecolor=blue,linewidth=1.5pt]{-}(0.181173, -0.676148)(0.245878, -0.91763)(0.671751, -0.671751)(0.848528, -0.848528)
\psbezier[linecolor=blue,linewidth=1.5pt]{-}(0.494975, -0.494975)(0.671751, -0.671751)(0.91763, -0.245878)(1.15911, -0.310583)
\psbezier[linecolor=blue,linewidth=1.5pt]{-}(0.676148, -0.181173)(0.91763, -0.245878)(0.91763, 0.245878)(1.15911, 0.310583)
\psline[linestyle=dashed, dash= 1.5pt 1.5pt,linewidth=0.5pt]{-}(0,-1.2)(0,-0.7)
\rput(0.362347, -1.3523){$_1$}
\rput(0.989949, -0.989949){$_2$}
\rput(1.3523, -0.362347){$_3$}
\rput(1.3523, 0.362347){$_{...}$}
\rput(-1.13066, -1.13066){$_{N-1}$}
\rput(-0.362347, -1.3523){$_N$}
\end{pspicture}\ \ ,
\qquad 
\id = \ 
\psset{unit=0.8cm}
\begin{pspicture}[shift=-1.3](-1.2,-1.4)(1.4,1.2)
\psarc[linecolor=black,linewidth=0.5pt,fillstyle=solid,fillcolor=lightlightblue]{-}(0,0){1.2}{0}{360}
\psarc[linecolor=black,linewidth=0.5pt,fillstyle=solid,fillcolor=white]{-}(0,0){0.7}{0}{360}
\psline[linecolor=blue,linewidth=1.5pt]{-}(0.676148, 0.181173)(1.15911, 0.310583)
\psline[linecolor=blue,linewidth=1.5pt]{-}(0.494975, 0.494975)(0.848528, 0.848528)
\psline[linecolor=blue,linewidth=1.5pt]{-}(0.181173, 0.676148)(0.310583, 1.15911)
\psline[linecolor=blue,linewidth=1.5pt]{-}(-0.181173, 0.676148)(-0.310583, 1.15911)
\psline[linecolor=blue,linewidth=1.5pt]{-}(-0.494975, 0.494975)(-0.848528, 0.848528)
\psline[linecolor=blue,linewidth=1.5pt]{-}(-0.676148, 0.181173)(-1.15911, 0.310583)
\psline[linecolor=blue,linewidth=1.5pt]{-}(-0.676148, -0.181173)(-1.15911, -0.310583)
\psline[linecolor=blue,linewidth=1.5pt]{-}(-0.494975, -0.494975)(-0.848528, -0.848528)
\psline[linecolor=blue,linewidth=1.5pt]{-}(-0.181173, -0.676148)(-0.310583, -1.15911)
\psline[linecolor=blue,linewidth=1.5pt]{-}(0.181173, -0.676148)(0.310583, -1.15911)
\psline[linecolor=blue,linewidth=1.5pt]{-}(0.494975, -0.494975)(0.848528, -0.848528)
\psline[linecolor=blue,linewidth=1.5pt]{-}(0.676148, -0.181173)(1.15911, -0.310583)
\psline[linestyle=dashed, dash= 1.5pt 1.5pt,linewidth=0.5pt]{-}(0,-1.2)(0,-0.7)
\rput(0.362347, -1.3523){$_1$}
\rput(0.989949, -0.989949){$_2$}
\rput(1.3523, -0.362347){$_3$}
\rput(1.3523, 0.362347){$_{...}$}
\rput(-1.13066, -1.13066){$_{N-1}$}
\rput(-0.362347, -1.3523){$_N$}
\end{pspicture}\ \ ,
\\[0.5cm] 
& e_0\,= \
\psset{unit=0.8cm}
\begin{pspicture}[shift=-1.3](-1.2,-1.4)(1.4,1.2)
\psarc[linecolor=black,linewidth=0.5pt,fillstyle=solid,fillcolor=lightlightblue]{-}(0,0){1.2}{0}{360}
\psarc[linecolor=black,linewidth=0.5pt,fillstyle=solid,fillcolor=white]{-}(0,0){0.7}{0}{360}
\psline[linecolor=blue,linewidth=1.5pt]{-}(0.676148, 0.181173)(1.15911, 0.310583)
\psline[linecolor=blue,linewidth=1.5pt]{-}(0.494975, 0.494975)(0.848528, 0.848528)
\psline[linecolor=blue,linewidth=1.5pt]{-}(0.181173, 0.676148)(0.310583, 1.15911)
\psline[linecolor=blue,linewidth=1.5pt]{-}(-0.181173, 0.676148)(-0.310583, 1.15911)
\psline[linecolor=blue,linewidth=1.5pt]{-}(-0.494975, 0.494975)(-0.848528, 0.848528)
\psline[linecolor=blue,linewidth=1.5pt]{-}(-0.676148, 0.181173)(-1.15911, 0.310583)
\psline[linecolor=blue,linewidth=1.5pt]{-}(-0.676148, -0.181173)(-1.15911, -0.310583)
\psline[linecolor=blue,linewidth=1.5pt]{-}(-0.494975, -0.494975)(-0.848528, -0.848528)
\psbezier[linecolor=blue,linewidth=1.5pt]{-}(-0.181173, -0.676148)(-0.245878, -0.91763)(0.245878, -0.91763)(0.181173, -0.676148)
\psbezier[linecolor=blue,linewidth=1.5pt]{-}(-0.310583, -1.15911)(-0.245878, -0.91763)(0.245878, -0.91763)(0.310583, -1.15911)
\psline[linecolor=blue,linewidth=1.5pt]{-}(0.494975, -0.494975)(0.848528, -0.848528)
\psline[linecolor=blue,linewidth=1.5pt]{-}(0.676148, -0.181173)(1.15911, -0.310583)
\psline[linestyle=dashed, dash= 1.5pt 1.5pt,linewidth=0.5pt]{-}(0,-1.2)(0,-0.7)
\rput(0.362347, -1.3523){$_1$}
\rput(0.989949, -0.989949){$_2$}
\rput(1.3523, -0.362347){$_3$}
\rput(1.3523, 0.362347){$_{...}$}
\rput(-1.13066, -1.13066){$_{N-1}$}
\rput(-0.362347, -1.3523){$_N$}
\end{pspicture}\ \ , \qquad
&&e_j =\
\psset{unit=0.8cm}
\begin{pspicture}[shift=-1.3](-1.2,-1.4)(1.4,1.2)
\psarc[linecolor=black,linewidth=0.5pt,fillstyle=solid,fillcolor=lightlightblue]{-}(0,0){1.2}{0}{360}
\psarc[linecolor=black,linewidth=0.5pt,fillstyle=solid,fillcolor=white]{-}(0,0){0.7}{0}{360}
\psline[linecolor=blue,linewidth=1.5pt]{-}(0.676148, 0.181173)(1.15911, 0.310583)
\psbezier[linecolor=blue,linewidth=1.5pt]{-}(0.494975, 0.494975)(0.671751, 0.671751)(0.245878, 0.91763)(0.181173, 0.676148)
\psbezier[linecolor=blue,linewidth=1.5pt]{-}(0.848528, 0.848528)(0.671751, 0.671751)(0.245878, 0.91763)(0.310583, 1.15911)
\psline[linecolor=blue,linewidth=1.5pt]{-}(-0.181173, 0.676148)(-0.310583, 1.15911)
\psline[linecolor=blue,linewidth=1.5pt]{-}(-0.494975, 0.494975)(-0.848528, 0.848528)
\psline[linecolor=blue,linewidth=1.5pt]{-}(-0.676148, 0.181173)(-1.15911, 0.310583)
\psline[linecolor=blue,linewidth=1.5pt]{-}(-0.676148, -0.181173)(-1.15911, -0.310583)
\psline[linecolor=blue,linewidth=1.5pt]{-}(-0.494975, -0.494975)(-0.848528, -0.848528)
\psline[linecolor=blue,linewidth=1.5pt]{-}(-0.181173, -0.676148)(-0.310583, -1.15911)
\psline[linecolor=blue,linewidth=1.5pt]{-}(0.181173, -0.676148)(0.310583, -1.15911)
\psline[linecolor=blue,linewidth=1.5pt]{-}(0.494975, -0.494975)(0.848528, -0.848528)
\psline[linecolor=blue,linewidth=1.5pt]{-}(0.676148, -0.181173)(1.15911, -0.310583)
\psline[linestyle=dashed, dash= 1.5pt 1.5pt,linewidth=0.5pt]{-}(0,-1.2)(0,-0.7)
\rput(0.362347, -1.3523){$_1$}
\rput(0.989949, -0.989949){$_2$}
\rput(1.3523, -0.362347){$_3$}
\rput(1.3523, 0.362347){$_{...}$}
\rput(0.989949, 0.989949){$_j$}
\rput(0.362347, 1.3523){$_{j+1}$}
\rput(-1.13066, -1.13066){$_{N-1}$}
\rput(-0.362347, -1.3523){$_N$}
\end{pspicture}
\qquad 1\le j \le N-1\,,
\end{alignat}
\end{subequations}
and satisfy the relations
\begin{subequations}
\label{eq:def.EPTL}
\begin{alignat}{4}
\label{eq:def.TL}
& e_j^2 = \beta \, e_j \,, \qquad &&e_j \,e_{j \pm 1 \, } e_j = e_j \,, \qquad &&e_i \,e_j = e_j \,e_i \qquad \text{for } |i-j|>1 \,, \\[0.1cm]
& \Omega \, e_j \, \Omega^{-1} = e_{j-1} \,, \qquad &&\Omega \, \Omega^{-1} = \Omega^{-1} \, \Omega = \id \,, \qquad &&\Omega^2 e_1 = e_{N-1}e_{N-2} \cdots e_2e_1\,,
\end{alignat}
\end{subequations}
for $N>2$. For $N=2$, the algebra is generated by $\Omega$, $\Omega^{-1}$, $e_0$ and $e_1$ satisfying
\be
e_j^2 = \beta \, e_j\,, 
\qquad 
\Omega \, e_j \, \Omega^{-1} = e_{j-1}\,, 
\qquad
\Omega \, \Omega^{-1} = \Omega^{-1} \, \Omega = \id\,, 
\qquad 
\Omega^2 e_j = e_j\,.
\ee
For $N=1$, the algebra is generated by $\Omega$ and $\Omega^{-1}$ satisfying $\Omega \, \Omega^{-1} = \Omega^{-1} \, \Omega = \id$. For $N=0$, the algebra is 
generated by the empty diagram denoted $\id$ and by the diagram $f$ depicted as
\be
\label{eq:def.f}
f = \ 
\psset{unit=0.8cm}
\begin{pspicture}[shift=-1.1](-1.2,-1.2)(1.4,1.2)
\psarc[linecolor=black,linewidth=0.5pt,fillstyle=solid,fillcolor=lightlightblue]{-}(0,0){1.2}{0}{360}
\psarc[linecolor=black,linewidth=0.5pt,fillstyle=solid,fillcolor=white]{-}(0,0){0.7}{0}{360}
\psarc[linecolor=blue,linewidth=1.5pt]{-}(0,0){0.95}{0}{360}
\psline[linestyle=dashed, dash= 1.5pt 1.5pt,linewidth=0.5pt]{-}(0,-1.2)(0,-0.7)
\end{pspicture}\ ,
\ee
with no extra relations. We note that we have the relations
\be
e_j = c^\dag_j c_j\,,
\qquad
\Omega = c_1c^\dag_0\,,
\qquad
\Omega^{-1} = c_0c^\dag_1\,,
\qquad
f=c_0c^\dag_1=c_1c^\dag_0\,.
\ee

\paragraph{The algebra $\boldsymbol{\tl_N(\beta)}$.}

The Temperley--Lieb algebra $\tl_N(\beta)$ is spanned by the diagrams in $\cL_0(N,N)$. It is generated by the diagrams $e_j$ with $j=1,2,\dots, N-1$, satisfying the relations \eqref{eq:def.TL} for $N \ge 2$. For $N=1$, the algebra $\tl_1(\beta)$ is spanned by the identity $\id$ on one strand. For $N=0$, the algebra $\tl_0(\beta)$ is spanned by the empty diagram, which we also denote by $\id$.

\paragraph{The uncoiled algebras $\boldsymbol{\eptl_N(\beta,\gamma)}$, $\boldsymbol{\eptl^{\tinyx 1}_N(\beta,\alpha)}$ and $\boldsymbol{\eptl^{\tinyx 2}_N(\beta,\gamma)}$.}

The uncoiled affine algebras were defined in \cite{LRMD22} as finite-dimensional quotients of the algebras $\eptl_N(\beta)$, whose generators satisfy extra quotient relations. There is one uncoiled affine algebra for $N$ odd, denoted here as $\eptl_N(\beta,\gamma)$, and two such algebras for $N$ even, denoted as $\eptl^{\tinyx 1}_N(\beta,\alpha)$ and $\eptl^{\tinyx 2}_N(\beta,\gamma)$, which differ in the choice of extra relations. These relations are
\begin{subequations}
\begin{alignat}{3}
&\eptl_N(\beta,\gamma): \quad &&\Omega^N = \gamma \id\,,
\\[0.1cm]
&\eptl^{\tinyx 1}_N(\beta,\alpha): \quad &&\Omega^N = \id\,, \quad && E\,\Omega\,E = \alpha E\,,
\\[0.1cm]
&\eptl^{\tinyx 2}_N(\beta,\gamma): \quad &&\Omega^N = \gamma \id\,, \quad && E = 0\,,
\end{alignat}
\end{subequations}
where $\gamma \in \mathbb C^\times$, $\alpha \in \mathbb C$, and
\be
\label{eq:def.E}
E = e_0 e_2 \cdots e_{N-2}\,.
\ee

\paragraph{Transfer matrices.} 
The single-row and double-row transfer matrices $\Tb$ and $\Db$ are respectively the elements of $\eptl_N(\beta)$ and $\tl_N(\beta)$ defined as
\be \label{eq:def.tm}
\Tb = \ 
\begin{pspicture}[shift=-1.1](-1.2,-1.3)(1.4,1.2)
\psarc[linecolor=black,linewidth=0.75pt,fillstyle=solid,fillcolor=lightlightblue]{-}(0,0){1.2}{0}{360}
\psarc[linecolor=black,linewidth=0.75pt,fillstyle=solid,fillcolor=white]{-}(0,0){0.7}{0}{360}
\psline[linewidth=0.75pt]{-}(0.7,0)(1.2,0)
\psline[linewidth=0.75pt]{-}(0.606218, 0.35)(1.03923, 0.6)
\psline[linewidth=0.75pt]{-}(0.35, 0.606218)(0.6, 1.03923)
\psline[linewidth=0.75pt]{-}(0,0.7)(0,1.2)
\psline[linewidth=0.75pt]{-}(-0.35, 0.606218)(-0.6, 1.03923)
\psline[linewidth=0.75pt]{-}(-0.606218, 0.35)(-1.03923, 0.6)
\psline[linewidth=0.75pt]{-}(-0.7,0)(-1.2,0)
\psline[linewidth=0.75pt]{-}(-0.606218, -0.35)(-1.03923, -0.6)
\psline[linewidth=0.75pt]{-}(-0.35, -0.606218)(-0.6, -1.03923)
\psline[linewidth=0.75pt]{-}(0., -0.7)(0., -1.2)
\psline[linewidth=0.75pt]{-}(0.35, -0.606218)(0.6, -1.03923)
\psline[linewidth=0.75pt]{-}(0.606218, -0.35)(1.03923, -0.6)
\rput(0.362347, -1.3523){$_1$}
\rput(0.989949, -0.989949){$_2$}
\rput(1.3523, -0.362347){$_{...}$}
\rput(-1.06066, -1.06066){$_{N-1}$}
\rput(-0.362347, -1.3523){$_N$}
\end{pspicture}\ \ , \qquad
\Db = 
\psset{unit=.7cm}
\begin{pspicture}[shift=-1.15](-0.6,-0.3)(5.6,2.2)
\facegrid{(0,0)}{(5,2)}
\psarc[linewidth=1.5pt,linecolor=blue]{-}(0,1){0.5}{90}{270}
\psarc[linewidth=1.5pt,linecolor=blue]{-}(5,1){0.5}{270}{90}
\rput(0.5,-0.25){$_1$}
\rput(1.5,-0.25){$_2$}
\rput(2.5,-0.25){$_{...}$}
\rput(4.5,-0.25){$_N$}
\end{pspicture} \ ,
\ee
where
\be
\psset{unit=.7cm}
\begin{pspicture}[shift=-.4](1,1)
\facegrid{(0,0)}{(1,1)}
\end{pspicture}
 \ = \
\begin{pspicture}[shift=-.4](1,1)
\facegrid{(0,0)}{(1,1)}
\rput(0,0){\loopa}
\end{pspicture}
 \ + \
\begin{pspicture}[shift=-.4](1,1)
\facegrid{(0,0)}{(1,1)}
\rput(0,0){\loopb}
\end{pspicture}\ \ .
\ee
By expanding each face as the sum of the two tiles, we obtain $\Tb$ and $\Db$ as sums of $2^N$ and $2^{2N}$ diagrams of $\eptl_N(\beta)$ and $\tl_N(\beta)$, respectively.

\paragraph{Families of modules.}
We recall from \cite{IMD25} that a \emph{family of modules} $\repM$ is a set of $\eptl_N(\beta)$-modules
\be
\label{eq:set.of.modules}
\repM = \{\repM(N)\,|\, N = N_0, N_0+2, N_0+4, \dots\}\,,
\qquad N_0 \in \{0,1\}\,,
\ee
endowed with an action $\cL(N,N'): \repM(N') \to \repM(N)$ for all $N,N'$ in $\{N_0, N_0+2, N_0+4, \dots\}$. We refer to $\{N_0, N_0+2, N_0+4, \dots\}$ as the set of {\it admissible} integers for $\repM$. A family is equivalently defined as a set of modules over $\eptl_N(\beta)$ endowed with an action of the generators $c_j$ and $c^\dag_j$ satisfying all the relations in \cite[Eq.~(2.6)]{IMD25}.\footnote{These definitions can be translated in the language of Graham and Lehrer \cite{GL98}. The objects in the even and odd subcategories of the enlarged periodic Temperley--Lieb category are respectively the even and odd non-negative integers. The spaces of morphisms are the diagram spaces $\cL(N,N')$. A family of modules with $N_0=0$ or $N_0=1$ can then be viewed as a functor from the even or odd subcategory to the category of complex vector spaces.}
\medskip 

Similary, a family $\repM$ of $\tl_N(\beta)$-modules is a set of modules, as in \eqref{eq:set.of.modules} but where each $\repM(N)$ is a module over $\tl_N(\beta)$, endowed with an action $\cL_0(N,N'): \repM(N') \to \repM(N)$. Equivalently, a family is a set of modules endowed with an action of the generators $c_j$ and $c^\dag_j$ with $j\ge 1$.\medskip

Let $\repM$ and $\repM'$ be two families of modules over $\eptl_N(\beta)$ or $\tl_N(\beta)$, with the same parity $N_0$. A \emph{family of homomorphisms} $\phi: \repM \to \repM'$ is a set of homomorphisms 
\be
\phi = \{\phi_N:\repM(N) \to \repM'(N)\,|\,N=N_0, N_0+2, N_0+4, \dots\}
\ee
satisfying
\begin{equation} \label{eq:def.family.morphism}
\phi_{N'}(\lambda \cdot u) = \lambda \cdot \phi_N(u)\,,
\end{equation}
for all $\lambda \in \cL(N',N)$ and $\cL_0(N',N)$ for $\eptl_N(\beta)$ and $\tl_N(\beta)$, respectively, and for all $u \in \repM(N)$ and all admissible integers $N$ and $N'$.

\subsection{Jones--Wenzl projectors}

The Jones--Wenzl projectors $P_n$ with $n = 1,2, \dots, N$ for the algebra $\tl_N(\beta)$ are defined recursively as \cite{J83,W88}
\be
\label{eq:WJ.def}
P_1 = \id\,, \qquad
P_{n+1} = P_{n}+\frac{[n]}{[n+1]} \, P_n \, e_n \, P_n\,,
\ee 
where the $q$-numbers are defined as
\be
[n] = \frac{q^n - q^{-n}}{q-q^{-1}}\,.
\ee
We depict the projector $P_n$ as the pink triangle
\be
P_n = \ 
\begin{pspicture}[shift=-0.05](0,0)(1.5,0.3)
\pspolygon[fillstyle=solid,fillcolor=pink](0,0)(1.5,0)(1.5,0.3)(0,0.3)(0,0)\rput(0.75,0.15){$_{n}$}
\end{pspicture} \ \, .
\ee
These projectors satisfy the relations
\be
\label{eq:WJ.relations}
c_n P_n c^\dag_n = -\frac{[n+1]}{[n]} P_{n-1}\,, \qquad
c_j P_n = P_n c^\dag_j = 0 
\qquad
\textrm{for}
\qquad
j=1, 2,\dots, n-1.
\ee
The algebra $\tl_N(\beta)$ has the one-dimensional module $\repV_{N/2}(N)$ described in \cref{sec:Vk}, and $P_{N}$ projects on this module. 
\medskip

For the uncoiled algebras, one can define the analogous Jones--Wenzl projectors $\wh P_{N,x}$ as\cite{LRMD22}
\begin{subequations}
\label{eq:hatP}
\begin{alignat}{2}
\eptl_N(\beta,\gamma)&: \quad
\wh P_{N,x} = \sum_{s=0}^{\frac{N-1}2} \sum_{\ell = 0}^{N-2s-1} \Gamma_{s,\ell} \, Z_{s,\ell}\,,
\\[0.15cm]
\label{eq:hatP1}
\eptl^{\tinyx 1}_N(\beta,\alpha)&: \quad
\wh P_{N,x} = \bigg[\sum_{s=0}^{\frac{N-2}2} \sum_{\ell = 0}^{N-2s-1} \Gamma_{s,\ell} \, Z_{s,\ell}\bigg]+\Gamma_{N/2,0} Z_{N/2,0}\,,
\\[0.15cm]
\eptl^{\tinyx 2}_N(\beta,\gamma)&: \quad
\wh P_{N,x} = \sum_{s=0}^{\frac{N-2}2} \sum_{\ell = 0}^{N-2s-1} \Gamma_{s,\ell} \, Z_{s,\ell}\,,
\end{alignat}
\end{subequations}
where
\be
Z_{s,\ell} = P_N\,(c_0^\dag)^s\, \Omega^\ell\, (c_0)^s\,P_N\,,
\ee
and the constants $\Gamma_{s,\ell}$ depend on $x$ and are given in \cref{app:GammasL}. These projectors satisfy the relations 
\be
\Omega\, \wh P_{N,x} = x\, \wh P_{N,x}\,,
\qquad
c_j \wh P_{N,x} = \wh P_{N,x} c^\dag_j = 0, \qquad j = 0, 1, \dots, N-1\,.
\ee
The algebra $\eptl_N(\beta)$ has a one-parameter family of one-dimensional modules $\repW_{N/2,x}(N)$, described in \cref{sec:W.and.Q}. The uncoiled algebras with $N \ge 1$ then have $N$ such one-dimensional modules $\repW_{N/2,x}(N)$, with 
\be
\label{eq:x.values}
x = \eE^{2 \pi \ir r / N} \times
\left\{\begin{array}{cl}
\gamma^{1/N} & \eptl_N(\beta,\gamma)\,,\, \eptl^{\tinyx 2}_N(\beta,\alpha)\,,\\[0.15cm]
1 & \eptl^{\tinyx 1}_N(\beta,\alpha)\,,
\end{array}\right.
\quad r \in\{0,1,\dots, N-1\}\,,
\ee 
and $\wh P_{N,x}$ projects on the module $\repW_{N/2,x}(N)$.

\subsection[Link modules over $\tl_N(\beta)$]{Link modules over $\boldsymbol{\tl_N(\beta)}$}
\label{sec:Vk}

The standard modules $\repV_k(N)$ over $\tl_N(\beta)$ are defined for $k \in \frac12\mathbb Z_{\ge 0}$ and $N \in \mathbb Z_{\ge 0}$ satisfying $\frac N2-k \in \mathbb Z$. For $N<2k$, $\repV_k(N)$ is the trivial zero module. For $N\ge 2k$, $\repV_k(N)$ is spanned by link states drawn over a horizontal segment with $N$ nodes and $2k$ defects. The non-intersecting loop segments connect nodes pairwise or are attached to defects. For instance, the bases for $\repV_0(4)$, $\repV_1(4)$ and $\repV_2(4)$ are
\begin{subequations}
\begin{alignat}{2}
&\repV_{0}(4): \quad \Big\{\
\begin{pspicture}[shift=-0.05](0,0)(1.6,0.6)
\psline[linewidth=0.5pt](0,0)(1.6,0)
\psarc[linecolor=blue,linewidth=1.5pt]{-}(0.4,0){0.2}{0}{180}
\psarc[linecolor=blue,linewidth=1.5pt]{-}(1.2,0){0.2}{0}{180}
\end{pspicture}\ ,
\quad
\begin{pspicture}[shift=-0.05](0,0)(1.6,0.6)
\psline[linewidth=0.5pt](0,0)(1.6,0)
\psarc[linecolor=blue,linewidth=1.5pt]{-}(0.8,0){0.2}{0}{180}
\psbezier[linecolor=blue,linewidth=1.5pt]{-}(0.2,0)(0.2,0.5)(1.4,0.5)(1.4,0)
\end{pspicture}\ \Big\}\ ,
\\[0.15cm]& 
\repV_{1}(4): \quad \Big\{\
\begin{pspicture}[shift=-0.05](0,0)(1.6,0.6)
\psline[linewidth=0.5pt](0,0)(1.6,0)
\psline[linecolor=blue,linewidth=1.5pt]{-}(0.2,0)(0.2,0.5)
\psline[linecolor=blue,linewidth=1.5pt]{-}(0.6,0)(0.6,0.5)
\psarc[linecolor=blue,linewidth=1.5pt]{-}(1.2,0){0.2}{0}{180}
\end{pspicture}\ , 
\quad
\begin{pspicture}[shift=-0.05](0,0)(1.6,0.6)
\psline[linewidth=0.5pt](0,0)(1.6,0)
\psline[linecolor=blue,linewidth=1.5pt]{-}(0.2,0)(0.2,0.5)
\psline[linecolor=blue,linewidth=1.5pt]{-}(1.4,0)(1.4,0.5)
\psarc[linecolor=blue,linewidth=1.5pt]{-}(0.8,0){0.2}{0}{180}
\end{pspicture}\ , 
\quad
\begin{pspicture}[shift=-0.05](0,0)(1.6,0.6)
\psline[linewidth=0.5pt](0,0)(1.6,0)
\psline[linecolor=blue,linewidth=1.5pt]{-}(1.0,0)(1.0,0.5)
\psline[linecolor=blue,linewidth=1.5pt]{-}(1.4,0)(1.4,0.5)
\psarc[linecolor=blue,linewidth=1.5pt]{-}(0.4,0){0.2}{0}{180}
\end{pspicture}\ \Big\} \ ,
\\[0.15cm]&
\repV_{2}(4): \quad \Big\{\
\begin{pspicture}[shift=-0.05](0,0)(1.6,0.6)
\psline[linewidth=0.5pt](0,0)(1.6,0)
\psline[linecolor=blue,linewidth=1.5pt]{-}(0.2,0)(0.2,0.5)
\psline[linecolor=blue,linewidth=1.5pt]{-}(0.6,0)(0.6,0.5)
\psline[linecolor=blue,linewidth=1.5pt]{-}(1.0,0)(1.0,0.5)
\psline[linecolor=blue,linewidth=1.5pt]{-}(1.4,0)(1.4,0.5)
\end{pspicture}\ \Big\} \ .
\end{alignat}
\end{subequations}
The dimension of these modules is
\be
\label{eq:dimV}
\dim \repV_{k}(N) = d_k(N)-d_{k+1}(N)\,,
\qquad d_k(N) = 
\begin{pmatrix}N\\\frac N2-k\end{pmatrix}, 
\ee
where we use the convention that $\left(\begin{smallmatrix}n\\m\end{smallmatrix}\right) = 0$ for $m<0$ or $m>n$. The action of a diagram $\lambda \in \cL_0(N,N')$ on a link state $u \in \repV_k(N')$ is obtained by drawing $u$ above $\lambda$ and reading the new link state from the bottom nodes. Each closed loop is erased and replaced by a factor $\beta$. For $k>0$, the result is set to zero if the resulting state has less than $2k$ defects. For example,
\be
e_2 \cdot 
\begin{pspicture}[shift=-0.05](0,0)(1.6,0.6)
\psline[linewidth=0.5pt](0,0)(1.6,0)
\psarc[linecolor=blue,linewidth=1.5pt]{-}(0.8,0){0.2}{0}{180}
\psbezier[linecolor=blue,linewidth=1.5pt]{-}(0.2,0)(0.2,0.5)(1.4,0.5)(1.4,0)
\end{pspicture}
\ = \beta \
\begin{pspicture}[shift=-0.05](0,0)(1.6,0.6)
\psline[linewidth=0.5pt](0,0)(1.6,0)
\psarc[linecolor=blue,linewidth=1.5pt]{-}(0.8,0){0.2}{0}{180}
\psbezier[linecolor=blue,linewidth=1.5pt]{-}(0.2,0)(0.2,0.5)(1.4,0.5)(1.4,0)
\end{pspicture}\ ,
\qquad
e_3 \cdot 
\begin{pspicture}[shift=-0.05](0,0)(1.6,0.6)
\psline[linewidth=0.5pt](0,0)(1.6,0)
\psline[linecolor=blue,linewidth=1.5pt]{-}(1.0,0)(1.0,0.5)
\psline[linecolor=blue,linewidth=1.5pt]{-}(1.4,0)(1.4,0.5)
\psarc[linecolor=blue,linewidth=1.5pt]{-}(0.4,0){0.2}{0}{180}
\end{pspicture} 
\ = 0\,.
\ee

Graham and Lehrer \cite{GL98} proved that the set
\be
\repV_{k} = \big\{\repV_{k}(N)\,|\, N=N_0, N_0+2, N_0+4, \dots\big\}\,,
\qquad 
N_0 = \left\{
\begin{array}{ll}
0 & k \in \mathbb Z\,,\\[0.1cm]
1 & k \in \mathbb Z + \frac12\,
\end{array}
\right.
\ee 
is a family of $\tl_N(\beta)$-modules, and that its modules are indecomposable for all $q\in\Cbb^\times$ and $k \in \frac12 \mathbb Z_{\ge 0}$. Moreover, we have
\begin{equation}
\repV_k(N) = \big\{
\lambda \cdot u_k \,|\, \lambda\in\cL_0(N,2k) 
\big\} \,,
\end{equation}
for all $q \in \mathbb C$, where $u_k$ is the unique link state of $\repV_k(2k)$. We write this as
\be
\repV_k(N) = \cL_0(N,2k) \cdot u_k\,.
\ee

\begin{Proposition}
\label{prop:seed.Vk}
Let $q \in \mathbb C^\times$ and $k \in \frac12 \mathbb Z_{\ge 0}$. Let also $\repM$ be a family of modules over $\tl_N(\beta)$, and $\xi$~be a nonzero element in $\repM(2k)$ satisfying
\begin{equation} \label{eq:insertion.state.Vk}
c_j\cdot \xi=0 \qquad \textrm{for}\qquad j=1,2, \dots, 2k-1.
\end{equation}
The linear map $\phi_\xi: \repV_k(N) \to \repM(N)$ defined as
\begin{equation}
\phi_\xi: \lambda\cdot u_k \mapsto \lambda \cdot \xi \,,
\qquad \lambda \in\cL_0(N,2k)
\end{equation}
is well-defined for all admissible $N$, and defines a family of homomorphisms from $\repV_k$ to $\repM$.
\end{Proposition}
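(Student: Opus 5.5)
The key issue is well-definedness. The map $\lambda \mapsto \lambda\cdot u_k$ from $\cL_0(N,2k)$ onto $\repV_k(N)$ is surjective but not injective, so I would first identify its kernel. Once $\phi_\xi$ is well defined, the homomorphism property is immediate. For $\mu\in\cL_0(N',N)$ we have $\mu\lambda\in\cL_0(N',2k)$, and since the action of the family $\repM$ is compatible with composition of diagrams,
\begin{equation*}
\phi_\xi(\mu\cdot(\lambda\cdot u_k)) = \phi_\xi((\mu\lambda)\cdot u_k) = (\mu\lambda)\cdot\xi = \mu\cdot(\lambda\cdot\xi) = \mu\cdot\phi_\xi(\lambda\cdot u_k).
\end{equation*}
Linearity is built into the definition. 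This is exactly condition \eqref{eq:def.family.morphism}.

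To describe the kernel, I would split the basis diagrams of $\cL_0(N,2k)$ into two classes.

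\emph{Diagrams with exactly $2k$ bridges}, i.e.\ spanning $\cL_0^{\tinyx k}(N,2k)$. Here all $2k$ inner nodes are connected to outer nodes, and the diagram is determined by its outer connectivity. Acting on $u_k$ gives precisely the link state with that connectivity, with no loop factors. So $\lambda\mapsto\lambda\cdot u_k$ restricts to a bijection from the diagram basis of $\cL_0^{\tinyx k}(N,2k)$ onto the link-state basis of $\repV_k(N)$.

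\emph{Diagrams with fewer than $2k$ bridges}. Then some arch connects two inner nodes. By planarity, an innermost such arch joins adjacent inner nodes $j,j+1$ with $1\le j\le 2k-1$. Since we are in $\cL_0$, it cannot cross the dashed line, so it cannot join nodes $2k$ and $1$. Such a diagram acts on $u_k$ as zero, because the resulting state has fewer than $2k$ defects.

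From these two facts, $\ker(\lambda\mapsto\lambda\cdot u_k)$ is exactly the span of the diagrams of the second class. Indeed, write any element as a combination from the first class plus a combination from the second; its image is the image of the first part, which vanishes only if that part is zero.

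It therefore suffices to show that each diagram $\lambda$ of the second class satisfies $\lambda\cdot\xi=0$. The step I consider the main technical point is a factorisation: such a $\lambda$ can be written as
\begin{equation*}
\lambda = \lambda'\, c_j, \qquad \lambda'\in\cL_0(N,2k-2),
\end{equation*}
with $c_j=c_{2k,j}\in\cL_0(2k-2,2k)$. To see this, cut the diagram just outside the inner boundary: $c_j$ caps inner nodes $j,j+1$ and passes the remaining $2k-2$ strands straight through. Removing the innermost arch and reattaching the remaining strands to a new inner boundary with $2k-2$ nodes gives $\lambda'$. This uses the fact that the capping arch does not cross the dashed segment, so $\lambda'$ stays in $\cL_0$. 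One should check the orientation convention of the generators $c_j$ against the pictures in \eqref{eq:def.cj}. With the factorisation in hand,
\begin{equation*}
\lambda\cdot\xi = \lambda'\cdot(c_j\cdot\xi) = 0
\end{equation*}
by \eqref{eq:insertion.state.Vk}.

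Hence $\phi_\xi$ vanishes on the kernel, so it is well defined on $\repV_k(N)$. Combined with the first paragraph, $\phi_\xi$ is a family of homomorphisms. In the degenerate cases $N<2k$ both sides are zero, and for $k=0$ the space $\cL_0(N,0)$ has only the first class of diagrams, so nothing further needs checking there.
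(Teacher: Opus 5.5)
Your proposal is correct and follows the same route as the paper. You show that $\lambda\cdot u_k=0$ exactly for diagrams of the form $\lambda' c_j$ with $1\le j\le 2k-1$, kill these using $c_j\cdot\xi=0$, and get the homomorphism property from compatibility of the action with composition. Your explicit identification of the kernel of $\lambda\mapsto\lambda\cdot u_k$ as the span of such diagrams is slightly more careful than the paper: the paper asserts that $\lambda_1-\lambda_2$ is a single term $\lambda' c_j$, whereas in general it is a sum of such terms over several $j$, which is harmless by linearity.
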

\proof
Let $\lambda$ be a diagram in $\cL_0(N,2k)$. The product $\lambda \cdot u_k$ vanishes if and only if $\lambda$ is of the form $\lambda=\lambda'c_j$, for some $\lambda'\in\cL_0(N,2k-2)$ and some index $j\in\{1,2,\dots,2k-1\}$. Let $\lambda_1,\lambda_2\in\cL_0(N,2k)$ be such that $\lambda_1\cdot u_k=\lambda_2\cdot u_k$. Then the difference $(\lambda_1-\lambda_2)$ can be written as $\lambda_1-\lambda_2=\lambda'c_j$, for some $\lambda'\in\cL_0(N,2k-2)$ and $j\in\{1,2,\dots,2k-1\}$. Therefore, we have
\begin{equation}
\phi_\xi(\lambda_1\cdot u_k)-\phi_\xi(\lambda_2\cdot u_k)
= \phi_\xi\big((\lambda_1-\lambda_2)\cdot u_k \big)
= \phi_\xi(\lambda'c_j\cdot u_k) = 0 \,.
\end{equation}
This shows that the map $\phi_\xi$ is well-defined on $\repV_k(N)$. Moreover, let $u=\lambda\cdot u_k$ with $\lambda\in\cL_0(N,2k)$. For all $\lambda'\in\cL_0(N',N)$, we have
\begin{equation}
\phi_\xi(\lambda'\cdot u) = \phi_\xi(\lambda'\lambda\cdot u_k) = \lambda'\lambda\cdot \xi
= \lambda'\cdot \phi_\xi(u) \,,
\end{equation}
thus proving that $\phi_\xi$ defines a family of homomorphisms from $\repV_k$ to $\repM$.
\eproof

Let $\repM$ be a family of modules over $\tl_N(\beta)$, and $k \in \frac12 \mathbb Z_{\ge 0}$.
We say that a state $\xi\in\repM(2k)$ satisfying the property \eqref{eq:insertion.state.Vk} is an \emph{insertion state with $2k$ defects for $\repM$}. The module $\repM(2k)$ may in fact have more than one insertion state with $2k$ defects. The vector space spanned by these insertion states is a submodule of $\repM(2k)$, which we call the \emph{insertion space with $2k$ defects of $\repM$}.
\medskip

In the rest of this section, we consider the values of $q$ relevant for the critical ADE models, namely the roots of unity $q=\eE^{-\ir\pi p/p'}$ with $(p,p')$ as in \eqref{eq:q.standard}.
The structure of the families $\repV_k$ is as follows \cite{GL98,RSA14}:
\begin{itemize}
\item[(i)] For $2k\equiv p'-1 \mod p'$, $\repV_k$ is irreducible. We write this as $\repV_k\simeq\repQ_k$.
\item[(ii)] For $2k\not\equiv p'-1 \mod p'$, we write $r_k=\lfloor2k/p'\rfloor+1$, so that $(r_k-1)p'\leq 2k\leq r_kp'-2$. In this case, $\repV_k$ is indecomposable with two factors: a subfamily $\repR_k$ and a quotient family $\repQ_k = \repV_k/\repR_k$. Moreover, the subfamily $\repR_k$ is isomorphic to $\repQ_{r_kp'-1-k}$. The Loewy diagram for $\repV_k$ reads
\begin{equation} \label{eq:Loewy.Vk}
\repV_k\simeq [\repQ_k\to \repQ_{r_k p'-1-k}] \,.
\end{equation}
\end{itemize}

To complement this description, we give below some useful results about the modules $\repQ_k$ and~$\repR_k$. We first discuss the Jones--Wenzl projectors at roots of unity. For $q = \eE^{- \ir \pi p/p'}$ with $p$ and $p'$ as in~\eqref{eq:q.beta.standard}, the projectors $P_n$ with $1 \le n \le p'-1$ are well-defined. In contrast, the projector $P_{p'}$ is singular for $q \to \eE^{- \ir \pi p/p'}$, since the recursion relation \eqref{eq:WJ.def} for $n=p'-1$ contains a singularity. It is thus natural to wonder whether the projectors $P_n$ for $n>p'$ are also singular in the limit $q \to \eE^{- \ir \pi p/p'}$.

\begin{Proposition} \label{prop:Pnp'-1}
Let $p$ and $p'$ be as in \eqref{eq:q.standard}, and $n \in \mathbb Z_{\ge 1}$. Then the limit $q \to \eE^{-\ir \pi p/p'}$ of the projector $P_{np'-1}$ in $\tl_{np'-1}(\beta)$ exists.
\end{Proposition}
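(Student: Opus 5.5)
We need to show that the coefficients of $P_{np'-1}$ in the diagram basis of $\tl_{np'-1}(\beta)$ have finite limits as $q \to q_c = \eE^{-\ir\pi p/p'}$. The plan is to use a closed-form expression for these coefficients rather than the recursion \eqref{eq:WJ.def}, since the recursion passes through the singular step $n=p'-1$.

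We use the known formula for the coefficients of Jones--Wenzl projectors in terms of the standard bilinear (Gram) form. Write $M=np'-1$. For generic $q$, $P_M$ is the unique idempotent in $\tl_M(\beta)$ satisfying $c_jP_M=P_Mc^\dag_j=0$ for $1\le j\le M-1$, by \eqref{eq:WJ.relations}. Its coefficient on a diagram $\lambda \in \cL_0(M,M)$ is determined by the pair of link states $(v,w)$ of $\repV_0(M)$-type half-diagrams (through-lines counted) that $\lambda$ decomposes into. We use Morrison's formula, or equivalently the expansion obtained by repeatedly applying the recursion: each coefficient is a ratio of products of quantum integers. Its denominators are products $[m]$ with $m\le M$, attached to the arc structure, for example hook-length-type products $\prod [h]$ over the arcs of the half-diagrams.

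The key steps are as follows.

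(1) Write each coefficient $\langle\lambda,P_M\rangle$ as a rational function $N(q)/D(q)$. The denominator is a product of $[m]$, $1\le m\le M$, and the numerator is a polynomial in $q$ and the $[m]$'s.

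(2) Count the order of vanishing at $q_c$. Since $[m]$ vanishes there (simply) iff $p'\mid m$, the order of the denominator is the number of factors $[m]$ with $p'\mid m$.

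(3) Show that the numerator vanishes to at least the same order. This is where $M\equiv -1 \mod p'$ is used. In Morrison's formula, each arc between nodes $i<j$ contributes a ratio of the form $[j-i+1 \text{-related quantities}]/[\text{hook}]$ inside a sum. The natural tool is the quantum Lucas theorem: write $[a]/[b]$ at $q_c$ via $a=a_1p'+a_0$ and $b=b_1p'+b_0$ with $0\le a_0,b_0<p'$. A ratio with a zero in the denominator stays finite when the numerator contains a matching multiple of $p'$. The condition $M=np'-1$ is exactly the statement that $M$ has maximal residue $p'-1$ in base $p'$, the same condition making $\repV_{M/2}$ irreducible and a projective head (case (i) above). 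This is the analogue of "no carries" in Lucas' theorem for $q$-binomials $\left[\begin{smallmatrix}M\\ m\end{smallmatrix}\right]$, which are all regular at $q_c$ when $M\equiv -1 \mod p'$.

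A more structural alternative for step (3) is to realise $P_M$ as a normalized symmetriser, i.e. the image of $U_q(sl_2)$'s top spin component in $V^{\otimes M}$. For $M=np'-1$, the Weyl module of highest weight $M$ is tilting/simple at $q_c$ (Steinberg-type weight). Hence the idempotent projecting onto it has a limit, because $V^{\otimes M}$ at $q_c$ splits off this tilting summand with multiplicity one, and the idempotent deforms continuously. Since the Temperley--Lieb action on $V^{\otimes M}$ is faithful, this gives existence of the limit in $\tl_M(\beta)$. A third, more elementary route is to use Wenzl's recursion in steps of $p'$: one would express $P_{(n+1)p'-1}$ in terms of $P_{np'-1}\otimes P_{p'-1}$ plus correction terms, and do induction on $n$. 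The base case is $P_{p'-1}$, which is regular since all $[m]$ with $m<p'$ are nonzero. For the inductive step, one would need to check that the singular coefficients of the intermediate projectors cancel in the combination.

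The main obstacle is step (3), proving that the numerator zeros match the denominator zeros. I would first try the tilting-module argument, since it avoids coefficient bookkeeping. Its one delicate point is showing that a multiplicity-one tilting summand gives a limit of the primitive idempotent, which follows since endomorphism algebras of tilting modules specialise well (their dimension is constant in $q$). If that fails, I would fall back on the $q$-Lucas coefficient count.
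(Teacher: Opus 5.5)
Your proposal does not yet prove the statement. Step (3), the cancellation of the zeros of the denominators at $q_c=\eE^{-\ir\pi p/p'}$, is the entire content of the proposition, and none of your three routes carries it out; in the recursion-in-steps-of-$p'$ route you explicitly defer it.

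The justification you give for the tilting route also fails, because it would prove too much. For \emph{every} $M$, the top tilting module $T(M)$ occurs exactly once in $V^{\otimes M}$, and $\dim\mathrm{End}_{U_q}(V^{\otimes M})=\dim\tl_M(\beta)$ is independent of $q$. Yet $P_{p'}$ is singular at $q_c$. These two facts let you lift \emph{some} idempotent onto the summand $T(M)$ over the completed local ring at $q_c$, but they do not identify that lift with $P_M$. The identification holds only if the lifted summand has rank $M+1$, so that its generic fibre is the top isotypic component, i.e.\ precisely because $T(M)=\Delta(M)$. For $M=p'$ the lifted summand has rank $2p'$ and is not the image of $P_{p'}$. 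This is the step your argument has to spell out.

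In the Morrison route, the $q$-Lucas remark targets the wrong property. $q$-binomials are Laurent polynomials and are regular for every $M$. What $M\equiv-1\bmod p'$ gives is that $\left[\begin{smallmatrix}M\\ m\end{smallmatrix}\right]$ is \emph{nonzero} at $q_c$ for $0\le m\le M$. You never relate this to the actual denominators of Morrison's formula, whose individual terms can have poles that cancel only in the sum.

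Here is one concrete way to finish your second route, which also shows where the $q$-Lucas observation genuinely enters. In the spin-chain representation on $V^{\otimes M}$, each $e_j$ is a symmetric matrix for the standard bilinear form; on two adjacent factors it is a multiple of $ss^{T}$, with $s$ the $q$-singlet. By uniqueness, $P_M$ is fixed by the word-reversing anti-involution, so it acts as the orthogonal projection onto $\bigcap_j\ker e_j$. For generic $q$ this space is spanned by the mutually orthogonal weight vectors $w_m=F^{(m)}v_+^{\otimes M}$, $0\le m\le M$. Hence $P_M$ acts as $\sum_{m=0}^{M}w_mw_m^{T}/(w_m^{T}w_m)$, and $w_m^{T}w_m$ equals $\left[\begin{smallmatrix}M\\ m\end{smallmatrix}\right]$ up to a sign and a power of $q$. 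For $M=np'-1$ these are all nonzero at $q_c$ by $q$-Lucas, so the matrix has a limit. Finally, $\tl_M(\beta)$ acts faithfully on $V^{\otimes M}$ at $q_c$ itself (generic faithfulness is not enough). So a pole in the diagram coefficients would give a nonzero leading Laurent coefficient acting by zero, a contradiction.

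The paper instead stays inside the Temperley--Lieb framework. It assumes a pole, takes the leading Laurent coefficient $C$, and notes that $c_jC=Cc^\dag_j=0$ and $C^2=0$, so $C$ has no identity component. It then uses $C$ to build a nonzero non-invertible homomorphism out of the irreducible $\repV_{(np'-1)/2}$, which the root-of-unity structure of the standard modules forbids. The repaired quantum-group route gives an explicit formula and a transparent reason why $M\equiv-1\bmod p'$ is the right condition. The cost is importing quantum Schur--Weyl duality at roots of unity.
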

\proof We only sketch the proof, which was discussed in \cite{GST14} and uses the known representation theory of $\tl_N(\beta)$ at roots of unity. At $q = \eE^{-\ir \pi p/p'}$, the module $\repV_{(np'-1)/2}(N)$ is irreducible, as the determinant of its Gram matrix is nonzero. Moreover, the only non-trivial homomorphisms from $\repV_{(np'-1)/2}(N)$ to other indecomposable modules $\repM(N)$ are isomorphisms. Let us then suppose that $P_{np'-1}$ has a Laurent expansion around $q = \eE^{-\ir \pi p/p'}$ with a lowest order term $C(q-\eE^{-\ir \pi p/p'})^\ell$ for some $\ell<0$. The coefficient~$C$ is an element of $\tl_{np'-1}(\beta)$ with the following properties: (i) $c_j C = C c^\dag_j = 0$ for $j=1,2, \dots, np'-2$, and (ii) $C^2 = 0$, implying that $C$ does not have a component along the identity $\id$. We can then use $C$ to construct non-invertible homomorphisms from $\repV_{(np'-1)/2}(N)$ to other modules $\repM(N)$ that are not isomorphic to $\repV_{(np'-1)/2}(N)$. By contradiction, we deduce that $\ell$ cannot be smaller than zero.
\eproof

\begin{Proposition} \label{prop:Qk.Rk}
Let $q=\eE^{-\ir\pi p/p'}$ with $(p,p')$ as in \eqref{eq:q.standard}, $k\in\frac12\Zbb_{\geq 0}$ be such that $2k\not\equiv p'-1 \mod p'$, $r_k=\lfloor2k/p'\rfloor+1$, and $N \in \mathbb Z_{\ge 0}$. Then:
\begin{enumerate}
\item[(i)] The dimension of $\repQ_k(N)$ is given by
\be \label{eq:QTL.dim}
\dim \repQ_k(N) = D_k(N) - D_{k+1}(N) - D_{r_k p'-1-k}(N) + D_{r_kp'-k}(N)\,,
\ee
where
\be \label{eq:DkN}
D_k(N) = \sum_{j \ge 0} d_{k+p'j}(N)\,,
\ee
and $d_k(N)$ is defined in \eqref{eq:dimV}.
\item[(ii)] Let $k'=r_k p'-1-k$, and $v_k\in \repV_k(2k')$ be the state defined as
\be
\label{eq:vk}
v_k = (P_{k'+k} \otimes \id_{k'-k}) c^\dag_{k'+k}c^\dag_{k'+k-1} \cdots c^\dag_{2k+1} \cdot u_k = \
\begin{pspicture}[shift=-0.7](0,-0.7)(4.4,1.4)
\psline[linewidth=0.5pt](0,0)(4.4,0)
\psline[linecolor=blue,linewidth=1.5pt]{-}(0.2,0)(0.2,0.9)
\psline[linecolor=blue,linewidth=1.5pt]{-}(0.6,0)(0.6,0.9)
\psline[linecolor=blue,linewidth=1.5pt]{-}(1.0,0)(1.0,0.9)
\psarc[linecolor=blue,linewidth=1.5pt]{-}(2.8,0){0.2}{0}{180}
\psbezier[linecolor=blue,linewidth=1.5pt]{-}(2.2,0)(2.2,0.6)(3.4,0.6)(3.4,0)
\psbezier[linecolor=blue,linewidth=1.5pt]{-}(1.8,0)(1.8,0.9)(3.8,0.9)(3.8,0)
\psbezier[linecolor=blue,linewidth=1.5pt]{-}(1.4,0)(1.4,1.2)(4.2,1.2)(4.2,0)
\pspolygon[fillstyle=solid,fillcolor=pink](0,0)(2.8,0)(2.8,-0.3)(0,-0.3)(0,0)\rput(1.4,-0.15){$_{k'+k}$}
\rput(3.6,-0.4){$\underbrace{\hspace{1.35cm}}_{k'-k}$}
\rput(0.6,1.25){$\overbrace{\hspace{1.0cm}}^{2k}$}
\end{pspicture}\ \ .
\ee
The submodule $\repR_k(N)$ is given by
\be
\repR_k(N) = \cL_0(N,2k') \cdot v_k \,,
\ee
and the subfamily $\repQ_k$ is equivalently defined as
\be
\repQ_k = \repV_k \big/ \{v_k \equiv 0 \}.
\ee
\end{enumerate}
\end{Proposition}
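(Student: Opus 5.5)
The plan is to prove (ii) first and then obtain (i) by a recursion on the Loewy structure \eqref{eq:Loewy.Vk}. Throughout, $k'=r_kp'-1-k$, so that $k'+k=r_kp'-1$ and $k'-k\in\Zbb_{>0}$.

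\emph{Well-definedness of $v_k$.} The projector $P_{k'+k}=P_{r_kp'-1}$ has a finite limit at $q=\eE^{-\ir\pi p/p'}$ by \cref{prop:Pnp'-1}, so the state $v_k$ in \eqref{eq:vk} is well defined.

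\emph{$v_k$ is an insertion state.} Next I check that $v_k$ is an insertion state with $2k'$ defects for $\repV_k$, that is, $c_j\cdot v_k=0$ for $j=1,\dots,2k'-1$. There are three ranges of $j$:
\begin{itemize}
\item For $j\le k'+k-1$, this follows from \eqref{eq:WJ.relations}: $c_jP_{k'+k}=0$.
\item For $j\ge k'+k+1$, $c_j$ acts on the nested arches, to the right of the projector. It either closes an arch into a loop, or moves an arch endpoint so that it contracts with $c^\dag_{\cdots}$. Either way, it produces a state with an arch feeding into the projector, hence zero. Alternatively, it reduces the number of defects below $2k$, which is zero in $\repV_k$.
\item The case $j=k'+k$ is the delicate one. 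Here $c_{k'+k}$ joins the last projector strand to the innermost arch. Using $c_{k'+k}(P_{k'+k}\otimes\id)c^\dag_{k'+k}$ and the first relation of \eqref{eq:WJ.relations}, this produces a factor proportional to $[k'+k+1]/[k'+k]$. But $[r_kp']=0$ at this root of unity, while $[r_kp'-1]\ne0$, so the factor vanishes. A careful treatment of the partial trace on one strand is needed here.
\end{itemize}

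\emph{The image of $\phi_{v_k}$.} By \cref{prop:seed.Vk} applied to $\repM=\repV_k$, we get a family homomorphism $\phi_{v_k}:\repV_{k'}\to\repV_k$ with image $\cL_0(N,2k')\cdot v_k$. The vector $v_k$ is nonzero: expanding $P_{k'+k}$, its component on the link state with $2k$ left defects and $k'-k$ nested arches has coefficient $1$. So the image is nonzero. The known structure (ii) of $\repV_k$ says its unique proper nonzero subfamily is $\repR_k\simeq\repQ_{k'}$. The image of $\phi_{v_k}$ is a quotient of $\repV_{k'}$ and lies inside $\repV_k$, so it must equal $\repR_k$. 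This gives $\repR_k(N)=\cL_0(N,2k')\cdot v_k$, and therefore $\repQ_k=\repV_k/\{v_k\equiv0\}$. I expect the main obstacle to be the vanishing at $j=k'+k$ together with the nonvanishing of $v_k$, since these are where the root-of-unity degeneracy enters.

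\emph{The dimension formula (i).} From the short exact sequence, $\dim\repQ_k(N)=\dim\repV_k(N)-\dim\repQ_{k'}(N)$. I iterate this along the chain $k\to k'\to\cdots$. The label $k'$ lies in the next $p'$-block, and since $r_{k'}p'-1-k'=k+p'$, the chain alternates between the two sequences $k+jp'$ and $k'+jp'$. It terminates because $\repV_{m}(N)=0$ for $2m>N$. This yields the alternating sum
\[
\dim\repQ_k(N)=\sum_{j\ge0}\big(\dim\repV_{k+jp'}(N)-\dim\repV_{k'+jp'}(N)\big).
\]
Substituting $\dim\repV_m=d_m-d_{m+1}$ from \eqref{eq:dimV} and regrouping with \eqref{eq:DkN} gives \eqref{eq:QTL.dim}, since $k'+1=r_kp'-k$.
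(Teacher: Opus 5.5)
Your route coincides with the paper's. For (ii), the paper also shows that $v_k$ is an insertion state with $2k'$ defects, applies \cref{prop:seed.Vk}, and identifies the nonzero image with $\repR_k$ via \eqref{eq:Loewy.Vk}. For (i), it uses the same relation $\dim\repQ_k(N)=\dim\repV_k(N)-\dim\repQ_{k'}(N)$, organised as an induction on decreasing $r_k$ with $k''=p'+k$; your alternating sum simply unrolls it. Your remark that $v_k$ has coefficient $1$ on the unprojected link state, so that $v_k\neq0$, fills in a point the paper leaves implicit.

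You are right to flag $j=k'+k$: as written, that step only works for $r_k=1$.
\begin{itemize}
\item For generic $q$, the partial-trace relation in \eqref{eq:WJ.relations} gives $c_{k'+k}\cdot v_k=-\frac{[r_kp']}{[r_kp'-1]}\,\tilde v$, with $\tilde v=(P_{r_kp'-2}\otimes\id_{k'-k-1})\,c^\dag_{k'+k-1}\cdots c^\dag_{2k+1}\cdot u_k$.
\item When $r_k=1$, $P_{p'-2}$ is regular and $[p']=0$ finishes the job. This is the case $k\le\frac{p'}2-1$, which is the only one used later in the paper.
\item When $r_k\ge2$, $P_{r_kp'-2}$ is itself singular at $q=\eE^{-\ir\pi p/p'}$. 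So $[r_kp']\,P_{r_kp'-2}$ has a finite limit that need not vanish. For instance, for $p'=2$ ($\beta=0$) the partial trace of $P_3$ is $-2e_1$, which vanishes only after acting on $u_1$.
\end{itemize}
The zero of $[r_kp']$ therefore proves nothing at the level of diagrams. What you need is that the vector $\tilde v\in\repV_k(2k'-2)$ stays finite.

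This does hold. View $\repV_k(2k'-2)$ as a $\tl_{r_kp'-2}$-module acting on the first $r_kp'-2$ nodes. It is filtered by modules $\repV_j(r_kp'-2)$ with $(r_k-2)p'+2\le 2j\le r_kp'-2$. The top factor $2j=r_kp'-2$ occurs once. Its linkage class consists of the values $2j\in\{ap',\,ap'-2\}$ with $a\equiv r_k \bmod 2$, and no other factor in the filtration lies in that class. Hence $P_{r_kp'-2}$ acts on this module as a block idempotent, which is regular at the root of unity, and $[r_kp']\,\tilde v\to0$.

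The paper's proof settles this step with a bare appeal to \eqref{eq:WJ.relations}, so it has the same omission.
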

\proof
We prove \eqref{eq:QTL.dim} by induction over decreasing values of $r_k$. We have
\begin{equation}
(r_k-1)p' \leq 2k \leq r_kp'-2 \,, \qquad r_kp' \leq 2k' \leq (r_k+1)p'-2 \,,
\end{equation}
and thus $r_{k'}=r_k+1$.
For $r_k>1+\frac N{p'}$, both $k$ and $k'$ are strictly greater than $\frac N2$, and hence \eqref{eq:QTL.dim} is trivially satisfied with $\dim \repQ_k(N)= 0$. Let $k$ be such that $1\leq r_k\leq 1+\frac N{p'}$. The inductive assumption is that \eqref{eq:QTL.dim} is satisfied for $\dim \repQ_{k'}(N)$ for each $k'$ with $r_{k'}=r_k+1$, namely
\be 
\dim \repQ_{k'}(N) = D_{k'}(N) - D_{k'+1}(N) - D_{k''}(N) + D_{k''+1}(N)\,,
\ee
where $k''=(r_k+1)p'-1-k'=p'+k$. We then obtain
\begin{alignat}{1}
\dim \repQ_k(N) &=\dim \repV_k(N) - \dim \repQ_{k'}(N) \nn\\[0.1cm]
&= \big(d_k(N)+D_{p'+k}(N)\big)-\big(d_{k+1}(N)+D_{p'+k+1}(N)\big)- D_{k'}(N) + D_{k'+1}(N) \,, \nn\\[0.1cm]
&= D_k(N)-D_{k+1}(N)- D_{k'}(N) + D_{k'+1}(N) \,,
\end{alignat}
which proves \eqref{eq:QTL.dim} for $\dim \repQ_k(N)$.\medskip

For the second property, we note that $k'+k=r_kp'-1$. Using \cref{prop:Pnp'-1}, we deduce that $P_{k'+k}$ is well-defined for $q\to \eE^{-\ir \pi p/p'}$. Using \eqref{eq:WJ.relations}, we find that $v_k$ is an insertion state with $2k'$ defects. Let us denote by $\varphi_k$ the corresponding family of homomorphisms from $\repV_{k'}$ to $\repV_k$. By construction, we have $\varphi_k\big(\repV_{k'}(N)\big)=\cL_0(N,2k')\cdot v_k$. Moreover, $\varphi_k(\repV_{k'})\simeq \repV_{k'}\big/\mathrm{Ker}\,\varphi_k$, and hence $\varphi_k(\repV_{k'})$ is isomorphic to a nonzero quotient of $\repV_{k'}$. From \eqref{eq:Loewy.Vk}, we conclude that $\varphi_k(\repV_{k'})$ is the unique subfamily of $\repV_k$ isomorphic to $\repQ_{k'}$, which proves that $\varphi_k(\repV_{k'})=\repR_k$.
\eproof

\subsection[Link modules over $\eptl_N(\beta)$]{Link modules over $\boldsymbol{\eptl_N(\beta)}$}
\label{sec:W.and.Q}

We review the definition and properties of the standard modules $\repW_{k,x}(N)$ over $\eptl_N(\beta)$. Let $x \in \mathbb C^\times$, $k \in \frac12\mathbb Z_{\ge 0}$ and $N \in \mathbb N_{\ge 0}$ satisfying $\frac N2-k \in \mathbb Z$. For $N<2k$, $\repW_{k,x}(N)$ is the trivial zero module. For $N\geq 2k$, $\repW_{k,x}(N)$ is spanned by link states with $2k$ defect. In this context, a link state is a diagram drawn on a disc with $N$~nodes on its boundary. A dashed segment is drawn connecting the marked point and the boundary of the disc. The link state has non-intersecting loop segments, which either tie the nodes pairwise, or are {\it defects}, namely loop segments connecting a node to the marked point without intersecting the dashed segment. To illustrate, here are the bases for $\repW_{0,x}(4)$, $\repW_{1,x}(4)$ and $\repW_{2,x}(4)$:
\begin{subequations}
\begin{alignat}{2}
&\repW_{0,x}(4): \quad
\begin{pspicture}[shift=-0.6](-0.7,-0.7)(0.7,0.7)
\psarc[linecolor=black,linewidth=0.5pt,fillstyle=solid,fillcolor=lightlightblue]{-}(0,0){0.7}{0}{360}
\psline[linestyle=dashed, dash= 1.5pt 1.5pt, linewidth=0.5pt]{-}(0,0)(0,-0.7)
\psarc[linecolor=black,linewidth=0.5pt,fillstyle=solid,fillcolor=darkgreen]{-}(0,0){0.07}{0}{360}
\psbezier[linecolor=blue,linewidth=1.5pt]{-}(0.494975, 0.494975)(0.20,0.20)(0.20,-0.20)(0.494975, -0.494975)
\psbezier[linecolor=blue,linewidth=1.5pt]{-}(-0.494975, 0.494975)(-0.20,0.20)(-0.20,-0.20)(-0.494975, -0.494975)
\end{pspicture}\ ,
\quad
\begin{pspicture}[shift=-0.6](-0.7,-0.7)(0.7,0.7)
\psarc[linecolor=black,linewidth=0.5pt,fillstyle=solid,fillcolor=lightlightblue]{-}(0,0){0.7}{0}{360}
\psline[linestyle=dashed, dash=1.5pt 1.5pt, linewidth=0.5pt]{-}(-0.5,0)(0,-0.7)
\psarc[linecolor=black,linewidth=0.5pt,fillstyle=solid,fillcolor=darkgreen]{-}(-0.5,0){0.07}{0}{360}
\psbezier[linecolor=blue,linewidth=1.5pt]{-}(0.494975, 0.494975)(0.20,0.20)(0.20,-0.20)(0.494975, -0.494975)
\psbezier[linecolor=blue,linewidth=1.5pt]{-}(-0.494975, 0.494975)(-0.20,0.20)(-0.20,-0.20)(-0.494975, -0.494975)
\end{pspicture}\ ,
\quad
\begin{pspicture}[shift=-0.6](-0.7,-0.7)(0.7,0.7)
\psarc[linecolor=black,linewidth=0.5pt,fillstyle=solid,fillcolor=lightlightblue]{-}(0,0){0.7}{0}{360}
\psline[linestyle=dashed, dash=1.5pt 1.5pt, linewidth=0.5pt]{-}(0.5,0)(0,-0.7)
\psarc[linecolor=black,linewidth=0.5pt,fillstyle=solid,fillcolor=darkgreen]{-}(0.5,0){0.07}{0}{360}
\psbezier[linecolor=blue,linewidth=1.5pt]{-}(0.494975, 0.494975)(0.20,0.20)(0.20,-0.20)(0.494975, -0.494975)
\psbezier[linecolor=blue,linewidth=1.5pt]{-}(-0.494975, 0.494975)(-0.20,0.20)(-0.20,-0.20)(-0.494975, -0.494975)
\end{pspicture}\ ,
\quad
\begin{pspicture}[shift=-0.6](-0.7,-0.7)(0.7,0.7)
\psarc[linecolor=black,linewidth=0.5pt,fillstyle=solid,fillcolor=lightlightblue]{-}(0,0){0.7}{0}{360}
\psline[linestyle=dashed, dash= 1.5pt 1.5pt, linewidth=0.5pt]{-}(0,0)(0,-0.7)
\psarc[linecolor=black,linewidth=0.5pt,fillstyle=solid,fillcolor=darkgreen]{-}(0,0){0.07}{0}{360}
\psbezier[linecolor=blue,linewidth=1.5pt]{-}(0.494975, 0.494975)(0.20,0.20)(-0.20,0.20)(-0.494975, 0.494975)
\psbezier[linecolor=blue,linewidth=1.5pt]{-}(0.494975, -0.494975)(0.20,-0.20)(-0.20,-0.20)(-0.494975, -0.494975)
\end{pspicture}\ ,
\quad
\begin{pspicture}[shift=-0.6](-0.7,-0.7)(0.7,0.7)
\psarc[linecolor=black,linewidth=0.5pt,fillstyle=solid,fillcolor=lightlightblue]{-}(0,0){0.7}{0}{360}
\psline[linestyle=dashed, dash=1.5pt 1.5pt, linewidth=0.5pt]{-}(0,-0.45)(0,-0.7)
\psarc[linecolor=black,linewidth=0.5pt,fillstyle=solid,fillcolor=darkgreen]{-}(0,-0.5){0.07}{0}{360}
\psbezier[linecolor=blue,linewidth=1.5pt]{-}(0.494975, 0.494975)(0.20,0.20)(-0.20,0.20)(-0.494975, 0.494975)
\psbezier[linecolor=blue,linewidth=1.5pt]{-}(0.494975, -0.494975)(0.20,-0.20)(-0.20,-0.20)(-0.494975, -0.494975)
\end{pspicture}\ ,
\quad
\begin{pspicture}[shift=-0.6](-0.7,-0.7)(0.7,0.7)
\psarc[linecolor=black,linewidth=0.5pt,fillstyle=solid,fillcolor=lightlightblue]{-}(0,0){0.7}{0}{360}
\psline[linestyle=dashed, dash=1.5pt 1.5pt, linewidth=0.5pt]{-}(0,0.5)(0,-0.7)
\psarc[linecolor=black,linewidth=0.5pt,fillstyle=solid,fillcolor=darkgreen]{-}(0,0.5){0.07}{0}{360}
\psbezier[linecolor=blue,linewidth=1.5pt]{-}(0.494975, 0.494975)(0.20,0.20)(-0.20,0.20)(-0.494975, 0.494975)
\psbezier[linecolor=blue,linewidth=1.5pt]{-}(0.494975, -0.494975)(0.20,-0.20)(-0.20,-0.20)(-0.494975, -0.494975)
\end{pspicture}\ ,
\\[0.2cm]
\label{eq:W14.basis}
&\repW_{1,x}(4): \quad
\begin{pspicture}[shift=-0.6](-0.7,-0.7)(0.7,0.7)
\psarc[linecolor=black,linewidth=0.5pt,fillstyle=solid,fillcolor=lightlightblue]{-}(0,0){0.7}{0}{360}
\psline[linestyle=dashed, dash= 1.5pt 1.5pt, linewidth=0.5pt]{-}(0,0)(0,-0.7)
\psbezier[linecolor=blue,linewidth=1.5pt]{-}(0.494975, 0.494975)(0.20,0.20)(0.20,-0.20)(0.494975, -0.494975)
\psline[linecolor=blue,linewidth=1.5pt]{-}(-0.494975, 0.494975)(0,0)
\psline[linecolor=blue,linewidth=1.5pt]{-}(-0.494975, -0.494975)(0,0)
\psarc[linecolor=black,linewidth=0.5pt,fillstyle=solid,fillcolor=darkgreen]{-}(0,0){0.09}{0}{360}
\end{pspicture}\ ,
\quad
\begin{pspicture}[shift=-0.6](-0.7,-0.7)(0.7,0.7)
\psarc[linecolor=black,linewidth=0.5pt,fillstyle=solid,fillcolor=lightlightblue]{-}(0,0){0.7}{0}{360}
\psline[linestyle=dashed, dash= 1.5pt 1.5pt, linewidth=0.5pt]{-}(0,0)(0,-0.7)
\psbezier[linecolor=blue,linewidth=1.5pt]{-}(0.494975, 0.494975)(0.20,0.20)(-0.20,0.20)(-0.494975, 0.494975)
\psline[linecolor=blue,linewidth=1.5pt]{-}(0.494975, -0.494975)(0,0)
\psline[linecolor=blue,linewidth=1.5pt]{-}(-0.494975, -0.494975)(0,0)
\psarc[linecolor=black,linewidth=0.5pt,fillstyle=solid,fillcolor=darkgreen]{-}(0,0){0.09}{0}{360}
\end{pspicture}\ ,
\quad
\begin{pspicture}[shift=-0.6](-0.7,-0.7)(0.7,0.7)
\psarc[linecolor=black,linewidth=0.5pt,fillstyle=solid,fillcolor=lightlightblue]{-}(0,0){0.7}{0}{360}
\psline[linestyle=dashed, dash= 1.5pt 1.5pt, linewidth=0.5pt]{-}(0,0)(0,-0.7)
\psbezier[linecolor=blue,linewidth=1.5pt]{-}(-0.494975, -0.494975)(-0.20,-0.20)(-0.20,0.20)(-0.494975, 0.494975)
\psline[linecolor=blue,linewidth=1.5pt]{-}(0.494975, -0.494975)(0,0)
\psline[linecolor=blue,linewidth=1.5pt]{-}(0.494975, 0.494975)(0,0)
\psarc[linecolor=black,linewidth=0.5pt,fillstyle=solid,fillcolor=darkgreen]{-}(0,0){0.09}{0}{360}
\end{pspicture}\ ,
\quad
\begin{pspicture}[shift=-0.6](-0.7,-0.7)(0.7,0.7)
\psarc[linecolor=black,linewidth=0.5pt,fillstyle=solid,fillcolor=lightlightblue]{-}(0,0){0.7}{0}{360}
\psline[linestyle=dashed, dash= 1.5pt 1.5pt, linewidth=0.5pt]{-}(0,0)(0,-0.7)
\psbezier[linecolor=blue,linewidth=1.5pt]{-}(-0.494975, -0.494975)(-0.20,-0.20)(0.20,-0.20)(0.494975, -0.494975)
\psline[linecolor=blue,linewidth=1.5pt]{-}(-0.494975, 0.494975)(0,0)
\psline[linecolor=blue,linewidth=1.5pt]{-}(0.494975, 0.494975)(0,0)
\psarc[linecolor=black,linewidth=0.5pt,fillstyle=solid,fillcolor=darkgreen]{-}(0,0){0.09}{0}{360}
\end{pspicture}\ ,
\qquad 
\repW_{2,x}(4): \quad
\begin{pspicture}[shift=-0.6](-0.7,-0.7)(0.7,0.7)
\psarc[linecolor=black,linewidth=0.5pt,fillstyle=solid,fillcolor=lightlightblue]{-}(0,0){0.7}{0}{360}
\psline[linestyle=dashed, dash= 1.5pt 1.5pt, linewidth=0.5pt]{-}(0,0)(0,-0.7)
\psline[linecolor=blue,linewidth=1.5pt]{-}(-0.494975, 0.494975)(0,0)
\psline[linecolor=blue,linewidth=1.5pt]{-}(-0.494975, -0.494975)(0,0)
\psline[linecolor=blue,linewidth=1.5pt]{-}(0.494975, -0.494975)(0,0)
\psline[linecolor=blue,linewidth=1.5pt]{-}(0.494975, 0.494975)(0,0)
\psarc[linecolor=black,linewidth=0.5pt,fillstyle=solid,fillcolor=darkgreen]{-}(0,0){0.09}{0}{360}
\end{pspicture}\ .
\end{alignat}
\end{subequations}
The dimension of these modules is
\be
\label{eq:dimW}
\dim \repW_{k,x}(N) = d_k(N)\,,
\ee
where we recall that $d_k(N)$ is defined in \eqref{eq:dimV}.\medskip

The action $\lambda \cdot u$ of $\lambda\in\cL(N,N')$ on a state $u \in \repW_{k,x}(N')$ is obtained by drawing $u$ inside $\lambda$, and reading the new link state from the outer boundary. Loops are removed and defects crossing the dashed segment are unwound following the rules
\begin{subequations}
\begin{alignat}{2}
\label{eq:Wkx.rules}
&\psset{unit=0.7cm}
\begin{pspicture}[shift=-0.6](-0.7,-0.7)(0.7,0.7)
\psarc[linecolor=black,linewidth=0.5pt,fillstyle=solid,fillcolor=lightlightblue]{-}(0,0){0.7}{0}{360}
\psline[linestyle=dashed, dash= 1.5pt 1.5pt, linewidth=0.5pt]{-}(0,0)(0,-0.7)
\psarc[linecolor=black,linewidth=0.5pt,fillstyle=solid,fillcolor=darkgreen]{-}(0,0){0.07}{0}{360}
\psarc[linecolor=blue,linewidth=1.5pt]{-}(0,0){0.27}{0}{360}
\end{pspicture}
\ = \alpha \ 
\begin{pspicture}[shift=-0.6](-0.7,-0.7)(0.7,0.7)
\psarc[linecolor=black,linewidth=0.5pt,fillstyle=solid,fillcolor=lightlightblue]{-}(0,0){0.7}{0}{360}
\psline[linestyle=dashed, dash= 1.5pt 1.5pt, linewidth=0.5pt]{-}(0,0)(0,-0.7)
\psarc[linecolor=black,linewidth=0.5pt,fillstyle=solid,fillcolor=darkgreen]{-}(0,0){0.07}{0}{360}
\end{pspicture}
\ ,
\qquad&&\psset{unit=0.7cm}
\begin{pspicture}[shift=-0.6](-0.7,-0.7)(0.7,0.7)
\psarc[linecolor=black,linewidth=0.5pt,fillstyle=solid,fillcolor=lightlightblue]{-}(0,0){0.7}{0}{360}
\psline[linestyle=dashed, dash= 1.5pt 1.5pt, linewidth=0.5pt]{-}(0,0)(0,-0.7)
\psbezier[linecolor=blue,linewidth=1.5pt]{-}(0,0)(0.4,-0.4)(-0.4,-0.4)(-0.495,-0.495)
\psarc[linecolor=black,linewidth=0.5pt,fillstyle=solid,fillcolor=darkgreen]{-}(0,0){0.09}{0}{360}
\end{pspicture}
\ = x \
\begin{pspicture}[shift=-0.6](-0.7,-0.7)(0.7,0.7)
\psarc[linecolor=black,linewidth=0.5pt,fillstyle=solid,fillcolor=lightlightblue]{-}(0,0){0.7}{0}{360}
\psline[linestyle=dashed, dash= 1.5pt 1.5pt, linewidth=0.5pt]{-}(0,0)(0,-0.7)
\psline[linecolor=blue,linewidth=1.5pt]{-}(0,0)(-0.495,-0.495)
\psarc[linecolor=black,linewidth=0.5pt,fillstyle=solid,fillcolor=darkgreen]{-}(0,0){0.09}{0}{360}
\end{pspicture}\ ,
\qquad
\begin{pspicture}[shift=-0.6](-0.7,-0.7)(0.7,0.7)
\psarc[linecolor=black,linewidth=0.5pt,fillstyle=solid,fillcolor=lightlightblue]{-}(0,0){0.7}{0}{360}
\psline[linestyle=dashed, dash= 1.5pt 1.5pt, linewidth=0.5pt]{-}(0,0)(0,-0.7)
\psline[linecolor=blue,linewidth=1.5pt]{-}(0,0)(0.2,0.2)
\psline[linecolor=blue,linewidth=1.5pt]{-}(0,0)(-0.2,0.2)
\psarc[linecolor=black,linewidth=0.5pt,fillstyle=solid,fillcolor=darkgreen]{-}(0,0){0.09}{0}{360}
\psbezier[linecolor=blue,linewidth=1.5pt]{-}(0.2,0.2)(0.5,0.5)(-0.5,0.5)(-0.2,0.2)
\end{pspicture}
\ = 0\,,
\\[0.2cm]
&\psset{unit=0.7cm}
\begin{pspicture}[shift=-0.6](-0.7,-0.7)(0.7,0.7)
\psarc[linecolor=black,linewidth=0.5pt,fillstyle=solid,fillcolor=lightlightblue]{-}(0,0){0.7}{0}{360}
\psline[linestyle=dashed, dash= 1.5pt 1.5pt, linewidth=0.5pt]{-}(0,0)(0,-0.7)
\psarc[linecolor=black,linewidth=0.5pt,fillstyle=solid,fillcolor=darkgreen]{-}(0,0){0.07}{0}{360}
\psarc[linecolor=blue,linewidth=1.5pt]{-}(0.3, 0.3){0.17}{0}{360}
\end{pspicture}
\ = \beta \ 
\begin{pspicture}[shift=-0.6](-0.7,-0.7)(0.7,0.7)
\psarc[linecolor=black,linewidth=0.5pt,fillstyle=solid,fillcolor=lightlightblue]{-}(0,0){0.7}{0}{360}
\psline[linestyle=dashed, dash= 1.5pt 1.5pt, linewidth=0.5pt]{-}(0,0)(0,-0.7)
\psarc[linecolor=black,linewidth=0.5pt,fillstyle=solid,fillcolor=darkgreen]{-}(0,0){0.07}{0}{360}
\end{pspicture}\ ,
\qquad&&\psset{unit=0.7cm}
\begin{pspicture}[shift=-0.6](-0.7,-0.7)(0.7,0.7)
\psarc[linecolor=black,linewidth=0.5pt,fillstyle=solid,fillcolor=lightlightblue]{-}(0,0){0.7}{0}{360}
\psline[linestyle=dashed, dash= 1.5pt 1.5pt, linewidth=0.5pt]{-}(0,0)(0,-0.7)
\psbezier[linecolor=blue,linewidth=1.5pt]{-}(0,0)(-0.4,-0.4)(0.4,-0.4)(0.495,-0.495)
\psarc[linecolor=black,linewidth=0.5pt,fillstyle=solid,fillcolor=darkgreen]{-}(0,0){0.09}{0}{360}
\end{pspicture}
\ = \frac1x \
\begin{pspicture}[shift=-0.6](-0.7,-0.7)(0.7,0.7)
\psarc[linecolor=black,linewidth=0.5pt,fillstyle=solid,fillcolor=lightlightblue]{-}(0,0){0.7}{0}{360}
\psline[linestyle=dashed, dash= 1.5pt 1.5pt, linewidth=0.5pt]{-}(0,0)(0,-0.7)
\psline[linecolor=blue,linewidth=1.5pt]{-}(0,0)(0.495,-0.495)
\psarc[linecolor=black,linewidth=0.5pt,fillstyle=solid,fillcolor=darkgreen]{-}(0,0){0.09}{0}{360}
\end{pspicture}\ .
\end{alignat}
\end{subequations}
For instance, we have
\be
e_3 \cdot \
\begin{pspicture}[shift=-0.6](-0.7,-0.7)(0.7,0.7)
\psarc[linecolor=black,linewidth=0.5pt,fillstyle=solid,fillcolor=lightlightblue]{-}(0,0){0.7}{0}{360}
\psline[linestyle=dashed, dash=1.5pt 1.5pt, linewidth=0.5pt]{-}(-0.5,0)(0,-0.7)
\psarc[linecolor=black,linewidth=0.5pt,fillstyle=solid,fillcolor=darkgreen]{-}(-0.5,0){0.07}{0}{360}
\psbezier[linecolor=blue,linewidth=1.5pt]{-}(0.494975, 0.494975)(0.20,0.20)(0.20,-0.20)(0.494975, -0.494975)
\psbezier[linecolor=blue,linewidth=1.5pt]{-}(-0.494975, 0.494975)(-0.20,0.20)(-0.20,-0.20)(-0.494975, -0.494975)
\end{pspicture}
\ = \alpha \
\begin{pspicture}[shift=-0.6](-0.7,-0.7)(0.7,0.7)
\psarc[linecolor=black,linewidth=0.5pt,fillstyle=solid,fillcolor=lightlightblue]{-}(0,0){0.7}{0}{360}
\psline[linestyle=dashed, dash=1.5pt 1.5pt, linewidth=0.5pt]{-}(-0.5,0)(0,-0.7)
\psarc[linecolor=black,linewidth=0.5pt,fillstyle=solid,fillcolor=darkgreen]{-}(-0.5,0){0.07}{0}{360}
\psbezier[linecolor=blue,linewidth=1.5pt]{-}(0.494975, 0.494975)(0.20,0.20)(0.20,-0.20)(0.494975, -0.494975)
\psbezier[linecolor=blue,linewidth=1.5pt]{-}(-0.494975, 0.494975)(-0.20,0.20)(-0.20,-0.20)(-0.494975, -0.494975)
\end{pspicture}\ ,
\qquad 
e_0 \cdot \
\begin{pspicture}[shift=-0.6](-0.7,-0.7)(0.7,0.7)
\psarc[linecolor=black,linewidth=0.5pt,fillstyle=solid,fillcolor=lightlightblue]{-}(0,0){0.7}{0}{360}
\psline[linestyle=dashed, dash= 1.5pt 1.5pt, linewidth=0.5pt]{-}(0,0)(0,-0.7)
\psbezier[linecolor=blue,linewidth=1.5pt]{-}(-0.494975, -0.494975)(-0.20,-0.20)(-0.20,0.20)(-0.494975, 0.494975)
\psline[linecolor=blue,linewidth=1.5pt]{-}(0.494975, -0.494975)(0,0)
\psline[linecolor=blue,linewidth=1.5pt]{-}(0.494975, 0.494975)(0,0)
\psarc[linecolor=black,linewidth=0.5pt,fillstyle=solid,fillcolor=darkgreen]{-}(0,0){0.09}{0}{360}
\end{pspicture}
\ = x \
\begin{pspicture}[shift=-0.6](-0.7,-0.7)(0.7,0.7)
\psarc[linecolor=black,linewidth=0.5pt,fillstyle=solid,fillcolor=lightlightblue]{-}(0,0){0.7}{0}{360}
\psline[linestyle=dashed, dash= 1.5pt 1.5pt, linewidth=0.5pt]{-}(0,0)(0,-0.7)
\psbezier[linecolor=blue,linewidth=1.5pt]{-}(-0.494975, -0.494975)(-0.20,-0.20)(0.20,-0.20)(0.494975, -0.494975)
\psline[linecolor=blue,linewidth=1.5pt]{-}(-0.494975, 0.494975)(0,0)
\psline[linecolor=blue,linewidth=1.5pt]{-}(0.494975, 0.494975)(0,0)
\psarc[linecolor=black,linewidth=0.5pt,fillstyle=solid,fillcolor=darkgreen]{-}(0,0){0.09}{0}{360}
\end{pspicture}\ ,
\qquad 
e_3 \cdot \ 
\begin{pspicture}[shift=-0.6](-0.7,-0.7)(0.7,0.7)
\psarc[linecolor=black,linewidth=0.5pt,fillstyle=solid,fillcolor=lightlightblue]{-}(0,0){0.7}{0}{360}
\psline[linestyle=dashed, dash= 1.5pt 1.5pt, linewidth=0.5pt]{-}(0,0)(0,-0.7)
\psbezier[linecolor=blue,linewidth=1.5pt]{-}(0.494975, 0.494975)(0.20,0.20)(0.20,-0.20)(0.494975, -0.494975)
\psline[linecolor=blue,linewidth=1.5pt]{-}(-0.494975, 0.494975)(0,0)
\psline[linecolor=blue,linewidth=1.5pt]{-}(-0.494975, -0.494975)(0,0)
\psarc[linecolor=black,linewidth=0.5pt,fillstyle=solid,fillcolor=darkgreen]{-}(0,0){0.09}{0}{360}
\end{pspicture}
\ = 0\,.
\ee
It was proven in \cite{GL98} that the set of modules 
\be
\label{eq:W.family}
\repW_{k,x} = \{\repW_{k,x}(N)\,|\, N = N_0, N_0+2, N_0+4, \dots\}\,, \qquad
N_0 = 
\left\{\begin{array}{ll}
0 & k \in \mathbb Z\,,\\[0.1cm] 
1 & k \in \mathbb Z + \frac12\,,
\end{array}\right.
\ee
is a family of modules over $\eptl_N(\beta)$, for all $q \in \mathbb C^\times$, $k \in \frac 12 \mathbb Z_{\ge 0}$ and $x \in \mathbb C^\times$. In addition, we have
\be
\repW_{k,x}(N) = \cL(N,2k) \cdot u_k\,,
\ee
for all $q \in \mathbb C^{\times}$, where $u_k$ is the unique link state of $\repW_{k,x}(2k)$.

\begin{Proposition} \label{prop:seed.W}
Let $q \in \mathbb C^\times$, $k \in \frac 12 \mathbb Z_{\ge 0}$ and $x \in \mathbb C^\times$. Let also $\repM$ be a family of modules over $\eptl_N(\beta)$, and $\xi$ be a nonzero state in $\repM(2k)$ satisfying 
\begin{subequations} \label{eq:def.insertion.kx}
\begin{alignat}{2}
&f\cdot\xi = (x+x^{-1})\,\xi \,, && k=0 \,,
\\
&c_0\cdot \xi=0 \quad\text{and}\quad \Omega\cdot\xi = x\,\xi \,, &\qquad& k>0 \,.
\label{eq:def.insertion.kx,k>0}
\end{alignat}
\end{subequations}
Then the linear map
\begin{equation} \label{eq:insertion.map.W}
\phi_\xi: \lambda\cdot u_k \mapsto \lambda \cdot \xi \,,
\qquad \lambda \in\cL(N,2k)
\end{equation}
is well-defined on $\repW_{k,x}(N)$ for all admissible $N$, and defines a family homomorphism from $\repW_{k,x}$ to~$\repM$.
\end{Proposition}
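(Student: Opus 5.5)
The plan mirrors the proof of \cref{prop:seed.Vk}. The only genuine work is identifying the kernel of the surjection $\cL(N,2k)\to\repW_{k,x}(N)$, $\lambda\mapsto\lambda\cdot u_k$, and checking that $\xi$ is killed by it. Once well-definedness is established, the homomorphism property follows verbatim: for $u=\lambda\cdot u_k$ and $\lambda'\in\cL(N',N)$ we have $\phi_\xi(\lambda'\cdot u)=\lambda'\lambda\cdot\xi=\lambda'\cdot\phi_\xi(u)$.

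For $k>0$, I would describe the kernel as the span of two kinds of elements: $\lambda' c_j$ with $\lambda'\in\cL(N,2k-2)$ and $j\in\{0,1,\dots,2k-1\}$; and $\lambda'(\Omega-x)$ with $\lambda'\in\cL(N,2k)$. The argument goes as follows. Take a diagram $\lambda\in\cL(N,2k)$. If it has fewer than $2k$ bridges, then some arch on the inner boundary connects two cyclically adjacent inner nodes, possibly across the dashed line. Such a diagram factors as $\lambda'c_j$, where $j=0$ covers the arch crossing the dashed segment; this relies on the relations of \cite[Eq.~(2.6)]{IMD25}. Its action on $u_k$ vanishes by the third rule in \eqref{eq:Wkx.rules}. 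If instead $\lambda$ has exactly $2k$ bridges, then it equals $\lambda_0\Omega^m$ for a unique $m\in\Zbb$ and a diagram $\lambda_0$ in a fixed set of representatives. For instance, one can take $\lambda_0$ with no bridges crossing the dashed line, i.e.\ in $\cLk{k,0}(N,2k)$. By the unwinding rules, $\lambda\cdot u_k=x^m\lambda_0\cdot u_k$. The states $\lambda_0\cdot u_k$, with $\lambda_0$ running over these representatives, are exactly the basis link states of $\repW_{k,x}(N)$, so they are linearly independent. Consequently, any combination $\sum a_\lambda\lambda$ annihilating $u_k$ reduces, modulo the first kind of element, to $\sum_{\lambda_0}\bigl(\sum_m a_{\lambda_0,m}x^m\bigr)\lambda_0$ with each inner coefficient zero. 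Such a combination lies in the span of $\lambda_0(\Omega^m-x^m)$, and each of these is a left multiple of $\lambda_0(\Omega-x)$. Non-contractible loops cannot occur for $k>0$ when $2k$ bridges are present, and diagrams with fewer bridges were already handled. Applying the hypotheses $c_0\cdot\xi=0$ and $\Omega\cdot\xi=x\xi$, and deducing $c_j\cdot\xi=0$ for all $j$ by conjugating with $\Omega$ (via relations like $c_j\Omega=\Omega c_{j+1}$ up to index shifts), we find that every kernel element annihilates $\xi$.

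For $k=0$, the diagrams in $\cL(N,0)$ have no bridges. Each is $\lambda_0 f^m$, with $\lambda_0\in\cLk{0,0}(N,0)$ carrying no non-contractible loops and $m\ge0$ counting the non-contractible loops; factoring out $f$ is allowed since $f$ can be drawn next to the inner boundary. The first rule in \eqref{eq:Wkx.rules} with $\alpha=x+x^{-1}$ gives $\lambda\cdot u_0=(x+x^{-1})^m\lambda_0\cdot u_0$. The states $\lambda_0\cdot u_0$ are the independent basis states, up to the usual identification, so the kernel is spanned by $\lambda_0\bigl(f^m-(x+x^{-1})^m\bigr)$. These are left multiples of $\lambda_0\bigl(f-(x+x^{-1})\bigr)$, which annihilates $\xi$ by hypothesis.

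The main obstacle is the kernel description. Specifically, I must justify that diagrams in $\cL(N,2k)$ with fewer than $2k$ bridges factor through some $c_j$, and that diagrams with $2k$ bridges factor uniquely as $\lambda_0\Omega^m$. A subtlety is that $\Omega$ on $2k$ nodes satisfies no relation $\Omega^{2k}=\id$ in $\cL(2k,2k)$: winding bridges around changes the diagram. This is precisely why the exponent $m\in\Zbb$ is well defined. I would argue topologically, using the winding of bridges relative to the dashed line, and check it on the generators $c_j,c^\dag_j$.
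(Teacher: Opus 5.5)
Your proposal is correct and follows the route the paper indicates: the paper only says that \cref{prop:seed.W} is proven like \cref{prop:seed.Vk}, and you fill in the details it leaves implicit. Concretely, you describe the kernel of $\lambda\mapsto\lambda\cdot u_k$ as spanned by the $\lambda'c_j$ together with the $\lambda'(\Omega-x)$ for $k>0$, or by the $\lambda'\bigl(f-(x+x^{-1})\bigr)$ for $k=0$, and you deduce $c_j\cdot\xi=0$ for all $j$ from $c_0\cdot\xi=0$ by conjugating with $\Omega$. The only slip is cosmetic: $\lambda_0(\Omega^m-x^m)=\lambda_0\,p(\Omega)\,(\Omega-x)$ for a Laurent polynomial $p$, so it is a left multiple of $(\Omega-x)$ rather than of $\lambda_0(\Omega-x)$, and that is all you need.
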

This can be proven with arguments similar to those used for \cref{prop:seed.Vk}. A state $\xi$ satisfying \eqref{eq:def.insertion.kx} is called an \emph{insertion state with parameters $(k,x)$ for $\repM$}. Moreover, the subspace of $\repM(2k)$ generated by the insertion states is called the \emph{insertion space with parameters $(k,x)$ of $\repM$}, and is a submodule of $\repM(2k)$.\medskip

The structure of the modules $\repW_{k,x}(N)$ at roots of unity depends on the values of $k$ and $x$, and is described in \cite{BSA18,PSA23}, based on the fundamental results of \cite{GL98}. Let $\repR_{k,x}$ be the maximal proper subfamily of $\repW_{k,x}$, and $\repQ_{k,x}=\repW_{k,x}\big/\repR_{k,x}$ be the corresponding quotient family.
For $x$ of the form $x = \eps q^s$ with $s-k\in \Zbb$, the structure of $\repW_{k,x}$ is described by one of two possible Loewy diagrams:
\begin{subequations}
\be
\label{eq:W.single.file}
\repW_{k,x} \simeq \big[\begin{pspicture}[shift=-0.43](-1.5,-0.5)(12,0.5)
\rput(-1,0){$\repQ_{k,x}$}
\rput(1.5,0){$\repQ_{k_1,x_1}$}
\rput(4,0){$\repQ_{k_2,x_2}$}
\rput(6.5,0){$\repQ_{k_3,x_3}$}
\rput(9,0){$\repQ_{k_4,x_4}$}
\rput(11.5,0){$\dots$}
\psline{->}(-0.3,0)(0.7,0)
\psline{->}(2.3,0)(3.2,0)
\psline{->}(4.8,0)(5.7,0)
\psline{->}(7.3,0)(8.2,0)
\psline{->}(9.8,0)(10.7,0)
\end{pspicture}
\big]
\ee
or
\be
\label{eq:W.double.file}
\repW_{k,x} \simeq \left[\begin{pspicture}[shift=-1.43](-0.5,-1.5)(11.7,1.5)
\rput(0,0){$\repQ_{k,x}$}
\rput(2,1){$\repQ_{k_1,x_1}$}
\rput(2,-1){$\repQ_{k_2,x_2}$}
\rput(5,1){$\repQ_{k_3,x_3}$}
\rput(5,-1){$\repQ_{k_4, x_4}$}
\rput(8,1){$\repQ_{k_5,x_5}$}
\rput(8,-1){$\repQ_{k_6, x_6}$}
\rput(11,1){$\dots$}
\rput(11,-1){$\dots$}
\psline{->}(0.6,0.2)(1.6,0.7)
\psline{->}(0.6,-0.2)(1.6,-0.7)
\psline{->}(3.0,1)(4.0,1)
\psline{->}(3.0,-1)(4.0,-1)
\psline{->}(6.0,1)(7.0,1)
\psline{->}(6.0,-1)(7.0,-1)
\psline{->}(9.0,1)(10.0,1)
\psline{->}(9.0,-1)(10.0,-1)
\psline{->}(2.2,-0.7)(4.8,0.7)
\psline{->}(2.2,0.7)(4.8,-0.7)
\psline{->}(5.2,-0.7)(7.8,0.7)
\psline{->}(5.2,0.7)(7.8,-0.7)
\psline{->}(8.2,-0.7)(10.8,0.7)
\psline{->}(8.2,0.7)(10.8,-0.7)
\end{pspicture}
\right] \,.
\ee
\end{subequations}
For other values of $x$ not of this form, $\repW_{k,x}$ is irreducible.
In the above diagrams, each $x_i$ is of the form $\eps_i q^{s_i}$ for some $\eps_i \in \{-1,1\}$ and $s_i \in \frac12 \mathbb Z$, satisfying
\be
k_i - k \in \mathbb Z\,, \qquad 
s_i - k \in \mathbb Z\,, \qquad 
k<k_1<k_2< \cdots\,, 
\ee
and
\be
x q^{k} + x^{-1} q^{-k} = x_i q^{k_i} + x_i^{-1} q^{-k_i}\,, 
\qquad
x q^{-k} + x^{-1} q^{k} = x_i q^{-k_i} + x_i^{-1} q^{k_i}\,.
\ee
The precise values of $k_i$ and $x_i$ are detailed in \cite{BSA18,PSA23}. In the above Loewy diagrams, each node represents an irreducible module $\repQ_{k,x}$ or $\repQ_{k_i,x_i}$, and the arrows indicate the action of $\eptl_N(\beta)$. A submodule is a subset $\mathcal S$ of the composition factors where all the arrows originating from the elements of~$\mathcal S$ point to other elements in~$\mathcal S$. The family $\repW_{k,\eps q^s}$ thus has a non-trivial pattern of nested subfamilies. For finite $N$, the module $\repW_{k,\eps q^s}(N)$ has finitely many composition factors, because $\repQ_{k,x}(N)=0$ for $2k>N$. Moreover, two irreducible modules $\repQ_{k,x}(N)$ and $\repQ_{\ell,y}(N)$ are inequivalent unless (i) $k=\ell=0$ and $x = y^{\pm 1}$, or (i) $k=\ell>0$ and $x = y$.\medskip

For our investigations of the ADE lattice models, we are particularly interested in the values $q=\eE^{-\ir\pi p/p'}$ with $(p,p')$ as in \eqref{eq:q.standard}, and the modules $\repW_{k,x}(N)$ with $0 \le k<\frac{p'}2$ and with the twist parameter $x$ taking the values
\be
\label{eq:xks}
x = \eps q^{s}\, \qquad \textrm{with} \qquad 
k < s < p'-k\,,\qquad \eps \in \{+1,-1\}\,.
\ee
In these cases, the Loewy diagram of $\repW_{k,\eps q^{s}}$ is \cite{BSA18,PSA23}
\be
\label{eq:W.Loewy}
\repW_{k,\eps q^{s}} \simeq
\left[\begin{pspicture}[shift=-1.4](-0.5,-1.5)(14.3,1.5)
\rput(0,0){$\repQ_{k,\eps q^{s}}$}
\rput(2,1){$\repQ_{s,\eps q^{k}}$}
\rput(2,-1){$\repQ_{p'-s,\eps q^{p'-k}}$}
\rput(5,1){$\repQ_{p'+k,\eps q^{p'+s}}$}
\rput(5,-1){$\repQ_{p'-k, \eps q^{p'-s}}$}
\rput(8,1){$\repQ_{p'+s,\eps q^{p'+k}}$}
\rput(8,-1){$\repQ_{2p'-s, \eps q^{2p'-k}}$}
\rput(11,1){$\repQ_{2p'+k,\eps q^{2p'+s}}$}
\rput(11,-1){$\repQ_{2p'-k, \eps q^{2p'-s}}$}
\rput(14,1){$\dots$}
\rput(14,-1){$\dots$}
\psline{->}(0.6,0.2)(1.6,0.7)
\psline{->}(0.6,-0.2)(1.6,-0.7)
\psline{->}(2.7,1)(4.0,1)
\psline{->}(3.1,-1)(3.9,-1)
\psline{->}(6.0,1)(7.0,1)
\psline{->}(6.0,-1)(6.85,-1)
\psline{->}(9.0,1)(9.85,1)
\psline{->}(9.15,-1)(9.85,-1)
\psline{->}(12.2,1)(13.3,1)
\psline{->}(12.2,-1)(13.3,-1)
\psline{->}(2.2,-0.7)(4.8,0.7)
\psline{->}(2.2,0.7)(4.8,-0.7)
\psline{->}(5.2,-0.7)(7.8,0.7)
\psline{->}(5.2,0.7)(7.8,-0.7)
\psline{->}(8.2,-0.7)(10.8,0.7)
\psline{->}(8.2,0.7)(10.8,-0.7)
\psline{->}(11.2,-0.7)(13.8,0.7)
\psline{->}(11.2,0.7)(13.8,-0.7)
\end{pspicture}\right]
\ee
The module $\repW_{k,\eps q^s}(N)$ has submodules isomorphic to $\repW_{s,\eps q^k}(N)$ and $\repW_{p'-s,\eps q^{p'-k}}(N)$. The Loewy diagrams for these two modules are read from \eqref{eq:W.Loewy} by considering all the factors produced from $\repQ_{s,\eps q^k}(N)$ and $\repQ_{p'-s,\eps q^{p'-k}}(N)$, respectively. Their intersection is nonzero, as both these submodules themselves have two submodules, isomorphic to $\repW_{p'+k,\eps q^{p'+s}}(N)$ and $\repW_{p'-k,\eps q^{p'-s}}(N)$.\medskip

The family $\repW_{k,\eps q^{-s}}$ with $s$ as in \eqref{eq:xks} also has a double-ladder structure. Its Loewy diagram is obtained from \eqref{eq:W.Loewy} by changing each factor $\repQ_{k_i,x_i}$ by $\repQ_{k_i,1/x_i}$. Using $q^{p'}=(-1)^p$, we can write $\repW_{k,\eps q^s}=\repW_{k,\eps'q^{-s'}}$ where $s'=p'-s$ and $\eps'=(-1)^p\eps$. Thus, to investigate all the families $\repW_{k,x}$ with $k$ and $x$ in \eqref{eq:xks}, it is sufficient to consider those with $k<s \le \frac{p'}2$ and $\eps \in \{-1,+1\}$.\medskip

We complete this description by explicitly describing the families of homomorphisms between the families~$\repW_{k,x}$. In \cite{IMD22}, we focused on generic values of $q$ and introduced the states in $\repW_{k,\eps q^{\sigma s}}(2s)$
\be
\label{eq:vks}
v^{\tinyx{\sigma,\eps}}_{k,s} = 
P_{2s} (c^\dag_0)^{s-k} \cdot u_k = \ \
\begin{pspicture}[shift=-1.6](-2.4,-1.7)(2.5,1.7)
\psarc[linecolor=black,linewidth=0.5pt,fillstyle=solid,fillcolor=pink]{-}(0,0){1.55}{-82.5}{97.5}
\psarc[linecolor=black,linewidth=0.5pt,fillstyle=solid,fillcolor=pink]{-}(0,0){1.55}{82.5}{262.5}
\psline[linecolor=black,linewidth=0.5pt](-0.182737, -1.38802)(-0.202316, -1.53674)
\psline[linecolor=black,linewidth=0.5pt](0.182737, -1.38802)(0.202316, -1.53674)
\psarc[linecolor=black,linewidth=0.5pt,fillstyle=solid,fillcolor=lightlightblue]{-}(0,0){1.4}{0}{360}
\psline[linestyle=dashed, dash= 1.5pt 1.5pt, linewidth=0.5pt]{-}(0,0.1)(0,-1.4)
\psline[linecolor=blue,linewidth=1.5pt]{-}(0,0.1)(1.3523, 0.362347)
\psline[linecolor=blue,linewidth=1.5pt]{-}(0,0.1)(0.989949, 0.989949)
\psline[linecolor=blue,linewidth=1.5pt]{-}(0,0.1)(0.362347, 1.3523)
\psline[linecolor=blue,linewidth=1.5pt]{-}(0,0.1)(-0.362347, 1.3523)
\psline[linecolor=blue,linewidth=1.5pt]{-}(0,0.1)(-0.989949, 0.989949)
\psline[linecolor=blue,linewidth=1.5pt]{-}(0,0.1)(-1.3523, 0.362347)
\psarc[linecolor=black,linewidth=0.5pt,fillstyle=solid,fillcolor=darkgreen]{-}(0,0.1){0.09}{0}{360}
\psbezier[linecolor=blue,linewidth=1.5pt]{-}(0.362347, -1.3523)(0.258819, -0.965926)(-0.258819, -0.965926)(-0.362347, -1.3523)
\psbezier[linecolor=blue,linewidth=1.5pt]{-}(0.989949, -0.989949)(0.565685, -0.565685)(-0.565685, -0.565685)(-0.989949, -0.989949)
\psbezier[linecolor=blue,linewidth=1.5pt]{-}(1.3523, -0.362347)(0.772741, -0.207055)(-0.772741, -0.207055)(-1.3523, -0.362347)
\rput(0.452933, -1.69037){$_1$}
\rput(1.23744, -1.23744){$_{\dots}$}
\rput(1.88867, -0.445874){$_{s-k}$}
\rput(2.09867, 0.465874){$_{s-k+1}$}
\rput(1.23744, 1.23744){$_{\dots}$}
\rput(-1.23744, 1.23744){$_{\dots}$}
\rput(-1.91867, 0.465874){$_{s+k}$}
\rput(-2.07867, -0.465874){$_{s+k+1}$}
\rput(-1.23744, -1.23744){$_{\dots}$}
\rput(-0.452933, -1.69037){$_{2s}$}
\end{pspicture}\,,
\ee
where $\sigma,\eps \in \{+1,-1\}$ and $s>k$. This states satisfy the properties
\be
\label{eq:ejv=0}
c_0 \cdot v^{\tinyx{\sigma,\eps}}_{k,s} = 0\,, \qquad
\Omega \cdot v^{\tinyx{\sigma,\eps}}_{k,s} = \eps q^{\sigma k} v^{\tinyx{\sigma,\eps}}_{k,s}\,,
\ee
and is therefore an insertion state with parameters $(s,\eps q^{\sigma k})$ for $\repW_{k,\eps q^{\sigma s}}$. \medskip

Let us now consider the state $v^{\tinyx{\sigma,\eps}}_{k,s}$ in the limit $q \to \eE^{-\ir \pi p/p'}$, with $s < \frac {p'}2$. In this case, the projector $P_{2s}$ is non-singular, which implies that $v^{\tinyx{\sigma,\eps}}_{k,s}$ is also non-singular. Here are examples of these states, which already appeared in \cite{IMD24}:
\begin{subequations}
\label{eq:vks.examples}
\begin{alignat}{2}
v^{\tinyx{\sigma,\eps}}_{0,1} &= \ \
\begin{pspicture}[shift=-0.6](-0.6,-0.7)(0.6,0.7)
\psarc[linecolor=black,linewidth=0.5pt,fillstyle=solid,fillcolor=lightlightblue]{-}(0,0){0.6}{0}{360}
\psline[linestyle=dashed, dash= 1.5pt 1.5pt, linewidth=0.5pt]{-}(0,0)(0,-0.6)
\psarc[linecolor=black,linewidth=0.5pt,fillstyle=solid,fillcolor=darkgreen]{-}(0,0){0.07}{0}{360}
\psbezier[linecolor=blue,linewidth=1.5pt]{-}(-0.6,0)(-0.3,0)(-0.3,-0.25)(0,-0.25)
\psbezier[linecolor=blue,linewidth=1.5pt]{-}(0.6,0)(0.3,0)(0.3,-0.25)(0,-0.25)
\end{pspicture}
\ + \eps \ \
\begin{pspicture}[shift=-0.6](-0.6,-0.7)(0.6,0.7)
\psarc[linecolor=black,linewidth=0.5pt,fillstyle=solid,fillcolor=lightlightblue]{-}(0,0){0.6}{0}{360}
\psline[linestyle=dashed, dash= 1.5pt 1.5pt, linewidth=0.5pt]{-}(0,0)(0,-0.6)
\psarc[linecolor=black,linewidth=0.5pt,fillstyle=solid,fillcolor=darkgreen]{-}(0,0){0.07}{0}{360}
\psbezier[linecolor=blue,linewidth=1.5pt]{-}(-0.6,0)(-0.3,0)(-0.3,0.25)(0,0.25)
\psbezier[linecolor=blue,linewidth=1.5pt]{-}(0.6,0)(0.3,0)(0.3,0.25)(0,0.25)
\end{pspicture}
\ \ ,
\\[0.3cm]
v^{\tinyx{\sigma,\eps}}_{0,2} &= \ \ 
\begin{pspicture}[shift=-0.6](-0.6,-0.7)(0.6,0.7)
\psarc[linecolor=black,linewidth=0.5pt,fillstyle=solid,fillcolor=lightlightblue]{-}(0,0){0.7}{0}{360}
\psline[linestyle=dashed, dash=1.5pt 1.5pt, linewidth=0.5pt]{-}(0,0.5)(0,-0.7)
\psarc[linecolor=black,linewidth=0.5pt,fillstyle=solid,fillcolor=darkgreen]{-}(0,0.5){0.07}{0}{360}
\psbezier[linecolor=blue,linewidth=1.5pt]{-}(0.494975, 0.494975)(0.20,0.20)(-0.20,0.20)(-0.494975, 0.494975)
\psbezier[linecolor=blue,linewidth=1.5pt]{-}(0.494975, -0.494975)(0.20,-0.20)(-0.20,-0.20)(-0.494975, -0.494975)
\end{pspicture} 
\ \ +\eps \ \
\begin{pspicture}[shift=-0.6](-0.6,-0.7)(0.6,0.7)
\psarc[linecolor=black,linewidth=0.5pt,fillstyle=solid,fillcolor=lightlightblue]{-}(0,0){0.7}{0}{360}
\psline[linestyle=dashed, dash=1.5pt 1.5pt, linewidth=0.5pt]{-}(-0.5,0)(0,-0.7)
\psarc[linecolor=black,linewidth=0.5pt,fillstyle=solid,fillcolor=darkgreen]{-}(-0.5,0){0.07}{0}{360}
\psbezier[linecolor=blue,linewidth=1.5pt]{-}(0.494975, 0.494975)(0.20,0.20)(0.20,-0.20)(0.494975, -0.494975)
\psbezier[linecolor=blue,linewidth=1.5pt]{-}(-0.494975, 0.494975)(-0.20,0.20)(-0.20,-0.20)(-0.494975, -0.494975)
\end{pspicture}
\ \ -\eps\, \beta \ \
\begin{pspicture}[shift=-0.6](-0.6,-0.7)(0.6,0.7)
\psarc[linecolor=black,linewidth=0.5pt,fillstyle=solid,fillcolor=lightlightblue]{-}(0,0){0.7}{0}{360}
\psline[linestyle=dashed, dash= 1.5pt 1.5pt, linewidth=0.5pt]{-}(0,0)(0,-0.7)
\psarc[linecolor=black,linewidth=0.5pt,fillstyle=solid,fillcolor=darkgreen]{-}(0,0){0.07}{0}{360}
\psbezier[linecolor=blue,linewidth=1.5pt]{-}(0.494975, 0.494975)(0.20,0.20)(-0.20,0.20)(-0.494975, 0.494975)
\psbezier[linecolor=blue,linewidth=1.5pt]{-}(0.494975, -0.494975)(0.20,-0.20)(-0.20,-0.20)(-0.494975, -0.494975)
\end{pspicture}
\ \ +\eps \ \
\begin{pspicture}[shift=-0.6](-0.6,-0.7)(0.6,0.7)
\psarc[linecolor=black,linewidth=0.5pt,fillstyle=solid,fillcolor=lightlightblue]{-}(0,0){0.7}{0}{360}
\psline[linestyle=dashed, dash=1.5pt 1.5pt, linewidth=0.5pt]{-}(0.5,0)(0,-0.7)
\psarc[linecolor=black,linewidth=0.5pt,fillstyle=solid,fillcolor=darkgreen]{-}(0.5,0){0.07}{0}{360}
\psbezier[linecolor=blue,linewidth=1.5pt]{-}(0.494975, 0.494975)(0.20,0.20)(0.20,-0.20)(0.494975, -0.494975)
\psbezier[linecolor=blue,linewidth=1.5pt]{-}(-0.494975, 0.494975)(-0.20,0.20)(-0.20,-0.20)(-0.494975, -0.494975)
\end{pspicture}
\ \ - \beta \ \ 
\begin{pspicture}[shift=-0.6](-0.6,-0.7)(0.6,0.7)
\psarc[linecolor=black,linewidth=0.5pt,fillstyle=solid,fillcolor=lightlightblue]{-}(0,0){0.7}{0}{360}
\psline[linestyle=dashed, dash= 1.5pt 1.5pt, linewidth=0.5pt]{-}(0,0)(0,-0.7)
\psarc[linecolor=black,linewidth=0.5pt,fillstyle=solid,fillcolor=darkgreen]{-}(0,0){0.07}{0}{360}
\psbezier[linecolor=blue,linewidth=1.5pt]{-}(0.494975, 0.494975)(0.20,0.20)(0.20,-0.20)(0.494975, -0.494975)
\psbezier[linecolor=blue,linewidth=1.5pt]{-}(-0.494975, 0.494975)(-0.20,0.20)(-0.20,-0.20)(-0.494975, -0.494975)
\end{pspicture}
\ \ + \ \
\begin{pspicture}[shift=-0.6](-0.6,-0.7)(0.6,0.7)
\psarc[linecolor=black,linewidth=0.5pt,fillstyle=solid,fillcolor=lightlightblue]{-}(0,0){0.7}{0}{360}
\psline[linestyle=dashed, dash=1.5pt 1.5pt, linewidth=0.5pt]{-}(0,-0.45)(0,-0.7)
\psarc[linecolor=black,linewidth=0.5pt,fillstyle=solid,fillcolor=darkgreen]{-}(0,-0.5){0.07}{0}{360}
\psbezier[linecolor=blue,linewidth=1.5pt]{-}(0.494975, 0.494975)(0.20,0.20)(-0.20,0.20)(-0.494975, 0.494975)
\psbezier[linecolor=blue,linewidth=1.5pt]{-}(0.494975, -0.494975)(0.20,-0.20)(-0.20,-0.20)(-0.494975, -0.494975)
\end{pspicture} \ \ ,
\\[0.3cm]
v^{\tinyx{\sigma,\eps}}_{1,2} &= \ \
\begin{pspicture}[shift=-0.6](-0.6,-0.7)(0.6,0.7)
\psarc[linecolor=black,linewidth=0.5pt,fillstyle=solid,fillcolor=lightlightblue]{-}(0,0){0.7}{0}{360}
\psline[linestyle=dashed, dash= 1.5pt 1.5pt, linewidth=0.5pt]{-}(0,0)(0,-0.7)
\psbezier[linecolor=blue,linewidth=1.5pt]{-}(-0.494975, -0.494975)(-0.20,-0.20)(0.20,-0.20)(0.494975, -0.494975)
\psline[linecolor=blue,linewidth=1.5pt]{-}(-0.494975, 0.494975)(0,0)
\psline[linecolor=blue,linewidth=1.5pt]{-}(0.494975, 0.494975)(0,0)
\psarc[linecolor=black,linewidth=0.5pt,fillstyle=solid,fillcolor=darkgreen]{-}(0,0){0.09}{0}{360}
\end{pspicture}
\ \ +\eps\, q^\sigma \ \
\begin{pspicture}[shift=-0.6](-0.6,-0.7)(0.6,0.7)
\psarc[linecolor=black,linewidth=0.5pt,fillstyle=solid,fillcolor=lightlightblue]{-}(0,0){0.7}{0}{360}
\psline[linestyle=dashed, dash= 1.5pt 1.5pt, linewidth=0.5pt]{-}(0,0)(0,-0.7)
\psbezier[linecolor=blue,linewidth=1.5pt]{-}(0.494975, 0.494975)(0.20,0.20)(0.20,-0.20)(0.494975, -0.494975)
\psline[linecolor=blue,linewidth=1.5pt]{-}(-0.494975, 0.494975)(0,0)
\psline[linecolor=blue,linewidth=1.5pt]{-}(-0.494975, -0.494975)(0,0)
\psarc[linecolor=black,linewidth=0.5pt,fillstyle=solid,fillcolor=darkgreen]{-}(0,0){0.09}{0}{360}
\end{pspicture}
\ \ +\eps \ \
\begin{pspicture}[shift=-0.6](-0.6,-0.7)(0.6,0.7)
\psarc[linecolor=black,linewidth=0.5pt,fillstyle=solid,fillcolor=lightlightblue]{-}(0,0){0.7}{0}{360}
\psline[linestyle=dashed, dash= 1.5pt 1.5pt, linewidth=0.5pt]{-}(0,0)(0,-0.7)
\psbezier[linecolor=blue,linewidth=1.5pt]{-}(0.494975, 0.494975)(0.20,0.20)(-0.20,0.20)(-0.494975, 0.494975)
\psline[linecolor=blue,linewidth=1.5pt]{-}(0.494975, -0.494975)(0,0)
\psline[linecolor=blue,linewidth=1.5pt]{-}(-0.494975, -0.494975)(0,0)
\psarc[linecolor=black,linewidth=0.5pt,fillstyle=solid,fillcolor=darkgreen]{-}(0,0){0.09}{0}{360}
\end{pspicture}
\ \ +\eps\, q^{-\sigma}\ \
\begin{pspicture}[shift=-0.6](-0.6,-0.7)(0.6,0.7)
\psarc[linecolor=black,linewidth=0.5pt,fillstyle=solid,fillcolor=lightlightblue]{-}(0,0){0.7}{0}{360}
\psline[linestyle=dashed, dash= 1.5pt 1.5pt, linewidth=0.5pt]{-}(0,0)(0,-0.7)
\psbezier[linecolor=blue,linewidth=1.5pt]{-}(-0.494975, -0.494975)(-0.20,-0.20)(-0.20,0.20)(-0.494975, 0.494975)
\psline[linecolor=blue,linewidth=1.5pt]{-}(0.494975, -0.494975)(0,0)
\psline[linecolor=blue,linewidth=1.5pt]{-}(0.494975, 0.494975)(0,0)
\psarc[linecolor=black,linewidth=0.5pt,fillstyle=solid,fillcolor=darkgreen]{-}(0,0){0.09}{0}{360}
\end{pspicture}\ \ .
\end{alignat}
\end{subequations}

We now consider the state $v^{\tinyx{\sigma',\eps'}}_{k,s'} \in \repW_{k,\eps' q^{\sigma' s'}}(2s')$, where $\sigma',\eps' \in \{+1,-1\}$ and $s' \in \frac12 \mathbb Z$ satisfying $s'-k \in \mathbb Z_{>0}$. This state satisfies \eqref{eq:ejv=0} with $\sigma \to \sigma'$, $\eps \to \eps'$ and $s \to s'$. For generic $q$, it is thus an insertion state with parameters $(s',\eps' q^{\sigma'k})$ in $\repW_{k,\eps' q^{\sigma' s'}}$. At the root of unity $q = \eE^{-\ir \pi p/p'}$, we have $\repW_{k,\eps q^{\sigma s}} = \repW_{k,\eps' q^{\sigma' s'}}$ for $\sigma' = -\sigma$, $\eps' = (-1)^p \eps$ and $s'=p'-s$. For $k<s \le \frac{p'}2$, the state $v^{\tinyx{\sigma',\eps'}}_{k,s'}$ involves a projector $P_{2s'}$ that is singular for $q \to \eE^{-\ir \pi p/p'}$. The following proposition states that the state $v^{\tinyx{\sigma',\eps'}}_{k,s'}$ is nonetheless non-singular in the limit. This is consistent with the examples in \eqref{eq:vks.examples}, which are free of singularities.

\begin{Proposition} \label{prop:v.is.regular}
Let $p$ and $p'$ be as in \eqref{eq:q.beta.standard}, $\sigma, \eps \in \{+1,-1\}$, and $k,s \in \frac12\mathbb Z_{\ge 0}$ be such that $s-k \in \mathbb Z$ and $k<s \le \frac{p'}2$. Let also $s'=p'-s$, $\sigma'=-\sigma$ and $\eps' = (-1)^p \eps$. Then the state $v^{\tinyx{\sigma',\eps'}}_{k,s'} \in \repW_{k,\eps q^{\sigma s}}(2s')$ is non-singular in the limit $q \to \eE^{-\ir \pi p/p'}$.
\end{Proposition}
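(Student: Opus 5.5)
We want to show that $v' := v^{\tinyx{\sigma',\eps'}}_{k,s'} = P_{2s'}(c_0^\dag)^{s'-k}\cdot u_k$, computed in $\repW_{k,x}(2s')$ with $x=\eps' q^{\sigma' s'}$ held fixed as $q$ varies, has a finite limit as $q \to \eE^{-\ir\pi p/p'}$. We argue by contradiction, in the same spirit as the proof of \cref{prop:Pnp'-1}, using the rigidity of the representation theory at the root of unity. The difference is that here we work inside the module $\repW_{k,x}$ rather than in the algebra.

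\textbf{Step 1: set up the leading term.} All coefficients of $v'$ in the link basis of $\repW_{k,x}(2s')$ are rational functions of $q$, since $P_{2s'}$ is built from $q$-numbers and the basis has finite dimension $d_k(2s')$. Suppose the limit is singular. Then $v'$ has a Laurent expansion $v' = C(q-q_0)^\ell + O\big((q-q_0)^{\ell+1}\big)$ with $\ell<0$, where $q_0=\eE^{-\ir\pi p/p'}$ and $C\neq 0$. Here $C$ is a state in $\repW_{k,x}(2s')$ at $q=q_0$.

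\textbf{Step 2: transfer the insertion properties to $C$.} For generic $q$, \eqref{eq:ejv=0} gives $c_0\cdot v'=0$ and $\Omega\cdot v' = \eps' q^{\sigma' k} v'$. The actions of $c_0$ and $\Omega$ on the link basis are polynomial in $q$, $x$ and $\beta$, and hence regular at $q_0$. Multiplying these identities by $(q-q_0)^{-\ell}$ and taking the limit therefore gives
\[
c_0\cdot C=0\,, \qquad \Omega\cdot C=\eps' q_0^{\sigma' k}\,C\,.
\]
By \cref{prop:seed.W}, $C$ is an insertion state with parameters $(s',\eps' q_0^{-\sigma k})$. It thus defines a nonzero family homomorphism $\phi_C:\repW_{s',\eps' q_0^{-\sigma k}}\to\repW_{k,\eps q_0^{\sigma s}}$. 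One could also note that $P_{2s'}$ kills all states with fewer than $2s'$ defects up to the singular part. However, the insertion-state property is the cleaner route.

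\textbf{Step 3: locate $C$ in the Loewy structure and derive the contradiction.} Since $q_0^{p'}=(-1)^p$, we have $\eps' q_0^{-\sigma k} = \eps q_0^{\sigma(p'-k)}$. So the image of $\phi_C$ is a nonzero quotient of $\repW_{p'-s,\eps q^{\sigma(p'-k)}}$, and its head is $\repQ_{p'-s,\eps q^{\sigma(p'-k)}}$. This is exactly the factor on the lower branch of \eqref{eq:W.Loewy} (or of its $q\to q^{-1}$ mirror when $\sigma=-1$). The regular state $v^{\tinyx{\sigma',\eps'}}_{k,s'}$ would generate the same submodule. The insertion space with parameters $(p'-s,\cdot)$ of $\repW_{k,x}(2s')$ should be one-dimensional, since the corresponding factor appears once and $\repQ_{p'-s,\cdot}(2(p'-s))$ is one-dimensional. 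It is therefore spanned by the image of $u_{s'}$ under the embedding $\repW_{p'-s,\ldots}\hookrightarrow\repW_{k,x}$.

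To obtain a contradiction, I would add a second ingredient. The coefficient of $v'$ on the link state $(c_0^\dag)^{s'-k}\cdot u_k$ is exactly $1$ for all $q$. This holds because $P_{2s'}=\id+(\text{terms containing some } e_j)$, and those terms cannot reproduce that particular basis state with a coefficient singular in $q$. Hence $v'$ has a regular component along a fixed basis vector. One then also needs that $C$, being the most singular coefficient, lies in the radical. This would follow from the Gram bilinear form: $\langle v',v'\rangle$ is computable via $c_{2s'}P_{2s'}c^\dag_{2s'}$-type identities \eqref{eq:WJ.relations} and stays finite, while the leading term $C$ would be isotropic.

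\textbf{Main obstacle.} The real difficulty is ruling out the singular part rigorously, rather than just showing that it is an insertion state. The cleanest approach I would try first avoids Loewy arguments entirely. The idea is to find an explicit regular insertion state $w\in\repW_{k,x}(2s')$, for example by a closed-form sum over link states generalising \eqref{eq:vks.examples}, or as $\lim_{q\to q_0}$ of a suitably normalised image of the homomorphism $\repW_{p'-s}\to\repW_{s}\to\repW_k$ composed from the regular states $v_{s,p'-s}$ and $v_{k,s}$. One would then show $v'=w$ for generic $q$, using uniqueness of insertion states (one-dimensionality of the insertion space) together with the normalisation coefficient $1$ from Step 3. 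Since $v'=w$ identically in $q$ and $w$ is regular, $v'$ is regular as well.
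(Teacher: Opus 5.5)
Your Steps~1 and~2 are sound and match the opening of the paper's argument. The leading Laurent coefficient $C$ inherits $c_0\cdot C=0$ and $\Omega\cdot C=\eps' q_0^{\sigma' k}C$ from \eqref{eq:ejv=0}. Since the identity term of $P_{2s'}$ has coefficient $1$, $C$ also has zero component along $(c_0^\dag)^{s'-k}\cdot u_k$.

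The gap is Step~3, which never produces a contradiction. A nonzero singular part being an insertion state of type $(s',\eps' q_0^{\sigma'k})$, generating a quotient of $\repW_{p'-s,\eps q^{\sigma(p'-k)}}$, is perfectly consistent with \eqref{eq:W.Loewy}. That is exactly where the regular $v'=v^{\tinyx{\sigma',\eps'}}_{k,s'}$ is supposed to sit. The Gram-form remark does not help either. Any nonzero insertion state of this type generates a proper submodule and therefore lies in the radical; indeed the regular $v'$ is one of the generators of $\repR$ in \cref{prop:Q.properties}. So ``$C$ is isotropic'' contradicts nothing. The sentence ``the regular state $v'$ would generate the same submodule'' assumes the conclusion. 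Even granting that the insertion space is one-dimensional, you only get $C\propto g$ for a generator $g$. To force $C=0$ you would need $g$ to have a nonzero component along $(c_0^\dag)^{s'-k}\cdot u_k$ or one of its $\Omega$-rotations. Given one-dimensionality, that is equivalent to the statement being proved.

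What is missing is an independent, manifestly regular comparison vector, together with a uniqueness statement that forces proportionality. The paper uses $w=(P_{p'-1}\otimes\id_{s'-s+1})\cdot\wh u_k$. It is regular because only $P_{p'-1}$ enters (cf.\ \cref{prop:Pnp'-1}), and its component along $\wh u_k=\Omega^{s'-s+1}(c_0^\dag)^{s'-k}\cdot u_k$ is $1$. By the $\Omega$-eigenvalue relation, the component of $v'$ along $\wh u_k$ is $(\eps' q^{\sigma'k})^{-(s'-s+1)}$, which is regular, so the singular part vanishes there.

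Uniqueness is then obtained over $\tl_{2s'}(\beta)$. One filters by the number of arches crossing the dashed segment, with subquotients $\repV_m(2s')$. Comparing eigenvalues $\wh f_m$ of the double-row braid transfer matrix pins the singular part to level $m=s-1$. There $\repV_{s-1}(2s')\simeq[\repQ_{s-1}\to\repQ_{s'}]$ has only one one-dimensional submodule, so the singular part would be proportional to $w$, contradicting the component computation.

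Your fallback constructions do not supply such a $w$:
\begin{enumerate}
\item The composite $\repW_{p'-s}\to\repW_{s}\to\repW_k$ does not exist. In \eqref{eq:W.Loewy}, $\repQ_{s,\eps q^k}$ and $\repQ_{p'-s,\eps q^{p'-k}}$ sit on parallel branches with no arrow between them, so $\repW_{s,\eps q^k}$ has no composition factor $\repQ_{p'-s,\cdot}$.
\item The state $v_{s,p'-s}$ involves $P_{2(p'-s)}$ with $2(p'-s)\ge p'$, whose regularity is precisely the issue.
\item A state defined only at $q_0$ cannot be identified with $v'$ through generic-$q$ uniqueness. That would require a family of insertion states regular on a neighbourhood of $q_0$, which is again the content of the proposition.
\end{enumerate}
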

\proof
From \cref{prop:Pnp'-1} and the recursive definition \eqref{eq:WJ.def} of the projectors, we know that the order of the singularity of $v^{\tinyx{\sigma,\eps}}_{k,s'}$ is at most one. We can therefore write the Laurent series
\be
\label{eq:v.Laurent}
v^{\tinyx{\sigma',\eps'}}_{k,s'} = \frac{y_{-1}}{q-\eE^{-\ir \pi p/p'}} + y_0 + \dots
\ee
for some states $y_{-1}$ and $y_0$ in $\repW_{k,\eps q^{\sigma s}}(2s')$. Let us make the assumption that $y_{-1} \neq 0$. From \eqref{eq:ejv=0}, we deduce that this state satisfies
\be
\label{eq:y-1.props}
\Omega \cdot y_{-1} = \eps' q^{\sigma' k} y_{-1}\,, 
\qquad 
e_j \cdot y_{-1} = 0\,, \quad j=1,2,\dots, 2s-1.
\ee

Let us denote by $\repW^\downarrow_{k,\eps q^{\sigma s}}(2s')$ the module $\repW_{k,\eps q^{\sigma s}}(2s')$ seen as a module over $\tl_{2s'}(\beta)$. This module has the filtration
\be
\repM^{\tinyx {k}}(2s') \subset \repM^{\tinyx {k\!+\!1}}(2s') \subset \cdots \subset \repM^{\tinyx {s'}}(2s')\,,
\ee
where $\repM^{\tinyx {m}}(2s')$ is the submodule spanned by link states with at most $m-k$ arches crossing the dashed segment. For all $q \in \mathbb C^\times$, the corresponding quotient modules satisfy
\be
\repM^{\tinyx {m}}(2s') / \repM^{\tinyx {m\!-\!1}}(2s') \simeq \repV_m(2s')\,.
\ee
For $q$ generic, we have
\be
\label{eq:Fvks'}
\wh{F} \cdot v^{\tinyx{\sigma',\eps'}}_{k,s'} = \wh{f}_{s'}\,v^{\tinyx{\sigma',\eps'}}_{k,s'}\,,
\ee
where $\wh{F} \in \tl_{2s'}(\beta)$ is the double-row braid transfer matrix, and $\wh{f}_{m}$ is its unique eigenvalue in $\repV_m$. We write their Taylor expansions as
\be
\wh{F} = \wh{F}_0 + (q-\eE^{-\ir \pi p/p'}) \wh{F}_1 + \dots\,, \qquad
\wh{f}_m = \wh{f}_{m,0} + (q-\eE^{-\ir \pi p/p'}) \wh{f}_{m,1} + \dots\,.
\ee
Because the identity term in $P_{2s'}$ has a unit prefactor, $y_{-1}$ has no component along the link state $(c_0^\dag)^{s'-k}\cdot u_k$. This implies that $y_{-1} \in \repM^{\tinyx m}(2s')$ for some $m<s'$. Let us then consider the module $\repW^\downarrow_{k,\eps q^{\sigma s}}(2s')/\repM^{\tinyx{m\!-\!1}}(2s')$. In this module, we have $\wh{F}_0 \cdot y_{-1} = \wh{f}_{m,0}\, y_{-1}$. By taking the leading order of~\eqref{eq:Fvks'}, we find that $\wh{F}_0 \cdot y_{-1} = \wh{f}_{s',0}\, y_{-1}$. We conclude that $f_{s',0} = f_{m,0}$, and hence $m=p'-1-s' = s-1$.\medskip

The state $y_{-1}$ therefore spans a one-dimensional submodule in $\repM^{\tinyx{s\!-\!1}}(2s') \simeq \repV_{s-1}(2s')$ isomorphic to $\repV_{s'}(2s')$. The Loewy diagram of $\repV_{s-1}(2s')$ is $[\repQ_{s-1}(2s') \to \repQ_{s'}(2s')]$, with $\repQ_{s'}(2s') \simeq \repV_{s'}(2s')$, so there is indeed a submodule isomorphic to $\repV_{s'}(2s')$ in $\repV_{s-1}(2s')$.\medskip

Using the same idea as in \eqref{eq:vk}, we construct a second state in $\repM^{\tinyx{s\!-\!1}}(2s')$ that spans a one-dimensional submodule isomorphic to $\repV_{s'}(2s')$, namely
\be
w = (P_{p'-1} \otimes \id_{s'-s+1}) \cdot \wh{u}_k\,, \qquad 
\wh{u}_k = c^\dag_{p'-1} c^\dag_{p'-2} \cdots c^\dag_{k+s} (c_0^\dag)^{s-1-k}\cdot u_k\,.
\ee
The state $w$ is nonzero and non-singular. It satisfies $e_j \cdot w = 0$ for $j=1,2,\dots, 2s'-1$, so it indeed spans the desired one-dimensional submodule in $\repM^{\tinyx{s\!-\!1}}(2s')$. We now prove that $w$ and $y_{-1}$ are linearly independent. To show this, we consider their components along the link state $\wh{u}_k$. For~$w$, it is straightforward to check that this component is equal to $1$. For $y_{-1}$, we first remark that the component of $v^{\tinyx{\sigma',\eps'}}_{k,s'}$ along $(c_0^\dag)^{s'-k} \cdot u_k$ is also equal to $1$. Because 
\be
\Omega^{s'-s+1}(c_0^\dag)^{s'-k} \cdot u_k = \wh{u}_k\,,
\ee
it follows from \eqref{eq:y-1.props} that the component of $v^{\tinyx{\sigma',\eps'}}_{k,s'}$ along $\wh{u}_k$ equals $(\eps' q^{\sigma' k})^{-s'+s-1}$. This component is non-singular for $q \to \eE^{-\ir \pi p/p'}$. As a result, the component of $y_{-1}$ along $\wh{u}_k$ is equal to zero, and $y_{-1}$ and $w$ are linearly independent.\medskip

We thus conclude that $y_{-1}$ and $w$ span two distinct one-dimensional submodules in $\repM^{\tinyx{s\!-\!1}}(2s')$ isomorphic to $\repV_{s-1}(2s')$. This is not possible, since $\repV_{s-1}(2s')$ has only one such submodule. By contradiction, we conclude that our assumption $y_{-1} \neq 0$ is false, and therefore that $v^{\tinyx{\sigma',\eps'}}_{k,s'}$ is not singular for $q\to \eE^{-\ir \pi p/p'}$.\eproof

We remark that Theorem 3.4 of \cite{GL98} constructs a family of homomorphisms from $\repW_{s,\eps q^{\sigma k}}$ to $\repW_{k,\eps q^{\sigma s}}$ whose coefficients in the basis of link states are polynomials in $q$. Moreover, Theorem 5.1 of \cite{GL98} states that this family of homomorphisms is unique, up to a multiplicative constant. Since the coefficient of $v^{\tinyx{\sigma,\eps}}_{k,s}$ along the state $(c_0^\dag)^{s-k}u_k$ is one, there exists a polynomial $p_{k,s}(q)$ such that the state $\wh{v}^{\tinyx{\sigma,\eps}}_{k,s}=p_{k,s}(q)v^{\tinyx{\sigma,\eps}}_{k,s}$ is nonzero, and its coefficients in the basis of link states are polynomials in $q$. This proves that $\wh{v}^{\tinyx{\sigma,\eps}}_{k,s}$ is non-singular for $q\to\eE^{-\ir \pi p/p'}$. The above proposition is a slightly stronger result, as it equivalently states that $p_{k,s}(q)$ is of degree zero.

\begin{Proposition}\label{prop:Q.properties}
Let $q=\eE^{-\ir\pi p/p'}$ with $(p,p')$ as in \eqref{eq:q.standard}, $k\in\frac12 \Zbb_{\ge 0}$ and $s-k\in \Zbb$ satisfying $0\leq k<s \le \frac{p'}2$, $\eps\in\{-1,+1\}$, and $N \in \mathbb Z_{\ge 0}$. 
\begin{enumerate}
\item[(i)] The dimension of $\repQ_{k,\eps q^s}(N)$ is given by
\be
\label{eq:Q.dim}
\dim \repQ_{k,\eps q^s}(N) = D_k(N) - D_{s}(N)- D_{p'-s}(N) + D_{p'-k}(N)\,,
\ee
where $D_k(N)$ is defined in \eqref{eq:DkN}.
\item[(ii)] Let $s'=p'-s$ and $\eps'=(-1)^p \eps$. The submodule $\repR_{k,\eps q^s}(N)$ is given by
\begin{equation} \label{eq:seed.R}
\repR_{k,\eps q^s}(N) = \cL(N,2s)\cdot v^{\tinyx{1,\eps}}_{k,s} + \cL(N,2p'-2s)\cdot v^{\tinyx{-1,\eps'}}_{k,s'} \,,
\end{equation}
and the quotient family $\repQ_{k,\eps q^s}$ is
\be
\repQ_{k,\eps q^s} = \repW_{k,\eps q^s}/\{v^{\tinyx{1,\eps}}_{k,s}\equiv 0\,,\,v^{\tinyx{-1,\eps'}}_{k,s'}\equiv 0\}\,.
\ee
\end{enumerate}
\end{Proposition}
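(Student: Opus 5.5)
The plan is to mirror the proof of \cref{prop:Qk.Rk}: part (i) follows from the Loewy diagram \eqref{eq:W.Loewy} by a dimension count, and part (ii) from the insertion-state machinery of \cref{prop:seed.W} together with \cref{prop:v.is.regular}.

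For (i), I would read off from \eqref{eq:W.Loewy} that $\repR_{k,\eps q^s}(N)$ is the sum of the two submodules isomorphic to $\repW_{s,\eps q^k}(N)$ and $\repW_{p'-s,\eps q^{p'-k}}(N)$. Their intersection is the sum of the two submodules isomorphic to $\repW_{p'+k,\eps q^{p'+s}}(N)$ and $\repW_{p'-k,\eps q^{p'-s}}(N)$. This intersection has the same ladder structure as $\repR$, shifted by $p'$, so the pattern repeats.

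Inclusion--exclusion then expresses $\dim\repR_{k,\eps q^s}(N)$ as an alternating sum of $d_\ell(N)$ over the labels $\ell$ appearing in the ladder. Concretely, these labels are $s+jp'$ and $p'-s+jp'$ entering with sign $+$, and $p'+k+jp'$ and $p'-k+jp'$ entering with sign $-$, for $j\ge0$. The series terminate because $d_\ell(N)=0$ for $2\ell>N$.

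Rather than trusting a telescoping by eye, I would run the argument as an induction on decreasing $k$ along the ladder, as in \cref{prop:Qk.Rk}. The inductive assumption is \eqref{eq:Q.dim} for the factors $\repQ_{k_i,x_i}$ with larger $k_i$. I then subtract from $\dim\repW_{k,\eps q^s}(N)=d_k(N)$ the dimensions of all other composition factors, each of which appears exactly once. Using $D_k(N)=d_k(N)+D_{k+p'}(N)$, the sum regroups into $D_k-D_s-D_{p'-s}+D_{p'-k}$. The base case is trivial, since all terms vanish once $2k>N$.

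One subtlety is the boundary case $s=p'/2$. There the two labels $s$ and $p'-s$ coincide, and I would check that the formula still counts correctly. I expect this to be harmless, because the two submodules then coincide only if the twists coincide, and I would verify this separately.

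For (ii), by \eqref{eq:ejv=0} the state $v^{\tinyx{1,\eps}}_{k,s}$ is an insertion state with parameters $(s,\eps q^k)$, non-singular since $s\le p'/2$. By \cref{prop:v.is.regular}, $v^{\tinyx{-1,\eps'}}_{k,s'}$ is non-singular and is an insertion state with parameters $(p'-s,\eps' q^{-k})=(p'-s,\eps q^{p'-k})$, using $q^{p'}=(-1)^p$. \cref{prop:seed.W} then gives family homomorphisms $\phi_1:\repW_{s,\eps q^k}\to\repW_{k,\eps q^s}$ and $\phi_2:\repW_{p'-s,\eps q^{p'-k}}\to\repW_{k,\eps q^s}$, whose images are $\cL(N,2s)\cdot v^{\tinyx{1,\eps}}_{k,s}$ and $\cL(N,2p'-2s)\cdot v^{\tinyx{-1,\eps'}}_{k,s'}$.

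Each image is nonzero, because the coefficient of $(c_0^\dag)^{s-k}u_k$ (respectively $(c_0^\dag)^{s'-k}u_k$) equals one. Each image is therefore a quotient of the corresponding standard family with nonzero head. By \eqref{eq:W.Loewy}, the head of $\repW_{s,\eps q^k}$ is $\repQ_{s,\eps q^k}$, which occurs in $\repW_{k,\eps q^s}$ only at the node generating the submodule isomorphic to $\repW_{s,\eps q^k}$. Hence the image is exactly that submodule, and similarly for $\phi_2$. Their sum is the maximal proper subfamily $\repR_{k,\eps q^s}$, which gives \eqref{eq:seed.R}, and the quotient description of $\repQ_{k,\eps q^s}$ follows immediately.

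The main obstacle is this last identification: I must argue that the image of $\phi_1$ contains the whole submodule generated by the $\repQ_{s,\eps q^k}$ node, and not merely some other subquotient. I would settle this by using that the image is generated by a vector in degree $2s$ that is not in the radical, since it has a nonzero component of top defect count. I would also use the inequivalence of irreducibles stated earlier to rule out $\repQ_{s,\eps q^k}$ appearing elsewhere in the diagram.
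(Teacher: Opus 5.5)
Your proposal follows the paper's proof. In part (i) you use the same inclusion--exclusion along the double ladder, and your explicit alternating sum for $\dim\repR_{k,\eps q^s}(N)$ is exactly the paper's. In part (ii) you identify each image by its head and the down-closure of the corresponding node. That step closes once you note that the image contains the subfamily isomorphic to $\repW_{s,\eps q^k}$, of dimension $d_s(N)$, while also being a quotient of it. This is a harmless variant of the paper's appeal to one-dimensional insertion spaces and injectivity of the insertion maps.

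One caution concerns the induction you prefer over direct regrouping. It cannot use \eqref{eq:Q.dim} itself as inductive hypothesis, because the lower factors of the ladder are not of the form covered by the proposition. For instance, $\dim\repQ_{s,\eps q^k}(N)=D_s(N)-D_{p'+k}(N)-D_{p'-k}(N)+D_{2p'-s}(N)$, which is not an instance of \eqref{eq:Q.dim}. You should either generalise the formula to all nodes, or, as the paper does, simply regroup the explicit alternating sum using $D_k(N)=d_k(N)+D_{k+p'}(N)$.
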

\proof
Using \eqref{eq:W.Loewy}, we find for all $n \in \mathbb Z_{\ge 0}$
\begin{subequations}
\begin{alignat}{1}
& \dim\repR_{np'+k,\eps q^{np'+s}}(N) = d_{np'+s}(N)+d_{(n+1)p'-s}(N)-\dim\repR_{np'+s,\eps q^{np'+k}}(N) \,, \\[0.15cm]
& \dim\repR_{np'+s,\eps q^{np'+k}}(N) = d_{(n+1)p'+k}(N)+d_{(n+1)p'-k}(N)-\dim\repR_{(n+1)p'+k,\eps q^{(n+1)p'+s}}(N) \,,
\end{alignat}
\end{subequations}
and hence
\begin{equation}
\dim\repR_{np'+k,\eps q^{np'+s}}(N) = \sum_{n=0}^\infty\big(
d_{np'+s}(N)+d_{(n+1)p'-s}(N)-d_{(n+1)p'+k}-d_{(n+1)p'-k}(N)
\big) \,.
\end{equation}
Using $\dim \repQ_{np'+k,\eps q^{np'+s}} = \dim \repW_{np'+k,\eps q^{np'+s}} - \dim \repR_{np'+k,\eps q^{np'+s}}$, we readily obtain \eqref{eq:Q.dim}.\medskip

For the second property, using \eqref{eq:W.Loewy} we see that $\repW_{k,\eps q^s}$ admits exactly one subfamily isomorphic to $\repW_{s,\eps q^k}$, and exactly one subfamily isomorphic to $\repW_{p'-s,\eps q^{p'-k}}$, and note that $\repR_{k,\eps q^s}$ is the sum of these two families. This sum is however not a direct sum. Using \cref{prop:seed.W}, we conclude that the spaces of insertion states with parameters $(s,\eps q^k)$ and $(s',\eps' q^{-k})$ for $\repW_{k,\eps q^s}$ are each one-dimensional, and define families of homomorphisms, from $\repW_{s,\eps q^k} \to \repW_{k,\eps q^s}$ and $\repW_{s',\eps' q^{-k}} \to \repW_{k,\eps q^s}$, respectively. The corresponding maps are either injective or zero, depending on the size $N$.\medskip

Let us denote by $\phi^{\tinyx{1,\eps}}_{k,s}$ and $\phi^{\tinyx{-1,\eps'}}_{k,s'}$ the insertion maps, as in \eqref{eq:insertion.map.W}, associated to the states $v^{\tinyx{1,\eps}}_{k,s}$ and $v^{\tinyx{-1,\eps'}}_{k,s'}$, respectively. By construction, we have 
\be
\phi^{\tinyx{1,\eps}}_{k,s}\big(\repW_{s,\eps q^k}(N)\big)=\cL(N,2s)\cdot v^{\tinyx{1,\eps}}_{k,s}\,,
\qquad
\phi^{\tinyx{-1,\eps'}}_{k,s'}\big(\repW_{s',\eps q^{-k}}(N)\big)=\cL(N,2s')\cdot v^{\tinyx{-1,\eps'}}_{k,s'}\,. 
\ee
Since $\phi^{\tinyx{1,\eps}}_{k,s}$ is injective, we have $\phi_{k,s}(\repW_{s,\eps q^k})\simeq \repW_{s,\eps q^k}$. We thus conclude that the submodule of $\repW_{k,\eps q^s}(N)$ isomorphic to $\repW_{s,\eps q^k}(N)$ is $\cL(N,2s)\cdot v_{k,s}$. The same argument holds for the submodule isomorphic to $\repW_{s',\eps' q^{-k}}(N)$.
\eproof

%
\section{Scaling limit and characters}
\label{sec:scaling.limit}
%

In this section, we describe the scaling limit and conformal characters associated to the various modules over $\tl_N(\beta)$ and $\eptl_N(\beta)$ described in \cref{sec:EPTL}.

\subsection{Virasoro minimal models}

Let $p$ and $p'$ be coprime integers satisfying $2 \le p < p'$. The minimal model $\mathcal{M}(p,p')$ consists of the irreducible modules $\repK_{r,s}$ of heighest weight $h=h_{r,s}$, for the Virasoro algebra with central charge $c$, with
\begin{equation} \label{eq:chrs}
c = 1-\frac{6(p-p')^2}{pp'} \,,
\qquad h_{r,s}=\frac{(p'r-ps)^2-(p-p')^2}{4pp'} \,,
\end{equation}
where $r$ and $s$ are integers in the ranges $1\leq r \leq p-1$ and $1\leq s \leq p'-1$. The module $\repK_{r,s}$ is obtained as the quotient of the Verma module $\Verma_{r,s}$ by its maximal proper submodule. For $c$ and $h=h_{r,s}$ as in~\eqref{eq:chrs}, the module $\Verma_{r,s}$ has a double-ladder structure similar to \eqref{eq:W.Loewy}:
\be
\label{eq:V.Loewy}
\Verma_{r,s} \simeq
\left[\begin{pspicture}[shift=-1.4](-0.5,-1.5)(14.5,1.5)
\rput(0,0){$\repK_{r,s}$}
\rput(2,1){$\repK_{p+r,p'-s}$}
\rput(2,-1){$\repK_{r,2p'-s}$}
\rput(5,1){$\repK_{2p+r,s}$}
\rput(5,-1){$\repK_{r,2p'+s}$}
\rput(8,1){$\repK_{3p+r,p'-s}$}
\rput(8,-1){$\repK_{r,4p'-s}$}
\rput(11,1){$\repK_{4p+r,s}$}
\rput(11,-1){$\repK_{r,4p'+s}$}
\rput(14,1){$\dots$}
\rput(14,-1){$\dots$}
\psline{->}(0.4,0.2)(1.6,0.7)
\psline{->}(0.4,-0.2)(1.6,-0.7)
\psline{->}(2.9,1)(4.2,1)
\psline{->}(2.8,-1)(4.2,-1)
\psline{->}(5.8,1)(7.0,1)
\psline{->}(5.8,-1)(7.2,-1)
\psline{->}(9.0,1)(10.2,1)
\psline{->}(8.8,-1)(10.2,-1)
\psline{->}(11.8,1)(13.5,1)
\psline{->}(11.8,-1)(13.5,-1)
\psline{->}(2.2,-0.7)(4.8,0.7)
\psline{->}(2.2,0.7)(4.8,-0.7)
\psline{->}(5.2,-0.7)(7.8,0.7)
\psline{->}(5.2,0.7)(7.8,-0.7)
\psline{->}(8.2,-0.7)(10.8,0.7)
\psline{->}(8.2,0.7)(10.8,-0.7)
\psline{->}(11.2,-0.7)(13.8,0.7)
\psline{->}(11.2,0.7)(13.8,-0.7)
\end{pspicture}
\right]
\ee
The character of $\repK_{r,s}$, which we denote as $\chit_{r,s}$, is obtained as the alternating sum of Verma characters
\be
\label{eq:charVK}
\chit_{r,s} = \chit_{\Verma_{r,s}}
- \sum_{j=1}^\infty\Big(\chit_{\Verma_{(2j-1)p+r,p'-s}}+\chit_{\Verma_{r,2jp'-s}}\Big)
+ \sum_{j=1}^\infty\Big(\chit_{\Verma_{2jp+r,s}}+\chit_{\Verma_{r,2jp'+s}}\Big)
 \,,
\ee
with 
\be
\chit_{\Verma}(\qq) = \tr_{\Verma} \big(\qq^{L_0-c/24}\big)= \frac{\qq^{h-c/24}}{\prod_{n=1}^\infty (1-\qq^n)}\,,
\ee
where $\qq$ is the modular parameter.
The conformal dimensions satisfy the relations
\begin{equation} \label{eq:prop-h_{rs}}
h_{r+\alpha p,s}=h_{r,s-\alpha p'} \,,
\qquad h_{r,s}=h_{-r,-s}=h_{p-r,p'-s} \,,
\end{equation}
for $\alpha \in \mathbb R$. Two Verma modules with identical weights $h$ are isomorphic, and similarly for two irreducible modules. Using \eqref{eq:prop-h_{rs}}, we rearrange the sums in \eqref{eq:charVK} and find
\begin{equation} \label{eq:char.rs}
\chit_{r,s} = \sum_{j \in \Zbb}(\chit_{\Verma_{2jp+r,s}}-\chit_{\Verma_{2jp-r,s}}) \,.
\end{equation}

\subsection[Scaling limit of the families $\repV_{k}$ and $\repQ_{k}$]{Scaling limit of the families $\boldsymbol{\repV_{k}}$ and $\boldsymbol{\repQ_{k}}$}

\label{sec:scaling.TL}
Let $\repM$ be a family of $\tl_N(\beta)$-modules. We define the lattice character on $\repM$ as
\begin{equation}
\chit_{\repM}(M,N) = \tr_{\repM(N)}\Db^{M/2}\,,
\end{equation}
where $M \in 2\mathbb Z_{\ge 0}$ and $N \in \mathbb Z_{\ge 0}$, and $\Db$ is the transfer matrix defined in \eqref{eq:def.tm}.
The scaling limit $\chit_{\repM}(\qq)$ of this character, with $\qq = \eE^{-2 \pi \delta}$ and $\delta \in \mathbb R_{>0}$, is obtained by setting $M = \lfloor N \delta \rfloor$ and taking the limit $N \to \infty$
\be
\label{eq:chi(q)}
\chit_{\repM}(\qq) = \lim_{\substack{N \to \infty \\[0.05cm] M = \lfloor N \delta \rfloor}} 
\eE^{MN f_{\textrm{bulk}}+2Mf_{\textrm{bdy}}} 
\chit_{\repM}(M,N)\,,
\ee
where $f_{\textrm{bulk}}$ and $f_{\textrm{bdy}}$ are bulk and boundary free energies that depend on $\repM$ \cite{dFMS97}. For $\beta$ parameterised as in \eqref{eq:q.beta.standard} and $k\in \frac12 \mathbb Z_{\ge 0}$, the scaling limit of the character on $\repV_k$ is
\begin{equation} \label{eq:charVk}
\chit_{\repV_k}(\qq) = 
\chit_{\Verma_{1,2k+1}}(\qq) - \chit_{\Verma_{1,-2k-1}}(\qq) = \frac{\qq^{h_{1,2k+1}}(1-\qq^{2k+1})}{\prod_{n=1}^\infty (1-\qq^n)}\,.
\end{equation}
\medskip

Let $k\in\{0,\tfrac{1}{2},\dots, \tfrac{p'}{2}-1\}$. From the structure of $\repV_k$ described in \cref{sec:Vk}, we find that the lattice character on $\repQ_{k}$ is
\be
\chit_{\repQ_k}(M,N) = \sum_{j=0}^\infty \chit_{\repV_{k+p'j}}(M,N) - \sum_{j=1}^\infty \chit_{\repV_{-k-1+p'j}}(M,N) \,.
\ee
Using \eqref{eq:charVk}, \eqref{eq:prop-h_{rs}} and \eqref{eq:char.rs}, we obtain the scaling limit of this character
\begin{equation} \label{eq:charQk}
\chit_{\repQ_k}(\qq)
= \sum_{j \in \Zbb}\big(\chit_{\Verma_{1,2k+1+2p'j}}(\qq)-\chit_{\Verma_{-1,2k+1+2p'j}}(\qq)\big)
= \chit_{1,2k+1}(\qq)\,.
\end{equation}

\subsection[Scaling limit of the families $\repW_{k,x}$ and $\repQ_{k,\eps q^s}$]{Scaling limit of the families $\boldsymbol{\repW_{k,x}}$ and $\boldsymbol{\repQ_{k,\eps q^s}}$}

Let $\repM$ be a family of $\eptl_N(\beta)$-modules. We define the lattice character on $\repM$ as
\begin{equation}
\chit_{\repM}(M_1,M_2,N) = \tr_{\repM(N)}\big( \Omega^{M_1} \Tb^{M_2}\big)\,,
\end{equation}
whereas $M_1 \in \mathbb Z$ and $M_2,N \in \mathbb Z_{\ge 0}$, and $\Tb$ is the transfer matrix defined in \eqref{eq:def.tm}. The scaling limit $\chit_{\repM}(\qq)$ of this character, with $\qq = \eE^{2 \pi \ir \tau}$ and $\tau = \tau_r +\ir \tau_i$, is obtained by setting $M_1 = \lfloor N\tau_r \rfloor $ and $M_2 = \lfloor N \tau_i \rfloor$, and taking the limit $N \to \infty$
\be
\label{eq:chi(q)bulk}
\chit_{\repM}(\qq) = \lim_{\substack{N \to \infty \\[0.05cm] M_1 = \lfloor N \tau_r \rfloor \\[0.05cm] M_2 =\lfloor N \tau_i \rfloor}} \eE^{M_2N f_{\textrm{bulk}}} \chit_{\repM}(M_1,M_2,N)\,,
\ee
where $f_{\textrm{bulk}}$ is the same bulk free energy as in \eqref{eq:chi(q)}. The scaling limit of these characters was studied by Pasquier and Saleur \cite{PS90} in the context of the XXZ spin chain. For $\beta$ parameterised as in \eqref{eq:q.beta.standard}, $\mu \in \mathbb C$ and $k\in \frac12 \mathbb Z_{\ge 0}$, the scaling limit of the characters of the standard modules is
\begin{equation}
\label{eq:W.scaling}
\chit_{\repW_{k,\exp(\ir\pi\mu)}} = 
\sum_{m \in \Zbb} (-1)^{\eta m}\chit_{\Verma_{m+\mu,k}} \chib_{\Verma_{m+\mu,-k}} \,,
\end{equation}
where we drop the dependence on $\qq$ for convenience. Here, $\chib$ is obtained from $\chit$ by changing $\mathfrak q \mapsto \mathfrak q^{-1}$, and\footnote{We use the notation $a\equiv b$ mod $n$ when $a$ and $b$ have the same remainder $r$ when they are divided by $n$, with $0\le r < n$, and the notation $a= b$ mod $n$ when assigning to $a$ the value given by the remainder $r$ when $b$ is divided by $n$.} 
\be
\eta_1 = M_1 \mod 2\,,\qquad 
\eta_2 = M_2 \mod 2\,, \qquad
\eta= \eta_1 + \eta_2\,.
\ee
Below, we use the convention
\be
\label{eq:W.conv}
\repW_{-k,x} = \repW_{k,1/x}
\ee 
for standard modules with negative defect numbers. We can easily check that \eqref{eq:W.scaling} is consistent with this choice of convention. Using \eqref{eq:prop-h_{rs}}, we find
\be
\label{eq:charWV}
\chit_{\repW_{k+p'j,\eps q^s}} = 
\eps^{\eta} \sum_{m \in \Zbb} (-1)^{\eta m} \chit_{\Verma_{m-pj,k+s}} \, \chib_{\Verma_{m+pj,-k+s}} \,,
\ee
for $j,k \in \frac12\mathbb Z$, $s \in \mathbb R$ and $\eps \in \{+1,-1\}$.
\medskip

Let $k$, $s$ and $\eps$ be as in~\eqref{eq:xks}, namely $0\leq k<s<p'-k$ and $\eps\in\{-1,+1\}$.
From the Loewy diagrams \eqref{eq:W.Loewy} of $\repW_{k,\eps q^s}$, we find that the character on $\repQ_{k,\eps q^s}$ is
\be 
\chit_{\repQ_{k,\eps q^s}} = \sum_{j=0}^\infty \Big(
\chit_{\repW_{k+p'j,\eps q^{s + p'j}}}
- \chit_{\repW_{s+p'j,\eps q^{k+p'j}}}
- \chit_{\repW_{p'-s+p'j,\eps q^{-k+p'(j+1)}}}
+ \chit_{\repW_{p'-k+p'j,\eps q^{-s+p'(j+1)}}}
 \Big)\,.
\ee
Using \eqref{eq:W.conv} and \eqref{eq:charWV}, we simplify this expression to
\begin{alignat}{2}
\chit_{\repQ_{k,\eps q^{s}}} &= \sum_{j \in \mathbb Z} \Big(
\chit_{\repW_{k+p'j,\eps q^{s + p'j}}}
- \chit_{\repW_{s+p'j,\eps q^{k+p'j}}}
\Big) 
\nn\\&= \eps^\eta\sum_{j,m \in \Zbb}(-1)^{\eta m}
(\chit_{\Verma_{m,s+k}}\chib_{\Verma_{2pj+m,s-k}} - \chit_{\Verma_{m,s+k}}\chib_{\Verma_{2pj-m,s-k}})\,.
\end{alignat}
We now change the summation index $m$ to $m=2 t p+r$ with $r \in \{-p+1,\dots, p-1,p\}$ and $t \in \mathbb Z$. The terms $r=0$ and $r=p$ are easily found to vanish, so we instead set $m=2 t p + \sigma r$ with $t \in \mathbb Z$, $r \in \{1,2,\dots, p-1\}$ and $\sigma \in \{+1,-1\}$. Using \eqref{eq:charVK}, we simplify the result to
\be
\label{eq:charQK}
\chit_{\repQ_{k,\eps q^{s}}} = \eps^\eta\sum_{r=1}^{p-1}(-1)^{\eta r}
\chit_{r,s+k}\,\chib_{r,s-k}\,.
\ee
For $k=0$, we obtain a sum of {\it diagonal terms} $\chit_{r,s}\,\chib_{r,s}$,
whereas for $k>0$ all the contributions $\chit_{r,s+k}\,\chib_{r,s-k}$ are {\it non-diagonal}.

%
\section{ADE modules with boundaries}\label{sec:boundaryADE}
%

In this section, we describe the $\tl_N(\beta)$-modules $\repM_{\g,\mu,a,b}(N)$ associated to the ADE lattice models with fixed boundary conditions. We describe the action of the diagrams in $\cL(N,N')$ on these modules, decompose the space as direct sum of irreducible $\tl_N(\beta)$-modules, and recover the known conformal partition functions.

\subsection{State space and action of diagrams}

Let $N$ be a non-negative integer. The $\tl_N(\beta)$-module $\repM_{\g,\mu,a,b}(N)$ with two boundaries is fixed by a choice of a Lie algebra $\g$ associated to a Dynkin diagram $\mathcal G$, an index $\mu$ associated to an eigenvector with non-vanishing components $S_{a\mu}$, and two nodes $a,b \in \mathcal G$. The basis states are of the form $u_\ba = |a_0, a_1, a_2, \dots, a_N\rangle$, where (i) each $a_i$ is a height of $\mathcal G$, (ii) $a_i$ and $a_{i+1}$ are adjacent on $\mathcal G$ for $i = 0,1,\dots, N-1$, and (iii) the boundary sites are fixed to $a_0 = a$ and $a_N = b$. The dimension of $\repM_{\g,\mu,a,b}(N)$ is
\be
\dim \repM_{\g,\mu,a,b}(N) = \big(A^N\big)_{ab}\,.
\ee
The Dynkin diagrams of type $A,D,E$ are all bipartite, namely their nodes can be coloured in black and white, so that any edge connects two nodes of opposite colours. For $N$ even, $\repM_{\g,\mu,a,b}(N)$ is nonzero only if $a$ and $b$ are of the same color. For $N$ odd, it is nonzero only if $a$ and $b$ have different colors.\medskip

The vector space $\repM_{\g,\mu,a,b}(N)$ is endowed with an action of $\cL(N',N)$, with the loop weight
\begin{equation}
\beta = -q-q^{-1} \qquad \textrm{with} \qquad 
q=-\eE^{\ir\pi m_\mu/p'}
 = \eE^{-\ir\pi p/p'} \,, \qquad 
p=p'-m_\mu \,.
\end{equation}
The action of the generators of $\tl_N(\beta)$ on $\repM_{\g,\mu,a,b}(N)$ is given by 
\be
\label{eq:ejRSOS} 
e_j \cdot \ket{a_0,a_1, \dots, a_N}
= \sum_{a'_j=1}^n \delta_{a_{j-1}a_{j+1}} A_{a'_j,a_{j-1}} \frac{S_{a'_j\mu}}{S_{a_{j+1}\mu}} \, \ket{a_0,a_1, \dots, a_{j-1},a'_j,a_{j+1}, \dots, a_N}\,,
\ee
for $j=1,2,\dots, N$. As shown in \cite{IMD25}, this action can be factored as $e_j = c_j^\dag c_j$, where the operators $c_j: \repM_{\g,\mu,a,b}(N) \to \repM_{\g,\mu,a,b}(N-2)$ and $c^\dag_j: \repM_{\g,\mu,a,b}(N-2) \to \repM_{\g,\mu,a,b}(N)$ act as
\begin{subequations} \label{eq:cj.RSOS}
\begin{alignat}{2}
& c_j \cdot \ket{a_0, a_1, \dots, a_N}
= \, \frac{\delta_{a_{j-1},a_{j+1}}}{S_{a_{j+1}\mu}}\, \ket{a_0,a_1, \dots, a_{j-1},a_{j+2}, a_{j+3}, \dots, a_N} \,, 
\\
& c^\dag_j \cdot \ket{a_0, a_1, \dots, a_{N-2}} = \sum_{a'_j=1}^n A_{a_{j-1},a'_j}S_{a_{j'}\mu}
\, \ket{a_0, a_1, \dots, a_{j-1}, a'_j ,a_{j-1}, a_j, \dots, a_{N-2}} \,.
\end{alignat}
\end{subequations}
We also note that the transfer matrix $D_N$, defined in \eqref{eq:Dab}, is the matrix representative of $\Db$ in $\repM_{\g,\mu,a,b}(N)$.\medskip

Using the same arguments presented in \cite{IMD25} for the periodic case, we obtain the following proposition. 
\begin{Proposition}
The set 
\be 
\repM_{\g,\mu,a,b} = \{\repM_{\g,\mu,a,b}(N) \,|\, N_0, N_0+2, N_0+4, \dots \}\,,
\qquad N_0 = 
\left\{\begin{array}{ll}
0 & a \mathrm{\ and\ } b \mathrm{\ have\ the\ same\ color,}
\\[0.1cm]
\mathrm{1} & a \mathrm{\ and\ } b \mathrm{\ have\ different\ colors,}
\end{array}\right.
\ee
is a family of $\tl_N(\beta)$-modules.
\end{Proposition}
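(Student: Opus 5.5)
We have to show that the operators $c_j$ and $c^\dag_j$ of \eqref{eq:cj.RSOS}, for $j \ge 1$, satisfy every relation defining $\cL_0(N,N')$ as a space of words, i.e.\ the relations of \cite[Eq.~(2.6)]{IMD25} that do not involve $c_0$ or $c_0^\dag$. Once this holds, each word in the generators acts by a well-defined linear map $\repM_{\g,\mu,a,b}(N') \to \repM_{\g,\mu,a,b}(N)$. Composition of these maps is then compatible with the diagram product, which gives the family structure.

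First we note that the maps preserve the defining constraints of the state space. The operator $c_j$ deletes the heights $a_j, a_{j+1}$ only when $a_{j-1}=a_{j+1}$, so the resulting sequence is still a path on $\mathcal G$. The operator $c^\dag_j$ inserts a pair $(a'_j, a_{j-1})$ with $a'_j$ adjacent to $a_{j-1}$, which again gives a path. Since $1 \le j \le N-1$ and only interior heights are removed or inserted, the boundary heights $a_0=a$ and $a_N=b$ are untouched. Each operation changes $N$ by $2$, so the colour compatibility, and hence the parity $N_0$, is preserved.

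Next we check the relations one type at a time. The relations fall into three groups.

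\emph{Loop removal.} The relation $c_j c^\dag_j = \beta\,\id$ reduces to
\[
\sum_{a'} A_{a_{j-1},a'}\, S_{a'\mu}/S_{a_{j-1}\mu} = \beta_\mu .
\]
This is the eigenvalue equation for $S_{\cdot\mu}$, and $\beta_\mu = 2\cos(\pi m_\mu/p') = -q-q^{-1}$ with the stated $q$.

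\emph{Zigzag relations.} These are $c_j c^\dag_{j\pm1} = \id$. Composing them, only one term survives, with weight $S_{a'\mu}/S_{a'\mu} = 1$ where $a'$ is the height recreated by the insertion.

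\emph{Far-commutation relations.} These are the relations of the form $c_i c_j = c_{j-2} c_i$ for $i < j-1$, and the analogous ones for $c^\dag$ and mixed products. They hold because the operators act on disjoint sets of sites, and the factors $1/S_{a_{j+1}\mu}$ and $S_{a'_j\mu}$ depend only on local heights, which are unaffected by the shift of labels.

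We then verify that $c^\dag_j c_j$ reproduces \eqref{eq:ejRSOS}, so that the $\tl_N(\beta)$-action obtained from the family agrees with the one used for the model. This also makes $D_N$ the representative of $\Db$.

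The main obstacle is bookkeeping rather than anything conceptual: index shifts in the mixed relations, and the precise normalisation convention (the factor $S_{a_{j+1}\mu}$ in $c_j$ versus $S_{a'_j\mu}$ in $c^\dag_j$). These must combine to give exactly the weights $\beta$ and $1$. We would check the zigzag relation carefully first, since it is the one most sensitive to that convention.

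Finally, the periodic argument of \cite{IMD25} carries over with the relations involving $c_0$ simply dropped. The absence of a wrap-around in the open chain only removes cases, so no additional relations need to be checked.
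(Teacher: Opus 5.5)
Your proposal is correct and follows essentially the same route as the paper, which defers to the argument of \cite{IMD25}: check that the operators $c_j$, $c^\dag_j$ with $j\ge1$ of \eqref{eq:cj.RSOS} satisfy the defining relations of $\cL_0(N,N')$. In that check the loop relation reduces to the eigenvalue equation $\sum_{a'}A_{a_{j-1}a'}S_{a'\mu}=\beta_\mu S_{a_{j-1}\mu}$ with $\beta_\mu=-q-q^{-1}$, and the zigzag and commutation relations are local bookkeeping. Your ``disjoint sites'' justification is a bit loose in a few borderline places: relations such as $c_{j-2}c^\dag_j=c^\dag_{j-2}c_{j-2}$, and the claim that $c_{N-1}$ leaves the endpoint untouched (it actually deletes $a_N$, and only the Kronecker delta $a_{N-2}=a_N=b$ keeps the endpoint equal to $b$). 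In each of these it is the delta in $c_j$ that makes the shared heights agree, and they all check out directly.
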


We introduce a symmetric sesquilinear form $\smallaver{\,,\,}: (\repM_{\g,\mu,a,b},\repM_{\g,\mu,a,b}) \to \mathbb C$, defined as
\begin{equation} \label{eq:def.aver}
\aver{u_{\boldsymbol a}, u_{\boldsymbol b}}
= \prod_{j=0}^N \frac{\delta_{a_j, b_j}}{S_{a_j\mu}}\,.
\end{equation}
Because this form is diagonal with nonzero elements, it is clear that its determinant is nonzero and therefore that it is a non-degenerate form. For all states $u_{\boldsymbol a} \in \repM_{\g,\mu,a,b}(N-2)$ and $u_{\boldsymbol b} \in \repM_{\g,\mu,a,b}(N)$, we have
\be
\aver{u_\ba, c_j \cdot u_{\boldsymbol b}}
= \aver{c_j^\dag \cdot u_\ba, u_{\boldsymbol b}} 
= \Bigg[\prod_{i=0}^{j-1}\frac{\delta_{a_i b_i}}{S_{a_i\mu}}\Bigg] 
\frac{\delta_{b_{j-1} b_{j+1}}}{S_{b_{j+1} \mu}} \Bigg[\prod_{i=j}^{N-2}\frac{\delta_{a_i b_{i+2}}}{S_{a_i\mu}}\Bigg]
\qquad j=1,2,\dots,N-1 \,.
\ee
Hence, the operators $c_j$ and $c_j^\dag$ are conjugate in $\repM_{\g,\mu,a,b}$, and 
\be
\aver{u_\ba,e_j \cdot u_{\boldsymbol b}}
= \aver{e_j \cdot u_\ba, u_{\boldsymbol b}} \,, 
\ee
We sometimes use the notation $\aver{u_{\boldsymbol a} |\lambda| u_{\boldsymbol b}}$, understanding that this is equal to both $\aver{u_{\boldsymbol a}, \lambda \cdot u_{\boldsymbol b}}$ and $\aver{\lambda^\dag \cdot u_{\boldsymbol a}, u_{\boldsymbol b}}$.\medskip

Let $K$ be an automorphism of $\mathcal G$, and $a$ and $b$ be nodes of $\mathcal G$ that are fixed points of $K$, namely $K(a) = a$ and $K(b) = b$. In these cases, we define the linear map $K_N:\repM_{\g,\mu,a,b}(N)\to\repM_{\g,\mu,a,b}(N)$ as
\be
\label{eq:KN.def}
K_N \cdot \ket{a_0, a_1, \dots, a_N} = \kappa_\mu^{(N+1)/2} 
\ket{K(a_0), K(a_1), \dots, K(a_N)} \,,
\ee 
where $\kappa_\mu$ is the eigenvalue of $K$ associated to $S_{a\mu}$, and we use the convention
\be
\label{eq:sqrt.kappa}
(\kappa_\mu)^{1/2} = 
\left\{\begin{array}{cl}
1 & \kappa_\mu = 1\,, \\[0.1cm]
\ir & \kappa_\mu = -1\,.
\end{array}\right.
\ee 
As a collection of maps on the family $\repM_{\g,\mu,a,b}$, the action of $K_N$ commutes with $c_j$ and $c_j^\dag$, namely
\be
c_j K_{N} = K_{N-2}\, c_j \,, \qquad
c^\dag_j K_{N-2} = K_{N}\, c^\dag_j\,.
\qquad 
\ee
Moreover, the operator obtained by replacing $K$ by $K^{-1}$ in \eqref{eq:KN.def} is well-defined on $\repM_{\g,\mu,a,b}(N)$. It is the inverse of $K_N$, and we denote it as $\bar K_N$. Hence, $K_N$ defines a family of invertible automorphisms on $\repM_{\g,\mu,a,b}$.
The operators $K_N$ and $\bar K_N$ satisfy
\be
\aver{u_{\boldsymbol{a}}, K_N \cdot u_{\boldsymbol{b}}}
= \aver{\bar K_N \cdot u_{\boldsymbol{a}}, u_{\boldsymbol{b}}} 
= \kappa_\mu^{(N+1)/2}\prod_{i=0}^{N} \frac{\delta_{a_i, K(b_i)}}{S_{a_i \mu}}\,.
\ee

\subsection[Symmetric gauge for $\mu = 1$]{Symmetric gauge for $\boldsymbol{\mu = 1}$}

For $\mu = 1$, we denote the eigenvector components as $S_a = S_{a1}$. They are all positive, and their square roots are real. Moreover, the automorphisms $K$ have the eigenvalue $\kappa_1 = 1$ in all cases.
Using the change of basis
\be
\label{eq:change.of.basis} 
v_{\boldsymbol{a}} = \kket{a_0,a_1,a_2,\dots,a_N} =\bigg[\prod_{j=0}^{N} \sqrt{S_{a_j}}\,\bigg] \ket{a_0,a_1,a_2,\dots,a_N}\,,
\ee
we find
\begingroup
\allowdisplaybreaks
\begin{subequations}
\begin{alignat}{2} 
\label{eq:ejRSOSv2}
& e_j \cdot \kket{a_0,a_1,\dots,a_N}
= \sum_{a'_j=1}^n \delta_{a_{j-1}a_{j+1}} A_{a'_j,a_{j+1}} \sqrt{\frac{S_{a_j}S_{a'_j}}{S_{a_{j-1}}S_{a_{j+1}}}} \, \kket{a_0,a_1,\dots, a_{j-1},a'_j,a_{j+1}, \dots, a_N} \,, 
\\[0.15cm]
&K_N \kket{a_0, a_1, \dots, a_N} = 
\kket{K(a_0), K(a_1), \dots, K(a_N)} \,,
\end{alignat}
\end{subequations}
and
\begin{subequations} \label{eq:cj.RSOS.v2}
\begin{alignat}{2}
& c_j \cdot \kket{a_0, a_1, \dots, a_N} 
= \delta_{a_{j-1},a_{j+1}} \, \sqrt{\frac{S_{a_j}}{S_{a_{j+1}}}}\, \kket{a_0, a_1, \dots, a_{j-1},a_{j+2}, a_{j+3}, \dots, a_N} \,, 
\\
& c^\dag_j \cdot \kket{a_0, a_1, \dots, a_{N-2}} 
= \sum_{a'_j=1}^n A_{a_{j-1},a'_j} \sqrt{\frac{S_{a'_j}}{S_{a_{j-1}}}}
\, \kket{a_0, a_1, \dots, a_{j-1}, a'_{j} ,a_{j-1}, a_j, \dots, a_{N-2}} \,.
\end{alignat}
\end{subequations}
\endgroup

In this gauge, we define the sesquilinear form
\begin{subequations}
\begin{equation}
\aaver{v_{\boldsymbol a},v_{\boldsymbol b}}
= \prod_{j=0}^N \delta_{a_j, b_j} \,.
\end{equation}
This is simply the canonical real scalar product. We then have
\begin{alignat}{3}
\aaver{v_{\boldsymbol a},c_j \cdot v_{\boldsymbol b}}
&= \aaver{c^\dag_j \cdot v_{\boldsymbol a},v_{\boldsymbol b}} \,, 
\qquad
&\aaver{v_{\boldsymbol a},c_j^\dag \cdot v_{\boldsymbol b}}
&= \aaver{c_j \cdot v_{\boldsymbol a}, v_{\boldsymbol b}}\,,
\\[0.15cm]
\aaver{v_{\boldsymbol a},e_j \cdot v_{\boldsymbol b}}
&= \aaver{e_j \cdot v_{\boldsymbol a},v_{\boldsymbol b}} \,, 
\qquad
&\aaver{v_{\boldsymbol a},K_N \cdot v_{\boldsymbol b}} 
&=\aaver{\bar K_N\cdot v_{\boldsymbol a}, v_{\boldsymbol b}}\,.
\end{alignat}
\end{subequations}
In this gauge specific to $\mu = 1$, the matrices representing the generators $e_j$ are real and symmetric, and the corresponding ADE models are unitary.

\subsection[Decomposition of $\repM_{\g,\mu,a,b}$]{Decomposition of $\boldsymbol{\repM_{\g,\mu,a,b}}$}
\label{sec:Mab.decomp}

In this section, we obtain the decomposition of $\repM_{\g,\mu,a,b}$ for each ADE model as a direct sum of irreducible families of modules over $\tl_N(\beta)$.
\medskip

Following \cite{BPZ98}, we define the fused adjacency matrices $J_s$ using the initial conditions and the recursion relation
\begin{equation} \label{eq:rec.J}
J_0= 0 \,, \qquad J_1= \id \,, \qquad J_s= J_{s-1}\,A - J_{s-2}\,.
\end{equation}
The matrices $J_s$ are square matrices of size $n$ with integer entries. They are defined for $s \in \mathbb Z$, and we have $J_{s+1} = U_s(\frac A2)$, where $U_s(x)$ is the $s$-th Chebyshev polynomial of the second kind. They satisfy the folding and periodicity relations
\be \label{eq:symm.J}
J_s = - J_{2p'-s} = J_{2p'+s}\,. 
\ee
Moreover, the matrix element $(J_s)_{ab}$ is nonzero only if (i) $a$ and $b$ have the same colour and $s$ is odd, or (ii) $a$ and $b$ have different colours and $s$ is even.
Since $J_s$ is a polynomial in $A$, each eigenvector of~$A$ with eigenvalue $\beta_\mu=2\cos(\tfrac{\pi m_\mu}{p'})$ is also an eigenvector of $J_s$, with eigenvalue $U_{s-1}\big(\cos(\tfrac{\pi m_\mu}{p'})\big)$. We then have
\begin{equation}
(J_s)_{ab} = \sum_{\mu=1}^n S_{a\mu}S^*_{b\mu}\,\frac{\sin\big(\frac{\pi s m_\mu}{p'}\big)}{\sin \big(\frac{\pi m_\mu}{p'}\big)}\,,
\end{equation}
which yields the identity
\begin{equation} \label{eq:diff.J}
(J_{2k+1})_{ab} - (J_{2k-1})_{ab} = 2\sum_{\mu=1}^n S_{a\mu}S^*_{b\mu} \,\cos\bigg(\frac{2\pi k m_\mu}{p'}\bigg) \,, \qquad k\in\tfrac12\Zbb \,.
\end{equation}
For $k\in\{0,\tfrac{1}{2},\dots,\tfrac{p'}{2}-1\}$, the matrix element $(J_{2k+1})_{ab}$ is a non-negative integer \cite{BPZ98}.

\begin{Proposition}
Let $N \in \mathbb Z_{\ge 0}$. The dimension of $\repM_{\g,\mu,a,b}(N)$ is
\begin{equation}
\label{eq:dim.Mab}
\dim \repM_{\g,\mu,a,b}(N)= \sum_{k=0,\tfrac{1}{2},\dots,\tfrac{p'}{2}-1} (J_{2k+1})_{ab} \, \dim\repQ_k(N) \,.
\end{equation}
\end{Proposition}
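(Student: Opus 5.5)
This is a purely combinatorial identity: write both sides in terms of the eigen-decomposition of $A$ and compare. The left side is
\[
(A^N)_{ab}=\sum_{\mu'} S_{a\mu'}S^*_{b\mu'}\,\beta_{\mu'}^N,
\qquad
\beta_{\mu'}^N=\big(2\cos\theta_{\mu'}\big)^N=\sum_{j\ge0}\binom{N}{j}e^{\ir(N-2j)\theta_{\mu'}},
\qquad
\theta_{\mu'}=\pi m_{\mu'}/p'.
\]
Grouping the terms $j$ and $N-j$ gives
\[
\beta_{\mu'}^N=\sum_{\ell\ge0,\ \ell-N/2\in\Zbb}' d_\ell(N)\, 2\cos(2\ell\theta_{\mu'}),
\]
where the prime means the $\ell=0$ term carries weight $1$ instead of $2$.

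Next I use the telescoping form of \eqref{eq:diff.J}: $\sum_{\mu'}S_{a\mu'}S^*_{b\mu'}\,2\cos(2\ell\theta_{\mu'})=(J_{2\ell+1})_{ab}-(J_{2\ell-1})_{ab}$. For $\ell=0$ this is $(J_1-J_{-1})_{ab}=2\delta_{ab}$, since $J_{-1}=-J_1$, which matches the halved weight. Summing by parts then gives
\[
(A^N)_{ab}=\sum_{\ell}(J_{2\ell+1})_{ab}\big(d_\ell(N)-d_{\ell+1}(N)\big)=\sum_{\ell\ge0}(J_{2\ell+1})_{ab}\dim\repV_\ell(N).
\]
This is the decomposition into standard modules, valid for any $\ell$ and before reducing mod $p'$.

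Now reduce using the symmetries \eqref{eq:symm.J}. Write each $\ell$ either as $\ell=k+p'j$ or as $\ell=p'j-1-k$, with $k\in\{0,\tfrac12,\dots,\tfrac{p'}2-1\}$. This covers every residue of $2\ell+1$ mod $2p'$ except $2\ell+1\equiv 0$ mod $p'$, and those terms have $(J_{2\ell+1})_{ab}=0$. In the two cases,
\[
J_{2\ell+1}=J_{2k+1}\quad(\ell=k+p'j),\qquad J_{2\ell+1}=J_{2p'j-2k-1}=-J_{2k+1}\quad(\ell=p'j-1-k).
\]
Hence
\[
(A^N)_{ab}=\sum_k (J_{2k+1})_{ab}\Big[\sum_{j\ge0}\dim\repV_{k+p'j}(N)-\sum_{j\ge1}\dim\repV_{p'j-1-k}(N)\Big].
\]

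It remains to show that the bracket equals $\dim\repQ_k(N)$. This is exactly the alternating sum that appears in the character formula for $\repQ_k$ in \cref{sec:scaling.TL}. It also matches \eqref{eq:QTL.dim}: with $r_k=1$ and $k'=p'-1-k$, the expression $D_k-D_{k+1}-D_{k'}+D_{k'+1}$ expands into precisely these alternating $d$'s. The bipartite colour constraint is automatic on both sides, because $(J_{2k+1})_{ab}$ and $\dim\repQ_k(N)$ vanish for the wrong parity.

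The main obstacle is bookkeeping rather than anything conceptual. The delicate points are the $\ell=0$ boundary term, half-integer $\ell$ (odd $N$) where no halving occurs, and checking that the residues $2\ell+1\equiv0$ mod $p'$ drop out. I would treat even and odd $N$ separately to get these right.
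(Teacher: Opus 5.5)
Your proof is correct and uses the same ingredients as the paper's proof: the spectral expansion of $(A^N)_{ab}$, the binomial identity for $(2\cos\theta)^N$, the telescoping identity \eqref{eq:diff.J} followed by a summation by parts, the symmetries \eqref{eq:symm.J}, and the dimension formula \eqref{eq:QTL.dim}. The only difference is the order of the steps. The paper first reduces modulo $p'$ via \eqref{eq:cos^N} and then telescopes in the $D_k$'s, whereas you telescope first, obtaining the intermediate expansion $(A^N)_{ab}=\sum_{\ell}(J_{2\ell+1})_{ab}\dim\repV_\ell(N)$, and fold afterwards; this makes the passage from the $\repV_\ell$ to the $\repQ_k$ slightly more transparent.
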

\proof
By writing $2\cos \theta = \eE^{\ir \theta}+\eE^{-\ir \theta}$ and using Newton's binomial formula, we obtain the identity
\begin{equation} \label{eq:binomial}
\left(2\cos \theta \right)^N = \sum_{j=-N/2}^{N/2} d_j(N)\, \cos(2j \theta)\,,
\end{equation}
where $d_j(N)$ is defined in \eqref{eq:dimV}.
Here, the sum on~$j$ runs over the integer values for $N$ even and half-integer values for $N$ odd. This yields
\begin{equation} \label{eq:cos^N}
\left( 2\cos \frac{\pi m}{p'} \right)^N = 
\left\{\begin{array}{cl}
\displaystyle\sum_{k=0}^{p'-1} \big(D_k(N)+D_{p'-k}(N)\big) \, \cos \left(\frac{2\pi k m}{p'}\right) & N \text{ even,} \\[0.45cm]
\displaystyle\sum_{k=1/2}^{p'-1/2} \big(D_k(N)+D_{p'-k}(N)\big) \, \cos \left(\frac{2\pi k m }{p'}\right) & N \text{ odd,}
\end{array}\right.
\end{equation}
for $m\in \Zbb$, where $D_k(N)$ is defined in \eqref{eq:DkN}. For $N$ even, we have
\begin{alignat}{1}
\dim \repM_{\g,\mu,a,b}(N)&= \left(A^N \right)_{ab}
= \sum_{\nu=1}^n S_{a\nu} S^*_{b\nu} \bigg( 2\cos \frac{\pi m_\nu}{p'} \bigg)^N \nn \\
&= \sum_{k=0}^{p'-1} \sum_{\nu=1}^n S_{a\nu} S^*_{b\nu} 
\big(D_k(N)+D_{p'-k}(N)\big)\cos\left(\frac{2k\pi m_\nu}{p'}\right)
\nn \\
&=\frac{1}{2} \sum_{k=0}^{p'-1} \big((J_{2k+1})_{ab}-(J_{2k-1})_{ab}\big)(D_k(N)+D_{p'-k}(N))
\nn \\
& = \frac{1}{2} \sum_{k=0}^{p'-1} (J_{2k+1})_{ab} \, \big(D_k(N) + D_{p'-k}(N) - D_{k+1}(N) - D_{p'-k-1}(N)\big) 
\nn \\
&= \sum_{k=0}^{\lfloor(p'-2)/2\rfloor } (J_{2k+1})_{ab} \, \dim \repQ_k(N)
= \sum_{s=1,3,\dots}^{2\lfloor{(p'-1)/2}\rfloor+1} (J_s)_{ab} \, \dim \repQ_{(s-1)/2}(N)
\nn \\
&= \sum_{s=1}^{p'-1} (J_s)_{ab} \, \dim \repQ_{(s-1)/2}(N) \,.
\end{alignat}
Here, we applied the identities \eqref{eq:QTL.dim} and \eqref{eq:diff.J}, and at the last step, we used the fact that $(J_s)_{ab}$ with $s$ even is zero for heights $a$ and $b$ of the same color.
For $N$ odd, the same calculation yields
\be
\dim \repM_{\g,\mu,a,b}(N)= \sum_{k=\frac12, \frac 32, \dots}^{\lfloor(p'-2)/2\rfloor} (J_{2k+1})_{ab} \, \dim \, \repQ_k(N) = \sum_{s=1}^{p'-1} (J_s)_{ab} \, \dim \repQ_{(s-1)/2}(N) \,,
\ee 
ending the proof.
\eproof

\begin{Proposition} \label{prop:dim.insertion}
Let $k\in\{0,\tfrac{1}{2},\dots, \tfrac{p'}{2}-1\}$. The dimension of the insertion space with $2k$ defects in $\repM_{\g,\mu,a,b}(2k)$ is equal to $(J_{2k+1})_{ab}$.
\end{Proposition}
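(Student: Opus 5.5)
We will compute the insertion space directly in the height basis. An element $\xi\in\repM_{\g,\mu,a,b}(2k)$ is an insertion state with $2k$ defects precisely when $c_j\cdot\xi=0$ for $j=1,\dots,2k-1$. Since $e_j=c_j^\dag c_j$ and the sesquilinear form \eqref{eq:def.aver} is non-degenerate with $c_j$ and $c_j^\dag$ adjoint, we have $\aver{\xi,e_j\cdot\xi}=\aver{c_j\xi,c_j\xi}$ up to the diagonal weights. Thus the condition is equivalent to $e_j\cdot\xi=0$ for all $j$, provided we can show the induced form on $\repM(2k-2)$ is definite, which holds at least for $\mu=1$ in the symmetric gauge. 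For general $\mu$, we will instead work directly with $\ker c_j$, which is just a linear condition. Either way, the insertion space is $\bigcap_{j=1}^{2k-1}\ker c_j$, and its dimension is $(A^{2k})_{ab}$ minus the dimension of $\sum_j \mathrm{Im}\, c_j^\dag$, since the non-degenerate pairing identifies $\bigcap_j\ker c_j$ with the orthogonal complement of $\sum_j c_j^\dag\,\repM(2k-2)$.

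The key step is to identify this dimension with $(J_{2k+1})_{ab}$ using the representation theory of $\tl_{2k}(\beta)$. We will assume complete reducibility of $\repM_{\g,\mu,a,b}(N)$ into $\repQ_{k'}$'s, or at least use the dimension identity \eqref{eq:dim.Mab}. At $N=2k$ with $2k\le p'-2$, the only $\repQ_{k'}(2k)$ with $k'=k$ is the one-dimensional module on which all $e_j$ vanish, namely $\repQ_k(2k)=\repV_k(2k)$. For $k'<k$ with $2k'\le p'-2$, we have $r_{k'}=1$, and $\repQ_{k'}(2k)$ is the quotient of $\repV_{k'}(2k)$ by $\repR_{k'}(2k)\simeq\repQ_{p'-1-k'}(2k)$, which vanishes because $p'-1-k'>k$. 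Hence $\repQ_{k'}(2k)=\repV_{k'}(2k)$, and there every vector is reached from defect-reduced states, so no nonzero vector is killed by all $e_j$.

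More concretely, we prove by induction on $k$ that $\mathrm{span}\{c_j^\dag \repM(2k-2)\}$ has dimension $\sum_{k'<k}(J_{2k'+1})_{ab}\dim\repV_{k'}(2k)$, so that the complement has dimension $(J_{2k+1})_{ab}$ by \eqref{eq:dim.Mab}. The induction hypothesis supplies insertion states at each level $2k'<2k$. \cref{prop:seed.Vk} then yields family homomorphisms $\repV_{k'}\to\repM$ whose images at level $2k$ lie in $\sum_j\mathrm{Im}\,c_j^\dag$, because every diagram in $\cL_0(2k,2k')$ with $k'<k$ factors through some $c_j^\dag$. These images are injective, since $\repV_{k'}(2k)=\repQ_{k'}(2k)$ is irreducible and the maps are nonzero because they preserve the seed. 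We will argue that their sum over independent insertion states is direct, using the fact that distinct $k'$ give non-isomorphic irreducibles. Conversely, $\sum_j\mathrm{Im}\,c_j^\dag$ is generated by these images, because any state in $\repM(2k-2)$ decomposes, by induction, into images of $\repV_{k'}(2k-2)$ and insertion states at level $2k-2$. Counting dimensions then gives the claim.

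The main obstacle is this generation/directness step. It requires showing that $\repM(2k-2)$ is exhausted by $\cL_0$-images of insertion states of lower levels, which amounts to semisimplicity of $\repM$ in the range $2k\le p'-2$. To handle it, we will use that the Gram form of $\repV_{k'}(N)$ is nondegenerate for $N<p'-1$, so that $\tl_N(\beta)$ is semisimple there. Every $\tl_{2k}$-module is then a direct sum of $\repV_{k'}(2k)$'s, and the multiplicity of $\repV_k(2k)$ is exactly the dimension of the joint kernel of the $e_j$. That multiplicity is read off from \eqref{eq:dim.Mab} by comparing with the dimension formula \eqref{eq:QTL.dim}, running an induction over $N=2k$ that matches the multiplicities $(J_{2k'+1})_{ab}$.
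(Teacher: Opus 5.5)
Your argument is correct in outline, but it follows a different route from the paper. The paper's proof is purely local and uses neither \eqref{eq:dim.Mab} nor semisimplicity. Write $I_k(a,b)$ for the insertion space and $d_k(a,b)=\dim I_k(a,b)$. The paper sets $H_k(a,b)=\{u\in\repM_{\g,\mu,a,b}(2k)\,|\,c_1\cdot u=\dots=c_{2k-2}\cdot u=0\}$ and splits off the last height to get $\dim H_k(a,b)=\sum_{b'}d_{k-\frac12}(a,b')A_{b'b}$. It then shows that $c_{2k-1}:H_k(a,b)\to I_{k-1}(a,b)$ has kernel $I_k(a,b)$ and is surjective: by \eqref{eq:WJ.relations}, a preimage of $v$ is $-\frac{[2k-1]}{[2k]}P_{2k-1}c^\dag_{2k-1}\cdot v$. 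This gives $d_k(a,b)=\sum_c d_{k-\frac12}(a,c)A_{cb}-d_{k-1}(a,b)$, which is exactly the recursion \eqref{eq:rec.J}, using only $[n]\neq0$ for $n\le 2k$. The same recursion underlies the reduction/induction construction of \cref{sec:insertionTL}, and it keeps \cref{thm:M.decomp.TL} a genuine combination of two independent counts. You instead import semisimplicity of $\tl_N(\beta)$ for $N\le p'-2$ together with the global identity \eqref{eq:dim.Mab}, and in effect prove \cref{thm:M.decomp.TL} in the semisimple range along the way. Your route explains why multiplicities are stable between levels $2k'$ and $2k$, but it rests on heavier input.

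Three points in your sketch need tightening.
\begin{itemize}
\item[(i)] Identifying $\bigcap_j\ker e_j$ with $\bigcap_j\ker c_j$ for general $\mu$ needs no definiteness argument. You have $c_jc_j^\dag=\beta\,\id$ with $\beta=2\cos(\pi m_\mu/p')\neq0$, since $\gcd(m_\mu,p')=1$ and $p'\ge3$ whenever $k>0$. Hence $c_j^\dag$ is injective and the two kernels coincide.
\item[(ii)] Non-isomorphism of irreducibles only gives directness across distinct $k'$. For several independent insertion states at the same $k'$, use Schur's lemma. A relation among the images would produce a nonzero insertion state $\xi$ with $\phi_\xi=0$ at level $2k$. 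This is impossible because $\lambda^\dag\lambda=\beta^{k-k'}\id$ for the diagram $\lambda\in\cL_0(2k,2k')$ that appends $k-k'$ arcs. The same identity, not ``preservation of the seed'', is what shows $\phi_\xi\neq0$ at level $2k$.
\item[(iii)] Your generation step needs $\big(\sum_j\mathrm{Im}\,c_j^\dag\big)\cap I_{k-1}(a,b)=0$ at level $2k-2$. This holds because a nonzero vector killed by all $c_j$ spans a copy of the one-dimensional $\repV_{k-1}(2k-2)$. That module is not a composition factor of the irreducible images of $\repV_{k'}(2k-2)$ with $k'<k-1$.
\end{itemize}
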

\proof
The insertion space with $2k$ defects in $\repM_{\g,\mu,a,b}(2k)$ is defined as
\begin{equation}
I_k(a,b) = \big\{
u\in \repM_{\g,\mu,a,b}(2k) \, \big|\, c_1\cdot u=c_2\cdot u=\dots =c_{2k-1}\cdot u=0
\big\} \,.
\end{equation}
Let us denote its dimension by $d_k(a,b)$. For $k=0$, each basis state of $\repM_{\g,\mu,a,b}(0)$ is an insertion state with zero defects, so we have $d_0(a,b)=\delta_{ab}=(J_1)_{ab}$. Similarly, for $k=\frac12$, each basis state of $\repM_{\g,\mu,a,b}(1)$ is an insertion state with one defect, so $d_{1/2}(a,b)=A_{ab}=(J_2)_{ab}$. For the other cases with $1 \leq k \leq \frac{p'}2-1$, we introduce the vector space
\begin{equation}
H_k(a,b) = \big\{
u\in \repM_{\g,\mu,a,b}(2k) \, \big| \, c_1\cdot u=c_2\cdot u=\dots =c_{2k-2} \cdot u=0 
\big\} \,.
\end{equation}
Any basis state of $\repM_{\g,\mu,a,b}(2k)$ is of the form $\ket{a,a_1,\dots, a_{2k-2},b',b}$ for some height $b'$, where $\ket{a,a_1,\dots a_{2k-2},b'}$ is a basis state of $\repM_{\g,\mu,a,b'}(2k-1)$, and $b$ and $b'$ are adjacent in $\cal G$. This implies that $H_k(a,b)$ has dimension
\begin{equation}
\dim H_k(a,b) = \sum_{b'=1}^n d_{k-\frac12}(a,b') A_{b'b} \,.
\end{equation}
We now consider the linear map $f:H_k(a,b) \to I_{k-1}(a,b)$ defined as $f(u)=c_{2k-1}\cdot u$ for all $u\in H_k(a,b)$. By construction, the kernel of $f$ is $I_{k}(a,b)$. For each $v\in I_{k-1}(a,b)$, we define the state
\begin{equation}
u = -\frac{[2k-1]
}{[2k]}\, P_{2k-1} \, c^\dag_{2k-1}\cdot v \,.
\end{equation}
From \eqref{eq:WJ.relations}, we find that $f(u)=v$ as well as $c_1\cdot u=c_2\cdot u=\dots =c_{2k-2}\cdot u=0$, and thus $u\in H_k(a,b)$. Hence, $f$ is surjective, and therefore yields a bijection from $H_k(a,b)\big/ I_k(a,b)$ to $I_{k-1}(a,b)$.
As a consequence, the dimensions are related by
\begin{equation}
\dim I_k(a,b) = \dim H_k(a,b) - I_{k-1}(a,b) \,,
\end{equation}
which yields the recursion relation
\begin{equation}
d_k(a,b) = \bigg(\sum_{c=1}^n d_{k-\frac12}(a,c) A_{cb}\bigg) - d_{k-1}(a,b) \,,
\qquad 1\leq k\leq \tfrac{p'}2-1 \,.
\end{equation}
This recursion relation and the initial values $d_0(a,b),d_{1/2}(a,b)$ coincide with \eqref{eq:rec.J} for $s=2k+1$. We thus conclude that $d_k(a,b)=(J_{2k+1})_{ab}$ for each $k\in\{0,\tfrac{1}{2},\dots, \tfrac{p'}{2}-1\}$.
\eproof

We give the specific construction of the insertion states for each family of modules $\repM_{\g,\mu,a,b}$ in \cref{sec:insertionTL}. We now state the main result of this section.

\begin{Theorem} \label{thm:M.decomp.TL}
The family of modules $\repM_{\g,\mu,a,b}$ decomposes as
\begin{equation}
\label{eq:Mab.decomp}
\repM_{\g,\mu,a,b} \simeq \bigoplus_{k=0,\tfrac{1}{2},\dots,\tfrac{p'}{2}-1} (J_{2k+1})_{ab} \, \repQ_k \,.
\end{equation}
\end{Theorem}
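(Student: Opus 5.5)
The plan is to build, for each $k\in\{0,\tfrac12,\dots,\tfrac{p'}2-1\}$, a family of homomorphisms into $\repM_{\g,\mu,a,b}$ from the insertion states. Then we show these maps descend to the irreducible quotients $\repQ_k$ and that their images form a direct sum. A dimension count using \eqref{eq:dim.Mab} will finish the argument.

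\textbf{Step 1: homomorphisms from $\repV_k$.} By \cref{prop:dim.insertion}, the insertion space $I_k(a,b)\subset\repM_{\g,\mu,a,b}(2k)$ has dimension $(J_{2k+1})_{ab}$. Choose a basis $\xi_1,\dots,\xi_{(J_{2k+1})_{ab}}$ of it. By \cref{prop:seed.Vk}, each $\xi_i$ gives a family of homomorphisms $\phi_{\xi_i}:\repV_k\to\repM_{\g,\mu,a,b}$.

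\textbf{Step 2: factoring through $\repQ_k$.} By \cref{prop:Qk.Rk}(ii), $\phi_{\xi_i}$ factors through $\repQ_k$ if and only if $\phi_{\xi_i}(v_k)=0$. Set $k'=p'-1-k$, so that $r_k=1$; then $v_k$ is built from $P_{p'-1}$ and $2k'$ nodes. The image $w=\phi_{\xi_i}(v_k)$ is an insertion state with $2k'$ defects in $\repM_{\g,\mu,a,b}(2k')$.

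To show $w=0$, I would prove that insertion spaces vanish for $2k'\ge p'-1$. The recursion in the proof of \cref{prop:dim.insertion} extends to $k=\tfrac{p'-1}{2}$, since $P_{2k-1}$ is still regular there. It gives $d_{(p'-1)/2}=(J_{p'})_{ab}$, which is $0$ by \eqref{eq:symm.J}. The difficulty is that the surjectivity argument breaks down beyond this point because $[p']=0$.

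A cleaner route, which I would try first, uses the positive-definite form. For $\mu=1$, the form $\aaver{\,,\,}$ is positive definite and $c_j^\dag$ is adjoint to $c_j$. Then $P_{p'-1}$ is a self-adjoint idempotent, and
\[
\aaver{w,w}=\aaver{\phi(u),\,\phi(v_k^\dagger v_k)}\,.
\]
Here $v_k^\dagger v_k$ reduces, via \eqref{eq:WJ.relations}, to a multiple of $\id$ involving a factor $[p']=0$. Hence $\aaver{w,w}=0$ and $w=0$. For general $\mu$, the same computation can be done with the bilinear form \eqref{eq:def.aver}, using non-degeneracy in place of positivity. Alternatively, one can argue by Galois conjugation, since all entries are algebraic in $q$. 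I expect this step to be the main obstacle.

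\textbf{Step 3: injectivity and directness.} Each nonzero induced map $\repQ_k\to\repM_{\g,\mu,a,b}$ is injective, because $\repQ_k(N)$ is irreducible. Directness will come from two facts:
\begin{itemize}
\item For different $k$, the modules $\repQ_k(N)$ are pairwise non-isomorphic irreducibles, separated for instance by the eigenvalue of the central double-row braid element.
\item For fixed $k$, the images of independent $\xi_i$ are independent. A relation $\sum_i \alpha_i\phi_{\xi_i}(u)=0$ holding on the whole family, evaluated at $N=2k$ with $u=u_k$, forces $\sum_i\alpha_i\xi_i=0$.
\end{itemize}
More precisely, the sum $\sum_k\sum_i\mathrm{im}\,\phi_{\xi_i}$ is a quotient of $\bigoplus_k(J_{2k+1})_{ab}\repQ_k$. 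Its kernel is a direct sum of isotypic pieces. Any nonzero piece would survive at $N=2k$, where $\repQ_k(2k)=\Cbb u_k$, and would contradict the linear independence of the $\xi_i$.

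\textbf{Step 4: equality.} The resulting injection $\bigoplus_k (J_{2k+1})_{ab}\repQ_k(N)\hookrightarrow\repM_{\g,\mu,a,b}(N)$ is an isomorphism because both sides have the same dimension by \eqref{eq:dim.Mab}. Compatibility with the action of $\cL_0(N,N')$ is automatic from \cref{prop:seed.Vk}, so this gives \eqref{eq:Mab.decomp}.
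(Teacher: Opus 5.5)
Your Step 2 is not established for $\mu\neq 1$. When some $S_{a\mu}<0$, the form \eqref{eq:def.aver} is indefinite. Non-degeneracy only says that no nonzero vector is orthogonal to \emph{everything}; it does not exclude nonzero isotropic vectors, so $\aver{w,w}=0$ gives no information. The Galois remark is not an argument as it stands. Only $\mu=1$ is actually covered, via positivity of $\aaver{\,,\,}$ plus the factor $[p']$ in $v_k^\dagger v_k$.

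The route you abandoned repairs this for every $\mu$, and the obstruction you saw is not there. You correctly get $\dim I_{(p'-1)/2}(a,c)=(J_{p'})_{ac}=0$ for all $a,c$, since the lift $-\frac{[p'-2]}{[p'-1]}P_{p'-2}\,c^\dag_{p'-2}\cdot v$ is still regular. Beyond that no surjectivity is needed. Trivially $I_k(a,b)\subseteq H_k(a,b)$, and $H_k(a,b)\cong\bigoplus_{b'\sim b}I_{k-\frac12}(a,b')$ holds for every $k$; this is what underlies the formula for $\dim H_k(a,b)$ in the proof of \cref{prop:dim.insertion}. Induction then gives $I_k(a,b)=0$ for all $2k\geq p'-1$. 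Since $\phi_{\xi_i}(v_k)$ is either zero or an insertion state with $2k'=2p'-2-2k\geq p'$ defects, it vanishes. Your Steps 3 and 4 then go through as written.

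So repaired, your proof is correct but ordered differently from the paper's. The paper never checks that the insertion maps kill $v_k$. It takes insertion states orthonormal for \eqref{eq:def.aver}, and argues that the submodules $\cL_0(N,2k)\cdot w_{k,i}$ share no composition factors for distinct $k$ and are mutually orthogonal for equal $k$. It then lets $\dim \cL_0(N,2k)\cdot w_{k,i}\geq\dim\repQ_k(N)$, together with \eqref{eq:dim.Mab}, squeeze each summand down to $\repQ_k(N)$.

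You kill the singular vector first. After that, directness is semisimple bookkeeping (isotypic components, then evaluation at $N=2k$, where $\repQ_k(2k)=\Cbb u_k$), and no form is needed at all. This buys uniformity in $\mu$. On $\cL_0(N,2k)\cdot w_{k,i}$ the form equals $\aver{w_{k,i},w_{k,i}}$ times the push-forward of the Gram form of $\repV_k(N)$, so its radical there is the image of $\repR_k(N)$. The paper's inference \emph{orthogonal, hence in direct sum} therefore tacitly needs exactly the vanishing that your repaired Step 2 supplies. That vanishing is automatic by positivity only when $\mu=1$.
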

\proof Let $k\in\{0,\tfrac{1}{2},\dots, \tfrac{p'}{2}-1\}$. Using \cref{prop:dim.insertion}, we know that there exist $(J_{2k+1})_{ab}$ linearly independent insertion states $w_{k,1},w_{k,2},\dots, w_{k,(J_{2k+1})_{ab}}$ with $2k$ defects in $\repM_{\g,\mu,a,b}(2k)$. The sesquilinear form defined in \eqref{eq:def.aver} is non-degenerate, so we can choose these states to be orthonormal, namely $\smallaver{w_{k,i},w_{k,j}}=\delta_{ij}$ for $i,j\in\{1,2,\dots,(J_{2k+1})_{ab}\}$. For each $i\in\{1,2,\dots,(J_{2k+1})_{ab}\}$ and for all admissible~$N$, we define the nonzero submodule of $\repM_{\g,\mu,a,b}(N)$
\begin{equation}
\repM_{k,i}(N) = \cL_0(N,2k) \cdot w_{k,i}
= \big\{ \lambda\cdot w_{k,i}\,|\, \lambda\in \cL_0(N,2k) \big\} \,,
\end{equation}
which is isomorphic to a quotient of $\repV_k(N)$. From the structure of $\repV_k$ described in \cref{sec:Vk}, we know that the modules $\repV_k(N)$ with distinct values of $k\in\{0,\tfrac{1}{2},\dots, \tfrac{p'}{2}-1\}$ have no common non-trivial submodules, and therefore two submodules $\repM_{k,i}(N)$ and $\repM_{k',j}(N)$ of $\repM_{\g,\mu,a,b}$ with $k\neq k'$ are necessarily direct summands in the decomposition of $\repM_{\g,\mu,a,b}(N)$.
\medskip

We now show that this is also true in the case $k=k'$. Let $k\in\{0,\tfrac{1}{2},\dots, \tfrac{p'}{2}-1\}$, and $i,j\in\{1,2,\dots,(J_{2k+1})_{ab}\}$ with $i\neq j$. For all diagrams $\lambda_1,\lambda_2\in \cL_0(N,2k)$, we have
\begin{equation}
\aver{\lambda_1\,w_{k,i}, \lambda_2\,w_{k,j}} = \aver{w_{k,i}, \lambda\,w_{k,j}}
= \aver{\lambda^\dag\,w_{k,i}, w_{k,j}}\,,
\end{equation}
where $\lambda=\lambda_1^\dag\lambda_2\in\tl_{2k}(\beta)$. If $\lambda$ is not proportional to $\id_{2k}$, then there exists $m\in\{1,2,\dots,2k-1\}$ and $\lambda'\in\cL_0(2k,2k-2)$ 
such that $\lambda=\lambda' c_m$, and thus $\lambda\,w_{k,j}=0$. If $\lambda$ is proportional to $\id_{2k}$, then $\smallaver{\lambda_1\,w_{k,i}, \lambda_2\,w_{k,j}} =0$ because $ \smallaver{w_{k,i}, w_{k,j}} = 0$ due to the orthogonality of the insertion states.
Hence, we have shown that $\smallaver{u,v}=0$ for all $u\in\repM_{k,i}(N)$ and $v\in\repM_{k,j}(N)$. As a consequence, two submodules $\repM_{k,i}(N)$ and $\repM_{k,j}(N)$ of $\repM_{\g,\mu,a,b}(N)$ with $i \neq j$ are direct summands in the decomposition of the module.\medskip

We therefore have constructed a large submodule of $\repM_{\g,\mu,a,b}(N)$:
\begin{equation} \label{eq:sum.Mik}
\bigoplus_{k=0,\tfrac{1}{2},\dots,\tfrac{p'}{2}-1} \bigoplus_{i=1}^{(J_{2k+1})_{ab}} \, \repM_{k,i}(N) \subseteq \repM_{\g,\mu,a,b}(N)\,.
\end{equation}
Each factor $\repM_{k,i}(N)$ is nonzero and isomorphic to quotient of $\repV_k(N)$. This implies that $\dim\repM_{k,i}(N)\geq \dim \repQ_k(N)$. With $\repM_{k,i}(N) \simeq \repQ_k(N)$, the large submodule in the left-hand side of \eqref{eq:sum.Mik} is the smallest possible, and it already exhausts the dimension \eqref{eq:dim.Mab}, which proves \eqref{eq:Mab.decomp}.
\eproof

This result is remarkably simple, and mimics the analogous decomposition found in \cite{BPZ98} in the scaling limit. Moreover, since the matrices $J_s$ are symmetric, we conclude that 
\be
\repM_{\g,\mu,a,b} \simeq \repM_{\g,\mu,b,a}\,, 
\ee
a fact which we have not been able to prove using simple symmetry arguments.
\medskip

For the $A_n$ and $D_n$ models, these module decompositions read
\begingroup
\allowdisplaybreaks
\begin{subequations}
\begin{alignat}{2}
\repM_{A_n,\mu,a,b} & \simeq \bigoplus_{k=|(a-b)/2|}^{\min[(a+b)/2-1,n-(a+b)/2]} \repQ_k \,,
\\[0.4cm]
\repM_{D_n,\mu,a,b} & \simeq 
\left\{\begin{array}{cl}
\displaystyle\bigoplus_{k=|(a-b)/2|}^{(a+b)/2-1} (\repQ_k\oplus \repQ_{n-k-2}) & a,b \le n-2\,,
\\[0.7cm] 
\displaystyle\bigoplus_{k=(n-a-1)/2}^{(n+a-3)/2} \repQ_k & a \le n-2, b \in\{n-1,n\},
\\[0.7cm] 
\displaystyle\bigoplus_{k=0}^{\lfloor(n-1)/2\rfloor} \repQ_{2k} & (a,b) = (n-1,n-1),(n,n),
\\[0.7cm] 
\displaystyle\bigoplus_{k=0}^{\lfloor(n-3)/2\rfloor} \repQ_{2k+1} & (a,b) = (n-1,n).
\end{array}\right.
\end{alignat}
\end{subequations}
\endgroup
The corresponding results for $E_6$, $E_7$ and $E_8$ give rise to lengthy expressions, which we collect in \cref{eq:MTL.E678}.

\subsection{Cylinder partition functions}

Let $\g$ be a Lie algebra, $\mu$ be an exponent satisfying $\textrm{gcd}(m_\mu,p')=1$, $a$ and $b$ be heights in $\mathcal G$, and $K$ be an automorphism of $\mathcal G$. We define the lattice partition function on the cylinder
\be
Z_{\g,\mu,a,b,K}(M,N) = \tr_{\repM_{\g,\mu,a,b}(N)}\big(K_N D^{M}\big)\,,
\ee
where $M,N \in \mathbb Z_{\ge 0}$ and $K_N$ is defined in \eqref{eq:KN.def}. We note that $Z_{\g,\mu,a,b,K}(M,N)$ is nonzero only if $a = K(a)$ and $b = K(b)$.
\medskip

For $K=\id$, from \eqref{eq:Mab.decomp}, we readily obtain
\begin{equation}
Z_{\g,\mu,a,b,\id}(M,N) =
\sum_{k=0,\tfrac{1}{2},\dots,\tfrac{p'}{2}-1} (J_{2k+1})_{ab} \, \chit_{\repQ_k}(M,N) \,.
\end{equation}
For a non-trivial automorphism $K$ of $\mathcal G$, we instead have 
\begin{equation}
Z_{\g,\mu,a,b,K}(M,N) = \sum_{k=0,\tfrac{1}{2},\dots,\tfrac{p'}{2}-1} n_k(a,b) \, \chit_{\repQ_k}(M,N) \,,
\end{equation}
where $n_k(a,b)$ is the trace of $K_{2k}$ on the insertion space with $2k$ defects in $\repM_{\g,\mu,a,b}$.
An argument similar to the proof of \cref{prop:dim.insertion} yields
\begin{equation}
n_k(a,b) = \kappa_\mu^{(2k+1)/2} \, (\wt{J}_{2k+1})_{ab},
\end{equation}
where $\wt{J}_{2k+1} = U_{2k}(\wt{A})$, and $\kappa_\mu$ is the eigenvalue of $K$ and associated to $S_{a\mu}$. By taking the scaling limit as in \cref{sec:scaling.TL}, and using \eqref{eq:charQk}, we obtain the following result.
\begin{Theorem} The conformal partition functions for ADE lattice models on the cylinder with $(a,b)$ boundary conditions are given by
\begin{equation} \label{eq:ZTL.untwisted}
Z_{\g,\mu,a,b,\id}(\qq) = \sum_{k=0,\tfrac{1}{2},\dots,\tfrac{p'}{2}-1} (J_{2k+1})_{ab} \, \chit_{1,2k+1}(\qq) \,,
\end{equation}
for $K = \id$, and by
\begin{equation}
Z_{\g,\mu,a,b,K}(\qq) = \sum_{k=0,\tfrac{1}{2},\dots,\tfrac{p'}{2}-1} n_k(a,b) \, \chit_{1,2k+1}(\qq) 
\end{equation}
for the non-trivial automorphisms.
\end{Theorem}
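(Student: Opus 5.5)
The plan has two parts: first establish the lattice identity for each $K$, then take the scaling limit term by term. For $K=\id$, \cref{thm:M.decomp.TL} gives the direct sum decomposition \eqref{eq:Mab.decomp}. Since $\Db$ is an element of $\tl_N(\beta)$ and $D_N$ is its representative on $\repM_{\g,\mu,a,b}(N)$, the trace of $D_N^{M}$ splits over the summands. This gives
\begin{equation*}
Z_{\g,\mu,a,b,\id}(M,N)=\sum_k (J_{2k+1})_{ab}\,\chit_{\repQ_k}(M,N).
\end{equation*}
There are finitely many $k$, and the multiplicities do not depend on $N$. I would also note that $f_{\textrm{bulk}}$ and $f_{\textrm{bdy}}$ are common to all $\repQ_k$, since they come from the leading eigenvalue per site and do not depend on the sector. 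We can then multiply by the normalising exponential in \eqref{eq:chi(q)}, take $N\to\infty$ with $M=\lfloor N\delta\rfloor$ termwise, and apply \eqref{eq:charQk}. This yields \eqref{eq:ZTL.untwisted}.

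For a nontrivial $K$, the first step is to refine the decomposition so that it is compatible with $K_N$. The operator $K_N$ commutes with every $c_j$ and $c_j^\dag$, so $K_{2k}$ preserves the insertion space $I_k(a,b)$. The map $I_k(a,b)\otimes \repQ_k(N)\to\repM_{\g,\mu,a,b}(N)$, $w\otimes(\lambda\cdot u_k)\mapsto\lambda\cdot w$, is the isomorphism built in the proof of \cref{thm:M.decomp.TL}, and it intertwines $K_{2k}\otimes\id$ with $K_N$. Hence
\begin{equation*}
\tr(K_N D_N^{M})=\sum_k \tr_{I_k}(K_{2k})\,\chit_{\repQ_k}(M,N),
\end{equation*}
so $n_k(a,b)=\tr_{I_k(a,b)}K_{2k}$. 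The remaining task is the formula $n_k=\kappa_\mu^{(2k+1)/2}(\wt J_{2k+1})_{ab}$. I would prove it by a $K$-equivariant version of the induction in \cref{prop:dim.insertion}. The exact sequence $0\to I_k\to H_k\xrightarrow{c_{2k-1}} I_{k-1}\to 0$ commutes with $K$ up to the normalisation factor $\kappa_\mu^{1/2}$ per pair of sites. The splitting $v\mapsto P_{2k-1}c^\dag_{2k-1}v$ is also $K$-equivariant, because $K$ commutes with the Jones--Wenzl projector. Traces are therefore additive, which gives the recursion
\begin{equation*}
\tr_{I_k}K=\tr_{H_k}K-\tr_{I_{k-1}}K.
\end{equation*}
In $\tr_{H_k}K$, only basis states fixed by $K$ contribute. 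Writing $H_k(a,b)\simeq\bigoplus_{b'}I_{k-1/2}(a,b')\otimes(\text{edge }b'b)$, the terms with $K(b')\ne b'$ are permuted by $K$ and contribute no trace. Only the fixed $b'$ remain, which turns $A$ into $\wt A$ and gives the Chebyshev recursion \eqref{eq:rec.J} for $\wt J$. The initial values $\tr_{I_0}=\delta_{ab}$ and $\tr_{I_{1/2}}=\wt A_{ab}$, times the appropriate power of $\kappa_\mu^{1/2}$, complete the induction. The scaling limit then follows exactly as in the untwisted case.

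The main obstacle is the bookkeeping of the phases $\kappa_\mu^{(N+1)/2}$ and the precise identification of $H_k$ as a direct sum compatible with $K$. $H_k$ is only isomorphic to $\bigoplus_{b'}I_{k-1/2}(a,b')A_{b'b}$ through the last two sites, and I must check that $K$ acts block-diagonally on the fixed-$b'$ blocks and that the gauge factors $S_{a\mu}$ do not spoil equivariance. Here I would use $S_{K(a)\mu}=\kappa_\mu S_{a\mu}$. For $\kappa_\mu=-1$, these factors are exactly what the $\ir$ convention \eqref{eq:sqrt.kappa} compensates.
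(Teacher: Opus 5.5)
Your proposal is correct and is essentially the paper's argument. The twisted trace splits along the decomposition of \cref{thm:M.decomp.TL}, with $K_N$ acting on each multiplicity space through $K_{2k}$ on the insertion space, so that $n_k(a,b)=\tr_{I_k(a,b)}K_{2k}$; the scaling limit is then taken termwise using \eqref{eq:charQk}, and the explicit value of $n_k$ comes, as the paper indicates, from a $K$-equivariant version of the induction in \cref{prop:dim.insertion}.

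The $\kappa_\mu$ bookkeeping you single out as the main obstacle is moot. $Z_{\g,\mu,a,b,K}$ is nonzero only if $K(a)=a$, and then $S_{K(a)\mu}=\kappa_\mu S_{a\mu}$ with $S_{a\mu}\neq 0$ forces $\kappa_\mu=1$. This is fortunate, because your claim that the $\ir$ convention compensates the phases when $\kappa_\mu=-1$ is not right. In that case the recursion you obtain, $n_k(a,b)=\kappa_\mu^{1/2}\sum_c n_{k-1/2}(a,c)\,\wt A_{cb}-n_{k-1}(a,b)$, does not reduce to the Chebyshev recursion \eqref{eq:rec.J}.
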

For $K=\id$, this coincides with the cylinder partition functions of the Virasoro ADE minimal models, found in \cite{BPZ98}. The explicit expressions for the $A_n$ and $D_n$ models with $K=\id$ read
\begin{subequations}
\begin{alignat}{2}
Z_{A_n,\mu,a,b,\id}(\qq) &= \sum_{k=|(a-b)/2|}^{\min[(a+b)/2-1,n-(a+b)/2]}\chit_{1,2k+1}(\qq)\,,
\\[0.1cm]
Z_{D_n,\mu,a,b,\id}(\qq) &= \left\{\begin{array}{cl}
\displaystyle\sum_{k=|(a-b)/2|}^{(a+b)/2-1}\big(\chit_{1,2k+1}(\qq) + \chit_{1,2n-2k-3}(\qq)\big) & a,b \le n-2 \,,
\\[0.6cm]
\displaystyle\sum_{k=(n-a-1)/2}^{(n+a-3)/2}\chit_{1,2k+1}(\qq) & a \le n-2 ,b \in \{n-1,n\},
\\[0.6cm]
\displaystyle\sum_{k=0}^{\lfloor(n-1)/2\rfloor}\chit_{1,4k+1}(\qq) & (a,b) = (n-1,n),
\\[0.6cm]
\displaystyle\sum_{k=0}^{\lfloor(n-1)/2\rfloor}\chit_{1,4k+3}(\qq) & (a,b) = (n-1,n-1),(n,n).
\end{array}\right.
\end{alignat}
\end{subequations}
The similar expressions for $E_6$, $E_7$ and $E_8$ are lengthy, and collected in \cref{eq:ZTL.E678}. Similarly, the twisted cylinder partition functions in the scaling limit read
\begin{subequations}
\begin{alignat}{2}
Z_{A_n,\mu,\frac{n+1}2,\frac{n+1}2,R}(\qq) &= \sum_{k=0}^{(n+1)/2}(-1)^k \,\chit_{1,2k+1}(\qq) \qquad n \textrm{ odd,}
\\[0.1cm]
Z_{D_n,\mu,a,b,P_{(n-1,n)}}(\qq) &= \sum_{k=|(a-b)/2|}^{(a+b)/2-1}\big(\chit_{1,2k+1}(\qq) - \chit_{1,2n-2k-3}(\qq) \big) \qquad a,b \le n-2\,,
\\[0.1cm]
Z_{D_4,\mu,2,2,P_{(134)}}(\qq) &= \chit_{1,1}(\qq) - \chit_{1,3}(\qq) + \chit_{1,5}(\qq)\,,
\\[0.1cm]
Z_{E_6,\mu,a,b,P_{(15,24)}}(\qq) & =
\left\{\begin{array}{cl}
\chit_{1,1}(\qq) - \chit_{1,5}(\qq) + \chit_{1,7}(\qq) - \chit_{1,11}(\qq)
& (a,b)=(3,3), (6,6),
\\[0.15cm] 
\chit_{1,2}(\qq) - \chit_{1,4}(\qq) + \chit_{1,8}(\qq) - \chit_{1,10}(\qq)
& (a,b)=(3,6).
\end{array}\right.
\end{alignat}
\end{subequations}

\subsection{Insertion states}
\label{sec:insertionTL}

In this section, we construct the insertion states $w_k$ for the families $\repM_{\g,\mu,a,b}$ using the ideas of {\it reduction} and {\it induction}. We note that this construction is not needed to prove \cref{thm:M.decomp.TL}. It is however used in \cref{sec:insertion.k>0} to construct the insertion states $w_{k,x}$ in the periodic case, and thus allows us to prove \cref{thm:M.decomp}.\medskip

Let us first suppose that we succeeded in constructing an insertion state $w_k(a,b)$ with $2k$ defects in $\repM_{\g,\mu,a,b}$ for some value $k$. We write it in component form as
\be
\label{eq:uNab}
w_k(a,b) = \sum_{\boldsymbol{a}} w_{a_1,a_2,\dots,a_{2k-1}}\ket{a,a_1,a_2, \dots, a_{2k-1},b}
\ee
where the sum over $\boldsymbol{a}$ runs over all the basis states of $\repM_{\g,\mu,a,b}(2k)$.
Reduction allows us to construct shorter insertion states from $w_k(a,b)$. For each neighbor $b'$ of $b$ in $\mathcal G$, we define the state
\be
w_{k-\frac12}(a,b') = \sum_{\boldsymbol{a'}} w_{a_1,a_2,\dots,a_{2k-2},b'}\ket{a,a_1,a_2, \dots, a_{2k-2},b'}\,,
\ee
where the sum over $\boldsymbol{a'}$ runs over all states in $\repM_{\g,\mu,a,b'}(2k-1)$, and the components are the same as in \eqref{eq:uNab}. Since $w_k(a,b)$ is an insertion state by assumption, we have $c_j \cdot w_{k-1/2}(a,b') = 0$ for $j=1,2, \dots, 2k-2$. It may happen that $w_{k-1/2}(a,b') = 0$ for some values of $b'$, but not for all neighbors of $b$, as otherwise $w_k(a,b)$ would also be zero. If $w_{k-1/2}(a,b') \neq 0$, then it is an insertion state with $(2k-1)$ defects for the family $\repM_{\g,\mu,a,b'}$. Reduction thus gives us a way to construct insertion states recursively on decreasing values of $k$.
\smallskip

Conversely, we use induction to construct insertion states on increasing values of $k$. Let us suppose that we managed to construct all the insertion states $w_{k-1/2}(a,b')$ with $(2k-1)$ defects in $\repM_{\g,\mu,a,b'}$ for some $k$, for all neighbors $b'$ of $b$. Then we can construct the states
\be
w_k(a,b) = \sum_{b'\sim b} y_{b'}\ket{w_{k-1/2}(a,b'),b}
\ee 
for some constants $y_{b'}$, where $\ket{w_{k-1/2}(a,b'),b}$ is obtained by extending each state in the linear combination $w_{k-1/2}(a,b')$ by $b$. One then searches for values of the constants $y_{b'}$ such that $c_{2k-1} \cdot u = 0$. A nonzero solution produces an insertion state with $2k$ defects in $\repM_{\g,\mu,a,b}$. Any insertion state can be constructed using this induction algorithm. Induction thus gives us a way to construct insertion states recursively on increasing values of $k$.
\medskip

Below, we apply these ideas and construct all the insertion states for the $A_n$ and $D_n$ models. For the $E_6$, $E_7$ and $E_8$ models, there are finitely many insertion states to construct. We checked using a computer software that the induction algorithm described above allows us to construct the corresponding insertion states, even in the cases where some factors $\repQ_k$ have non-trivial multiplicities.

\paragraph{$\boldsymbol{A_n}$ models.}

For the $A_n$ models, we construct an insertion state $w_k(a,b)$ for each $k$ satisfying 
\be
|a-b|\le 2k \le \min(a+b-2,2n-a-b)\,. 
\ee
Outside of this range, we set $w_k(a,b) = 0$. For $2k=|a-b|$, the module $\repM_{A_n,\mu,a,b}(2k)$ is one-dimensional, and we readily observe that its single state
\be
w_k(a,b) = \left\{\begin{array}{ll}
\ket{a,a+1,\dots,b} & 1\le a \le b \le n\,, \\[0.1cm]
\ket{a,a-1,\dots,b} & 1\le b \le a \le n\,,
\end{array}\right.
\qquad 2k = |a-b| \,,
\ee
is an insertion state. For larger values of $k$, the insertion state is defined from the recursive relation
\be
w_k(a,b) = \ket{w_{k-\frac12}(a,b-1),b}+(-1)^{(2k+b-a)/2}\ket{w_{k-\frac12}(a,b+1),b}\,.
\ee
From this definition, it readily follows that $c_j \cdot w_k(a,b) = 0$ for $j=1,2,\dots, 2k-2$. Moreover, we have
\begin{alignat}{2}
c_{2k-1} \cdot w_k(a,b) 
&= c_{2k-1} \cdot \ket{w_{k-1}(a,b-2),b-1,b}
+ c_{2k-1} \cdot \ket{w_{k-1}(a,b+2),b+1,b}
\nn\\
&+(-1)^{(2k+b-a)/2}c_{2k-1} \cdot \Big(\ket{w_{k-1}(a,b),b+1,b}
-\ket{w_{k-1}(a,b),b-1,b} \Big) = 0\,,
\label{eq:simple.proof}
\end{alignat}
thus proving that $w_k(a,b)$ is an insertion state. We also note that these insertion states have simple expressions for the components, namely
\begin{equation} \label{eq:u.An}
w_{a_1,a_2,\dots, a_{2k-1}} \propto \ir^{a_1+a_2+\dots+a_{2k-1}}\,,
\end{equation}
where $\propto$ denotes an equality up to a proportionality factor independent of $a_1,a_2, \dots, a_{2k-1}$. 

\paragraph{$\boldsymbol {D_n}$ models.} 
We construct insertion states for the $D_n$ models using similar ideas. We first note that the module $\repM_{D_n, \mu,a,b}(N)$ may have zero, one or two insertion states, depending on $N$. Moreover, for $a,b \le n-2$, the action of $K_N$ with $K = P_{(n-1,n)}$ commutes with the action of $\eptl_N(\beta)$. In these cases, we choose a basis of insertion states that are eigenvectors of $K_N$, with eigenvalues $+1$ and $-1$, and denote them by $w^+_k(a,b)$ and $w^-_k(a,b)$, respectively. In the cases where $a$ and/or $b$ are equal to $n-1$ or $n$, the boundary condition is not invariant under the action of $K_N$, and there is at most one insertion state, denoted by $w_k(a,b)$. We now construct the insertion states
\begin{alignat}{2}
w^+_k(a,b) \quad & \textrm{for } \ \ 1\le a,b \le n-2\,, \quad |a-b| \le 2k \le (a+b-2)\,,\nn\\[0.1cm]
w^-_k(a,b) \quad & \textrm{for } \ \ 1\le a,b \le n-2\,, \quad 2n-a-b-2 \le 2k \le 2n-|a-b|-4\,,\\[0.1cm]\nn
w_k(a,b) \quad & \textrm{for }
\left\{\begin{array}{ll}
1\le a \le n-2\,,\, b \in \{n-1,n\}\,, &n-a-1 \le 2k \le n+a-3\,,\\[0.1cm]
a \in \{n-1,n\}\,,\, 1\le b \le n-2\,, &n-b-1 \le 2k \le n+b-3\,,\\[0.1cm]
(a,b) \in\{(n-1,n-1),(n,n)\}\,, & 0 \le 2k \le 2n-4\,,\, 2k \in 4 \mathbb Z\,,\\[0.1cm]
(a,b) \in\{(n-1,n),(n,n-1)\}\,, &1 \le 2k \le 2n-5\,,\, 2k \in 4 \mathbb Z+2\,.
\end{array}\right.
\end{alignat}
Outside of these range, we set the corresponding states to zero. The initial condition for the inductive construction is
\be
\begin{array}{lll}
w^+_k(a,b) = \ket{a,a+1,\dots,b} & 1\le a \le b \le n-2\,, &2k=b-a\,,\\[0.15cm]
w^+_k(a,b) = \ket{a,a-1,\dots,b} & 1\le b \le a \le n-2\,, &2k=a-b\,,\\[0.15cm]
w_k(a,n-1) = \ket{a,a+1,\dots, n-2,n-1} &1\le a \le n-2\,, &2k=n-a-1\,,\\[0.15cm]
w_k(a,n) = \ket{a,a+1,\dots, n-2,n} &1\le a \le n-2\,, &2k=n-a-1\,,\\[0.15cm]
w_k(n-1,b) = \ket{n-1,n-2, \dots, b+1,b} &1\le b \le n-2\,, &2k=n-b-1\,,\\[0.15cm]
w_k(n,b) = \ket{n,n-2, \dots, b+1,b} &1\le b \le n-2\,, &2k=n-b-1\,,\\[0.15cm]
w_0(n-1,n-1) = \ket{n-1}\,,\\[0.15cm]
w_0(n,n) = \ket{n}\\[0.15cm]
w_2(n-1,n) = \ket{n-1,n-2,n}\,,\\[0.15cm]
w_2(n,n-1) = \ket{n,n-2,n-1}\,.\\[0.15cm]
\end{array}
\ee
In all these cases, $\repM_{D_n,\mu,a,b}(N)$ is one-dimensional and we easily check that these are indeed insertion states. The other insertion states are constructed recursively as
\begingroup
\allowdisplaybreaks
\begin{subequations}
\begin{alignat}{2}
&\begin{array}{ll}
w^+_k(a,b) &\hspace{-0.2cm}= \ket{w^+_{k-\frac12}(a,b-1),b} 
\\[0.2cm]&\hspace{-0.2cm}
+ (-1)^{(2k+b-a)/2} \ket{w^+_{k-\frac12}(a,b+1),b} 
\end{array}
&&\hspace{-1.7cm}\left\{\begin{array}{l}
1\le a \le n-2\,,\, 1\le b \le n-3\,, \\[0.15cm]
|a-b|<2k\le a+b-2\,,
\end{array}\right.
\\[0.25cm]
&\begin{array}{ll}
w^-_k(a,b) &\hspace{-0.2cm}= \ket{w^-_{k-\frac12}(a,b-1),b} 
\\[0.2cm]&\hspace{-0.2cm}
+ (-1)^{(2k+b-a)/2} \ket{w^-_{k-\frac12}(a,b+1),b} 
\end{array}
&&\hspace{-1.7cm}\left\{\begin{array}{l}
1\le a \le n-2\,,\, 1\le b \le n-3\,, \\[0.15cm]
2n-a-b-2\le 2k\le 2n-|a-b|-4\,,
\end{array}\right.
\\[0.15cm]
&\begin{array}{ll}
w^+_k(a,n-2) &\hspace{-0.2cm} = \ket{w^+_{k-\frac12}(a,n-3),n-2} 
\\[0.2cm]&\hspace{-0.2cm}
+ \tfrac12(-1)^{(2k+n-a-2)/2}\ket{w_{k-\frac12}(a,n-1),n-2} 
\\[0.2cm]&\hspace{-0.2cm}
+ \tfrac12(-1)^{(2k+n-a-2)/2}\ket{w_{k-\frac12}(a,n),n-2} 
\end{array}
&&\left\{\begin{array}{l}
1\le a \le n-2\,, \\[0.15cm]
n-a \le 2k\le n+a-4\,,
\end{array}\right.
\\[0.15cm]
&\begin{array}{ll}
w^-_k(a,n-2) &\hspace{-0.2cm}= \ket{w^-_{k-\frac12}(a,n-3),n-2} 
\\[0.2cm]&\hspace{-0.2cm}
+ \ket{w_{k-\frac12}(a,n-1),n-2} 
\\[0.2cm]&\hspace{-0.2cm}
-\ket{w_{k-\frac12}(a,n),n-2} 
\end{array}
&&\left\{\begin{array}{l}
1\le a \le n-2\,, \\[0.15cm]
n-a \le 2k\le n+a-2\,,
\end{array}\right.
\\[0.15cm]
&\begin{array}{ll}
w_k(a,n-1) &\hspace{-0.2cm}= \ket{w^+_{k-\frac12}(a,n-2),n-1} 
\\[0.2cm]&\hspace{-0.2cm}
+ \frac12(-1)^{(2k+n-a-1)/2}\ket{w^-_{k-\frac12}(a,n-2),n-1} 
\end{array}
&&\left\{\begin{array}{l}
1\le a \le n-2\,, \\[0.15cm]
n-a+1 \le 2k\le n+a-3\,,
\end{array}\right.
\\[0.15cm]
&\begin{array}{ll}
w_k(a,n) &\hspace{-0.2cm}= \ket{w^+_{k-1}(a,n-2),n} 
\\[0.2cm]&\hspace{-0.2cm}
+ \frac12 (-1)^{(2k+n-a+1)/2}\ket{w^-_{k-\frac12}(a,n-2),n} 
\end{array}
&&\left\{\begin{array}{l}
1\le a \le n-2\,, \\[0.15cm]
n-a+1 \le 2k\le n+a-3\,,
\end{array}\right.
\\[0.15cm]
&\begin{array}{ll}
w_k(a,b) &\hspace{-0.2cm}= \ket{w_{k-\frac12}(a,b-1),b} 
\\[0.2cm]&\hspace{-0.2cm}
+ (-1)^{(2k-n+b+1)/2}\ket{w_{k-\frac12}(a,b+1),b} 
\end{array}
&&\left\{\begin{array}{l}
a \in \{n-1,n\},\, 1\le b \le n-3\,, \\[0.15cm]
n-b+1 \le 2k\le n+b-3\,,
\end{array}\right.
\\[0.15cm]
&\begin{array}{ll}
w_k(a,n-2) &\hspace{-0.2cm}= \ket{w_{k-\frac12}(a,n-3),n-2} 
\\[0.2cm]&\hspace{-0.2cm}
+ (-1)^{(2k-1)/2}\ket{w_{k-\frac12}(a,n-1),n-2}
\\[0.2cm]&\hspace{-0.2cm}
+ (-1)^{(2k-1)/2}\ket{w_{k-\frac12}(a,n),n-2}
\end{array}
&&\left\{\begin{array}{l}
a \in \{n-1,n\}\,,\\[0.15cm]
1 \le 2k\le 2n-5\,,
\end{array}\right.
\\[0.15cm]
&\begin{array}{l}
w_k(n-1,n-1) = \ket{w_{k-\frac12}(n-1,n-2),n-1} \\[0.2cm]
w_k(n,n) = \ket{w_{k-\frac12}(n,n-2),n} 
\end{array}
&&\left\{\begin{array}{l}
0 \le 2k \le 2n-4\,,
\\[0.15cm]
2k\in 4 \mathbb Z\,,
\end{array}\right.
\\[0.15cm]
&\begin{array}{ll}
w_k(n-1,n) = \ket{w_{k-\frac12}(n-1,n-2),n} \\[0.15cm]
w_k(n,n-1) = \ket{w_{k-\frac12}(n,n-2),n-1} 
\end{array}
&&\left\{\begin{array}{l}
1 \le 2k \le 2n-5\,,
\\[0.15cm]
2k\in 4 \mathbb Z + 2\,.
\end{array}\right.
\end{alignat}
\end{subequations}
\endgroup
Using the same arguments as in \eqref{eq:simple.proof}, it is straightforward to check that these are indeed insertion states.

%
\section{ADE modules with periodic boundary conditions}
\label{sec:periodicADE}
%

\subsection{State space and action of diagrams}\label{sec:state.space.I}

Let $N$ be a non-negative even integer, $\g$ be a Lie algebra, and $\mu$ be an index of $A$ with $\textrm{gcd}(m_\mu,p')=1$. The module $\repM_{\g,\mu,\id}(N)$ with untwisted periodic boundary conditions is defined on the vector space spanned by configurations $u_{\boldsymbol{a}}=\ket{a_0,a_1,\dots,a_N}$ subject to the following conditions: (i) $a_i \in \mathcal G$ for $i = 0,1,\dots,N$, (ii) $a_i$ and $a_{i+1}$ are adjacent on $\mathcal G$ for $i = 0,1,\dots, N-1$, and (iii) $a_N=a_0$.
\medskip

Let $K$ be an automorphism of the Dynkin diagram of $\g$. The module $\repM_{\g,\mu,K}(N)$ with twisted periodic boundary conditions is constructed using a basis $\{S_{a\mu}\}$ of common eigenvectors of $A$ and $K$. The configurations spanning the vector space are like in the untwisted case, except for condition (iii) which is replaced by $a_N=K(a_0)$. The dimension of $\repM_{\g,\mu,K}(N)$ is
\begin{equation} \label{eq:M.dim}
\dim \, \repM_{\g,\mu,K}(N) = \tr(K A^N) = \sum_{\mu=1}^n \kappa_\mu (\beta_\mu)^N \,.
\end{equation}
We recall that the non-trivial automorphisms are listed in \cref{sec:Dynkin}, and that the Dynkin diagrams of type $A,D,E$ are all bipartite. If $K$ preserves the colour of the nodes of $\mathcal G$, the module $\repM_{\g,\mu,K}(N)$ is nonzero only for $N$ even. If $K$ exchanges the colour of the nodes of $\mathcal G$, the module $\repM_{\g,\mu,K}(N)$ is nonzero only for $N$ odd. For the ADE models, the second case occurs only for $\repM_{A_n,\mu,R}(N)$ with $K=R$ and $n$ even. We also note that this model with $\mu \in 2 \Zbb$ is the only one with $\kappa_\mu \neq 1$, with instead $\kappa_\mu = -1$.\medskip

The vector space $\repM_{\g,\mu,K}(N)$ is endowed with a representation of $\eptl_N(\beta)$. The action of the generators $e_j$ with $j=1,2,\dots,N-1$ is given by \eqref{eq:ejRSOS}, and is supplemented by
\begin{subequations}
\begin{alignat}{2}
\label{eq:e0RSOS}
e_0 \cdot \ket{a_0,a_1, \dots, a_N}
&= \sum_{a'_0=1}^n \delta_{a_{N-1},K(a_1)} A_{a'_0,a_1} \frac{S_{a'_0\mu}}{S_{a_1\mu}} \, \ket{a'_0,a_1, \dots, K(a'_0)}\,,
\\[0.15cm]
\label{eq:OmegaRSOS}
\Omega \cdot \ket{a_0,a_1, \dots, a_N} &= (\kappa_\mu)^{1/2}\,
\ket{a_1,a_2,\dots, a_N, K(a_1)} \,, \\[0.15cm]
\qquad 
\Omega^{-1} \cdot \ket{a_0,a_1, \dots, a_N}
&= (\kappa_\mu)^{-1/2} \ket{K^{-1}(a_{N-1}),a_0,a_1,\dots, a_{N-1}}\,,
\\[0.15cm]
\label{eq:f}
f \cdot \ket{a_0}&= \sum_{a'_0=1}^n \delta_{K(a'_0),a'_0} \, A_{a'_0,a_0} \, \ket{a'_0} \,,
\end{alignat}
\end{subequations}
where we use the convention \eqref{eq:sqrt.kappa}.\footnote{We note that by instead setting $(\kappa_\mu)^{1/2} = -1$ and $-\ir$ for $\kappa_\mu = 1$ and ${-1}$, respectively, one also obtains a module over $\eptl_N(\beta)$. By \cite[Eq.~(2.70)]{IMD25}, this module is isomorphic to $\repM_{\g,\mu,K^{-1}}(N)$.} The action of the generators $c_j:\repM_{\g,\mu,K}(N) \to \repM_{\g,\mu,K}(N-2)$ and $c^\dag_j:\repM_{\g,\mu,K}(N-2) \to \repM_{\g,\mu,K}(N)$ for $j =1,2,\dots,N-1$ is given in \eqref{eq:cj.RSOS}. For $j=0$, it reads
\begin{subequations} \label{eq:c0c0dag.RSOS}
\begin{alignat}{2}
\label{eq:c0.RSOS}
& c_0 \cdot \ket{a_0, a_1, \dots, a_N} 
= (\kappa_\mu)^{-1/2} \frac{\delta_{a_{N-1},K(a_1)}}{S_{a_1\mu}} 
\,\ket{a_1, a_2, \dots, a_{N-1}} \,, 
\\
& c_0^\dag \cdot \ket{a_0, a_1, \dots, a_{N-2}} 
= (\kappa_\mu)^{1/2}
\sum_{a'_0=1}^n A_{a_{N-2},K(a'_0)}
S_{a'_0\mu} \,\ket{a'_0,a_0, a_1, \dots, a_{N-2},K(a'_0)} \,.
\end{alignat}
\end{subequations}
These operators satisfy all the relations defining the diagram spaces $\cL(N,N')$ given in \cite[Eq.~(2.6)]{IMD25}. They also satisfy the relations
\be
e_j = c_j^\dag c_j\,, \qquad 
\Omega = c_1\, c_0^\dag\,, \qquad
\Omega^{-1} = c_0\, c_1^\dag\,.
\ee
We also note that the transfer matrix $T_N$, defined in \eqref{eq:Tab}, is the matrix representative of $\Tb$ in $\repM_{\g,\mu,K}(N)$.\medskip

\begin{Proposition}[\!\!\cite{IMD25}]
The set of modules 
\begin{equation}
\repM_{\g,\mu,K} = \{\repM_{\g,\mu,K}(N)\,|\,N=N_0,N_0+2,N_0+4, \dots \},
\qquad N_0 \in \{0,1\}\,,
\end{equation}
is a family of $\eptl_N(\beta)$-modules. 
\end{Proposition}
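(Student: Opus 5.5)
The plan is to use the equivalent definition of a family recalled in \cref{sec:TL}: we must show that the operators $c_j$, $c_j^\dag$ for $j=0,1,\dots,N-1$, defined by \eqref{eq:cj.RSOS} and \eqref{eq:c0c0dag.RSOS}, are well defined between consecutive admissible spaces and satisfy every relation of \cite[Eq.~(2.6)]{IMD25}. Once this holds, each word in the generators acts consistently, giving actions $\cL(N,N')\colon \repM_{\g,\mu,K}(N')\to\repM_{\g,\mu,K}(N)$. The generators $e_j$, $\Omega^{\pm1}$ and $f$ of \eqref{eq:e0RSOS}--\eqref{eq:f} are then recovered through $e_j=c_j^\dag c_j$, $\Omega=c_1c_0^\dag$, $\Omega^{-1}=c_0c_1^\dag$ and $f=c_1c_0^\dag$ at $N=2$.

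First I check well-definedness. The operator $c_j$ removes two neighbouring heights $a_j,a_{j+1}$ and imposes $a_{j-1}=a_{j+1}$, so the result is again an adjacent path with $a_N=K(a_0)$. For $c_0$ and $c_0^\dag$, the twisted closure is preserved because $K$ is a graph automorphism. This gives the natural choice of $N_0$: $N_0=1$ exactly when $K$ swaps the colours, which happens only for $(A_n,R)$ with $n$ even, and $N_0=0$ otherwise.

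Next I verify the relations, sorting them into two groups.
\begin{enumerate}
\item[(a)] \emph{Far-commutation and shift relations}, such as $c_ic_j=c_{j}c_{i+2}$ for suitable index ranges and their daggered or mixed versions. These follow directly from the local form of the operators, since they act on disjoint sets of heights and the weights are products of independent factors.
\item[(b)] \emph{Local relations}, such as $c_jc_j^\dag=\beta\,\id$ and $c_jc_{j\pm1}^\dag=\id$, including the cases with index $0$. For $c_jc_j^\dag=\beta$, the product gives $\sum_{a'}A_{a_{j-1},a'}S_{a'\mu}/S_{a_{j-1}\mu}=\beta_\mu$ by the eigenvector equation. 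We have $\beta_\mu=-q-q^{-1}$ with $q=-\eE^{\ir\pi m_\mu/p'}$. For $c_jc_{j+1}^\dag$, the Kronecker delta collapses the sum and the factors $S$ cancel.
\end{enumerate}

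The relations involving index $0$ are where $K$ and the phases $(\kappa_\mu)^{\pm1/2}$ enter. There I use $S_{K(a)\mu}=\kappa_\mu S_{a\mu}$ and $A_{K(a)K(b)}=A_{ab}$. For example, in $c_0c_0^\dag$ the factor $(\kappa_\mu)^{-1/2}(\kappa_\mu)^{1/2}=1$ and $A_{a_{N-2},K(a')}$ reduces to the eigenvector sum. Likewise $c_1c_0^\dag$ reproduces $\Omega$ with the prefactor $(\kappa_\mu)^{1/2}$, matching \eqref{eq:OmegaRSOS}.

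The main obstacle is the wrap-around relations that encode the periodicity. These include the relation $\Omega^2e_1=e_{N-1}\cdots e_1$ in \eqref{eq:def.EPTL}, together with its $c$-version in \cite[Eq.~(2.6)]{IMD25}, and the consistency of translating a $c_j$ across the seam. Passing a height across the seam replaces $a$ by $K(a)$ and introduces a ratio $S_{K(a)\mu}/S_{a\mu}=\kappa_\mu$. One has to check that these factors combine exactly with the $(\kappa_\mu)^{\pm1/2}$ carried by $c_0$ and $c_0^\dag$, which is where the convention \eqref{eq:sqrt.kappa} is needed. I would handle this by treating one seam crossing at a time and counting the net power of $\kappa_\mu$ on each side. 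For the small cases $N=1,2$ I would verify the relations directly, including that $f$ acts as in \eqref{eq:f}, namely that $c_1c_0^\dag$ at $N=2$ restricted to $N'=0$ produces $\sum_{a'}\delta_{K(a'),a'}A_{a',a_0}$.
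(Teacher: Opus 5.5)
Your plan is correct and follows the same route as the paper. The paper writes down $c_j,c_j^\dag$ explicitly, with the phases $(\kappa_\mu)^{\mp1/2}$ on $c_0,c_0^\dag$, and asserts that they satisfy the relations of \cite[Eq.~(2.6)]{IMD25}, deferring the actual verification to that reference.

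One slip in (b): the products that straddle the seam are not the identity. Your own first paragraph records $c_0c_1^\dag=\Omega^{-1}$ and $c_1c_0^\dag=\Omega$, and likewise $c_0c_{N-1}^\dag=\Omega$. The diagram identity $c_0c_{N-1}^\dag=c_1c_0^\dag$ is precisely what pins down the phase. Its two sides evaluate to $\kappa_\mu(\kappa_\mu)^{-1/2}$ and $(\kappa_\mu)^{1/2}$ times the same shifted state, so the phase must be a genuine square root of $\kappa_\mu$.
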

Here, we have $N_0 = 0$ in all cases except for $A_n$ with $K = R$ and $n$ even, in which case $N_0 = 1$.\medskip

We introduce a sesquilinear form $\smallaver{\,,\,}: (\repM_{\g,\mu,K^{-1}},\repM_{\g,\mu,K}) \to \mathbb C$, defined as
\begin{equation} \label{eq:bilinear.form}
\aver{u_{\boldsymbol a},u_{\boldsymbol b}}
= \prod_{j=1}^N \frac{\delta_{a_j, b_j}}{S_{a_j\mu}}
\end{equation}
for $N \ge 1$, and $\aver{a,b} = \delta_{a,b}$ for $N=0$.
For all states $u_{\boldsymbol{a}} \in \repM_{\g,\mu,K^{-1}}(N-2)$ and $u_{\boldsymbol{b}}\in \repM_{\g,\mu,K}(N)$, one can verify that
\be
 \aver{u_{\boldsymbol{a}}, c_j \cdot u_{\boldsymbol{b}}}
= \aver{c_j^\dag \cdot u_{\boldsymbol{a}}, u_{\boldsymbol{b}}}\,, 
\qquad j = 0,1, \dots, N-1\,.
\ee
Hence, the operators $c_j$ and $c_j^\dag$ are conjugate, and we have
\be
\aver{u_{\boldsymbol a},e_j \cdot u_{\boldsymbol b}}
= \aver{e_j \cdot u_{\boldsymbol a}, u_{\boldsymbol b}} \,, 
\qquad
\aver{u_{\boldsymbol a},\Omega^{\pm 1} \cdot u_{\boldsymbol b}}
= \aver{\Omega^{\mp 1} \cdot u_{\boldsymbol a}, u_{\boldsymbol b}} \,,
\qquad
\aver{u_{\boldsymbol a},f \cdot u_{\boldsymbol b}}
= \aver{f \cdot u_{\boldsymbol a}, u_{\boldsymbol b}} \,.
\ee
We sometimes use the notation $\aver{u_{\boldsymbol a}|\lambda|u_{\boldsymbol b}}$.\medskip

Let $L$ be an automorphism of $\mathcal G$ that may be different from $K$. For all indices $\mu$ satisfying $\textrm{gcd}(m_\mu,p')=1$, the eigenvalue $\beta_\mu$ of $A(\g)$ is non-degenerate. This implies that $S_{a\mu}$ is also an eigenvector of the matrix $L_{ab}=\delta_{L(a),b}$. We denote by $\gamma_\mu$ the corresponding eigenvalue, which can take the values in $\{+1,-1\}$ for the ADE series. We have
\begin{equation}
S_{L(a)\mu} = \sum_{b=1}^n \delta_{L(a),b} \, S_{b\mu}
= (LS)_{a\mu} = \gamma_\mu \, S_{a\mu} \,.
\end{equation}

We define the action of an invertible operator $L_N: \repM_{\g,\mu,K} \to \repM_{\g,\mu,LKL^{-1}}$ as
\be
\label{eq:LN.def}
L_N \cdot \ket{a_0, a_1, a_2, \dots, a_N} = 
(\gamma_\mu)^{N/2}\,\ket{L(a_0), L(a_1), L(a_2), \dots, L(a_N)} \,,
\ee 
with $(\gamma_\mu)^{1/2}$ fixed using the same convention as \eqref{eq:sqrt.kappa}. This operator satisfies
\be
\aver{u_{\boldsymbol{a}}, L_N \cdot u_{\boldsymbol{b}}}
= \aver{\bar L_N \cdot u_{\boldsymbol{a}}, u_{\boldsymbol{b}}} 
= (\gamma_\mu)^{N/2} \prod_{i=1}^N \frac{\delta_{a_i, L(b_i)}}{S_{a_i \mu}}\,,
\ee
where $\bar L_N = (L_N)^{-1}$

\begin{Proposition}
Let $K$ and $L$ be two graph automorphisms of the Dynkin diagram $\mathcal G$. Then the linear maps $L_N:\repM_{\g,\mu,K}(N)\to\repM_{\g,\mu,LKL^{-1}}(N)$ satisfy
\be
L_{N-2} \, c_j =c_j L_{N}\,,
\qquad
L_N \,c^\dag_j =c^\dag_j L_{N-2}\,,
\ee
for all $j$, and thus define a family of isomorphisms from $\repM_{\g,\mu,K}$ to $\repM_{\g,\mu,LKL^{-1}}$.
\end{Proposition}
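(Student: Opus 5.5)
The plan is a direct verification on basis states, using the explicit formulas \eqref{eq:cj.RSOS}, \eqref{eq:c0c0dag.RSOS} and \eqref{eq:LN.def}. First I will check that $L_N$ really maps $\repM_{\g,\mu,K}(N)$ into $\repM_{\g,\mu,LKL^{-1}}(N)$. Since $L$ is a graph automorphism, adjacency is preserved. If $a_N=K(a_0)$, then $L(a_N)=LKL^{-1}(L(a_0))$, as needed. Invertibility is immediate, because $\bar L_N$, defined with $L^{-1}$ and the prefactor $(\gamma_\mu)^{-N/2}$, is a two-sided inverse. Here one uses that the eigenvalue of $L^{-1}$ on $S_{a\mu}$ is $\gamma_\mu^{-1}=\gamma_\mu$.

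For $1\le j\le N-1$, the check reduces to the identity $S_{L(a)\mu}=\gamma_\mu S_{a\mu}$. Applying $c_j$ after $L_N$ gives the factor $\delta_{L(a_{j-1}),L(a_{j+1})}/S_{L(a_{j+1})\mu}=\gamma_\mu^{-1}\,\delta_{a_{j-1},a_{j+1}}/S_{a_{j+1}\mu}$. This is multiplied by the overall prefactor $\gamma_\mu^{N/2}$. On the other side, $L_{N-2}c_j$ gives $\gamma_\mu^{(N-2)/2}$ times the same delta over $S_{a_{j+1}\mu}$. The two agree because $\gamma_\mu^{N/2}\gamma_\mu^{-1}=\gamma_\mu^{(N-2)/2}$, which holds with the convention \eqref{eq:sqrt.kappa} since $\gamma_\mu=\pm1$.

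The same computation handles $c_j^\dag$. There the sum over $a'_j$ is reindexed by $a'_j\to L(a'_j)$, using $A_{L(a),L(b)}=A_{ab}$, and the factor $S_{L(a'_j)\mu}=\gamma_\mu S_{a'_j\mu}$ supplies the missing $\gamma_\mu$.

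The case $j=0$ is the main obstacle. The twist enters through $K$ on the source and $K'=LKL^{-1}$ on the target, and the prefactors $(\kappa_\mu)^{\pm1/2}$ appear. I need two facts.
\begin{itemize}
\item[(i)] The condition $\delta_{L(a_{N-1}),K'(L(a_1))}$ equals $\delta_{a_{N-1},K(a_1)}$.
\item[(ii)] The eigenvalue of $K'$ on $S_{\cdot\mu}$ equals $\kappa_\mu$, because $S_{K'(a)\mu}=\gamma_\mu^{-1}S_{KL^{-1}(a)\mu}\cdot\gamma_\mu=\kappa_\mu S_{a\mu}$.
\end{itemize}
Fact (ii) makes the prefactors $(\kappa_\mu)^{\pm1/2}$ identical on both sides. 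With (i) and (ii) in hand, the $S$-factor is handled exactly as for $j\ge1$. For $c_0^\dag$, I will substitute $a'_0\to L(a'_0)$ and use $A_{L(a_{N-2}),K'(L(a'_0))}=A_{L(a_{N-2}),L(K(a'_0))}=A_{a_{N-2},K(a'_0)}$. The final state component $K'(L(a'_0))=L(K(a'_0))$ then matches.

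Finally, since $\eptl_N(\beta)$ and all the diagram spaces $\cL(N,N')$ are generated as words in the $c_j,c_j^\dag$, intertwining all the generators implies \eqref{eq:def.family.morphism} for every diagram, including $\Omega^{\pm1}$ and $f$. Together with invertibility, this shows that $L=\{L_N\}$ is a family of isomorphisms.
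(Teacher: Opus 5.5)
Your proposal is correct and follows the paper's proof. Both are a direct check on basis states using $S_{L(a)\mu}=\gamma_\mu S_{a\mu}$, the invariance $A_{L(a),L(b)}=A_{ab}$ with the reindexing $a'\to L(a')$ in the $c^\dag_j$ sums, and, for $j=0$, the relation $K'L=LK$ with $K'=LKL^{-1}$. You are more explicit than the paper in checking that $L_N$ lands in the right module, that it is invertible, and that $K'$ has the same eigenvalue $\kappa_\mu$ so the $(\kappa_\mu)^{\pm1/2}$ prefactors match (your displayed chain for the last point is garbled; the clean version is $S_{LKL^{-1}(a)\mu}=\gamma_\mu S_{KL^{-1}(a)\mu}=\gamma_\mu\kappa_\mu S_{L^{-1}(a)\mu}=\kappa_\mu S_{a\mu}$).
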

\proof 
For $j=1,2,\dots,N-2$, we have
\begingroup
\allowdisplaybreaks
\begin{subequations}
\label{eq:cjL}
\begin{alignat}{2}
& c_j \, L_N \cdot \ket{a_0,a_1, \dots, a_N} \nn \\
&\qquad = (\gamma_\mu)^{N/2}\,\frac{\delta_{L(a_{j-1}),L(a_{j+1})}}{S_{L(a_{j+1}) \mu}} \, \ket{L(a_0),L(a_1),\dots, L(a_{j-1}),L(a_{j+2}),L(a_{j+3}), \dots, L(a_N)} 
\nn\\
&\qquad = (\gamma_\mu)^{(N-2)/2}\, \frac{\delta_{a_{j-1},a_{j+1}}}{S_{a_{j+1} \mu}} \, L_{N-2} \cdot \ket{a_0,a_1,\dots, a_{j-1},a_{j+2},a_{j+3}, \dots, a_N} 
\nn\\
&\qquad = L_{N-2} \, c_j \cdot \ket{a_0,a_1, \dots, a_N} \,,
\\[0.25cm]
& c^\dag_j \, L_{N-2} \cdot \ket{a_0,a_1, \dots, a_{N-2}} \nn \\
&\qquad =(\gamma_\mu)^{(N-2)/2} \sum_{b'_j=1}^n A_{L(a_{j-1}),b'_j} \, S_{b'_j\mu}
\, \ket{L(a_0), L(a_1), \dots, L(a_{j-1}), b'_j, L(a_{j-1}),\dots, L(a_{N-2})} 
\nn\\
&\qquad =(\gamma_\mu)^{N/2} \sum_{a'_j=1}^n A_{a_{j-1},a'_j} \, S_{a'_j\mu}
\, L_N \cdot \ket{a_0, a_1, \dots, a_{j-1}, a'_j, a_{j-1}, \dots, a_{N-2}} 
\nn\\
&\qquad = L_N \, c^\dag_j \cdot \ket{a_0,a_1, \dots, a_{N-2}}
 \,,
\end{alignat}
\end{subequations}
\endgroup
where we performed the change of variables $b'_j=L(a'_j)$. Similarly, we have
\begin{subequations}
\label{eq:c_0L}
\begin{alignat}{2}
c_0 \, L_N \, \ket{a_0,a_1, \dots, a_N} 
&= (\gamma_\mu)^{N/2} \frac{\delta_{L(a_{N-1}),KL(a_1)}}{S_{L(a_1)\mu}} \,\ket{L(a_1), L(a_2),\dots, L(a_{N-1})} 
\nn\\
&= (\gamma_\mu)^{(N-2)/2} \frac{\delta_{a_{N-1},K'(a_1)}}{S_{a_1\mu}} \,L_{N-2} \ket{a_1,a_2,\dots, a_{N-1}} 
\nn\\
&= L_{N-2}\, c_0 \,\ket{a_0, a_1, \dots, a_N}\,,
\\[0.25cm]
c_0^\dag \, L_{N-2}\, \ket{a_0, a_1, \dots, a_{N-2}} 
&= (\gamma_\mu)^{(N-2)/2}\sum_{b'_0=1}^n A_{L(a_0),b'_0}
S_{b'_0\mu} \,\ket{b'_0,L(a_0),\dots, L(a_{N-2}),KL(a'_0)} 
\nn\\
&= (\gamma_\mu)^{N/2}\sum_{a'_0=1}^n A_{a_0,a'_0}
S_{a'_0\mu} \,L_N\, \ket{a'_0,a_0,\dots, a_{N-2},K'(a'_0)} 
\nn\\
&=L_N\, c_0^\dag \, \ket{a_0, a_1, \dots, a_{N-2}}\,,
\end{alignat}
\end{subequations}
ending the proof.
\eproof

\noindent It thus suffices to study $\repM_{\g,\mu,K}$ for one automorphism $K$ in each conjugacy class of the symmetry group of $\mathcal G$. Setting $K=LKL^{-1}$, we obtain the following corollary.
\begin{Corollary}\label{prop:L.commutes}
Let $K$ and $L$ be two commuting graph automorphisms.
Then the action of $L_N$ defines a family of automorphisms of $\repM_{\g,\mu,K}$.
\end{Corollary}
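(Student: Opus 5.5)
This is a direct specialisation of the preceding proposition, so the plan is to apply it with the target twist $LKL^{-1}$ and then show that the twist is unchanged. Since $K$ and $L$ commute, $LKL^{-1}=K$. The preceding proposition then gives linear maps $L_N:\repM_{\g,\mu,K}(N)\to\repM_{\g,\mu,K}(N)$ satisfying
\be
L_{N-2}\,c_j = c_j\,L_N\,, \qquad L_N\,c^\dag_j = c^\dag_j\,L_{N-2}\,,
\ee
for every $j$, including $j=0$. We must then check that, in this specialisation, the $K'$ appearing in \eqref{eq:c_0L} is indeed $K$. To do this, we re-read the $j=0$ computation: the constraint $\delta_{L(a_{N-1}),KL(a_1)}$ becomes $\delta_{a_{N-1},L^{-1}KL(a_1)}=\delta_{a_{N-1},K(a_1)}$, so the source and target boundary conditions coincide and no change of family occurs.

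Next, we conclude that $\{L_N\}$ is a family of endomorphisms of $\repM_{\g,\mu,K}$. Every diagram in $\cL(N,N')$ is a word in the $c_j$ and $c^\dag_j$, so the intertwining relations above give \eqref{eq:def.family.morphism}, i.e. $L_{N'}(\lambda\cdot u)=\lambda\cdot L_N(u)$. In particular, each $L_N$ commutes with the action of $\eptl_N(\beta)$, because $e_j=c_j^\dag c_j$, $\Omega=c_1c_0^\dag$ and $\Omega^{-1}=c_0c_1^\dag$, and for $N=0$ also $f=c_1c_0^\dag$.

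Finally, we show invertibility. By the explicit formula \eqref{eq:LN.def}, $L_N$ is a nonzero scalar $(\gamma_\mu)^{N/2}$ times a permutation of the basis states, since $L$ is a bijection preserving adjacency and, because $L$ commutes with $K$, preserving the twisted condition $a_N=K(a_0)$. Its inverse is $\bar L_N$, which is built from $L^{-1}$; this automorphism also commutes with $K$, so $\bar L_N$ is likewise a family map. Hence $\{L_N\}$ is a family of automorphisms of $\repM_{\g,\mu,K}$.

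The only step that needs care is the bookkeeping of the phase $(\kappa_\mu)^{\pm1/2}$ in $c_0$ and $c_0^\dag$, together with the $(\gamma_\mu)^{N/2}$ prefactors, under the replacement $K\to LKL^{-1}=K$. This has already been handled in the preceding proposition: the number of $\gamma_\mu$ factors changes by exactly one when $N$ changes by $2$, and that change is absorbed by $S_{L(a)\mu}=\gamma_\mu S_{a\mu}$.
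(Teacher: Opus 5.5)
Your proposal is correct and takes the same route as the paper: the corollary is the preceding proposition specialised to $LKL^{-1}=K$. Your extra checks (the $j=0$ twist constraint, extending from the generators to all diagrams, and invertibility of $L_N$) are sound, and they are already covered by that proposition's conclusion that the $L_N$ form a family of isomorphisms.
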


We end this section by noting that, for the special case $L=K$, we have
\begin{equation}
\Omega^N\cdot u = K_N \cdot u \,, \qquad \textrm{for all } \quad u \in \repM_{\g,\mu,K}(N) \,.
\end{equation}
There always exists an integer $m$ such that $K^m=\id$, so that we have $\Omega^{mN}= K_N^m = (\kappa_\mu)^{mN/2}\,\id$ in $\repM_{\g,\mu,K}(N)$.

\subsection[Decomposition of $\repM_{\g,\mu,K}$]{Decomposition of $\boldsymbol{\repM_{\g,\mu,K}}$}

In this section, we obtain the decomposition of each family $\repM_{\g,\mu,K}$ as a direct sum of irreducible families of modules over $\eptl_N(\beta)$. We first discuss the insertion states with parameters $(0,x)$ in $\repM_{\g,\mu,K}$.
We use the short-hand notation
\begin{equation} \label{eq:def.x.nu}
x_\nu = \eE^{\ir \pi m_\nu/p'} \,, \qquad
\wt{x}_\nu = \eE^{\ir \pi \wt{m}_\nu/p'} \,,
\end{equation}
where $m_\nu$ and $\wt{m}_\nu$ respectively denote exponents of the adjacency matrices $A$ and $\wt A$.

\begin{Proposition} \label{prop:insertion.k=0}
Let $\nu\in\{1,2,\dots,n\}$. The insertion space with parameters $(0,x_\nu)$ in $\repM_{\g,\mu,\id}$ is spanned by the states of the form
\begin{equation} \label{eq:w.nu}
w_\lambda = \sum_{a=1}^n S_{a\lambda} \ket{a} \,, 
\qquad 
\lambda \in \{1,2,\dots, n\}\,
\quad \textrm{with} \quad
m_\lambda=m_\nu \,.
\end{equation}
\end{Proposition}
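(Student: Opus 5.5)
The plan is to reduce the statement to a finite-dimensional eigenvalue problem for $f$ on $\repM_{\g,\mu,\id}(0)$. For $k=0$, the definition \eqref{eq:def.insertion.kx} says that an insertion state with parameters $(0,x)$ is a nonzero $\xi\in\repM_{\g,\mu,\id}(0)$ with $f\cdot\xi=(x+x^{-1})\,\xi$. There is no condition involving $c_0$ or $\Omega$, since these do not act on $N=0$. So the insertion space with parameters $(0,x_\nu)$ is exactly the eigenspace of $f$ acting on $\repM_{\g,\mu,\id}(0)$ for the eigenvalue $x_\nu+x_\nu^{-1}=2\cos(\pi m_\nu/p')=\beta_\nu$.

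Next, identify this operator explicitly. For $N=0$ and $K=\id$, the basis of $\repM_{\g,\mu,\id}(0)$ consists of the states $\ket{a}$ with $a\in\mathcal G$, so the space is $n$-dimensional. By \eqref{eq:f} with $K=\id$, the constraint $\delta_{K(a'_0),a'_0}$ is trivially $1$, and we get $f\cdot\ket{a}=\sum_{a'}A_{a',a}\ket{a'}$. Hence, in the basis $\{\ket{a}\}$, $f$ is represented by the adjacency matrix $A$; since $A$ is symmetric, no transposition issue arises.

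As a consistency check, compute $f$ from the factorisation $f=c_1c_0^\dag$ using \eqref{eq:c0c0dag.RSOS} and \eqref{eq:cj.RSOS}. Starting from $\ket{a_0}$, $c_0^\dag$ produces $\sum_{a'}A_{a_0,a'}S_{a'\mu}\ket{a',a_0,a'}$. Then $c_1$ sends $\ket{a',a_0,a'}$ to $S_{a'\mu}^{-1}\ket{a'}$, and the factors of $S$ cancel. This confirms that $f$ acts as $A$ regardless of $\mu$. The care needed here is with index placement and the gauge factors $S_{a\mu}$, but these drop out.

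Finally, diagonalise. Since $A$ is real symmetric with orthonormal eigenbasis $\{S_{\cdot\lambda}\}_{\lambda=1}^n$, every state $w=\sum_a w_a\ket{a}$ decomposes as $w=\sum_\lambda c_\lambda w_\lambda$ with $f\cdot w_\lambda=\beta_\lambda w_\lambda$. Thus $f\cdot w=\beta_\nu w$ if and only if $c_\lambda=0$ whenever $\beta_\lambda\ne\beta_\nu$. Because $1\le m_\lambda\le p'-1$ and $\cos$ is injective on $(0,\pi)$, the condition $\beta_\lambda=\beta_\nu$ is equivalent to $m_\lambda=m_\nu$. This yields exactly the span of the states $w_\lambda$ in \eqref{eq:w.nu}.

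The only genuine subtlety is degenerate exponents, such as $m=n-1$ appearing twice for $D_n$ with $n$ even. This is exactly why the statement allows all $\lambda$ with $m_\lambda=m_\nu$; it is handled automatically by the eigenspace argument.
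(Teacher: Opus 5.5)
Your proposal is correct and follows the paper's proof: for $k=0$ the insertion condition is $f\cdot w=(x_\nu+x_\nu^{-1})\,w$, and $f$ acts on $\repM_{\g,\mu,\id}(0)$ as the adjacency matrix $A$, so the insertion space is the $\beta_\nu$-eigenspace of $A$. Your extra check of the action of $f$ via $f=c_1c_0^\dag$, and your remark that $\beta_\lambda=\beta_\nu$ if and only if $m_\lambda=m_\nu$ (since $\cos$ is injective on $(0,\pi)$), make explicit details that the paper leaves implicit.
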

\proof
We first recall that the action of $f$ in $\repM_{\g,\mu,\id}(0)$ reads
\begin{equation}
f\cdot \ket{a} = \sum_{b=1}^n A_{ba}\, \ket{b} \,.
\end{equation}
From \eqref{eq:def.insertion.kx}, we see that a state $w \in \repM_{\g,\mu,\id}(0)$ is an insertion state with parameters $(0,x_\nu)$ if it satisfies $f\cdot w= 2\cos(\frac{\pi m_\nu}{p'})\, w$, namely if it is an eigenstate of the $A$ with eigenvalue $2\cos(\frac{\pi m_\nu}{p'})$.
\eproof

\noindent The dimension of the insertion space with parameters $(0,x_\nu)$ in $\repM_{\g,\mu,\id}$ is thus equal to the multiplicity of the exponent $m_\nu$ for the adjacency matrix of $\g$. Moreover, we refer to the state $w_\mu$ as the {\it vacuum state}.\medskip

With an argument similar to the above, we obtain the following result on the insertion space with no defects in $\repM_{\g,\mu,K}$.

\begin{Proposition} \label{prop:insertion.k=0.twist}
Let $K$ be an automorphism of $\cal G$, and $\nu\in\{1,2,\dots,\wt{n}\}$. The insertion space with parameters $(0,\wt{x}_\nu)$ in $\repM_{\g,\mu,K}$ is spanned by the states of the form
\begin{equation} \label{eq:w.nu.twist}
\wt{w}_\lambda = \sum_{a \in \wt{\cal G}} \wt{S}_{a\lambda}\, \ket{a} \,,
\qquad
\lambda \in \{1,2,\dots, \wt{n}\} \quad \textrm{with} \quad \wt{m}_\lambda=\wt{m}_\nu \,,
\end{equation}
where $\wt{S}_{a\nu}$ is the eigenvector of $\wt{A}$ associated to the exponent $\wt{m}_\nu$, given in \eqref{eq:St.a.mu}.
\end{Proposition}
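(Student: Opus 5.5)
The plan mirrors the proof of \cref{prop:insertion.k=0}. Since $k=0$, the definition \eqref{eq:def.insertion.kx} says that a state $w\in\repM_{\g,\mu,K}(0)$ is an insertion state with parameters $(0,\wt x_\nu)$ exactly when $f\cdot w=(\wt x_\nu+\wt x_\nu^{-1})\,w = 2\cos(\pi\wt m_\nu/p')\,w=\wt\beta_\nu\, w$. So the task reduces to identifying the action of $f$ on $\repM_{\g,\mu,K}(0)$ and diagonalising it.

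First I will determine $\repM_{\g,\mu,K}(0)$ and the action of $f$ on it. The action is given by \eqref{eq:f}:
\begin{equation*}
f\cdot\ket{a}=\sum_{a'}\delta_{K(a'),a'}\,A_{a',a}\,\ket{a'}\,.
\end{equation*}
For $N=0$ the basis states are single heights $\ket{a_0}$ with $a_0=K(a_0)$, by condition (iii). Hence $\repM_{\g,\mu,K}(0)$ is spanned by $\{\ket{a}\,|\,a\in\wt{\cal G}\}$. If one wants to double-check \eqref{eq:f} from $f=c_0c_1^\dag$ using \eqref{eq:cj.RSOS} and \eqref{eq:c0c0dag.RSOS}, one should confirm that the factors $(\kappa_\mu)^{\pm1/2}$ cancel and that the factors $S_{a'\mu}/S_{a'\mu}$ cancel as well; I will treat \eqref{eq:f} as given.

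Next, restricted to this span, $f$ acts by the matrix $A_{a',a}$ with $a,a'\in\wt{\cal G}$. This is exactly the adjacency matrix $\wt A$ of the induced subgraph $\wt{\cal G}$. From \eqref{eq:St.a.mu}, $\wt A$ is real symmetric with the orthonormal eigenbasis $\wt S_{\cdot\lambda}$ and eigenvalues $\wt\beta_\lambda=2\cos(\pi\wt m_\lambda/p')$. The eigenspace of $f$ for the eigenvalue $\wt\beta_\nu$ is therefore spanned by the $\wt w_\lambda$ with $\wt\beta_\lambda=\wt\beta_\nu$. Since $1\le\wt m_\lambda\le p'-1$ and cosine is injective on $(0,\pi)$, this condition is equivalent to $\wt m_\lambda=\wt m_\nu$, which is the claimed spanning set.

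The main point needing care is the first step. One must be sure that the $N=0$ space consists precisely of the $K$-fixed heights, and that the factor $\delta_{K(a'),a'}$ in \eqref{eq:f} restricts the image to this space, so that $f$ really acts as $\wt A$ and not as $A$. I also assume the identification of the fixed-point subgraph as $A_{\wt n}$ with the data of \cref{tab:Dynkin.fixed.points}, which the paper states. Finally, the case of $A_n$ with $K=R$ and $n$ even has no $N=0$ sector, and there $\wt{\cal G}$ is empty, so the statement is vacuous.
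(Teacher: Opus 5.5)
Your proposal is correct and follows the paper's route: the paper proves this ``with an argument similar to'' that of \cref{prop:insertion.k=0}, namely that $\repM_{\g,\mu,K}(0)$ is spanned by the $K$-fixed heights, $f$ acts there through \eqref{eq:f} as $\wt A$, and so insertion states with parameters $(0,\wt x_\nu)$ are exactly its eigenvectors with eigenvalue $\wt\beta_\nu$. One small slip in your optional aside: $c_0c_1^\dag$ (and likewise $c_1c_0^\dag$) contains only a single factor $(\kappa_\mu)^{\mp1/2}$, so nothing cancels; this is harmless, because whenever $N=0$ is admissible for a valid $\mu$ one has $\kappa_\mu=1$.
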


For the ADE models with non-trivial automorphisms $K$, the subgraph $\wt{\cal G}$ of $\cal G$ is always a graph of type $A$, and in this case each exponent is non-degenerate, so the insertion space with parameters $(0,\wt{x}_\nu)$ of $\repM_{\g,\mu,K}$ is one-dimensional and spanned by $\wt{w}_\nu$.\medskip

We now describe the decomposition of $\repM_{\g,\mu,K}$. Before tackling the general case, we start by discussing the simplest example, namely the family $\repM_{A_n,1,\id}$ associated to the unitary $A$-series with periodic boundary conditions. For this model, we have $p'=n+1$ and $m_\nu = \nu$. The dimension of the module $\repM_{A_n,1,\id}(N)$ can be written as 
\be 
\dim\repM_{A_n,1,\id}(N) = \tr (A^N)
= \sum_{\nu=1}^n \left( 2 \cos \frac{\pi m_\nu}{p'} \right)^N 
= \sum_{s=0}^{p'-1} \alpha_s \, \big( D_s(N) + D_{p'-s}(N) \big) \,, \\
\ee 
where
\be
\alpha_s = \sum_{\nu=1}^n \cos \bigg(\frac{2\pi s m_\nu}{p'}\bigg)
= \left\{\begin{array}{cl}
p'-1 & s=0 \,, \\[0.15cm]
-1 & s\in\{1,2,\dots, p'-1\} \,,
\end{array}\right.
\ee
and we used the identity \eqref{eq:cos^N}. This allows us to express $\dim \repM_{A_n,1,\id}(N)$ as the sum of dimensions of the modules $\repQ_{0,(-q)^s}(N)$:
\begin{equation} \label{eq:dim.M.An}
\dim\repM_{A_n,1,\id}(N)
= \sum_{s=1}^{p'-1} \big( D_0(N) + D_{p'}(N) -D_s(N) - D_{p'-s}(N) \big)
= \sum_{s=1}^n \dim\repQ_{0,(-q)^s}(N) \,.
\end{equation}
For each $s\in\{1,2,\dots, n\}$, we have $x_s=(-q)^s$. From \cref{prop:insertion.k=0}, we deduce that the insertion space with parameters $(0,x_s)$ in $\repM_{A_n,1,\id}$ is one-dimensional and spanned by the state $w_s$ defined in \eqref{eq:w.nu}. Let us define the submodules $\repM_s(N)=\cL(N,0)\cdot w_s$. Each module $\repM_s(N)$ is nonzero and isomorphic to a quotient of $\repW_{0,x_s}(N)$, implying that $\dim\repM_s(N)\geq \dim\repQ_{0,x_s}(N)$. From the Loewy diagrams in~\eqref{eq:W.Loewy}, we conclude that two modules $\repW_{0,x_s}(N)$ associated to distinct values of $s\in\{1,2,\dots, n\}$ have no common submodules except for the zero module. This implies that the submodules $\repM_s(N)$ appear as direct summands in the decomposition of $\repM_{A_n,1,\id}(N)$. From \eqref{eq:dim.M.An}, we deduce that $\repM_s(N) \simeq \repQ_{0,(-q)^s}(N)$, as otherwise the dimension of $\bigoplus_{s=1}^n \repM_s(N)$ would exceed the dimension of $\repM_{\g,1,\id}(N)$. This proves the decomposition
\begin{equation}
\repM_{A_n,1,\id} \simeq \bigoplus_{s=1}^n \repQ_{0,(-q)^s} \,.
\end{equation}
We now turn to the main result of this section.

\begin{Theorem} \label{thm:M.decomp}
The families of modules $\repM_{\g,\mu,\id}$ with periodic boundary conditions decompose as
\begingroup
\allowdisplaybreaks
\begin{subequations} \label{eq:M.decomp}
\begin{alignat}{2}
\repM_{A_n,\mu,\id} & \simeq \bigoplus_{s=1}^n \repQ_{0,(-1)^{\mu s} q^s} \,, \label{eq:M.An.decomp} \\
\repM_{D_n,\mu,\id} & \simeq \bigg[\bigoplus_{s=1,3,\dots,2n-3,n-1} \repQ_{0,q^s}\bigg] \oplus \bigg[\bigoplus_{t=1}^{\lfloor(n-2)/2 \rfloor}\bigoplus_{\eps = \pm 1} \repQ_{2t, \eps q^{n-1}}\bigg] \,,
\\
\repM_{E_6,\mu,\id} & \simeq \bigg[\bigoplus_{s=1,4,5,7,8,11} \repQ_{0,q^s}\bigg] 
\oplus \bigg[\bigoplus_{\eps = \pm 1} \repQ_{2,\eps q^6}\bigg]
\oplus \bigg[\bigoplus_{\sigma, \eps = \pm 1} \repQ_{3,\eps q^{4\sigma}}\bigg] \,,
\\
\repM_{E_7,\mu,\id} & \simeq \bigg[\bigoplus_{s=1,5,7,9,11,13,17} \repQ_{0,q^s}\bigg] 
\oplus \bigg[\bigoplus_{k=2,4,8}\bigoplus_{\eps = \pm 1} \repQ_{k,\eps q^9}\bigg]
\oplus \bigg[\bigoplus_{\sigma, \eps = \pm 1} \repQ_{3,\eps q^{6\sigma}}\bigg]\,,
\\
\repM_{E_8,\mu,\id} & \simeq \bigg[\bigoplus_{s=1,7,11,13,17,19,23,29} \repQ_{0,q^s}\bigg] 
\oplus \bigg[\bigoplus_{k=2,4,8,14}\bigoplus_{\eps = \pm 1} \repQ_{k,\eps q^{15}}\bigg]
\oplus \bigg[\bigoplus_{k=3,9}\bigoplus_{\sigma, \eps = \pm 1} \repQ_{k,\eps q^{10\sigma}}\bigg] 
\nn\\&
\qquad\oplus \bigg[\bigoplus_{s=6,12}\bigoplus_{\sigma, \eps = \pm 1} \repQ_{5,\eps q^{\sigma s}}\bigg] \,.
\end{alignat}
\end{subequations}
\endgroup

Similarly, the families of modules $\repM_{\g,\mu,K}$ with twisted periodic boundary conditions decompose as
\begingroup
\allowdisplaybreaks
\begin{subequations} \label{eq:M.decomp.twist}
\begin{alignat}{2}
\repM_{A_n,\mu,R} & \simeq 
\left\{\begin{array}{ll}
\displaystyle\repQ_{0,q^{p'/2}}\oplus \bigg[\bigoplus_{k=1}^{(n-1)/2} \bigoplus_{\eps = \pm 1} \repQ_{k,\eps q^{p'/2}}\bigg] & n \textrm{ odd}\,,\\
\displaystyle\bigoplus_{k=1/2}^{(n-1)/2} \bigoplus_{\eps = \pm 1} \repQ_{k,\eps q^{p'/2}} & n \textrm{ even}\,,
\end{array}\right.
\\
\repM_{D_n,\mu,P_{(n-1,n)}} & \simeq\bigg[\bigoplus_{s=2,4,\dots,2n-4} \repQ_{0,q^s}\bigg] \oplus \bigg[\bigoplus_{t=1}^{\lfloor (n-1)/2 \rfloor}\bigoplus_{\eps = \pm 1} \repQ_{2t-1, \eps q^{n-1}}\bigg] \,,
\\
\repM_{D_4,\mu,P_{(134)}} & \simeq \repQ_{0,q^3} 
\oplus \bigg[\bigoplus_{\sigma, \eps = \pm 1} \repQ_{1,\eps q^{2\sigma}} \bigg]\,,
\\
\repM_{E_6,\mu,P_{(15),(24)}} & \simeq \bigg[\bigoplus_{s=4,8} \repQ_{0,q^s}\bigg] 
\oplus \bigg[\bigoplus_{k=1,2,5}\bigoplus_{\eps = \pm 1} \repQ_{k,\eps q^6}\bigg]
\oplus \bigg[\bigoplus_{\sigma, \eps = \pm 1} \repQ_{2,\eps q^{3\sigma}}\bigg]\,.
\end{alignat}
\end{subequations}
\endgroup
\end{Theorem}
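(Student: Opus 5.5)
The plan is to reproduce, for every pair $(\g,K)$, the argument already used for $\repM_{A_n,1,\id}$ and for \cref{thm:M.decomp.TL}. The argument has three ingredients: a dimension identity, the construction of insertion states, and a direct-sum argument based on the sesquilinear form.

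\textbf{Step 1: dimension identity.} Start from $\dim\repM_{\g,\mu,K}(N)=\sum_\nu \kappa_\nu(\beta_\nu)^N$, taking the $\beta_\nu$ in the Galois-conjugated normalisation appropriate to $\mu$. Expand each power with \eqref{eq:cos^N}. This gives
\[
\dim\repM_{\g,\mu,K}(N)=\sum_{s}\alpha_s\big(D_s(N)+D_{p'-s}(N)\big),
\qquad \alpha_s=\sum_\nu\kappa_\nu\cos\big(\tfrac{2\pi s m_\nu}{p'}\big).
\]
Compute the coefficients $\alpha_s$ case by case from \cref{tab:Dynkin.data} and the listed eigenvalues $\kappa_\nu$. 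Then check that this combination equals $\sum\dim\repQ_{k,x}(N)$ over the claimed summands, using \eqref{eq:Q.dim}. This is a finite identity among the $D$'s and is routine.

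\textbf{Step 2: insertion states for each summand.} The $k=0$ summands are handled by \cref{prop:insertion.k=0,prop:insertion.k=0.twist}: $w_\lambda$ or $\wt w_\lambda$ is an insertion state with parameters $(0,x)$, with $x=x_\lambda$ or $\wt x_\lambda$. For $\mu\neq1$ the labels change to $(-1)^{\mu s}q^s$, as the $A_n$ formula indicates.

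For $k>0$, an insertion state is a state $w\in\repM_{\g,\mu,K}(2k)$ with $c_0 w=0$ and $\Omega w=xw$. I would build these from the boundary insertion states of \cref{sec:insertionTL}, as announced there. Take a boundary insertion state $w_k(a,b)$ with $b=K(a)$; it is killed by $c_1,\dots,c_{2k-1}$. Form the periodic sum
\[
\textstyle\sum_a\sum_{j=0}^{2k-1} x^{-j}\,\Omega^j\cdot \big(S\text{-weighted } w_k(a,K(a))\big).
\]
The $x^{-j}$ weights project onto the eigenvalue $x$ of $\Omega$. The relation $\Omega^{2k}=K_{2k}$ fixes the allowed values $x$, and these should reproduce the $\eps q^{s}$ listed in the theorem. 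Since $c_0=\Omega^{-1}c_1\Omega$ up to the relations, the projection is also killed by $c_0$ provided the component killed by $c_1$ survives the projection. Nonvanishing, and matching the multiplicities (for instance the two values $\eps=\pm1$), would be checked explicitly for $A_n,D_n$ using the $w^\pm_k$ and by computer for $E_{6,7,8}$, as the paper does in the boundary case.

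\textbf{Step 3: direct sum.} By \cref{prop:seed.W}, each insertion state $w$ generates a submodule $\cL(N,2k)\cdot w$, which is a nonzero quotient of $\repW_{k,x}$. Its dimension is therefore at least $\dim\repQ_{k,x}(N)$.

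Distinct irreducible labels $(k,x)$ give submodules whose Loewy diagrams \eqref{eq:W.Loewy} share no common head. For equal labels, use the pairing \eqref{eq:bilinear.form} between $\repM_{\g,\mu,K^{-1}}$ and $\repM_{\g,\mu,K}$. Exactly as in the proof of \cref{thm:M.decomp.TL}, $\langle\lambda_1 w_i,\lambda_2 w_j\rangle$ reduces to $\langle w_i,\lambda w_j\rangle$ with $\lambda\in\cL(2k,2k)$. Every diagram in $\cL(2k,2k)$ other than powers of $\Omega$ (or $f$-words when $k=0$) factors through some $c_j$. The insertion states can therefore be chosen biorthogonal, which makes the corresponding submodules independent.

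The sum of the submodules thus has dimension at least $\sum\dim\repQ_{k,x}(N)$. By Step 1 this equals $\dim\repM_{\g,\mu,K}(N)$, so every inequality is an equality and each summand is $\repQ_{k,x}$.

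The main obstacle is Step 2 in the exceptional cases $E_7,E_8$ and $D_4$ with $P_{(134)}$. There one must show that the projected states are nonzero and that they realise exactly the multiplicities required. I would first try to prove, via the reduction map of \cref{sec:insertionTL}, that the dimension of the $(k,x)$ insertion space is bounded above by the number of claimed summands, and then exhibit enough states explicitly.
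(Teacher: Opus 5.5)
\textbf{Steps 1 and 3 follow the paper's proof.} The paper also counts dimensions case by case through \eqref{eq:cos^N}. It then takes the submodules $\cL(N,2k)\cdot w_{k,x}$, each a nonzero quotient of $\repW_{k,x}(N)$, and shows that together they exhaust the dimension. The gap is in Step 2: the proposed construction of the $k>0$ insertion states does not produce insertion states.

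\textbf{Why the projection fails.} Take $v\in\repM_{\g,\mu,K}(2k)$ killed by $c_1,\dots,c_{2k-1}$. Assume for simplicity $K_{2k}v=x^{2k}v$, so that $\Pi_x=\frac1{2k}\sum_{j=0}^{2k-1}x^{-j}\Omega^j$ projects onto $\Omega$-eigenvalue $x$.
\begin{itemize}
\item From $\Omega e_j\Omega^{-1}=e_{j-1}$ one gets $c_0\Omega^j=Y_j\,c_{\pi(j)}$. Here $Y_j$ is invertible and $\pi$ is a bijection of $\{0,\dots,2k-1\}$ with $\pi(0)=0$. Hence $c_0\,\Pi_x v=\tfrac1{2k}\,c_0 v$.
\item With a longer period, $c_0\Pi_xv$ is instead the projection of $c_0v$ onto a $K_{2k-2}$-eigenspace, which again need not vanish.
\item So the projected state is an insertion state only if the seed already had $c_0v=0$, in which case no projection was needed.
\item Your remark $c_0=\Omega^{-1}c_1\Omega$ only shows that an $\Omega$-eigenvector killed by $c_1$ is killed by $c_0$. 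But $\Pi_xv$ is no longer killed by $c_1$: the rotated copies $\Omega^jv$ carry the $c_0$-content of $v$ to other positions.
\item The natural seeds do have $c_0v\neq0$. For $D_n$, the boundary insertion state built from $u_1=|n-k-1,\dots,n-2,n-1,n-2,\dots,n-k-1\rangle$ has $a_1=a_{2k-1}$, so $c_0(u_1-u_2)\neq 0$. Summing over boundary heights with $S$-weights does not change this.
\item Your fallback of bounding the insertion-space dimension from above cannot help. Step 3 needs existence of nonzero insertion states, i.e. a lower bound.
\end{itemize}

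\textbf{What the paper does instead.} The missing ingredient is a projector that is annihilated by every $c_j$, including $c_0$, while fixing the $\Omega$-eigenvalue. The paper sets $w_{k,x}=\wh P_{2k,x}\cdot w_k$, where $\wh P_{2k,x}$ is the uncoiled Jones--Wenzl projector. It satisfies $c_j\wh P_{2k,x}=0$ for $j=0,\dots,2k-1$ and $\Omega\wh P_{2k,x}=x\wh P_{2k,x}$.

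This is not automatic, because $\repM_{\g,\mu,K}(2k)$ is not a module over any single uncoiled algebra.
\begin{itemize}
\item For $K=\id$, the paper uses that $\wh P_{2k,x}$ with $x^{2k}=1$ and $x\neq\pm1$ does not depend on $\alpha$. It is therefore a projector in the non-direct sum $\sum_\nu\eptl^{\tinyx 1}_{2k}(\beta,\beta_\nu)$.
\item In the twisted cases, it first splits $\repM_{\g,\mu,K}(2k)$ into $K_{2k}$-eigenspaces with $\Xi_0,\Xi_1$, or with the order-three projectors for $D_4$.
\end{itemize}

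Non-vanishing is then proved by computing $\aver{w_k,w_{k,x}}$, or $\aver{\bar w_k,w_{k,x}}$ when $\kappa_\mu=-1$. By invariance of the form, this reduces to weighted sums of the constants $\Gamma_{s,\ell}$; for $A_n$ and $D_n$ only $\ell=0$ contributes. These sums are evaluated in closed form, and factors such as $(1-x^{4k})/(1-x^4)$ are what single out the values of $x$ appearing in the theorem.
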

\proof
We only discuss the proof of \eqref{eq:M.decomp}, as the arguments to prove \eqref{eq:M.decomp.twist} are similar. We divide the proof in three main steps. 
\begin{enumerate}
\item Insertion states. The factors of the form $\repQ_{0,\eps_s q^s}$ in \eqref{eq:M.decomp} correspond to the exponents of the adjacency matrix of $\g$, namely $s=m_\lambda$ with $\lambda\in \{1,2,\dots,n\}$. For each value of $s$, we define an associated insertion state $w_\nu$ using \cref{prop:insertion.k=0}. This is possible from \cref{prop:permut.nu} below, which allows us, for each $\nu$, to write $x_\nu+x^{-1}_\nu=\eps_s (q^{s}+q^{-s})$ for some integer $s$ and some sign $\eps_s$. For each factor of the form $\repQ_{k,x}$ with $k>0$ in \eqref{eq:M.decomp}, we construct an insertion states $w_{k,x}$, case by case, in \cref{sec:insertion.k>0}.
\item Dimension of the modules. For each non-negative even integer $N$, we use \eqref{eq:cos^N} and obtain
\begin{alignat}{1}
\dim \repM_{\g,\mu,\id}(N)
&= \sum_{m=1}^{p'-1} n_m\big(D_0(N) + D_{p'}(N) -D_m(N) - D_{p'-m}(N) \big) \nn\\
& + \sum_{m=1}^{p'-1} (n_m+\alpha_m)\big(D_m(N) + D_{p'-m}(N) \big) \,,
\end{alignat}
where $n_m$ is the multiplicity of the exponent $m$ for the adjacency matrix, and 
\begin{equation}
\alpha_s = \sum_{\nu=1}^n \cos\bigg(\frac{2\pi s m_\nu}{p'}\bigg) = \tr \,T_{2s}(\tfrac A2)\,.
\end{equation}
Here, $T_j(x)$ is the $j$-th Chebyshev polynomial of the first kind.
Comparing with \eqref{eq:Q.dim}, for any choice of signs $\eps_m\in\{-1,+1\}$ we find
\begin{equation} \label{eq:dim.M}
\dim \repM_{\g,\mu,\id}(N)
= \sum_{m=1}^{p'-1} n_m\dim \repQ_{0,\eps_m q^m}(N)
+ \sum_{m=1}^{p'-1} (n_m+\alpha_m) \, \big(D_m(N)+D_{p'-m}(N) \big) \,,
\end{equation}
where we used $n_{p'-m}=n_m$ and $\alpha_{p'-m}=\alpha_m$. The first sum in \eqref{eq:dim.M} corresponds to the factors $\repQ_{0,\eps_s q^s}(N)$ in \eqref{eq:M.decomp}. The second sum is a linear combination of the dimensions $D_m(N)$ with integer coefficients. In \cref{app:dims}, we show that this second sum equals the total dimension of the factors $\repQ_{k,x}$ with $k>0$ in \eqref{eq:M.decomp}, case by case. 

\item Structure of the module. For each term in \eqref{eq:M.decomp}, we denote by $w_{k,x}$ the corresponding insertion state. This state generates a nonzero submodule $\repM_{k,x}(N)=\cL(N,2k)\cdot w_{k,x}$, isomorphic to a quotient of $\repW_{k,x}(N)$. In all cases except for $D_n$ with $n$ even, the pairs $(k,x)$ of parameters are all pairwise distinct. Using the structure \eqref{eq:W.Loewy}, we conclude that the submodules $\repM_{k,x}(N)$ are direct summands in the module's decomposition. For $D_n$ with even $n$, the only degeneracy occurs for $(k,x)=(0,\pm \ir)$, and the two corresponding insertion states have distinct eigenvalues of $K_N$ for the automorphism $K=P_{(n,n-1)}$. The associated submodules are therefore also direct summands in the decomposition. Hence, in all cases, the direct sum of the modules $\repM_{k,x}(N)$ is a submodule of $\repM_{\g,\mu,\id}$ that exhausts the dimension of $\dim\repM_{\g,\mu,\id}(N)$ for $\repM_{k,x} = \repQ_{k,x}$, ending the proof.
\eproof
\end{enumerate}

We recall from \cref{sec:W.and.Q} that any family $\repQ_{k,\eps q^{\sigma s}}$ with $\sigma = -1$ arising in these decompositions can be equivalently rewritten as $\repQ_{k,\eps(-1)^p q^{p'-s}}$, which fits the criteria \eqref{eq:xks}. Moreover, in all cases, the factors that appear in the decompositions for $K=\id$ are all of the form $\repQ_{k,x}$ with $x$ satisfying $x^{2k} = 1$. This is because $\Omega^N = x^{2k} \id$ in $\repW_{k,x}(N)$ and $\repQ_{k,x}(N)$ whereas $\Omega^N = \id$ in $\repM_{\g,\mu,\id}$, so any factor $\repQ_{k,x}$ arising in the decomposition of $\repM_{\g,\mu,\id}$ must satisfy $x^{2k}=1$. Similarly, in the cases where $K^2 = \id$ or $K^3 = \id$, the only factors $\repQ_{k,x}$ that may arise are those where $x^{4k}=(\kappa_\mu)^N$ or $x^{6k}=1$, respectively.
\medskip

In the standard modules $\repW_{k,x}(N)$ and its quotient modules $\repQ_{k,x}(N)$, we have the relations 
\be
k=0: \ \ E\, \Omega\, E = \alpha E\,,
\qquad
k>0: \ \ E = 0\,,
\ee
where $\alpha = x+x^{-1}$ and $E$ is defined in \eqref{eq:def.E}. \cref{thm:M.decomp} implies the identities
\be
\repM_{\g,\mu,\id}(N): \quad E \prod_{\nu=1}^n (\Omega E - \beta_\nu \id) = 0\,,
\qquad
\repM_{\g,\mu,K}(N): \quad E \prod_{\nu=1}^{\wt n} (\Omega E - \wt \beta_\nu \id) = 0\,,
\ee
for $N$ even. These polynomials in $\Omega E$ are the minimal polynomials for the adjacency matrix of $\mathcal G$ and $\mathcal G'$, respectively. The relations specific to the different models are
\begingroup
\allowdisplaybreaks
\begin{subequations}
\label{eq:minimal.polys}
\begin{alignat}{2}
\repM_{A_n,\mu,\id}&: \qquad E\, U_n(\tfrac12 \Omega E) = 0\,, 
\\[0.15cm]
\repM_{A_n,\mu,R}&: \qquad 
E\,\Omega\, E=0 \qquad n \textrm{ odd},
\\[0.15cm]
\repM_{D_n,\mu,\id}&: \qquad 
\left\{\begin{array}{ll}
E\,\Omega \, E \, T_{n-1}(\tfrac12 \Omega E) = 0& n \textrm{ odd},\\[0.15cm]
E \, T_{n-1}(\tfrac12 \Omega E) = 0& n \textrm{ even},
\end{array}\right.
\\[0.15cm]
\repM_{D_n,\mu,P_{(n-1,n)}}&: \qquad E \, U_{n-2}(\tfrac12 \Omega E) = 0\,, 
\\[0.15cm]
\repM_{D_4,\mu,P_{(134)}}&: \qquad E\, \Omega E = 0\,,
\\[0.15cm]
\repM_{E_6,\mu,\id}&: \qquad E \big((\Omega E)^6 - 5 (\Omega E)^4 + 5 (\Omega E)^2-\id\big) = 0\,,
\\[0.15cm]
\repM_{E_6,\mu,P_{(15,24)}}&: \qquad E \big((\Omega E)^2 -\id\big) = 0\,,
\\[0.15cm]
\repM_{E_7,\mu,\id}&: \qquad E \big((\Omega E)^7 - 6 (\Omega E)^5 + 9 (\Omega E)^3-3\Omega E\big) = 0\,,
\\[0.15cm]
\repM_{E_8,\mu,\id}&: \qquad E \big((\Omega E)^8 - 7 (\Omega E)^6 + 14 (\Omega E)^4 -8 (\Omega E)^2+\id\big) = 0\,,
\end{alignat}
\end{subequations}
\endgroup
where $T_j(x)$ and $U_j(x)$ are Chebyshev polynomials of the first and second kind, respectively. It is thus apparent that $\repM_{\g,\mu,K}(N)$ is in general not a module over an uncoiled algebra.

\subsection{Permutations of the exponents}
\label{sec:permut}

For each exponent $m_\nu$ of $\g$, we constructed in \cref{prop:insertion.k=0} an insertion state $w_\nu$ with parameters $(0,x_\nu)$ in $\repM_{\g,\mu,\id}$. On the other hand, \cref{thm:M.decomp} states the decomposition of the families $\repM_{\g,\mu,K}$ in terms of the families $\repQ_{0,\eps q^s}$, where $s$ runs over sets of integers, and $\eps \in \{+1,-1\}$. The following results build the bridge between the two formulations. 

\begin{Lemma}
\label{lem:two.sets}
Let $q=\eE^{-\ir\pi p/p'}$ with $p,p'$ as in \eqref{eq:q.standard}. Then the two sets
\begin{subequations}
\begin{alignat}{2}
\mathcal S_1 &= \{(-1)^{s(p'-p)}(q^s+q^{-s})\,|\, s=1, 2,\dots, p'-1\}\,,
\\[0.15cm]
\mathcal S_2 &= \{\beta_\nu = 2 \cos (\tfrac{\pi \nu}{p'})\,|\, \nu=1, 2,\dots, p'-1\}\,,
\end{alignat}
\end{subequations}
are equal. Moreover, let $\mathcal S_{1,0}$ and $\mathcal S_{1,1}$ be the subsets of $\mathcal S_{1}$ with even and odd values of $s$, and similarly $\mathcal S_{2,0}$ and $\mathcal S_{2,1}$ be the subsets of $\mathcal S_{2}$ with even and odd values of $\nu$. Then we have $\mathcal S_{1,0} = \mathcal S_{2,0}$ and $\mathcal S_{1,1} = \mathcal S_{2,1}$. 
\end{Lemma}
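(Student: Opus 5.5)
We will compute each element of $\mathcal S_1$ explicitly and match it with an element of $\mathcal S_2$ of the same parity. With $q=\eE^{-\ir\pi p/p'}$ we have $q^s+q^{-s}=2\cos(\pi s p/p')$. Using $(-1)^{s(p'-p)}=\cos(\pi s(p'-p))$, and the fact that $s(p'-p)$ is an integer, we obtain
\begin{equation*}
(-1)^{s(p'-p)}\,(q^s+q^{-s}) = 2\cos\big(\pi s p/p' + \pi s(p'-p)\big) = 2\cos\Big(\frac{\pi s\,(p'-p)}{p'} + \pi s (p'-p) - \pi s(p'-p) + \pi s\Big)\,.
\end{equation*}
More simply, we will write it as $2\cos(\pi s p/p')\cdot(-1)^{s(p'-p)} = 2\cos\big(\pi s(p + p'(p'-p))/p'\big)$. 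This equals $2\cos(\pi \nu_s/p')$, where $\nu_s$ is the residue of $s(p+p'(p'-p)) = s(p'-p)(p'+1) - s p'(p'-p) + sp$, suitably folded into $\{0,\dots,2p'-1\}$ and then into $\{0,\dots,p'\}$ using $\cos(\pi(2p'-\nu)/p')=\cos(\pi\nu/p')$.

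The cleanest route is to define $\nu_s$ directly as the representative of $\pm s\,m$ modulo $2p'$ lying in $[0,p']$, where $m = p + p'(p'-p)$. The key observation is that $m \equiv p'-p \pmod 2$ when $p'$ is even, while $m\equiv p'-p$ modulo $2$ is forced in any case because $m - (p'-p) = p'(p'-p) - (p'-2p)$, which is even. Thus $m$ has the parity of $p'-p$, and $\gcd(m,p')=\gcd(p,p')=1$. There are two steps to carry out.

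First, we show $\nu_s\in\{1,\dots,p'-1\}$. Since $\gcd(m,p')=1$ and $1\le s\le p'-1$, we have $sm\not\equiv 0 \pmod{p'}$, so $\nu_s\neq 0,p'$. We will also show that $s\mapsto \nu_s$ is injective. If $\nu_s=\nu_t$, then $sm\equiv\pm tm \pmod{2p'}$, hence $s\equiv \pm t \pmod{p'}$ after reducing mod $p'$ and cancelling the unit $m$. With $1\le s,t\le p'-1$ this gives $s=t$ or $s+t=p'$. We have to exclude the second case, and this is the main obstacle. If $s+t=p'$, then $(s+t)m\equiv 0\pmod{2p'}$ would require $p'm\equiv 0 \pmod{2p'}$, i.e.\ $m$ even. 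We then need $sm \equiv -tm \pmod{2p'}$ to fail when $m$ is odd. When $m$ is even, $p'$ is odd and $p$ has the parity of $p'$, hence odd; we will then check that the signs make $\nu_s=\nu_{p'-s}$ genuinely possible. In that situation we should not claim injectivity. Instead, we count multiplicities: the value $2\cos(\pi\nu/p')$ is determined by $\nu$, and both sets are sets, not multisets, so only equality of sets is needed. Concretely, for the set equality it suffices that the map $s\mapsto \nu_s$ is surjective onto $\{1,\dots,p'-1\}$. We will establish this by showing that for each $\nu$ the congruence $sm\equiv \pm\nu \pmod{2p'}$ has a solution with $1\le s\le p'-1$. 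Since $m$ is invertible mod $p'$, we can pick $s\equiv \pm\nu m^{-1}\pmod{p'}$ with $s\in[1,p'-1]$. The residue mod $2p'$ is then $\pm\nu$ or $\pm\nu+p'$, and the parity bookkeeping below rules out the wrong case, possibly after replacing $s$ by $p'-s$.

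Second, we handle the parity refinement. The parity of $\nu_s$ equals the parity of $sm$ modulo $2$, because $2p'$ is even and folding $\nu\mapsto 2p'-\nu$ preserves parity. So $\nu_s\equiv s(p'-p)\pmod 2$. When $p'-p$ is odd, this gives $\nu_s\equiv s$, so surjectivity within each parity class follows from the first step and yields $\mathcal S_{1,i}=\mathcal S_{2,i}$. When $p'-p$ is even, both $p,p'$ are odd, and the two candidates $s$ and $p'-s$ have opposite parities and give $\nu$'s of the same parity. In that case we will adjust using the freedom $\pm$ and $+p'$ (with $p'$ odd flipping parity) to land in the correct parity class. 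This parity bookkeeping, together with ruling out the $+p'$ shift, is the step I expect to require the most care.
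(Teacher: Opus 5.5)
Your reduction to the single multiplier $m=p+p'(p'-p)$, via $(-1)^{s(p'-p)}(q^s+q^{-s})=2\cos(\pi s m/p')$, is correct, and so is $\gcd(m,p')=1$. The gap is the parity claim that the rest of the argument rests on, which is false.

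\textbf{The parity of $m$.} We have $m-(p'-p)=p'(p'-p-1)+2p$, which is odd whenever $p$ and $p'$ are both odd. So $m\not\equiv p'-p \pmod 2$ in that case; for example $p=1$, $p'=3$ gives $m=7$ but $p'-p=2$. In fact $m$ is \emph{always} odd:
if $p'-p$ is odd then $m\equiv p+p'\equiv 1$, and if $p'-p$ is even then $p,p'$ are both odd and $m\equiv p\equiv 1 \pmod 2$.

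\textbf{Consequences for your argument.} Your congruence $\nu_s\equiv s(p'-p) \pmod 2$ fails when $p'-p$ is even. It would make every $\nu_s$ even, contradicting the statement you are proving; in the example, $s=1$ gives $\nu_1=1$. The case ``$m$ even'' that you prepare to treat never arises. Your plan for that case is also not coherent. $\nu_s$ is determined by $s$, so there is no freedom to ``adjust'' it. And if $s\mapsto\nu_s$ were not injective on $\{1,\dots,p'-1\}$, it could not be surjective onto a set of the same size, so the set equality would then be false rather than provable by surjectivity. Your first display, with argument $\pi s(p'-p)/p'+\pi s$, is also off by the sign $(-1)^{s(p'-p)}$, though you drop it immediately.

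\textbf{The fix.} The missing idea is just that $m$ is odd, hence $\gcd(m,2p')=1$. Multiplication by $m$ is then a permutation of $\mathbb Z/2p'\mathbb Z$. It commutes with $x\mapsto -x$ and preserves $\{0,p'\}$, since $p'm\equiv p' \pmod{2p'}$. So it permutes the $p'-1$ pairs $\{s,-s\}$ with $1\le s\le p'-1$. This shows at once that $s\mapsto\nu_s$ is a bijection of $\{1,\dots,p'-1\}$, with no case analysis. For the parity refinement, $sm\equiv\pm\nu_s \pmod{2p'}$ with $m$ odd gives $\nu_s\equiv s \pmod 2$, so the bijection preserves parity.

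Once corrected, your argument is a uniform version of the paper's proof. Indeed $\eE^{\ir\pi m/p'}=(-1)^{p'-p}q^{-1}$ is the inverse of $-q$ or of $q$. These are the primitive $2p'$-th roots of unity that the paper uses in its two separate cases, $p'-p$ odd and $p'-p$ even.
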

\proof For $(p-p')$ odd, we have $\mathrm{gcd}(p-p',2p')=1$, and thus $-q=\eE^{2\ir\pi (p'-p)/(2p')}$ is a primitive root of unity of order $2p'$. Hence, there is a bijection between the sets $\{(-q)^s\,|\,s=0,1,\dots, 2p'-1\}$ and $\{\eE^{\ir\pi \nu/p'}\,|\, \nu=0, 1,\dots, 2p'-1\}$. Moreover, using $(-q)^0=1$, $(-q)^{p'}=-1$ and $(-q)^{2p'-s}=(-q)^{-s}$, we find that the set $\{(-q)^s\,|\,s=1,2,\dots, p'-1\}$ is in bijection with the set $\{\eE^{\ir\pi \sigma_\nu \nu/p'}\,|\, \nu=1,2,\dots, p'-1\}$, for some $\sigma_\nu \in \{-1,+1\}$ for each $\nu$. Since $\beta_\nu = \eE^{\ir \pi \nu/p'}+\eE^{-\ir \pi \nu/p'}$, this proves the bijection between $\mathcal S_1$ and $\mathcal S_2$. Moreover, if $s$ and $\nu$ are such that $\cos(\frac{\pi(p-p')s}{p'}) = \cos(\frac{\pi \nu}{p})$, then $\eE^{\ir \pi(p-p')s/p'} = \eE^{\ir \pi \sigma_\nu \nu/p'}$ for some sign $\sigma_\nu$, and then $(-1)^s = \eE^{\ir \pi(p-p')s} = \eE^{\ir \pi \sigma_\nu \nu} = (-1)^\nu$.\medskip

For $(p-p')$ even, both $p,p'$ are odd. We have $\mathrm{gcd}(p',2p)=1$, and conclude that $q=\eE^{-2\ir\pi p/(2p')}$ is a primitive root of unity of order $2p'$. Using the same argument as above, we find a bijection between the sets $\{q^s\,|\,s=0,1,\dots, 2p'-1\}$ and $\{\eE^{\ir\pi \nu/p'}\,|\, \nu=0, 1,\dots, 2p'-1\}$, and subsequently a bijection between the sets $\{q^s\,|\,s=1,2,\dots, p'-1\}$ and $\{\eE^{\ir\pi \sigma_\nu \nu/p'}\,|\, \nu=1,2,\dots, p'-1\}$ for some signs $\sigma_\nu$. This leads to a bijection between the sets $\mathcal S_1$ and $\mathcal S_2$. Moreover, if $s$ and $\nu$ are such that $\cos(\frac{\pi ps}{p'}) = \cos(\frac{\pi \nu}{p})$, then $\eE^{\ir \pi ps/p'} = \eE^{\ir \pi \sigma_\nu \nu/p'}$ for some sign $\sigma_\nu$, and then $(-1)^s = \eE^{\ir \pi p s} = \eE^{\ir \pi \sigma_\nu \nu} = (-1)^\nu$.
\eproof

\begin{Proposition} \label{prop:permut.nu}
For the ADE models, we have
\be
\{\beta_\nu = 2 \cos (\tfrac{\pi m_\nu}{p'})\,|\, \nu=1, 2,\dots, n\} =
\left\{\begin{array}{ll}
\{(-1)^{\mu s}(q^s+q^{-s})\,|\, s=1, 2,\dots, n\} & \g = A_n\,, \\[0.15cm]
\{q^s+q^{-s}\,|\, s=1,3,\dots,2n-3,n-1\}& \g = D_n\,, \\[0.15cm]
\{q^s+q^{-s}\,|\, s=1,4,5,7,8,11\} & \g = E_6\,, \\[0.15cm]
\{q^s+q^{-s}\,|\, s=1,5,7,9,11,13,17\} & \g = E_7\,, \\[0.15cm]
\{q^s+q^{-s}\,|\, s=1,7,11,13,17,19,23,29\} & \g = E_8\,. 
\end{array}\right.
\ee
\end{Proposition}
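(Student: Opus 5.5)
We compare, for each Lie algebra, the two multisets directly, with $q=\eE^{-\ir\pi p/p'}$ and $p=p'-m_\mu$. The basic identity is
\be
q^s+q^{-s}=2\cos\Big(\frac{\pi p s}{p'}\Big)=2\cos\Big(\pi s-\frac{\pi m_\mu s}{p'}\Big)=(-1)^s\,2\cos\Big(\frac{\pi m_\mu s}{p'}\Big).
\ee
So the right-hand side consists of the values $(-1)^s\beta(m_\mu s)$ (resp.\ $(-1)^{(\mu+1)s}\beta(m_\mu s)$ for $A_n$), where $\beta(m)=2\cos(\pi m/p')$. Since $\beta$ is even and $2p'$-periodic, and $-\beta(m)=\beta(p'-m)$, each such value equals $\beta(m')$ for a unique $m'\in\{0,\dots,p'\}$. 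The task is to show that $s\mapsto m'(s)$ maps the listed $s$ bijectively onto the exponents of $\g$.

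\textbf{Reduction to a parity statement.} The map $s\mapsto \pm m_\mu s \bmod 2p'$ is injective on $\{1,\dots,p'-1\}$ because $\gcd(m_\mu,p')=1$. Then \cref{lem:two.sets}, applied with $p'-p=m_\mu$, shows that $s\mapsto m'(s)$ is a bijection of $\{1,\dots,p'-1\}$ which preserves parity. Concretely, $(-1)^{s m_\mu}(q^s+q^{-s})=\beta(\nu)$ with $\nu\equiv s \pmod 2$.

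Comparing signs, $(-1)^{sm_\mu}$ versus the $(-1)^{\mu s}$ or $1$ in the statement:
\begin{itemize}
\item For $A_n$, the exponent is $m_\mu=\mu$, so the factor in the statement is exactly the one in the lemma. Since $p'=n+1$ and $s$ runs over all of $\{1,\dots,p'-1\}$, the lemma gives the claim immediately.
\item For $D_n$ and $E_{6,7,8}$, the condition $\gcd(m_\mu,p')=1$ forces $m_\mu$ odd, because $p'$ is even in these cases. Then $(-1)^{sm_\mu}=(-1)^s$.
\end{itemize}
In the second case we need: the map $\varphi$ sending $s$ to the $\nu$ with $q^s+q^{-s}=\beta(\nu)$ satisfies $\varphi(s)\equiv \pm m_\mu s$ modulo $2p'$, up to the reflection $\nu\mapsto p'-\nu$ when $s$ is odd. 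By the identity above, $q^s+q^{-s}=(-1)^s\beta(m_\mu s)$. For $s$ even this is $\beta(m_\mu s)$; for $s$ odd it is $\beta(p'-m_\mu s)$.

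\textbf{Set invariance, $E_{6,7,8}$.} The sets $\{1,5,7,11,\dots\}$ of exponents of $E_{6,7,8}$ are exactly the odd residues (units, plus some extras) closed under multiplication by units modulo $2p'$ and under $m\mapsto 2p'-m$. We then check set invariance case by case:
\begin{itemize}
\item \textbf{$E_8$.} The exponents are precisely the units modulo $30$ in $(0,30)$, and $m_\mu$ is itself a unit. For odd unit $s$, $m_\mu s$ is odd, and $p'-m_\mu s$ is even, but we need to land back among the exponents. Here I expect a subtlety: a direct check with $m_\mu=7$, $s=1$ gives $q+q^{-1}=-\beta(7)=\beta(23)$, which is indeed an exponent. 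This suggests the involution $m\mapsto p'-m$ also preserves the exponent set, which holds for all ADE since exponents are symmetric under $m\mapsto p'-m$.
\item \textbf{$E_6$ and $E_7$.} Here $s$ may be even (e.g.\ $4,8$) or odd; we verify closure under multiplication by $m_\mu\in\{1,5,7,11\}$ (resp.\ units modulo $36$ with odd representatives) and under $m\mapsto p'-m$, by a short finite check.
\end{itemize}

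\textbf{$D_n$.} Write $p'=2n-2$. Odd $s$ gives $\beta(p'-m_\mu s)$ with $p'-m_\mu s$ odd modulo $2p'$, reducing to an odd number in $(0,p')$; all of these are exponents. For $s=n-1$, the value is $(-1)^{n-1}\beta(m_\mu(n-1))$. With $m_\mu$ odd, $m_\mu(n-1)\equiv n-1 \pmod{2p'}$ up to sign, giving $\beta(\pm(n-1))=0=\beta(n-1)$, which matches the extra exponent. Multiplicities are handled since all maps used are injective.

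\textbf{Main obstacle.} The main difficulty is bookkeeping the signs and reductions modulo $2p'$ uniformly, especially in the degenerate $D_n$ ($n$ even) case where $n-1$ is odd and the value $0$ appears twice.
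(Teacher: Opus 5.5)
Your proposal is correct and follows essentially the paper's route: \cref{lem:two.sets} for $A_n$, a finite check for $E_{6,7,8}$, and for $D_n$ a bijection on odd $s$ together with $q^{n-1}+q^{-(n-1)}=0$. The only variation is in $D_n$, where you write $q^s+q^{-s}=\beta(p'-m_\mu s)$ explicitly, whereas the paper removes the sign from the lemma's odd subset using $q^{p'-s}=-q^{-s}$.

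A few harmless slips should be fixed. For $p'$ even and $m_\mu s$ odd, $p'-m_\mu s$ is odd, not even, which is exactly why it lands on odd exponents such as $23$. The $E_6$ exponents include $4$ and $8$, so they are not ``odd residues''. Injectivity of $s\mapsto \pm m_\mu s \bmod 2p'$ needs $\gcd(m_\mu,2p')=1$, which fails for $A_n$ with $p'$ odd and $\mu$ even---though there you rely on the lemma anyway.
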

\proof
For the $A_n$ models, we have $p'=n+1$, $p'-p=\mu$ and $m_\nu=\nu$, and the result readily follows from \cref{lem:two.sets}. For the models $E_6$, $E_7$ and $E_8$, the sets are finite dimensional, and it is straightforward to check the equality. For the $D_n$ models, the Coxeter number is $p'=2n-2$, and $p$ is odd. We first note that, for $\nu = n$, we have $\beta_n = 0=q^{n-1}+q^{-(n-1)}$. Moreover, we have
\begin{alignat}{2}
\{2 \cos (\tfrac{\pi m_\nu}{p'})\,|\, \nu=1, 2,\dots, n-1\} &= \{2 \cos (\tfrac{\pi m}{p'})\,|\, m=1, 3,\dots, 2n-3\}
\nn\\[0.15cm]
&=\{(-1)^{(p'-p)}(q^s+q^{-s})\,|\, s=1, 3,\dots, p'-1\},
\end{alignat} 
where we used the equality of the odd subsets given in \cref{lem:two.sets}. Because $q^{p'-s} = (-1)^p q^{-s} = - q^{-s}$, we conclude that if the set $\{(-1)^{(p'-p)}(q^s+q^{-s})\,|\, s=1, 3,\dots, p'-1\}$ contains an element $\beta$, it also contains $-\beta$, allowing us to write this set simply as $\{(q^s+q^{-s})\,|\, s=1, 3,\dots, p'-1\}$.
\eproof

\begin{Proposition} \label{prop:permut.nu.twist}
For the ADE models with non-trivial automorphisms, we have
\be
\{\wt{\beta}_\nu = 2 \cos (\tfrac{\pi \wt{m}_\nu}{p'})\,|\, \nu=1, 2,\dots, \wt{n}\} =
\left\{\begin{array}{ll}
\{q^{p'/2}+q^{-p'/2}\} & (\g,K) = (A_n,R)\,, \\[0.15cm]
\{q^s+q^{-s}\,|\, s=2,4,\dots,2n-4\}& (\g,K) = (D_n,P_{(n-1,n)})\,, \\[0.15cm]
\{q^3+q^{-3}\} & (\g,K) = (D_4,P_{(134)})\,, \\[0.15cm]
\{q^4+q^{-4}, q^8+q^{-8}\} & (\g,K) = (E_6,P_{(15)(24)})\,.
\end{array}\right.
\ee
\end{Proposition}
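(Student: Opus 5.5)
The proof goes case by case, in the same way as the proof of \cref{prop:permut.nu}. The left-hand side is a finite set, and by \eqref{eq:St.a.mu} its elements are $2\cos(\pi\wt m_\nu/p')$ with the exponents listed in \cref{tab:Dynkin.fixed.points}. The right-hand side consists of values $q^s+q^{-s}=2\cos(\pi ps/p')$, where $q=\eE^{-\ir\pi p/p'}$ and $p=p'-m_\mu$ with $\gcd(m_\mu,p')=1$. For each case I will show that the map $s\mapsto \pm ps \bmod 2p'$ permutes the relevant sets of exponents, up to the symmetries $\cos(\theta)=\cos(-\theta)=\cos(\theta+2\pi)$ and the sign relation $q^{p'-s}=(-1)^pq^{-s}$.

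\textbf{Single-element cases.} For $(A_n,R)$ with $n$ odd, $p'=n+1$ is even and $\wt m=p'/2$, so the left-hand side is $\{2\cos(\pi/2)\}=\{0\}$. Since $\gcd(p,p')=1$ and $p'$ is even, $p$ is odd, so $q^{p'/2}=\eE^{-\ir\pi p/2}=\pm\ir$ and $q^{p'/2}+q^{-p'/2}=0$. For $(D_4,P_{(134)})$ we have $p'=6$ and $p\in\{1,5\}$, and both choices give $q^3+q^{-3}=2\cos(\pi p/2)=0=2\cos(3\pi/6)$.

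\textbf{$D_n$ with $P_{(n-1,n)}$.} Here $p'=2n-2$, $p$ is odd, and the left-hand side is $\{2\cos(\pi m/p')\,|\,m=2,4,\dots,2n-4\}$. I will apply the even-subset statement of \cref{lem:two.sets}, $\mathcal S_{1,0}=\mathcal S_{2,0}$. For even $s$ the sign $(-1)^{s(p'-p)}$ equals $1$, so $\mathcal S_{1,0}=\{q^s+q^{-s}\,|\,s \textrm{ even},\ 1\le s\le p'-1\}$. Likewise $\mathcal S_{2,0}$ is exactly the left-hand side, because the even $\nu$ in $\{1,\dots,p'-1\}$ are $2,4,\dots,2n-4$. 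This gives the claimed equality directly, with no sign issue. One bookkeeping point needs care: $\mathcal S_2$ is indexed by $\nu$ and not by the exponents $m_\nu$, but for this fixed-point subgraph ($A_{n-2}$, with $p'=2n-2$) the exponents $\wt m_\nu$ are precisely the even integers below $p'$, so the two indexings coincide.

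\textbf{$E_6$ with $P_{(15)(24)}$.} Here $p'=12$, $m_\mu\in\{1,5,7,11\}$, so $p\in\{11,7,5,1\}$. Each such $p$ is a unit mod $24$, and multiplication by $p$ maps $\{\pm4,\pm8\}$ into itself modulo $24$. This holds because $4p\equiv \pm 4 \pmod{24}$ for $p\equiv\pm1 \pmod 6$, and similarly $8p\equiv\pm 8$. Hence $\{2\cos(4\pi p/12),2\cos(8\pi p/12)\}=\{2\cos(\pi/3),2\cos(2\pi/3)\}=\{1,-1\}$, which is the left-hand side.

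I expect the only real obstacle to be keeping the signs straight, namely confirming in each case that no stray $(-1)^s$ factor survives, as happened in the $A_n$ entry of \cref{prop:permut.nu}. In the cases above this works out because every $s$ involved is even, or because the values involved are $0$.
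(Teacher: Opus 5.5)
Your proposal is correct and follows the paper's proof: the $D_n$ case comes from the even-subset equality $\mathcal S_{1,0}=\mathcal S_{2,0}$ of \cref{lem:two.sets}, and the remaining one- and two-element cases are checked directly. Your explicit checks (that $p$ is odd for $(A_n,R)$ and $(D_4,P_{(134)})$, and that $p\equiv\pm1 \bmod 6$ for $E_6$) just spell out what the paper calls a straightforward verification.
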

\proof For $(\g,K) = (D_n,P_{(n-1,n)})$, the Coxter number is $p'=2n-2$, and we have
\begin{alignat}{2}
\{\wt{\beta}_\nu = 2 \cos (\tfrac{\pi \wt{m}_\nu}{p'})\,|\, \nu=1, 2,\dots, \wt{n}\} 
&= \{2 \cos (\tfrac{\pi m}{p'})\,|\, m=2, 4,\dots, 2n-4\}
\nn\\[0.15cm]
&=\{(q^s+q^{-s})\,|\, s=2, 4,\dots, 2n-4\},
\end{alignat} 
where we used the equality of the even subsets given in \cref{lem:two.sets}. For the other cases, the sets contain either one or two elements and verifying the equality is straightforward.
\eproof

\subsection[Insertion states for $k>0$]{Insertion states for $\boldsymbol{k>0}$}
\label{sec:insertion.k>0}

In this section, we give a general construction of the insertion states for $k>0$ for all the models. We however start by giving some examples for the $A_n$ models with $n$ odd.

\paragraph{Examples of insertion states for $\boldsymbol{\repM_{A_n,\mu,R}}$ with $\boldsymbol{n}$ odd.} 

In this case, $N$ and $p'=n+1$ are even, and $\mathrm{gcd}(p',\mu)=1$ imposes that $\mu$ is odd, with $\kappa_\mu=1$. In $\repM_{A_n,\mu,R}(N)$, we have the identities $\Omega^N=R$, and $\Omega^{2N}=\id$. We introduce the projectors
\begin{equation}
\Pi_{\pm}(N) = \frac{1}{2N}\sum_{j=0}^{2N-1} (\mp \ir \, \Omega)^j \,,
\end{equation}
and use them to construct insertion states $w_{1,\pm\ir}$ associated to the submodule $\repW_{k,\pm \ir}(2k)$ of $\repM_{A_n,\mu,R}(2k)$ for $k=1,2,3$:
\begin{subequations}
\begin{alignat}{2}
&k=1: \quad 
w_{1,\pm \ir} &&= \Pi_{\pm}(2) \cdot \ket{\tfrac{n+1}2,\tfrac{n+3}2,\tfrac{n+1}2}
\\[0.1cm]\nn
& &&= \tfrac 14 \big(\,
\ket{\tfrac{n+1}2,\tfrac{n+3}2,\tfrac{n+1}2} 
\mp \ir \ket{\tfrac{n+3}2,\tfrac{n+1}2,\tfrac{n-1}2}
- \ket{\tfrac{n+1}2,\tfrac{n-1}2,\tfrac{n+1}2} 
\pm \ir \ket{\tfrac{n-1}2,\tfrac{n+1}2,\tfrac{n+3}2} \,\big) \,,
\\[0.25cm]
&k=2: \quad 
w_{2,\pm \ir} &&= \Pi_{\pm}(4) \cdot \big(\,\ket{\tfrac{n+1}2,\tfrac{n+3}2,\tfrac{n+1}2,\tfrac{n+3}2,\tfrac{n+1}2} - \ket{\tfrac{n+1}2,\tfrac{n+3}2,\tfrac{n+5}2,\tfrac{n+3}2,\tfrac{n+1}2}\,\big) \,,
\\[0.25cm]
&k=3: \quad 
w_{3,\pm \ir} &&= \Pi_{\pm}(6) \cdot \big(\,
\ket{\tfrac{n+1}2,\tfrac{n+3}2,\tfrac{n+1}2,\tfrac{n+3}2,\tfrac{n+1}2,\tfrac{n+3}2,\tfrac{n+1}2} 
-\ket{\tfrac{n+1}2,\tfrac{n+3}2,\tfrac{n+1}2,\tfrac{n+3}2,\tfrac{n+5}2,\tfrac{n+3}2,\tfrac{n+1}2} 
\nn\\&&&\hspace{1.8cm}
-\ket{\tfrac{n+1}2,\tfrac{n+3}2,\tfrac{n+5}2,\tfrac{n+3}2,\tfrac{n+1}2,\tfrac{n+3}2,\tfrac{n+1}2} +
\ket{\tfrac{n+1}2,\tfrac{n+3}2,\tfrac{n+5}2,\tfrac{n+3}2,\tfrac{n+5}2,\tfrac{n+3}2,\tfrac{n+1}2} 
\nn\\&&&\hspace{1.8cm}-
\ket{\tfrac{n+1}2,\tfrac{n+3}2,\tfrac{n+5}2,\tfrac{n+7}2,\tfrac{n+5}2,\tfrac{n+3}2,\tfrac{n+1}2}
-\tfrac13
\ket{\tfrac{n+1}2,\tfrac{n+3}2,\tfrac{n+1}2,\tfrac{n-1}2,\tfrac{n+1}2,\tfrac{n+3}2,\tfrac{n+1}2}
\,\big)
\,.
\end{alignat}
\end{subequations}
In all three cases, one can check that these states are nonzero and that they satisfy
\begin{equation}
c_0 \cdot w_{k,\pm \ir} =0 \,,
\qquad
\Omega \cdot w_{k,\pm \ir} = \pm \ir \, w_{k,\pm \ir} \,.
\end{equation}
The factor of $\frac13$ for the last state for $k=3$ is perhaps unexpected, and is not a typo.

\paragraph{General construction.} 

We now give a general construction of the insertion states $w_{k,x}$ for $k>0$, as 
\be
\label{u.def}
w_{k,x} = \wh P_{2k,x} \cdot w_k\,,
\ee
where $\wh P_{N,x}$ is a Jones--Wenzl projectors for the uncoiled algebras defined in \eqref{eq:hatP}, and $w_k$ is an insertion states for the boundary, discussed in \cref{sec:insertionTL}. The fact that we can use these projectors is non-trivial, since as observed previously the modules $\repM_{\g,\mu,K}(N)$ are not modules over the uncoiled algebras.\medskip

We first discuss the case $K=\id$, for which $N=2k$ is even and $\Omega^{2k} = \id$. As is clear from \eqref{eq:minimal.polys}, the identities $E=0$, and $E\,\Omega\, E = \alpha E$ for some $\alpha \in \mathbb C$, are in general not satisfied in $\repM_{\g,\mu,\id}(2k)$, implying that it is neither a module over $\eptl^{\tinyx 1}_{2k}(\beta,\alpha)$ nor $\eptl^{\tinyx 2}_{2k}(\beta,1)$. We note however that $\repM_{\g,\mu,\id}(2k)$ is a module over $\sum_{\nu=1}^n \eptl^{\tinyx 1}_{2k}(\beta,\beta_\nu)$. This sum of algebras is not a direct sum, as the summands differ only in the behaviour of the diagrams with zero bridges. Importantly, we see from \eqref{eq:M.decomp} that any factor $\repQ_{k,x}$ for which we wish to construct an insertion state has $x \notin\{+1,-1\}$. We also see from the coefficients $\Gamma_{s,\ell}$, given in \eqref{eq:Gammakl} and \eqref{eq:Gamma.n/2}, that the projector $\wh P_{2k,x}$ defined in \eqref{eq:hatP1} with $x^{2k}=1$ but $x \notin\{+1,-1\}$ is independent of $\alpha$. As a result, $\wh P_{2k,x}$ is a Jones--Wenzl projector over $\eptl^{\tinyx 1}_{2k}(\beta,\beta_\nu)$ for each index $\nu$, it is thus also a projector in $\sum_{\nu=1}^n \eptl^{\tinyx 1}_{2k}(\beta,\beta_\nu)$, and the state $w_{k,x} = \wh P_{2k,x} \cdot w_k$ satisfies the relations \eqref{eq:def.insertion.kx}. To show that it is an insertion state, one must prove that $w_{k,x} \neq 0$, which we discuss below and in \cref{app:insertion}.\medskip

Second, we discuss the case $K^2 = \id$. Because $K_{2k}^2 = \Omega^{4k} =(\kappa_\mu)^{2k} \id$, the eigenvalues of $K_{2k}$ are $\pm (\kappa_\mu)^{k}$, with the convention \eqref{eq:sqrt.kappa}. The projectors on the corresponding eigenspaces are 
\be
\Xi_0 = \tfrac12(\id + (\kappa_\mu)^{k}\,K_{2k}), \qquad 
\Xi_1 = \tfrac12(\id - (\kappa_\mu)^{k}\,K_{2k})\,. 
\ee
Because $K_{2k}$ commutes with the action of $\eptl_{2k}(\beta)$, we can use these projectors to write
\be
\repM_{\g,\mu,K}(2k) = \Xi_0\cdot\repM_{\g,\mu,K}(2k) \oplus \Xi_1\cdot\repM_{\g,\mu,K}(2k)\,.
\ee
For $2k$ odd, $\Xi_0\cdot\repM_{\g,\mu,K}(2k)$ and $\Xi_1\cdot\repM_{\g,\mu,K}(2k)$ are modules over the uncoiled algebras $\eptl_{2k}\big(\beta,(\kappa_\mu)^{1/2}\big)$ and $\eptl_{2k}\big(\beta,-(\kappa_\mu)^{1/2}\big)$, respectively. We construct the following states in these two submodules as 
\be
w^{\tinyx 0}_{k,x} = \wh P_{2k,x}\, \Xi_0 \cdot w_k\,, \qquad 
w^{\tinyx 1}_{k,x} = \wh P_{2k,x}\, \Xi_1 \cdot w_k\,,
\ee
respectively. For $2k$ even, $\Pi_0 \cdot \repM_{\g,\mu,K}(2k)$ is a module over $\sum_{\nu=1}^{\wt n}\eptl^{\tinyx 1}_{2k}\big(\beta,\wt\beta_{\nu}\big)$, and $\Pi_1 \cdot \repM_{\g,\mu,K}(2k)$ is a module over $\eptl_{2k}\big(\beta,-(\kappa_\mu)^{1/2})$. In the former case, all the factors $\repQ_{k,x}$ for which we wish to construct an insertion state have $x \notin\{+1,-1\}$, and using the same argument as above, we construct the states $w^{\tinyx 0}_{k,x} = \wh P_{2k,x}\, \Xi_0 \cdot w_k$ and $w^{\tinyx 1}_{k,x} = \wh P_{2k,x}\, \Xi_1 \cdot w_k$, respectively. To show that these are insertion states, we must show that they are nonzero.\medskip

In the special case $D_4$ with $K^3 = \id$, one can use the same arguments, with the projectors 
\be
\Pi_n = \tfrac13(\id + \omega^n\,K_{2k} + \omega^{2n}\, (K_{2k})^{2})\,, \qquad
\omega = \eE^{2 \pi \ir /3}\,, \qquad n = 0,1,2\,.
\ee

Thus in all cases, we construct insertion states $w_{k,x} \in \repM_{\g,\mu,K}(N)$ as in \eqref{u.def}, for some states $w_k$ specific to each model. We detail these choices for the different ADE models in \cref{app:insertAn,app:insertDn,app:insertE678}. In each case, $w_k$ is an insertion state for $\tl_N(\beta)$, described in \cref{sec:Mab.decomp}, so it satisfies $e_j \cdot w_k$ for $j=1,2,\dots, 2k-1$ and thus $P_N \cdot w_k = w_k$. Moreover, we choose $w_k$ so that it is also an eigenstate of $K_{2k}$, and thus also of the projectors $\Xi_n$.\medskip

The only remaining difficulty is to show that the states $w_{k,x}$ are nonzero. In fact, such states can be defined for all values of $x$ of the form \eqref{eq:x.values}, but only a select few of them are nonzero. These precisely correspond to the cases where $\repM_{\g,\mu,K}(2k)$ has a one-dimensional module isomorphic to $\repW_{N/2,x}(2k) \simeq \repQ_{N/2,x}(2k)$, in a way that is consistent with the decompositions \eqref{eq:M.decomp}. To show that $w_{k,x} \neq 0$, we use the bilinear form \eqref{eq:bilinear.form}. For $\kappa_\mu = 1$, we compute $\aver{w_k,w_{k,x}}$ and show that it is non zero. For the special case $\kappa_\mu = -1$, there are two states $w_k$ and $\bar w_k$ satisfying $K_{2k} \cdot w_k = \ir w_k$ and $K_{2k} \cdot \bar w_k = -\ir \bar w_k$. To show that $w_{k,x}$ is nonzero in this case, we instead compute $\aver{\bar w_k,w_{k,x}}$ and show that it is nonzero. Using the invariance property of the bilinear form, we obtain
\begin{alignat}{2}
\label{eq:(w,u)}
\big\langle w, Z_{s,\ell} \cdot w_{k}\big\rangle
&=\big\langle w, P_{2k}\,(c_0^\dag)^s\, \Omega^\ell\, (c_0)^s\,P_{2k} \cdot w_{k}\big\rangle
\nn\\[0.1cm]&
= \big\langle(c_0)^s P_{2k} \cdot w, \Omega^\ell\, (c_0)^s\,P_{2k} \cdot w_{k}\big\rangle
\nn\\[0.1cm]&
= \big\langle(c_0)^s \cdot w, \Omega^\ell\, (c_0)^s \cdot w_{k}\big\rangle\,,
\end{alignat}
with $w = w_k$ or $w = \bar w_k$. The right side is then easily computed. This therefore reduces the evaluation of $\aver{w_k,w_{k,x}}$ or $\aver{\bar w_k,w_{k,x}}$ to a weighted sum of constants $\Gamma_{s,\ell}$, which can then be evaluated explicitly and shown to be nonzero. This is discussed in further detail in \cref{app:insertion}.

\subsection{Torus partition functions}

Let $\g$ be a Lie algebra, $\mu$ be an index satisfying $\textrm{gcd}(m_\mu,p')=1$, and $K$ and $K'$ be two automorphisms of the graph $\mathcal G$, belonging to some finite group $G$. We define the lattice partitions function with doubly-twisted boundary conditions
\be
Z_{\g,\mu,K,K'}(M_1,M_2,N) = \tr_{\repM_{\g,\mu,K}(N)}\big(K'_N\Omega^{M_1} T^{M_2}\big)\,,
\ee
where $M_1 \in \mathbb Z$ and $M_2,N \in \mathbb Z_{\ge 0}$. These partition functions satisfy
\be
Z_{\g,LKL^{-1},LK'L^{-1}}(M_1,M_2,N) = Z_{\g,K,K'}(M_1,M_2,N)\,,
\ee
for all $K,K',L\in G$. Moveover, the identity $\Omega^N = K_N$ in $\repM_{\g,\mu,K}(N)$, implies that 
\be
\label{eq:T.lattice}
Z_{\g,\mu,K,K'}(M_1+N,M_2,N) = Z_{\g,\mu,K,K'K}(M_1,M_2,N)\,.
\ee

We now focus on the case where $K$ and $K'$ commute: $[K,K']=0$. \cref{prop:L.commutes} states that the operator $K'_N$ yields a family automorphism of $\repM_{\g,\mu,K}$.
Since $\repM_{\g,\mu,K}$ is a direct sum of irreducible subfamilies, $K'_N$ acts as a multiple of the identity in each summand $\repQ_{k,x}(N)$ of \eqref{eq:M.decomp} or \eqref{eq:M.decomp.twist}. Moreover, each of these summands is of the form $\cL(N,2k)\cdot w_{k,x}$ where $w_{k,x}$ is an insertion state. This state is an eigenstate of $K'_{2k}$. We denote its eigenvalue by $\kappa'_{k,x}$. For the insertion states with $k=0$, we have
\be
\kappa'_{0,x_\nu} = \kappa'_\nu \quad \text{in } \repM_{\g,\mu,\id}\,,
\qquad 
\kappa'_{0,\wt{x}_\nu} = (-1)^{\nu+1} \quad \text{in } \repM_{\g,\mu,K} \,, \quad \textrm{for} \quad K \neq \id \,,
\ee
where $\kappa'_\nu$ is the eigenvalue of $K'$ associated to the eigenvector of $A$ with components $S_{a\nu}$. For the insertion states with $k>0$, we compute the eigenvalues $\kappa'_{k,x}$ case by case using the explicit forms of the insertion states. In the special case where $K'=K^n$ for some $n \in \mathbb Z_{\ge 0}$, we have $K'_{2k} = \Omega^{2nk}$ and thus $\kappa'_{k,x} = x^{2nk}$.
\medskip

Using the decompositions \eqref{eq:M.decomp} and \eqref{eq:M.decomp.twist}, and the action of $K'$ on $\repM_{\g,\mu,K}$, we obtain the lattice partition function $Z_{\g,\mu,K,K'}(M_1,M_2,N)$ as a linear combination of the lattice characters $\chit_{\repQ_{k,x}}$:
\be
Z_{\g,\mu,K,K'}(M_1,M_2,N) = \sum_{k,x} \kappa'_{k,x}\, \chit_{\repQ_{k,x}}(M_1,M_2,N)\,.
\ee

The scaling limit of the partition function is defined as in \eqref{eq:chi(q)bulk}, namely
\be
Z_{\g,\mu,K,K'}(\qq) = \lim_{\substack{N \to \infty \\[0.05cm] M_1 = \lfloor N \tau_r \rfloor \\[0.05cm] M_2 =\lfloor N \tau_i \rfloor}} \eE^{M_2N f_{\textrm{bulk}}} Z_{\g,\mu,K,K'}(M_1,M_2,N)\,.
\ee
We then have
\be
\label{eq:Z.decomp.general}
Z_{\g,\mu,K,K'}(\qq) = \sum_{k,x} \kappa'_{k,x}\, \chit_{\repQ_{k,x}}(\qq)\,.
\ee
This can be written as a sesquilinear form in the characters $\chit_{r,s}(\qq)$, using \eqref{eq:charQK}. The details are given in \cref{app:Z}, and lead to the following theorems. We state separately the results for the model $D_4$ and all the other models, starting with the latter. 

\begin{Theorem}
The conformal partition functions for ADE lattice models on the torus with periodic boundary conditions are given by
\begingroup
\allowdisplaybreaks
\begin{subequations}
\begin{alignat}{2}
Z_{A_n,\mu,\id,\id}(\qq) &= \sum_{r=1}^{p-1}\sum_{s=1}^{p'-1}|\chiK{r,s}|^2\,, \\[0.1cm]
Z_{D_n,\mu,\id,\id}(\qq) &= \sum_{r=1}^{p-1} \sum_{s=1,3,\dots,p'-1} |\chiK{r,s}|^2
+ \sum_{r=1}^{p-1} \sum_{\substack{s=1\\[0.05cm]s\,\equiv\, n+1 \mmod 2}}^{p'-1} \chiK{r,s}\chiKb{r,p'-s}\,, \\[0.1cm]
Z_{E_6,\mu,\id,\id}(\qq) &= \sum_{r=1}^{p-1}\Big(|\chiK{r,1}+\chiK{r,7}|^2
+|\chiK{r,4}+\chiK{r,8}|^2+|\chiK{r,5}+\chiK{r,11}|^2\Big)\,, \\[0.1cm]
Z_{E_7,\mu,\id,\id}(\qq) &= \sum_{r=1}^{p-1}\Big(|\chiK{r,1}+\chiK{r,17}|^2
-|\chiK{r,3}+\chiK{r,15}|^2+|\chiK{r,5}+\chiK{r,13}|^2+|\chiK{r,7}+\chiK{r,11}|^2 \nn\\[0.1cm]
&\hspace{1cm} +|\chiK{r,3}+\chiK{r,9}+\chiK{r,15}|^2\Big)\,,\\[0.1cm]
Z_{E_8,\mu,\id,\id}(\qq) &= \sum_{r=1}^{p-1}\Big(|\chiK{r,1}+\chiK{r,11}+\chiK{r,19}+\chiK{r,29}|^2
+|\chiK{r,7}+\chiK{r,13}+\chiK{r,17}+\chiK{r,23}|^2\Big)\,.
\end{alignat}
\end{subequations}
The conformal partition functions for ADE lattice models on the torus with twisted periodic boundary conditions are given by
\begin{subequations}
\begin{alignat}{2}
Z_{A_n,\mu,\id,R}(\qq) &= (-1)^{pp'}\sum_{r=1}^{p-1}\sum_{s=1}^{p'-1}(-1)^{\lambda_{r,s}}|\chit_{r,s}|^2\,,\\[0.1cm]
Z_{A_n,\mu,R,\id}(\qq) &= \sum_{r=1}^{p-1}\sum_{s=1}^{p'-1}\chit_{r,s}\chib_{r,p'-s}\,,\\[0.1cm]
Z_{A_n,\mu,R,R}(\qq) &= \ir^{pp'}\sum_{r=1}^{p-1}\sum_{s=1}^{p'-1}(-1)^{\lambda_{r,s}+1}\chit_{r,s}\chib_{r,p'-s}\,,\\[0.1cm]
Z_{D_n,\mu,\id,P_{(n-1,n)}}(\qq) &= \sum_{r=1}^{p-1} \sum_{s=1,3,\dots,p'-1} |\chit_{r,s}|^2 - \sum_{r=1}^{p-1} \sum_{\substack{s=1\\[0.05cm]s\,\equiv\, n+1 \mmod 2}}^{p'-1} \chit_{r,s}\chib_{r,p'-s}\,,\\[0.1cm]
Z_{D_n,\mu,P_{(n-1,n),\id}}(\qq) &= \sum_{r=1}^{p-1} \sum_{s=2,4,\dots,p'-2} |\chit_{r,s}|^2 + \sum_{r=1}^{p-1} \sum_{\substack{s=1\\[0.05cm]s\,\equiv\, n \mmod 2}}^{p'-1} \chit_{r,s}\chib_{r,p'-s}\,,\\[0.1cm]
Z_{D_n,\mu,P_{(n-1,n)},P_{(n-1,n)}}(\qq) &= \sum_{r=1}^{p-1} \sum_{s=2,4,\dots,p'-2} |\chit_{r,s}|^2 - \sum_{r=1}^{p-1} \sum_{\substack{s=1\\[0.05cm]s\,\equiv\, n \mmod 2}}^{p'-1} \chit_{r,s}\chib_{r,p'-s}\,,\\[0.1cm]
Z_{E_6,\mu,\id,P_{(15),(24)}}(\qq) &= \sum_{r=1}^{p-1}\Big(|\chit_{r,1}+\chit_{r,7}|^2-|\chit_{r,4}+\chit_{r,8}|^2+|\chit_{r,5}+\chit_{r,11}|^2\Big)\,,\\[0.1cm]
Z_{E_6,\mu,P_{(15),(24)},\id}(\qq) &= \sum_{r=1}^{p-1}\Big(
-|\chit_{r,1}+\chit_{r,7}|^2
+|\chit_{r,4}+\chit_{r,8}|^2
-|\chit_{r,5}+\chit_{r,11}|^2
\nn\\[0.1cm]
&\hspace{1.3cm}
+ |\chit_{r,1}+\chit_{r,5}+\chit_{r,7}+\chit_{r,11}|^2
\Big)\,,\\[0.1cm]
Z_{E_6,\mu,P_{(15),(24)},P_{(15),(24)}}(\qq) &=\sum_{r=1}^{p-1} \Big(
|\chit_{r,1}+\chit_{r,7}|^2
+|\chit_{r,4}+\chit_{r,8}|^2
+|\chit_{r,5}+\chit_{r,11}|^2
\nn\\[0.1cm]
&\hspace{1.3cm}
- |\chit_{r,1}+\chit_{r,5}+\chit_{r,7}+\chit_{r,11}|^2
\Big)\,,
\end{alignat}
\end{subequations}
where 
\be
\label{eq:lambda.rs}
\lambda_{r,s} = p'(r+1)+p(s+1)\,.
\ee
\endgroup
\end{Theorem}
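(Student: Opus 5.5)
The starting point is the general formula \eqref{eq:Z.decomp.general}, $Z_{\g,\mu,K,K'}(\qq)=\sum_{k,x}\kappa'_{k,x}\,\chit_{\repQ_{k,x}}(\qq)$. The sum runs over the factors listed in \cref{thm:M.decomp}. Into it I will substitute the closed form \eqref{eq:charQK}, $\chit_{\repQ_{k,\eps q^s}}=\eps^\eta\sum_{r=1}^{p-1}(-1)^{\eta r}\chit_{r,s+k}\chib_{r,s-k}$. Any factor written with $\sigma=-1$ is first rewritten as $\repQ_{k,\eps(-1)^p q^{p'-s}}$, so that $(k,s)$ lies in the range \eqref{eq:xks} where \eqref{eq:charQK} holds.

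Next I have to control the signs $\eps^\eta(-1)^{\eta r}$, which depend on the parities of $M_1,M_2$. For $K=\id$ all factors have $x^{2k}=1$, so the model is invariant under $\Omega^N$. I will check factor by factor that the $\eta$-dependence cancels, using $q^{p'}=(-1)^p$ and the relations \eqref{eq:prop-h_{rs}} $\chit_{r,s}=\chit_{p-r,p'-s}$. For the $k=0$ factors I will use \cref{prop:permut.nu}, which expresses the exponents as $\eps_s(q^s+q^{-s})$; in the $A_n$ case the sign $(-1)^{\mu s}$ combines with $(-1)^{\eta r}$ under the symmetry $r\to p-r$, $s\to p'-s$. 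The eigenvalues $\kappa'_{k,x}$ come from three sources: $\kappa'_{0,x_\nu}=\kappa'_\nu$ for $k=0$, $x^{2nk}$ when $K'=K^n$, and the explicit insertion states of \cref{sec:insertion.k>0} in the remaining cases.

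The bulk of the work is model-by-model bookkeeping, and I will go through the models in order. In $A_n$ only diagonal terms $|\chit_{r,s}|^2$ appear. In $D_n$, the terms $\repQ_{2t,\pm q^{n-1}}$ give off-diagonal products $\chit_{r,n-1+2t}\chib_{r,n-1-2t}$, which I reindex as $\chit_{r,s}\chib_{r,p'-s}$ using $\chit_{r,s}=\chit_{p-r,p'-s}$ and the sum over $\eps$. For $E_{6,7,8}$, I collect the diagonal and off-diagonal products into the perfect squares $|\chit_{r,1}+\chit_{r,7}|^2$ and so on. The twisted cases reuse the same machinery with \eqref{eq:M.decomp.twist} and \cref{prop:permut.nu.twist}. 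The $\eps=\pm1$ pair in each $k>0$ factor produces a factor $\eps^\eta$; this either sums to $2$, or its sign is absorbed into $(-1)^{\lambda_{r,s}}$ in $A_n$. The relation \eqref{eq:T.lattice} lets me relate $Z_{K,K'}$ to $Z_{K,K'K}$ as a consistency check.

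I expect the main obstacle to be the sign bookkeeping. Specifically, it is showing that the parity-dependent factors $\eps^\eta(-1)^{\eta r}$ cancel, or that they combine into the global prefactors $(-1)^{pp'}$, $\ir^{pp'}$ and $(-1)^{\lambda_{r,s}}$ in the twisted $A_n$ formulas. It also requires determining $\kappa'_{k,x}$ correctly when $K'\neq K^n$, for instance $K'=P_{(n-1,n)}$ acting on $\repQ_{2t,\pm q^{n-1}}$ in $\repM_{D_n,\mu,\id}$. For the latter I would first compute $K'_{2k}$ on the explicit $w_{k,x}=\wh P_{2k,x}w_k$, using the $K$-eigenstates $w^\pm_k$ from \cref{sec:insertionTL}. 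For the former I would fix $\eta$ and verify each $r$-term separately under $r\to p-r$.
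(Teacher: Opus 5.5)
Your route is the paper's own (\cref{app:Z}): insert \eqref{eq:charQK} into \eqref{eq:Z.decomp.general}, read off $\kappa'_{k,x}$, and regroup using $\chit_{r,s}=\chit_{p-r,p'-s}$. The step that would fail is your treatment of $\eta$. You propose to show that the factors $\eps^\eta(-1)^{\eta r}$ cancel factor by factor, i.e.\ that the result does not depend on the parities of $M_1,M_2$. That is false. Take $A_n$ with $K=K'=\id$ and $\eta$ odd. The factor $\repQ_{0,(-1)^{\mu s}q^s}$ then contributes $(-1)^{\mu s}\sum_r(-1)^r|\chit_{r,s}|^2$, not $\sum_r|\chit_{r,s}|^2$. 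Summed over $s$, the involution $(r,s)\mapsto(p-r,p'-s)$ has no fixed point because $\gcd(p,p')=1$, and it pairs terms of opposite sign, so the total is $0$. Likewise, for $\eta$ odd, an $\eps=\pm1$ pair with $\eps$-independent $\kappa'$ sums to $0$, not $2$.

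The missing idea is that $\eta$ is not free: it is fixed by bipartiteness. Each of $\Omega$ and $\Tb$ flips the colour of the first height $a_0$ of a basis state. So does $K'_N$ when $K'$ exchanges the two colours. Hence $\tr\big(K'_N\Omega^{M_1}\Tb^{M_2}\big)$ vanishes identically unless $M_1+M_2$ has the appropriate parity. This forces $\eta$ even, effectively $\eta=0$, in every case except $K'=R$ on $A_n$ with $n$ even ($p'$ odd), where $\eta=1$. This is the choice the paper makes. The prefactors $(-1)^{pp'}(-1)^{\lambda_{r,s}}$ and $\ir^{pp'}(-1)^{\lambda_{r,s}+1}$ simply package the two branches $p'$ even and $p'$ odd. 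In the $(R,R)$ case with $\eta=1$, the factor $\eps^\eta$ is compensated by $\kappa'_{k,\eps q^{p'/2}}=\eps\,\ir^{-2pk}$, so the $\eps$-sum still gives $2$.

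Without this rule your plan cannot select the parity. Using $\eta=0$ uniformly would give $0$ for $Z_{A_n,\mu,\id,R}$ and $Z_{A_n,\mu,R,R}$ when $p'$ is odd; for instance, $(-1)^{s+1}|\chit_{r,s}|^2$ and $(-1)^{p'-s+1}|\chit_{p-r,p'-s}|^2$ cancel. Once $\eta$ is fixed this way, the rest of your bookkeeping goes through as in the paper.
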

In all cases, we have 
\be
Z_{\g,\mu,\id,\id}(\eE^{2 \pi \ir (\tau+1)}) = Z_{\g,\mu,\id,\id}(\eE^{2 \pi \ir \tau})\,, \qquad
Z_{\g,\mu,\id,\id}(\eE^{-2 \pi \ir /\tau}) = Z_{\g,\mu,\id,\id}(\eE^{2 \pi \ir \tau})\,,
\ee
so that $Z_{\g,\mu,\id,\id}(\eE^{2 \pi \ir \tau})$ is modular invariant. In the cases where there is a non-trivial automorphism~$K$, we have
\begin{subequations}
\begin{alignat}{3}
Z_{\g,\mu,\id,K}(\eE^{2 \pi \ir (\tau+1)}) &= Z_{\g,\mu,\id,K}(\eE^{2 \pi \ir \tau})\,,\qquad
&&Z_{\g,\mu,\id,K}(\eE^{-2 \pi \ir /\tau}) &&= Z_{\g,\mu,K,\id}(\eE^{2 \pi \ir \tau})\,,
\\[0.1cm]
Z_{\g,\mu,K,\id}(\eE^{2 \pi \ir (\tau+1)}) &= Z_{\g,\mu,K,K}(\eE^{2 \pi \ir \tau})\,,\qquad
&&Z_{\g,\mu,K,\id}(\eE^{-2 \pi \ir /\tau}) &&= Z_{\g,\mu,\id,K}(\eE^{2 \pi \ir \tau})\,,
\\[0.1cm]
Z_{\g,\mu,K,K}(\eE^{2 \pi \ir (\tau+1)}) &= \kappa_\mu\, Z_{\g,\mu,K,\id}(\eE^{2 \pi \ir \tau})\,,\qquad
&&Z_{\g,\mu,K,K}(\eE^{-2 \pi \ir /\tau}) &&= \kappa_\mu\, Z_{\g,\mu,K,K}(\eE^{2 \pi \ir \tau})\,.
\end{alignat}
\end{subequations}
These identities are easily verified using 
\be
\chit_{r,s}(\eE^{2 \pi \ir (\tau+1)}) = \eE^{2 \pi \ir h_{r,s}}\chit_{r,s}(\eE^{2 \pi \ir \tau})\,,
\qquad
\chit_{r,s}(\eE^{-2 \pi \ir/\tau}) = \frac12\sum_{r'=1}^{p-1}\sum_{s'=1}^{p'-1} S_{(r,s),(r',s')} \chit_{r',s'}(\eE^{2 \pi \ir\tau})\,,
\ee
where the modular $S$-matrix elements are \cite{dFMS97}
\begin{equation}
S_{(r,s),(r',s')} = \sqrt{\frac{8}{pp'}} (-1)^{rs+r's'+1}
\sin\bigg( \frac{\pi p' r r'}{p} \bigg)\sin\bigg( \frac{\pi p s s'}{p'} \bigg) \,,
\end{equation}
and we have the identities
\be
S_{(r,s),(r',p'-s')} = (-1)^{p'r'+ps'+1} \, S_{(r,s),(r',s')}\,, 
\qquad 
\eE^{2\ir\pi(h_{r,s}-h_{r,p'-s})}=\ir^{pp'} (-1)^{\lambda_{r,s}+1}\,.
\ee
The factor of $\frac12$ in the formula for $\chit_{r,s}(\eE^{-2 \pi \ir/\tau})$ accounts for the fact that each character appears twice in the double sum, because $h_{r,s} = h_{p-r,p'-s}$. The three partition functions $Z_{\g,\mu,\id,K}(\qq)$, $Z_{\g,\mu,K,\id}(\qq)$ and $Z_{\g,\mu,K,K}(\qq)$ yield a three-dimensional representation of the modular group, with
\be
\mathsf{S}=
\begin{pmatrix} 
0&1&0\\
1&0&0\\
0&0&\kappa_\mu
\end{pmatrix},
\qquad 
\mathsf{T}=\begin{pmatrix}
1&0&0\\
0&0&\kappa_\mu\\
0&1&0
\end{pmatrix},
\ee
satisfying
\be
\mathsf{S}^2=(\mathsf{S}\mathsf{T})^3=\id
\qquad
\textrm{and}
\quad
\left\{\begin{array}{ll}
\mathsf{T}^2 = \id & \kappa_\mu = 1\,, \\[0.1cm]
\mathsf{T}^4 = \id & \kappa_\mu = -1\,.
\end{array}\right.
\ee

Let us justify the presence of the signs $\kappa_\mu$ arising in these modular transformations. First, for the $T$-transform, this sign is explained by recalling that the $T$-transform on finite lattices is given in~\eqref{eq:T.lattice}, and that applying it twice yields a sign $\kappa_\mu$ because of the relation $(K_N)^{2N} = (\kappa_\mu)^N \id$. For the $S$-transform, we note that for arbitrary automorphisms $K$ and $K'$, it maps
\be
Z_{A_n,\mu,K,K'}(\qq)
\mapsto
Z_{A_n,\mu,K',K^{-1}}(\qq)\,.
\ee 
For $K = K' = R$, we remark that, even though $R$ is its own inverse, the choice $(\kappa_\mu)^{1/2} = \ir$ for $\kappa_\mu = -1$ is not invariant under this transformation, so in changing $R\to R^{-1}$, one must also replace $\ir$ by $-\ir$, which in the end amounts to multiplying the partition function by $\kappa_\mu = -1$.\medskip

We now discuss the case $\g = D_4$, for which the only relevant index is $\mu = 1$ corresponding to $p = 5$, and the automorphisms are in $G = S_3$. A full set of partition functions $Z_{D_4,1,K,K'}$ is obtained by considering the commuting pairs $(K,K')$ in $S_3 \otimes S_3$ modulo the equivalence classes under the equivalence relation $(K,K') \equiv (LKL^{-1},LK'L^{-1})$. This yields a set of eight inequivalent partition functions, with
\begin{alignat}{2}
(K,K') \in \big\{&
(\id,\id),(\id,P_{(13)}),(\id,P_{(134)}),(P_{(13)},\id),(P_{(13)},P_{(13)}),
\nn\\[0.1cm]
&(P_{(134)},\id),(P_{(134)},P_{(134)}),(P_{(134)},P_{(143)})\big\}\,.
\end{alignat}
This construction is the well-known modular group action on the space of commuting pairs $(g,h)$ in $S_3 \otimes S_3$ modulo simultaneous conjugation, which is associated to Drinfeld's quantum double \cite{D88} for the group $S_3$.

\begin{Theorem}
We have
\begingroup
\allowdisplaybreaks
\begin{subequations}
\begin{alignat}{2}
Z_{D_4,1,\id,\id}(\qq) &= 
\sum_{r=1}^{4} \big(|\chit_{r,1}+\chit_{r,5}|^2+2|\chit_{r,3}|^2\big)\,, \\[0.1cm]
Z_{D_4,1,\id,P_{(13)}}(\qq) &= \sum_{r=1}^{4} |\chit_{r,1}-\chit_{r,5}|^2\,, \\[0.1cm]
Z_{D_4,1,\id,P_{(134)}}(\qq) &= \sum_{r=1}^{4} \big(|\chit_{r,1}+\chit_{r,5}|^2-|\chit_{r,3}|^2\big)\\[0.1cm]
Z_{D_4,1,P_{(13)},\id}(\qq) &= \sum_{r=1}^{4} |\chit_{r,2}+\chit_{r,4}|^2\,, \\[0.1cm]
Z_{D_4,1,P_{(13)},P_{(13)}}(\qq) &= \sum_{r=1}^{4} |\chit_{r,2}-\chit_{r,4}|^2\,,\\[0.1cm]
Z_{D_4,1,P_{(134)},\id}(\qq) &= \sum_{r=1}^{4} \big(|\chit_{r,1}+\chit_{r,3}|^2+|\chit_{r,3}+\chit_{r,5}|^2-|\chit_{r,1}|^2-|\chit_{r,3}|^2-|\chit_{r,5}|^2\big)\,,\\[0.1cm]
Z_{D_4,1,P_{(134)},P_{(134)}}(\qq) &= \sum_{r=1}^{4} \big(|\chit_{r,1}+\omega^{-1}\chit_{r,3}|^2+|\chit_{r,3}+\omega\,\chit_{r,5}|^2-|\chit_{r,1}|^2-|\chit_{r,3}|^2-|\chit_{r,5}|^2\big)\,, \\[0.1cm]
Z_{D_4,1,P_{(134)},P_{(143)}}(\qq) &= \sum_{r=1}^{4} \big(|\chit_{r,1}+\omega\,\chit_{r,3}|^2+|\chit_{r,3}+\omega^{-1}\chit_{r,5}|^2-|\chit_{r,1}|^2-|\chit_{r,3}|^2-|\chit_{r,5}|^2\big)\,, 
\end{alignat}
\end{subequations}
\endgroup
where $\omega = \eE^{2 \pi \ir/3}$.
\end{Theorem}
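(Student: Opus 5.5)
The proof follows the same route as the previous theorem, but with the full $S_3$ symmetry of $D_4$ taken into account. First I fix the decompositions of the three inequivalent twisted families. For $K=\id$, \eqref{eq:M.decomp} with $n=4$ and $p'=6$ gives
\begin{equation*}
\repM_{D_4,1,\id}\simeq \repQ_{0,q}\oplus\repQ_{0,q^3}\oplus\repQ_{0,q^3}\oplus\repQ_{0,q^5}\oplus\bigoplus_{\eps=\pm1}\repQ_{2,\eps q^{3}}.
\end{equation*}
For $K=P_{(13)}$, which is conjugate to $P_{(34)}=P_{(n-1,n)}$, the family $\repM_{D_4,1,P_{(13)}}$ is isomorphic to $\repM_{D_4,1,P_{(34)}}$ through the operators $L_N$. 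It therefore decomposes as in \eqref{eq:M.decomp.twist}, namely into $\repQ_{0,q^2}\oplus\repQ_{0,q^4}\oplus\bigoplus_{\eps}\repQ_{1,\eps q^3}$. For $K=P_{(134)}$ I use the decomposition in \eqref{eq:M.decomp.twist} directly.

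Second, for each commuting pair $(K,K')$ in the list, I compute the eigenvalues $\kappa'_{k,x}$ of $K'_{2k}$ on the insertion states $w_{k,x}$ and then apply \eqref{eq:Z.decomp.general}.

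\begin{itemize}
\item For $k=0$ and $K=\id$, the insertion states are the eigenvectors $S_{a\nu}$. The exponent $3$ has multiplicity two, so I choose its basis inside the two-dimensional space on which $S_3$ acts by the standard representation. This gives trace $0$ for a transposition and trace $-1$ for a $3$-cycle on that pair. The nondegenerate vectors, with exponents $1$ and $5$, transform trivially under $S_3$ for $m=1$ (Perron--Frobenius). For $m=5$ the vector is the sign-alternating one on the bipartite graph; it is also $S_3$-invariant, because the automorphisms preserve the colouring.
\item For $k=2$ and $K=\id$, I use the explicit $w_{2,\pm q^3}=\wh P_{4,x}\cdot w_2$ from \cref{app:insertDn}. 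I expect $P_{(13)}$ to act with eigenvalue $\mp1$ correlated with $\eps$, consistent with the $D_n$ computation $Z_{D_n,\mu,\id,P_{(n-1,n)}}$, and $P_{(134)}$ to act trivially.
\item For $K=P_{(134)}$ and $K'=K^{\pm1}$, I use the shortcut $\kappa'_{k,x}=x^{\pm 2k}$. This gives factors $\omega^{\pm1}$ on $\repQ_{1,\eps q^{\pm2}}$, since $(\eps q^{2\sigma})^{2}=q^{4\sigma}$ and $q^{4}=\eE^{-4\ir\pi p/6}$ is a primitive cube root of unity when $p=5$.
\item For $K=P_{(13)}$ and $K'=K$, I use $\kappa'_{k,x}=x^{2k}$ in the same way.
\end{itemize}

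Third, I convert the sums of $\chit_{\repQ_{k,x}}$ into sesquilinear forms using \eqref{eq:charQK} with $p=5$ and $p'=6$. I simplify with $\chit_{r,s}=\chit_{p-r,p'-s}$, summing over $r=1,\dots,4$. Each pair of $\eps=\pm1$ terms combines through the factor $\eps^\eta(-1)^{\eta r}$ and yields the off-diagonal products. For example, $\repQ_{2,\pm q^3}$ gives $\chit_{r,5}\chib_{r,1}+\chit_{r,1}\chib_{r,5}$, and these complete the squares $|\chit_{r,1}\pm\chit_{r,5}|^2$.

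The main obstacle is the pair $(P_{(134)},\cdot)$. Here $\repQ_{1,\eps q^{2\sigma}}$ produces $\chit_{r,2\sigma+1}\chib_{r,2\sigma-1}$-type cross terms, namely $\chit_{r,3}\chib_{r,1}$ and $\chit_{r,5}\chib_{r,3}$ together with their conjugates, after folding $s\to p'-s$. These must be reorganised into the form $|\chit_{r,1}+\omega^{\mp1}\chit_{r,3}|^2+\dots-|\chit_{r,1}|^2-\cdots$. Carrying the $\eps$-sign bookkeeping through the folding is the most delicate step, and I would check the result against the modular $S$-transformation relating $Z_{D_4,1,P_{(134)},\id}$ and $Z_{D_4,1,\id,P_{(134)}}$.
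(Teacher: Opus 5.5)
Your route is the paper's: decompose, find the eigenvalues $\kappa'_{k,x}$ of $K'_{2k}$ on the insertion states, then apply \eqref{eq:charQK} with folding. Your $k=0$ analysis is correct. However, two of your eigenvalue assignments are wrong, and each changes the final formula.

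\textbf{The $P_{(13)}$ eigenvalue on $w_{2,\pm q^3}$.} $P_{(13)}$ does not act on $w_{2,\pm q^3}$ by $\mp1$; it acts by $-1$ on both.

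\emph{Why $-1$ for both $\eps$.} Each insertion space with parameters $(2,\eps q^3)$ in $\repM_{D_4,1,\id}$ is one-dimensional and stable under every $L_4$, $L\in S_3$. So $S_3$ acts on it through a one-dimensional character. The state $w_2=\tfrac1{\sqrt2}(\ket{1,2,3,2,1}-\ket{1,2,4,2,1})$ is odd under $P_{(34)}$, and $\wh P_{4,x}$ commutes with $L_4$. Hence this character is the sign character for both values of $\eps$: transpositions give $-1$ and $3$-cycles give $+1$. The latter agrees with your claim about $P_{(134)}$. This is also the assignment the paper uses in the $D_n$ computation you cite.

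\emph{Trace check at $N=4$.} Only $\ket{2,4,2,4,2}$ and $\ket{4,2,4,2,4}$ are fixed by $P_{(13)}$. The relevant dimensions are $\dim\repQ_{0,q}(4)=\dim\repQ_{0,q^5}(4)=2$, $\dim\repQ_{0,q^3}(4)=6$ and $\dim\repQ_{2,\pm q^3}(4)=1$. The $k=0$ part therefore already contributes $2+0\cdot 6+2=4$, which forces $\kappa_{2,q^3}+\kappa_{2,-q^3}=-2$.

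\emph{Why it matters.} With $\eta=0$, both $\repQ_{2,\pm q^3}$ have the same character $\sum_r\chit_{r,5}\chib_{r,1}$. With your signs the cross terms cancel, and you would get $\sum_r(|\chit_{r,1}|^2+|\chit_{r,5}|^2)$ instead of $\sum_r|\chit_{r,1}-\chit_{r,5}|^2$.

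\textbf{The orientation for $K'=P_{(134)}$.} Your shortcut gives $\kappa'_{1,\eps q^{2}}=q^4=\omega$ and $\kappa'_{1,\eps q^{-2}}=\omega^{-1}$. This is the correct lattice eigenvalue; you can verify it directly on the explicit state $w_{1,\eps q^{2\sigma}}$.

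\emph{What goes wrong.} \eqref{eq:charQK} sends $\repQ_{1,\eps q^{2}}$ to $\chit_{r,3}\chib_{r,1}$ and $\repQ_{1,\eps q^{-2}}=\repQ_{1,-\eps q^{4}}$ to $\chit_{r,5}\chib_{r,3}$. You would therefore obtain $\sum_r\big(|\chit_{r,3}|^2+2\omega\,\chit_{r,3}\chib_{r,1}+2\omega^{-1}\chit_{r,5}\chib_{r,3}\big)$. That is exactly the expression the statement gives for $(P_{(134)},P_{(143)})$, so your last two formulas come out interchanged.

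\emph{What the paper does.} The paper's proof uses $\kappa_{1,\eps q^{2}}=\omega^{-1}$ and $\kappa_{1,\eps q^{-2}}=\omega$ for $K'=P_{(134)}$. The stated formulas are the ones forced by the exact identity \eqref{eq:T.lattice}: $Z_{D_4,1,P_{(134)},P_{(134)}}$ equals $Z_{D_4,1,P_{(134)},\id}$ evaluated at $\tau+1$, and $\eE^{2\pi\ir(h_{r,3}-h_{r,1})}=\omega^{-1}$.

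\emph{The real delicate point.} The issue is the orientation relating $\Omega^N=x^{2k}$ to $\eE^{\pm2\pi\ir(L_0-\bar L_0)}$ in \eqref{eq:charQK}. It is not $\eps$-sign bookkeeping, since with $\eta=0$ the $\eps$ signs are trivial.

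\emph{A robust fix.} Compute $Z_{D_4,1,P_{(134)},\id}$, which after folding is invariant under $\chit\leftrightarrow\chib$. Then obtain the two $\omega$-dependent functions from it by $\tau\to\tau+1$ and $\tau\to\tau+2$. The $S$-check you propose cannot detect the swap, because $S$ exchanges $(P_{(134)},P_{(134)})$ and $(P_{(134)},P_{(143)})$.
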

The details are given in \cref{app:Z.D4}. These partition functions satisfy
\be
Z_{\g,1,K,K'}(\eE^{2 \pi \ir (\tau+1)}) = Z_{\g,1,K,K'K}(\eE^{2 \pi \ir \tau})\,, \qquad
Z_{\g,1,K,K'}(\eE^{-2 \pi \ir/\tau}) = Z_{\g,1,K',K^{-1}}(\eE^{2 \pi \ir \tau})\,,
\ee
and realise a representation of the modular group of dimension $8$, with
\be
\mathsf{S}=
\left(\begin{array}{ccccccccccc} 
1 & 0 & 0 & 0 & 0 & 0 & 0 & 0\\
0 & 0 & 0 & 1 & 0 & 0 & 0 & 0 \\
0 & 0 & 0 & 0 & 0 & 1 & 0 & 0 \\
0 & 1 & 0 & 0 & 0 & 0 & 0 & 0 \\
0 & 0 & 0 & 0 & 1 & 0 & 0 & 0 \\
0 & 0 & 1 & 0 & 0 & 0 & 0 & 0 \\
0 & 0 & 0 & 0 & 0 & 0 & 0 & 1 \\
0 & 0 & 0 & 0 & 0 & 0 & 1 & 0 
\end{array}\right),
\quad 
\mathsf{T}=
\left(\begin{array}{ccccccccccc} 
 1 & 0 & 0 & 0 & 0 & 0 & 0 & 0 \\
 0 & 1 & 0 & 0 & 0 & 0 & 0 & 0 \\
 0 & 0 & 1 & 0 & 0 & 0 & 0 & 0 \\
 0 & 0 & 0 & 0 & 1 & 0 & 0 & 0 \\
 0 & 0 & 0 & 1 & 0 & 0 & 0 & 0 \\
 0 & 0 & 0 & 0 & 0 & 0 & 1 & 0 \\
 0 & 0 & 0 & 0 & 0 & 0 & 0 & 1 \\
 0 & 0 & 0 & 0 & 0 & 1 & 0 & 0
\end{array}\right),
\ee
satisfying $\mathsf{S}^2=(\mathsf{S}\mathsf{T})^3=\id$ as well as $\mathsf{T}^6 = \id$.\medskip

Lastly, we note that it is possible to define partition functions $Z_{\g,\mu,K,K'}$ for non-commuting automorphisms $K$ and $K'$. For the model $D_4$, this allows for three more conjugacy classes of partition functions, represented by $(K,K') \in \{(P_{(13)},P_{(14)}), (P_{(13)},P_{(134)})(P_{(134)},P_{(13)})\}$. Because $K'_N$ does not commute with the action of $\eptl_N(\beta)$ in $\repM_{\g,\mu,K}(N)$, the decomposition of this module as a direct sum of irreducible $\eptl_N(\beta)$-modules does not allow us to compute these partition functions from the characters of the standard modules. Nonetheless, these partition functions are expected to realise a three-dimensional representation of the modular group, with
\be
\mathsf{S}=
\begin{pmatrix} 
1&0&0\\
0&0&1\\
0&1&0
\end{pmatrix},
\qquad 
\mathsf{T}=\begin{pmatrix}
0&1&0\\
1&0&0\\
0&0&1
\end{pmatrix}.
\ee

%
\section{Local operators and linear difference equations}
\label{sec:correlators}
%

In this section, we define connectivity operators and local operators, and obtain linear difference equations satisfied by these local operators.

\subsection{Bulk connectivity operators}
\label{sec:connectivity.ops}

Let $k\in\frac12\Zbb_{\geq 0}$, $x\in\Cbb^\times$, and $\repM$ and $\repM'$ be two families of modules over $\eptl_N(\beta)$. We define a \emph{connectivity operator $\Op$ of type $(k,x)$ from $\repM$ to $\repM'$} to be a collection $\Op_{N}$ of linear maps from $\repM(N)$ to $\repM'(N+2k)$, for each admissible integer $N$ in $\repM$, satisfying the relations
\begin{subequations} \label{eq:def.Oj}
\begin{alignat}{2}
c_{N+2k,j}\,\Op_{N} &= \left\{\begin{array}{cl}
0 & j=1,2,\dots,2k-1\,, \\[0.1cm]
\Op_{N-2} \,c_{N,j-2k} & j = 2k+1,k+2,\dots,N+2k-1\,, \quad N \ge 2,
\end{array}\right.\label{eq:def.Oj.a}
\\[0.2cm] 
c^\dag_{N+2k+2,j}\,\Op_N &= \left\{\begin{array}{cl}
\Op_{N+2,1} \,c^\dag_{N+2,0} & j=0\,, \\[0.1cm]
\Op_{N+2,2} \,c^\dag_{N+2,1} & j=1\,, \\[0.1cm]
\Op_{N+2} \,c^\dag_{N+2,j-2k} & j=2k+1,2k+2, \dots, N+2k+1\,,
\end{array}\right.\label{eq:def.Oj.b} 
\\[0.2cm]
c_{N+2k,0}\, \Op_{N}\, c^\dag_{N,0} &= \left\{\begin{array}{cl}
(x+x^{-1})\, \Op_{N-2} & k=0\,, \\[0.1cm]
x\, \Op_{N-2} & k>0 \,,
\end{array}\right. \quad N \ge 2,\label{eq:def.Oj.c}
\end{alignat}
\end{subequations}
where
\be
\Op_{N,j} = \Omega^{-j} \,\Op_N\, \Omega^j\,.
\ee
The operator $\Op_N$ is represented by the diagram 
\be
k=0: \qquad
\Op_N = \
\begin{pspicture}[shift=-1.3](-1.2,-1.4)(1.4,1.4)
\psarc[linecolor=black,linewidth=0.5pt,fillstyle=solid,fillcolor=lightlightblue]{-}(0,0){1.2}{0}{360}
\psarc[linecolor=black,linewidth=0.5pt,fillstyle=solid,fillcolor=white]{-}(0,0){0.45}{0}{360}
\psline[linecolor=blue,linewidth=1.5pt]{-}(-0.116469,0.434667)(-0.310583,1.15911)
\psline[linecolor=blue,linewidth=1.5pt]{-}(0.116469,0.434667)(0.310583,1.15911)
\psline[linecolor=blue,linewidth=1.5pt]{-}(0.318198, 0.318198)(0.848528, 0.848528)
\psline[linecolor=blue,linewidth=1.5pt]{-}(0.434667, 0.116469)(1.15911, 0.310583)
\psline[linecolor=blue,linewidth=1.5pt]{-}(-0.318198, 0.318198)(-0.848528, 0.848528)
\psline[linecolor=blue,linewidth=1.5pt]{-}(-0.434667, 0.116469)(-1.15911, 0.310583)
\psline[linecolor=blue,linewidth=1.5pt]{-}(-0.434667, -0.116469)(-1.15911, -0.310583)
\psline[linecolor=blue,linewidth=1.5pt]{-}(-0.318198, -0.318198)(-0.848528, -0.848528)
\psline[linecolor=blue,linewidth=1.5pt]{-}(-0.116469, -0.434667)(-0.310583, -1.15911)
\psline[linecolor=blue,linewidth=1.5pt]{-}(0.116469, -0.434667)(0.310583, -1.15911)
\psline[linecolor=blue,linewidth=1.5pt]{-}(0.318198, -0.318198)(0.848528, -0.848528)
\psline[linecolor=blue,linewidth=1.5pt]{-}(0.434667, -0.116469)(1.15911, -0.310583)
\psline[linestyle=dashed, dash= 1.5pt 1.5pt,linewidth=0.5pt]{-}(0,-1.2)(0.05,-0.85)
\psbezier[linestyle=dashed, dash= 1.5pt 1.5pt,linewidth=0.5pt]{-}(0,-1.2)(-0.1,-1.2)(-0.1,-0.45)(0,-0.45)
\psarc[linecolor=black,linewidth=0.5pt,fillstyle=solid,fillcolor=darkgreen]{-}(0.05,-0.85){0.09}{0}{360}
\rput(0.362347, -1.3523){$_1$}
\rput(0.989949, -0.989949){$_2$}
\rput(1.3523, -0.362347){$_{...}$}
\rput(-0.362347, -1.3523){$_N$}
\psarc[linecolor=black,linewidth=0.5pt]{-}(0,0){1.2}{0}{360}
\end{pspicture} 
\ ,
\qquad
k>0: \qquad
\Op_N = \ 
\begin{pspicture}[shift=-1.3](-1.2,-1.4)(1.4,1.4)
\psarc[linecolor=black,linewidth=0.5pt,fillstyle=solid,fillcolor=lightlightblue]{-}(0,0){1.2}{0}{360}
\psarc[linecolor=black,linewidth=0.5pt,fillstyle=solid,fillcolor=white]{-}(0,0){0.45}{0}{360}
\rput{-90}(0,0){
\psline[linecolor=blue,linewidth=1.5pt]{-}(0.848528, 0.848528)(0.4125, 0.714471)
\psline[linecolor=blue,linewidth=1.5pt]{-}(0.310583, 1.15911)(0.4125, 0.714471)
\psline[linecolor=blue,linewidth=1.5pt]{-}(-0.310583, 1.15911)(0.4125, 0.714471)
\psline[linecolor=blue,linewidth=1.5pt]{-}(1.15911, 0.310582)(0.4125, 0.714471)
\psarc[linecolor=black,linewidth=0.5pt,fillstyle=solid,fillcolor=darkgreen]{-}(0.4125, 0.714471){0.09}{0}{360}
\psline[linecolor=blue,linewidth=1.5pt]{-}(-0.318198, 0.318198)(-0.848528, 0.848528)
\psline[linecolor=blue,linewidth=1.5pt]{-}(-0.434667, 0.116469)(-1.15911, 0.310583)
\psline[linecolor=blue,linewidth=1.5pt]{-}(-0.434667, -0.116469)(-1.15911, -0.310583)
\psline[linecolor=blue,linewidth=1.5pt]{-}(-0.318198, -0.318198)(-0.848528, -0.848528)
\psline[linecolor=blue,linewidth=1.5pt]{-}(-0.116469, -0.434667)(-0.310583, -1.15911)
\psline[linecolor=blue,linewidth=1.5pt]{-}(0.116469, -0.434667)(0.310583, -1.15911)
\psline[linecolor=blue,linewidth=1.5pt]{-}(0.318198, -0.318198)(0.848528, -0.848528)
\psline[linecolor=blue,linewidth=1.5pt]{-}(0.434667, -0.116469)(1.15911, -0.310583)
}
\psline[linestyle=dashed, dash= 1.5pt 1.5pt,linewidth=0.5pt]{-}(0,-1.2)(0.714471, -0.4125)
\psline[linestyle=dashed, dash= 1.5pt 1.5pt,linewidth=0.5pt]{-}(0,-1.2)(0,-0.45)
\psarc[linecolor=black,linewidth=0.5pt,fillstyle=solid,fillcolor=darkgreen]{-}(0.714471, -0.4125){0.09}{0}{360}
\rput(0.362347, -1.3523){$_1$}
\rput(0.989949, -0.989949){$_2$}
\rput(1.3523, -0.362347){$_{...}$}
\rput(1.4423, 0.362347){$_{2k}$}
\rput(0.989949, 0.989949){$_{...}$}
\rput(-0.362347, -1.3523){$_{N+2k}$}
\psarc[linecolor=black,linewidth=0.5pt]{-}(0,0){1.2}{0}{360}
\end{pspicture} \ \ .
\ee
Similarly, the diagrams for the operators $\Op_{N,j}$ with $0 \le j\le N$ are
\be
k=0: \quad
\Op_{N,j} = \
\begin{pspicture}[shift=-1.3](-1.2,-1.4)(1.4,1.4)
\psarc[linecolor=black,linewidth=0.5pt,fillstyle=solid,fillcolor=lightlightblue]{-}(0,0){1.2}{0}{360}
\psarc[linecolor=black,linewidth=0.5pt,fillstyle=solid,fillcolor=white]{-}(0,0){0.45}{0}{360}
\psline[linecolor=blue,linewidth=1.5pt]{-}(-0.116469,0.434667)(-0.310583,1.15911)
\psline[linecolor=blue,linewidth=1.5pt]{-}(0.116469,0.434667)(0.310583,1.15911)
\psline[linecolor=blue,linewidth=1.5pt]{-}(0.318198, 0.318198)(0.848528, 0.848528)
\psline[linecolor=blue,linewidth=1.5pt]{-}(0.434667, 0.116469)(1.15911, 0.310583)
\psline[linecolor=blue,linewidth=1.5pt]{-}(-0.318198, 0.318198)(-0.848528, 0.848528)
\psline[linecolor=blue,linewidth=1.5pt]{-}(-0.434667, 0.116469)(-1.15911, 0.310583)
\psline[linecolor=blue,linewidth=1.5pt]{-}(-0.434667, -0.116469)(-1.15911, -0.310583)
\psline[linecolor=blue,linewidth=1.5pt]{-}(-0.318198, -0.318198)(-0.848528, -0.848528)
\psline[linecolor=blue,linewidth=1.5pt]{-}(-0.116469, -0.434667)(-0.310583, -1.15911)
\psline[linecolor=blue,linewidth=1.5pt]{-}(0.116469, -0.434667)(0.310583, -1.15911)
\psline[linecolor=blue,linewidth=1.5pt]{-}(0.318198, -0.318198)(0.848528, -0.848528)
\psline[linecolor=blue,linewidth=1.5pt]{-}(0.434667, -0.116469)(1.15911, -0.310583)
\psline[linestyle=dashed, dash= 1.5pt 1.5pt,linewidth=0.5pt]{-}(0,-1.2)(0,-0.45)
\psline[linestyle=dashed, dash= 1.5pt 1.5pt,linewidth=0.5pt]{-}(0,-1.2)(0.825,0)
\psarc[linecolor=black,linewidth=0.5pt,fillstyle=solid,fillcolor=darkgreen]{-}(0.825,0){0.09}{0}{360}
\rput(0.362347, -1.3523){$_1$}
\rput(0.989949, -0.989949){$_{...}$}
\rput(1.48, -0.362347){$_{j-1}$}
\rput(1.37, 0.362347){$_{j}$}
\rput(0.989949, 0.989949){$_{...}$}
\rput(-0.362347, -1.3523){$_N$}
\psarc[linecolor=black,linewidth=0.5pt]{-}(0,0){1.2}{0}{360}
\end{pspicture} 
\ \ \ ,
\qquad
k>0: \quad
\Op_{N,j} = \ 
\begin{pspicture}[shift=-1.3](-1.2,-1.4)(1.4,1.4)
\psarc[linecolor=black,linewidth=0.5pt,fillstyle=solid,fillcolor=lightlightblue]{-}(0,0){1.2}{0}{360}
\psarc[linecolor=black,linewidth=0.5pt,fillstyle=solid,fillcolor=white]{-}(0,0){0.45}{0}{360}
\rput{0}(0,0){
\psline[linecolor=blue,linewidth=1.5pt]{-}(0.848528, 0.848528)(0.4125, 0.714471)
\psline[linecolor=blue,linewidth=1.5pt]{-}(0.310583, 1.15911)(0.4125, 0.714471)
\psline[linecolor=blue,linewidth=1.5pt]{-}(-0.310583, 1.15911)(0.4125, 0.714471)
\psline[linecolor=blue,linewidth=1.5pt]{-}(1.15911, 0.310582)(0.4125, 0.714471)
\psline[linecolor=blue,linewidth=1.5pt]{-}(-0.318198, 0.318198)(-0.848528, 0.848528)
\psline[linecolor=blue,linewidth=1.5pt]{-}(-0.434667, 0.116469)(-1.15911, 0.310583)
\psline[linecolor=blue,linewidth=1.5pt]{-}(-0.434667, -0.116469)(-1.15911, -0.310583)
\psline[linecolor=blue,linewidth=1.5pt]{-}(-0.318198, -0.318198)(-0.848528, -0.848528)
\psline[linecolor=blue,linewidth=1.5pt]{-}(-0.116469, -0.434667)(-0.310583, -1.15911)
\psline[linecolor=blue,linewidth=1.5pt]{-}(0.116469, -0.434667)(0.310583, -1.15911)
\psline[linecolor=blue,linewidth=1.5pt]{-}(0.318198, -0.318198)(0.848528, -0.848528)
\psline[linecolor=blue,linewidth=1.5pt]{-}(0.434667, -0.116469)(1.15911, -0.310583)
}
\psbezier[linestyle=dashed, dash= 1.5pt 1.5pt,linewidth=0.5pt]{-}(0,-1.2)(0.2,-1)(1,0)(0.4125, 0.714471)
\psline[linestyle=dashed, dash= 1.5pt 1.5pt,linewidth=0.5pt]{-}(0,-1.2)(0,-0.45)
\psarc[linecolor=black,linewidth=0.5pt,fillstyle=solid,fillcolor=darkgreen]{-}(0.4125, 0.714471){0.09}{0}{360}
\rput(0.362347, -1.3523){$_1$}
\rput(0.989949, -0.989949){$_{...}$}
\rput(1.35, -0.362347){$_{j}$}
\rput(1.52, 0.362347){$_{j+1}$}
\rput(0.989949, 0.989949){$_{...}$}
\rput(-0.375288, 1.40059){$_{j+2k}$}
\rput(0.989949, 0.989949){$_{...}$}
\rput(-0.362347, -1.3523){$_{N+2k}$}
\psarc[linecolor=black,linewidth=0.5pt]{-}(0,0){1.2}{0}{360}
\end{pspicture} \ \ \ .
\ee
We note that the relations \eqref{eq:def.Oj} for $k>0$ imply that
\be
c_{N+2k,2k}\,\Op_{N} \, c^\dag_{N,0} = x^{-1}\, \Omega\, \Op_{N-2}\,.
\ee
Indeed, changing $N\to N+2$ in \eqref{eq:def.Oj.c} and acting with $x^{-1}c_{2k}$ from the left and $c_0^\dag$ from the right, we obtain
\begin{alignat}{1}
c_{2k}\, \Op_N\, c_0^\dag &= x^{-1}\, c_{2k}\,c_0\, \Op_{N+2}\, (c_0^\dag)^2
= x^{-1}\, c_0\, c_{2k+1}\, \Op_{N+2}\, (c_0^\dag)^2
= x^{-1}c_0\, \Op_N\, c_1\, (c_0^\dag)^2 
\nn \\&
= x^{-1}c_0\, \Op_N\, c_{N-3}^\dag 
= x^{-1}c_0\, c_{N-3+2k}^\dag\, \Op_{N-2} = x^{-1}\Omega\, \Op_{N-2} \,,
\end{alignat}
where we used the relations \eqref{eq:def.Oj.a} and \eqref{eq:def.Oj.b}.\medskip

Similarly, let $k \in \mathbb Z_{\ge 0}$, $x\in\Cbb^\times$, and $\repM$ and $\repM'$ be families of modules over $\tl_N(\beta)$. In this case, a {\it connectivity operator of type $(k,x)$ from $\repM$ to $\repM'$} is a collection of linear maps $\Op_N: \repM(N) \to \repM(N+2k)$, satisfying
\begin{subequations}
\begin{alignat}{2}
c_{N+2k,j}\,\Op_{N} &= \left\{\begin{array}{cl}
0 & j=1,2,\dots,2k-1\,, \\[0.1cm]
\Op_{N-2} \,c_{N,j-2k} & j = 2k+1,k+2,\dots,N+2k-1\,,
\quad N \ge 2
\end{array}\right.
\\ 
c^\dag_{N+2k+2,j}\,\Op_N &= \left\{\begin{array}{cl}
(\id_2 \otimes \Op_{N}) \,c^\dag_{N+2,1} & j=1\,, \\[0.1cm]
\Op_{N+2} \,c^\dag_{N+2,j-2k} & j=2k+1,2k+2, \dots, N+2k+1\,,
\end{array}\right.
\\[0.2cm]
c_{N+2k,1}\, (\id_1 \otimes \Op_{N-2}\otimes \id_1)\, c^\dag_{N,1} &= \left\{\begin{array}{cl}
(x+x^{-1})\, \Op_{N-2} & k=0\,, \\[0.1cm]
x\, \Op_{N-2} & k>0 \,,
\end{array}\right.
\quad N \ge 2\,.
\end{alignat}
\end{subequations}

We now discuss how connectivity operators are realised in the critical ADE models. More precisely, for each irreducible factor $\repQ_{k,x}$ in the decompositions \eqref{eq:M.decomp} and \eqref{eq:M.decomp.twist}, we construct a connectivity operator of type $(k,x)$ using the corresponding insertion state $w_{k,x}$. The construction applies to both families of modules associated to periodic boundary conditions and to fixed boundary conditions. Starting with the periodic boundary conditions, we let $K$ and $L$ be two commuting automorphisms of the graph~$\cal G$ of $\g$. Each state $w\in\repM_{\g,\mu,K}(2k)$ can be written in component form as
\begin{equation}
w = \sum_{\boldsymbol {a}} w_{a_0,a_1,\dots,a_{2k}} \ket{a_0,a_1,\dots,a_{2k}} \,,
\end{equation}
where the sum over $\boldsymbol {a}$ runs on all states $|a_0,a_1,\dots,a_{2k}\rangle$ in $\repM_{\g,\mu,K}(2k)$. For each such state $w$ and for each integer $N$ admissible for $\repM_{\g,\mu,L}$, we define $\phi(w)$ to be the collection of linear maps
$\phi_N(w):\repM_{\g,\mu,L}(N)\to\repM_{\g,\mu,KL}(N+2k)$ given by
\begin{equation} \label{eq:def.phi.j(w)}
\phi_N(w) \cdot \ket{b_0,b_1,\dots,b_N}
= \kappa_\mu^{N/2} \, 
\sum_{\boldsymbol a} \delta_{a_0,b_0}
 \frac{w_{a_0,a_1,\dots,a_{2k}}}{S_{a_0\mu}}
\ket{a_0,a_1,\dots,a_{2k},K(b_1),K(b_{2}),\dots,K(b_N)} \,.
\end{equation}
The action of the connectivity operators on the vacuum state $w_\mu$ defined in \eqref{eq:w.nu} yields
\begin{equation}
\phi_0(w) \cdot w_\mu = w \quad \textrm{for }L=\id\,,
\qquad \textrm{and}\qquad
\phi_N(w_\mu) \cdot w = w\quad \textrm{for }K=\id\,.
\end{equation}
In general, for all commuting automorphisms $K$ and $L$, all states $w\in\repM_{\g,\mu,K}(2k)$ and all admissible integers $N$, we have the relations
\begin{subequations} \label{eq:O(c.w)}
\begin{alignat}{2}
& \phi_N(c_{j}\cdot w) = c_{j}\, \phi_N(w) \,,
\qquad &&j=1,2,\dots, 2k-1 \,, \qquad 
\label{eq:O(c.w)a} \\
& \phi_N(c^\dag_{j}\cdot w) = c^\dag_{j}\,\phi_N(w) \,,
\qquad &&j=1,2, \dots, 2k+1 \,, \label{eq:O(c.w)b} \\
& \phi_N(c_{0}\cdot w) = c_{0}\,c_{2k}\,\phi_{N+2}(w) \, c_{0}^\dag \,,
\qquad && k\geq 1 \,,\label{eq:O(c.w)c} \\
& \phi_N(c^\dag_{0}\cdot w) = \Omega^{-1}\phi_{N+2}(w)\,c_{0}^\dag \,.
\label{eq:O(c.w)d}
\end{alignat}
\end{subequations}
We prove these relations as follows.
For \eqref{eq:O(c.w)a}, we write
\begin{alignat}{1}
&\phi_N(c_j\cdot\ket{a_0,a_1,\dots,a_{2k}}) \cdot \ket{b_0,b_1,\dots,b_N} \nn \\
&\qquad= \frac{\delta_{a_{j-1},a_{j+1}}}{S_{a_{j+1}\mu}}
\phi_N(\ket{a_0,a_1,\dots,a_{j-1},a_{j+2},a_{j+3},\dots,a_{2k}})
\cdot\ket{b_0,b_1,\dots,b_N}\nn \\
&\qquad= \kappa_\mu^{N/2} \frac{\delta_{a_0,b_0}}{S_{a_0\mu}}
\frac{\delta_{a_{j-1},a_{j+1}}}{S_{a_{j+1}\mu}}
\ket{a_0,a_1,\dots,a_{j-1},a_{j+2},a_{j+3},\dots,a_{2k},K(b_1),K(b_2),\dots,K(b_N)}\nn \\
&\qquad= c_j\,\phi_N(\ket{a_0,a_1,\dots,a_{2k}}) \cdot \ket{b_0,b_1,\dots,b_N} \,.
\end{alignat}
A similar argument holds for \eqref{eq:O(c.w)b}.
For \eqref{eq:O(c.w)c}, we write
\begin{alignat}{1}
&\phi_N(c_0\cdot\ket{a_0,a_1,\dots,a_{2k}}) \cdot \ket{b_0,b_1,\dots,b_N} \nn \\
&\qquad= \kappa_\mu^{-1/2} \frac{\delta_{a_{2k-1},K(a_1)}}{S_{a_1\mu}}
\phi_N(\ket{a_1,a_2,\dots,a_{2k-1}})\cdot \ket{b_0,b_1,\dots,b_N} \nn \\
&\qquad= \kappa_\mu^{(N-1)/2} \frac{\delta_{a_1,b_0}\delta_{a_{2k-1},K(a_1)}}{S_{a_1\mu}^2} \ket{a_1,a_2,\dots,a_{2k-1},K(b_1),K(b_2),\dots,K(b_N)} \,,
\end{alignat}
whereas
\begin{alignat}{1}
&c_0\,c_{2k}\,\phi_{N+2}(\ket{a_0,a_1,\dots,a_{2k}})\, c_0^\dag\cdot \ket{b_0,b_1,\dots,b_N} \nn \\
&\qquad= \gamma_\mu^{1/2} \sum_{b'_0=1}^n A_{b_N,L(b'_0)} S_{b'_0\mu} c_0\,c_{2k}\,\phi_{N+2}(\ket{a_0,a_1,\dots,a_{2k}}) \cdot\ket{b'_0,b_0,b_1,\dots,b_N,L(b'_0)} \nn \\
&\qquad= \gamma_\mu^{1/2} \kappa_\mu^{(N+2)/2}
A_{b_N,L(a_0)} c_0\,c_{2k} \cdot\ket{a_0,a_1,\dots,a_{2k},K(b_0),K(b_1),\dots,K(b_N),KL(a_0)} \nn \\
&\qquad= \gamma_\mu^{1/2} \kappa_\mu^{(N+2)/2}
A_{b_N,L(a_0)} \frac{\delta_{a_{2k-1},K(b_0)}}{S_{K(b_0)\mu}} \, c_0\cdot\ket{a_0,a_1,\dots,a_{2k},K(b_0),K(b_1),\dots,K(b_N),KL(a_0)} \nn \\
&\qquad= \kappa_\mu^{(N+1)/2}
A_{b_N,L(a_0)} \frac{\delta_{a_{2k-1},K(b_0)}}{S_{K(b_0)\mu}}
\, \frac{\delta_{K(b_N),KL(a_1)}}{S_{a_1\mu}}
\cdot\ket{a_1,a_2,\dots,a_{2k},K(b_0),K(b_1),\dots,K(b_N)} \,.
\end{alignat}
We then use the fact that $b_N=L(b_0)$, which yields $\delta_{K(b_N),KL(a_1)}=\delta_{b_0,a_1}$, and $A_{b_N,L(a_0)}=A_{b_0,a_0}=1$ for $b_0=a_1$. As a result, we obtain
\begin{alignat}{1}
&c_0\,c_{2k}\,\phi_{N+2}(\ket{a_0,a_1,\dots,a_{2k}})\, c_0^\dag \cdot\ket{b_0,b_1,\dots,b_N} \nn \\
&\qquad = \kappa_\mu^{(N-1)/2} \frac{\delta_{a_1,b_0}\delta_{a_{2k-1},K(a_1)}}{S_{a_1\mu}^2} \,\ket{a_1,a_2,\dots,a_{2k-1},K(b_1),K(b_2),\dots,K(b_N)} \,,
\end{alignat}
which proves \eqref{eq:O(c.w)c}.
For \eqref{eq:O(c.w)d}, a similar calculation yields
\begin{alignat}{1}
&\phi_N(c_0^\dag\cdot\ket{a_0,a_1,\dots,a_{2k}})\cdot\ket{b_0,b_1,\dots,b_N}
= \Omega^{-1}\, \phi_{N+2}(\ket{a_0,a_1,\dots,a_{2k}})\,c_0^\dag\cdot\ket{b_0,b_1,\dots,b_N} \nn \\
&\qquad= \kappa_\mu^{(N+1)/2} \, A_{a_{2k},K(b_0)} \, \ket{b_0,a_0,a_1,\dots,a_{2k},K(b_0),K(b_1),\dots,K(b_N)} \,.
\end{alignat}
These relations allow us to prove the following result.
\begin{Proposition}
Let $K$ and $L$ be commuting automorphisms of $\cal G$, and $w$ be an insertion state with parameters $(k,x)$ in $\repM_{\g,\mu,K}$. Then $\phi(w)$ is a connectivity operator of type $(k,x)$ from $\repM_{\g,\mu,L}$ to $\repM_{\g,\mu,KL}$.
\end{Proposition}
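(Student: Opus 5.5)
We check the three groups of defining relations \eqref{eq:def.Oj} for $\Op_N=\phi_N(w)$ one at a time. Each relation is either a direct consequence of the intertwining identities \eqref{eq:O(c.w)} combined with the insertion-state properties \eqref{eq:def.insertion.kx} of $w$, or a short component computation with the explicit formula \eqref{eq:def.phi.j(w)}.

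\textbf{Relation \eqref{eq:def.Oj.a}.} We treat the two ranges of $j$ separately.
\begin{itemize}
\item For $j=1,\dots,2k-1$, \eqref{eq:O(c.w)a} gives $c_j\,\phi_N(w)=\phi_N(c_j\cdot w)$. This vanishes because $w$ is an insertion state. For $k>0$ this requires $c_j\cdot w=0$ for these $j$. That should follow from $c_0\cdot w=0$ together with $\Omega\cdot w=xw$ and the conjugation relation $\Omega c_j\Omega^{-1}\propto c_{j-1}$, applied repeatedly.
\item For $j\ge 2k+1$, the generator $c_j$ acts only on the tail $K(b_1),\dots,K(b_N)$ of the image state. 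It therefore commutes with the prepending of $a_0,\dots,a_{2k}$. Here we use $\delta_{K(b_{i-1}),K(b_{i+1})}=\delta_{b_{i-1},b_{i+1}}$ and $S_{K(b)\mu}=\kappa_\mu S_{b\mu}$, which produce exactly the change $\kappa_\mu^{N/2}\to\kappa_\mu^{(N-2)/2}$ in the prefactor. The same component check gives the $c^\dag_j$ relations in \eqref{eq:def.Oj.b} for $j\ge 2k+1$.
\end{itemize}

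\textbf{Relation \eqref{eq:def.Oj.b} for $j=0,1$.} These express the action of $c_0^\dag$ and $c_1^\dag$ on the outer boundary through a rotated operator $\Op_{N+2,j}=\Omega^{-j}\Op_{N+2}\Omega^j$. We obtain them from \eqref{eq:O(c.w)d} and \eqref{eq:O(c.w)b}, together with the relations $\Omega=c_1c_0^\dag$ and $\Omega^{-1}=c_0c_1^\dag$ and the diagram relations of \cite[Eq.~(2.6)]{IMD25}. If this manipulation becomes tangled, we fall back on a direct component computation. In $\Omega^{-1}\Op_{N+2}\Omega$, the marked point of $\phi$ sits after the first site, and the factor $\kappa_\mu^{\pm1/2}$ from \eqref{eq:OmegaRSOS} must combine with $\kappa_\mu^{N/2}$ to match.

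\textbf{Relation \eqref{eq:def.Oj.c}.} This is the eigenvalue condition.
\begin{itemize}
\item For $k>0$, we rewrite $c_0\Op_Nc_0^\dag$ using \eqref{eq:O(c.w)d} applied to $\Omega\cdot w$, together with the relation $\Omega\,c_0^\dag\propto c_1^\dag$-type identities. The aim is to show $c_0\,\phi_N(w)\,c_0^\dag=\phi_{N-2}(\Omega\cdot w)=x\,\phi_{N-2}(w)$. Concretely, we expect to use \eqref{eq:O(c.w)c} and \eqref{eq:O(c.w)d} to move $c_0$ and $c_0^\dag$ inside, then invoke $\Omega\cdot w=xw$.
\item For $k=0$, the state $w=\sum_a w_a\ket a$ and $\phi_N(w)$ rescales each state by $w_{b_0}/S_{b_0\mu}$. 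A direct computation shows that $c_0\phi_N(w)c_0^\dag$ sums over the neighbours $a'_0$ of $b_0$ with weight $A\,w_{a'_0}/S_{b_0\mu}$. This equals $\phi_{N-2}(f\cdot w)$, and the latter is $(x+x^{-1})\phi_{N-2}(w)$ by the definition of an insertion state.
\end{itemize}

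\textbf{Main obstacle.} We expect the main difficulty to be the bookkeeping of the twist factors $\kappa_\mu^{1/2}$ and $\gamma_\mu^{1/2}$, and of the automorphisms $K,L$, in the $j=0$ relations. Commutativity of $K$ and $L$ is needed precisely there, to identify the target family as $\repM_{\g,\mu,KL}$ and to make the boundary deltas $b_N=L(b_0)$ collapse as in the proof of \eqref{eq:O(c.w)c}.
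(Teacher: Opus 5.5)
Your plan matches the paper's proof. Relation \eqref{eq:def.Oj.a} comes from \eqref{eq:O(c.w)a} plus a component check for $j\ge 2k+1$, relation \eqref{eq:def.Oj.b} from the analogous component check, and \eqref{eq:def.Oj.c} from moving $c_0,c_0^\dag$ inside $\phi$ and then using the insertion property. Your derivation of $c_j\cdot w=0$ from $c_0\cdot w=0$ and $\Omega\cdot w=xw$ supplies a step the paper leaves implicit.

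The step you leave vague closes as in the paper. The identity you need is $c_0=c_1\Omega^{-1}$; its adjoint is $c_0^\dag=\Omega c_1^\dag$, not $\Omega c_0^\dag\propto c_1^\dag$. With it, \eqref{eq:O(c.w)d} and \eqref{eq:O(c.w)a} give $c_0\,\phi_N(w)\,c_0^\dag=c_1\,\phi_{N-2}(c_0^\dag\cdot w)=\phi_{N-2}(c_1c_0^\dag\cdot w)$. Since $c_1c_0^\dag=\Omega$ for $k>0$ and $c_1c_0^\dag=f$ for $k=0$, this finishes both cases at once. So \eqref{eq:O(c.w)c} is not needed, and the $k=0$ case does not require your separate component computation.
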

\proof
By construction, the state $w$ satisfies \eqref{eq:def.insertion.kx}.
For \eqref{eq:def.Oj.a}, we have for $j=1,2,\dots,2k-1$
\begin{equation}
c_j\, \phi_N(w) = \phi_N(c_j\,w) = 0 \,.
\end{equation}
Similarly, for $j=1,2,\dots,N-1$, we have
\begin{alignat}{1}
&c_{2k+j}\, \phi_N(w) \cdot \ket{b_0,b_1,\dots,b_N} \nn \\
&\qquad= \kappa_\mu^{N/2}\,\frac{\delta_{a_0,b_0}}{S_{a_0\mu}} \,\frac{\delta_{b_{j-1},b_{j+1}}}{S_{b_{j+1}\mu}}
\, \ket{a_0,a_1,\dots,a_{2k},K(b_1),K(b_2),\dots,K(b_{j-1}),K(b_{j+2}),K(b_{j+3}),\dots,K(b_N)} \nn \\
&\qquad= \phi_{N-2}(w) \,c_j \cdot \ket{b_0,b_1,\dots,b_N} \,.
\end{alignat}
The proof for \eqref{eq:def.Oj.b} is similar.
For \eqref{eq:def.Oj.c}, we write
\be
c_0\, \phi_N(w) \, c^\dag_{0} = 
c_1\, \Omega^{-1} \phi_N(w)\, c_{0}^\dag = c_1\, \phi_{N-2}(c_0^\dag w) = \phi_{N-2}(c_1 c_0^\dag w)
= \left\{\begin{array}{cl}
(x+x^{-1})\, \phi_{N-2}(w)& k=0 \,,\\[0.1cm] 
x\, \phi_{N-2}(w) & k>0\,,
\end{array}\right.
\ee
where we used the identites $c_1c_0^\dag=f$ for $k=0$ and $c_1c_0^\dag=\Omega$ for $k>0$.
\eproof 

For the ADE modules with fixed boundary conditions, we use the same ideas and define $\phi(w)$ as the collection of linear maps $\phi_N(w): \repM_{\g,\mu,a,b}(N) \to \repM_{\g,\mu,a,K(b)}(N+2k)$ given by \eqref{eq:def.phi.j(w)}. With the same arguments as above, we obtain the following proposition.
\begin{Proposition}
Let $K$ be an automorphism of $\mathcal G$, $a$ and $b$ be heights of $\cal G$, and $w$ be an insertion state with parameters $(k,x)$ in $\repM_{\g,\mu,K}$. Then $\phi(w)$ is a connectivity operator of type $(k,x)$ from $\repM_{\g,\mu,a,b}$ to $\repM_{\g,\mu,a,K(b)}$.
\end{Proposition}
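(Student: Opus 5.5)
The plan is to mirror the proof of the preceding proposition for periodic boundary conditions, now in the $\tl_N(\beta)$ setting. Here $\phi_N(w)$ acts on $\repM_{\g,\mu,a,b}(N)$ by \eqref{eq:def.phi.j(w)}, gluing the components of $w$ at the left boundary site $a_0=b_0=a$. The insertion state $w$ is taken in $\repM_{\g,\mu,K}(2k)$. The action on the remaining sites $b_1,\dots,b_N$ is through $K$, which sends the right boundary height $b$ to $K(b)$. The left boundary height stays equal to $a$ because of the factor $\delta_{a_0,b_0}$. Since the target is a $\tl$-family, we only need the relations defining a connectivity operator for $\tl_N(\beta)$: those involving $c_j$, $c_j^\dag$ with $j\ge 1$, and the relation $c_1(\id_1\otimes\Op_{N-2}\otimes\id_1)c_1^\dag$.

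The first step is to re-establish, for the boundary maps, the analogues of \eqref{eq:O(c.w)a} and \eqref{eq:O(c.w)b}. That is, $\phi_N(c_j\cdot w)=c_j\,\phi_N(w)$ for $1\le j\le 2k-1$, and $\phi_N(c^\dag_j\cdot w)=c^\dag_j\,\phi_N(w)$ for $1\le j\le 2k+1$. The componentwise computation is literally the one displayed in the periodic case, since \eqref{eq:cj.RSOS} is local and does not see the boundaries. Combined with $c_j\cdot w=0$, this gives the vanishing relations for $j<2k$.

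The second step is the commutation with $c_{2k+j}$ and $c^\dag_{2k+j}$ acting on the appended string $K(b_1),\dots,K(b_N)$. This is the same calculation as in the previous proof, together with the fact that $S_{K(b)\mu}=\kappa_\mu S_{b\mu}$. The resulting factors of $\kappa_\mu$ combine with the normalisation $\kappa_\mu^{N/2}$ versus $\kappa_\mu^{(N-2)/2}$ so that the prefactors match, as they did for $L_N$ in \eqref{eq:cjL}. The relation for $c^\dag_1$, namely $c^\dag_{N+2k+2,1}\Op_N=(\id_2\otimes\Op_N)c^\dag_{N+2,1}$, should follow from the explicit form, because inserting a pair $a'_1,a_0$ at the left and then gluing reproduces the same components.

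The main obstacle is the third relation. In the periodic proof it was obtained via $c_0$, $c_0^\dag$ and $\Omega$, which are unavailable here. I would instead compute $c_1(\id_1\otimes\phi_{N-2}(w)\otimes\id_1)c_1^\dag$ directly from \eqref{eq:cj.RSOS} and \eqref{eq:def.phi.j(w)}. This should reduce to $\phi_{N-2}(f\cdot w)$ for $k=0$, and to $\phi_{N-2}(\Omega\cdot w)$ for $k>0$, in analogy with $c_1c_0^\dag=f$ and $c_1c_0^\dag=\Omega$. The defining property \eqref{eq:def.insertion.kx} of an insertion state then gives the factors $x+x^{-1}$ and $x$, respectively. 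The delicate points are the treatment of the boundary height $a$, which is fixed rather than summed, and tracking the weights $S_{a\mu}$ and $\kappa_\mu^{1/2}$.
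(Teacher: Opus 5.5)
Your plan is correct and is essentially the paper's proof, which only says that the periodic arguments carry over. The relations for $c_j,c_j^\dag$ with $j\ge1$ go through verbatim, and your direct reduction of $c_1(\id_1\otimes\phi_{N-2}(w)\otimes\id_1)c_1^\dag$ to $\phi_{N-2}(f\cdot w)$ or $\phi_{N-2}(\Omega\cdot w)$ is the boundary counterpart of the step $c_0\,\phi_N(w)\,c_0^\dag=\phi_{N-2}(c_1c_0^\dag\cdot w)$. When you write it out, the right-hand strand of $\id_1\otimes\phi_{N-2}(w)\otimes\id_1$ must act by $K$ on the last height and carry a factor $\kappa_\mu^{1/2}$: with only the prefactor $\kappa_\mu^{(N-2)/2}$ of $\phi_{N-2}$, for $k>0$ the computation gives $\kappa_\mu^{-1/2}\phi_{N-2}(\Omega\cdot w)$, which differs from $x\,\phi_{N-2}(w)$ when $\kappa_\mu=-1$ ($A_n$ with $n,\mu$ even and $K=R$).
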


For $K=\id$, using the insertion states $w_\nu$ with $(k,x)=(0,x_\nu)$ defined in \eqref{eq:w.nu}, we recover the local operators introduced in \cite{PasquierOpContent87}, namely
\begin{equation} \label{eq:O.nu}
\Op^{\tinyx \nu}_{N,j}= \Omega^{-j}\,\phi_N(w_\nu)\,\Omega^j\,,
\qquad 
\Op^{\tinyx \nu}_{N,j}\cdot \ket{a_0, a_1,\dots, a_N}
= \frac{S_{a_j,\nu}}{S_{a_j,\mu}} \, \ket{a_0, a_1, \dots, a_N} \,,
\end{equation}
where $\nu \in \{1,2,\dots,n\}$ is an index of the adjacency matrix, and $j \in \{0,1,\dots, N\}$. Thus $\Op^{\tinyx \nu}$ is a connectivity operator of type $(0,x_\nu)$. As discussed in \cite{PasquierOpContent87}, it follows from the fact that the eigenvectors form an orthonormal basis that
\begin{equation}
\frac{S_{a\nu}S_{a\nu'}}{S_{a\mu}} = 
\sum_{\nu''=1}^n C_{\nu,\nu'}^{\nu''} \, S_{a\nu''} \,,
\qquad \textrm{where}\qquad 
C_{\nu,\nu'}^{\nu''} = \sum_{a=1}^n \frac{S_{a\nu}S_{a\nu'}S_{a\nu''}^*
}{S_{a\mu}} \,,
\end{equation}
for all $a,\nu,\nu'\in\{1,2,\dots,n\}$. This implies the decomposition of the operator product
\begin{equation}
\Op^{\tinyx \nu}_{N,j} \, \Op^{\tinyx {\nu'}}_{N,j}
= \sum_{\nu''=1}^n C_{\nu,\nu'}^{\nu''} \, \Op^{\tinyx {\nu''}}_{N,j} \,.
\end{equation}

In the case $K\neq \id$, using the twisted insertion states $\wt{w}_\nu$ with $(k,x)=(0,\wt{x}_\nu)$ defined in \eqref{eq:w.nu.twist}, we construct the connectivity operators $\wt{\Op}^{\tinyx \nu}_{N,j}=\Omega^{-j}\,\phi_N(\wt{w}_\nu)\,\Omega^j$, whose action is
\begin{equation} \label{eq:Ot.nu}
\wt{\Op}^{\tinyx \nu}_{N,j}\cdot \ket{a_0, a_1,\dots, a_N}
= \kappa_\mu^{(N-j)/2}\, \delta_{K(a_j),a_j}\,
\frac{\wt{S}_{a_j,\nu}}{S_{a_j,\mu}} \, \ket{a_0, a_1, \dots, a_j,K(a_{j+1}),K(a_{j+2}),\dots,K(a_N)} \,.
\end{equation}
Using a similar argument to the case $K=\id$, we obtain the operator product
\begin{equation}
\wt\Op^{\tinyx \nu}_{N,j} \, \wt\Op^{\tinyx {\nu'}}_{N,j}
= \sum_{\nu''=1}^{\wt n} \wt{C}_{\nu,\nu'}^{\nu''} \, \wt\Op^{\tinyx {\nu''}}_{N,j} \,,
\qquad \wt{C}_{\nu,\nu'}^{\nu''} = \sum_{a \in \wt{\cal G}}
\frac{\wt{S}_{a\nu}\wt{S}_{a\nu'}\wt{S}_{a\nu''}^*}{S_{a\mu}} \,.
\end{equation}

\subsection{Singular-vector relations and difference equations}

In this section, we set $K = \id$ and use the decomposition of the family $\repM_{\g,\mu,\id}$ given in \cref{thm:M.decomp} to show that the local operators $\mathcal O^{\tinyx \nu}_{N,j}$ satisfy extra quotient relations in addition to the relations \eqref{eq:def.Oj}, which in turn lead to linear relations for their correlation functions. Let us recall from \cref{prop:Q.properties} that, for $s\in \mathbb Z_{\ge 0}$ satisfying $s\le \frac{p'}2$, we have
\begin{equation}
\repQ_{0,\eps q^s} = \repW_{0,\eps q^s}
\big/ \{v^{\tinyx{1,\eps}}_{0,s} \equiv 0 \,,\, v^{\tinyx{-1,\eps'}}_{0,s'} \equiv 0\} \,,
\end{equation}
where $s'=p'-s$ and $\eps'=(-1)^p \eps$. In analogy with the Verma modules of the Virasoro algebra, we say that $v^{\tinyx{1,\eps}}_{0,s}$ and $v^{\tinyx{-1,\eps'}}_{0,s'}$ are \emph{singular vectors} in the family $\repW_{0,\eps q^s}$. Moreover, we can write
\begin{equation}
v^{\tinyx{\sigma,\eps}}_{0,s} = \mu^{\tinyx{\sigma,\eps}}_{0,s}\cdot u_0\,,\end{equation}
for some unique $\mu^{\tinyx{\sigma,\eps}}_{0,s}\in\cLk{0,0}(2s,0)$, where $u_0$ is the unique empty link state of $\repW_{0,\eps q^s}(0)$. For example, we have
\begin{subequations} \label{eq:mu.0s}
\begin{alignat}{2}
\mu^{\tinyx{\sigma,\eps}}_{0,1} &= c_0^\dag + \eps c_1^\dag \,,
\\[0.1cm]
\mu^{\tinyx{\sigma,\eps}}_{0,2} &= 
(c_0^\dag)^2 + \eps\,[2]\, c_2^\dag c_0^\dag + c_2^\dag c_1^\dag
+ \eps\, c_1^\dag c_0^\dag + [2]\, c_3^\dag c_1^\dag + \eps\, c_3^\dag c_0^\dag \,,
\\[0.1cm]
\mu^{\tinyx{\sigma,\eps}}_{0,3} &= (c_0^\dag)^3 + \eps\, [3]\, c_3^\dag(c_0^\dag)^2 + [3]\, c_3^\dag c_2^\dag c_0^\dag + \eps\, c_3^\dag c_2^\dag c_1^\dag 
\nn\\& 
+ \eps\,[2]\, c_2^\dag(c_0^\dag)^2 + \eps\,[2]\, c_4^\dag(c_0^\dag)^2 + \eps\, [2]\, c_4^\dag c_2^\dag c_1^\dag + [2][3]\, c_4^\dag c_2^\dag c_0^\dag
\nn\\& 
+ \eps\, c_5^\dag(c_0^\dag)^2 + [3]\, c_5^\dag c_2^\dag c_0^\dag + \eps\, [3]\, c_5^\dag c_2^\dag c_1^\dag + c_2^\dag c_1^\dag c_0^\dag
\nn\\& 
+ \eps\, c_1^\dag(c_0^\dag)^2 + [3]\, c_4^\dag c_1^\dag c_0^\dag + \eps\, [3]\, c_4^\dag c_3^\dag c_1^\dag + c_4^\dag c_3^\dag c_0^\dag
\nn\\& 
+ [2]\, c_5^\dag c_3^\dag c_0^\dag + [2]\, c_5^\dag c_1^\dag c_0^\dag + [2]\, c_3^\dag c_1^\dag c_0^\dag + \eps\, [2][3]\, c_5^\dag c_3^\dag c_1^\dag\,.
\end{alignat}
\end{subequations}
Moreover, we note that $\mu^{\tinyx{\sigma, \eps}}_{0,s}$ does not depend on $\sigma$, because $\repW_{0,\eps q^s}$ is invariant under $q\to q^{-1}$.\medskip

In the ADE family $\repM_{\g,\mu,\id}$, let $w_\nu$ be an insertion state with parameters $(0,x_\nu)$, as defined in~\eqref{eq:w.nu}. The insertion homomorphism $\varphi_\nu:\repW_{0,x_\nu}\to \repM_{\g,\mu,\id}$ vanishes on $\repR_{0,x_\nu}$, as otherwise $\repR_{0,x_\nu}$ would appear in the decomposition of $\repM_{\g,\mu,\id}$. From \cref{prop:permut.nu}, we know that there is a pair $(s,\eps)$ with $s \in \{1,2,\dots,\frac{p'}2\}$ and $\eps\in\{-1,+1\}$ satisfying $\eps (q^s+q^{-s}) = x_\nu+x_\nu^{-1}$. Hence, we have
\begin{equation}
\label{eq:sv.relation.derivation}
\mu^{\tinyx{1,\eps}}_{0,s}\cdot w_\nu = \mu^{\tinyx{1,\eps}}_{0,s}\cdot\varphi_\nu(u_0) = \varphi_\nu\big(\mu^{\tinyx{1,\eps}}_{0,s}\cdot u_0\big)
= \varphi_\nu\big(v^{\tinyx{1,\eps}}_{0,s}\big)=0 \,,
\end{equation}
and similarly for $\mu^{\tinyx{-1,\eps'}}_{0,s'}\cdot w_\nu$.
Thus, each insertion state $w_\nu$ defined in \eqref{eq:w.nu} satisfies the \emph{singular-vector relations}
\begin{equation}
\mu^{\tinyx{1,\eps}}_{0,s}\cdot w_\nu = 0 \,, \qquad \mu^{\tinyx{-1,\eps'}}_{0,s'}\cdot w_\nu = 0 \,,
\end{equation}
with $x_\nu+x^{-1}_\nu = \eps (q^s+q^{-s}) = \eps' (q^{s'}+q^{-s'})$.
Writing
\be
c_{s+1}c_{s+2}\cdots c_{2s} \, \phi_{N} (\mu^{\tinyx{1,\eps}}_{0,s}\cdot w_\nu)=0\,,\end{equation}
we use \eqref{eq:O(c.w)} and obtain a local difference equation for the operators $\Op_{N,j}^{\tinyx \nu}$ associated to the pair $(s,\eps)$. For example, we have
\begin{subequations}
\begin{alignat}{2}
s=1: \qquad &\Op_{N,1}^{\tinyx \nu} + \eps\, \Op_{N,0}^{\tinyx \nu} = 0 \,, 
\\[0.1cm]
s=2: \qquad &
\Op_{N,2}^{\tinyx \nu} + \eps\, [2]\,\Op_{N,1}^{\tinyx \nu} + \Op_{N,0}^{\tinyx \nu} +\eps\, e_1\,\Op_{N,1}^{\tinyx \nu} + [2]\, e_1\, \Op_{N,0}^{\tinyx \nu} + \Op_{N,1}^{\tinyx \nu}\,e_1 =0 \,,
\\[0.1cm]\nn
s=3: \qquad &
\mathcal O^{\tinyx \nu}_{N,3}
+ \eps \, [3]\, \mathcal O^{\tinyx \nu}_{N,2}
+ [3]\, \mathcal O^{\tinyx \nu}_{N,1}
+ \eps\, \mathcal O^{\tinyx \nu}_{N,0}
\\[0.1cm]&\nn
+ [2]\, e_{j+1}\,\mathcal O(j+3)
+ \eps\,[2]\,[3]\, e_{j+1}\, \mathcal O^{\tinyx \nu}_{N,2} 
+ [2]\, e_{j+1} \,\mathcal O^{\tinyx \nu}_{N,1}
+ [2]\, \mathcal O^{\tinyx \nu}_{N,1}\, e_{j+1}
\\[0.1cm]&
+ \eps\,[2]\, e_{j+2} \, \mathcal O^{\tinyx \nu}_{N,2} 
+ \eps\,[2]\, \mathcal O^{\tinyx \nu}_{N,2} \, e_{j+2}
+ [2]\,[3]\, e_{j+2} \, \mathcal O^{\tinyx \nu}_{N,1}
+ \eps \,[2]\, e_{j+2} \, \mathcal O^{\tinyx \nu}_{N,0}
\\[0.1cm]&\nn
+ \eps\, e_{j+1}\, e_{j+2}\, \mathcal O^{\tinyx \nu}_{N,2}
+ [3]\, e_{j+1}\, e_{j+2}\, \mathcal O^{\tinyx \nu}_{N,1}
+ \mathcal O^{\tinyx \nu}_{N,1}\, e_{j+1}\, e_{j+2}
+ \eps \, [3]\, e_{j+1}\, e_{j+2}\, \mathcal O^{\tinyx \nu}_{N,0}
\\[0.1cm]&\nn
+ [3]\, e_{j+2}\, e_{j+1} \, \mathcal O(j+3)
+ \eps \, [3]\, e_{j+2}\, e_{j+1} \, \mathcal O^{\tinyx \nu}_{N,2}
+ \eps\, \mathcal O^{\tinyx \nu}_{N,2}\, e_{j+2}\, e_{j+1}
+ e_{j+2}\, e_{j+1} \, \mathcal O^{\tinyx \nu}_{N,1}\,.
\end{alignat}
\end{subequations}
The singular-vector relation $\mu^{\tinyx{-1,\eps'}}_{0,s'}\cdot w_\nu=0$ results in a second difference equation for $\Op_{N,j}^{\tinyx \nu}$, obtained from the above construction simply by changing $s\to s'$ and $\eps \to \eps'$.\medskip 

We have thus obtained two singular-vector relations satisfied by the connectivity operators of type $(0,x_\nu)$ in the ADE lattice models. We note that the same ideas allow us to derive singular-vector relations for the connectivity operators of type $(k,x)$ with $k>0$ constructed in \cref{sec:connectivity.ops}.

\subsection{Boundary operators}

Let $k\in\frac12\Zbb_{\geq 0}$, and $\repM$ and $\repM'$ be two families of modules over $\tl_N(\beta)$. We define a \emph{boundary operator~$\Op$ of type $k$ from $\repM$ to $\repM'$} to be a collection $\Op_{N}$ of linear maps from $\repM(N)$ to $\repM'(N+2k)$, for each admissible integer $N$ in $\repM$, satisfying the relations
\begin{subequations}
\begin{alignat}{2}
c_{N+2k,j}\,\Op_{N} &= \left\{\begin{array}{cl}
0 & j=1,2,\dots,2k-1\,, \\[0.1cm]
\Op_{N-2} \,c_{N,j-2k} & j = 2k+1,k+2,\dots,N+2k-1\,,\quad N \ge 2\,,
\end{array}\right. 
\\ 
c^\dag_{N+2k+2,j}\,\Op_N &= \left\{\begin{array}{cl}
(\id_2 \otimes \Op_{N}) \,c^\dag_{N+2,1} & j=1\,, \\[0.1cm]
\Op_{N+2} \,c^\dag_{N+2,j-2k} & j=2k+1,2k+2, \dots, N+2k+1\,.
\end{array}\right.
\end{alignat}
\end{subequations}
The operator $\Op_N$ is represented by the diagram 
\be
\Op_N = \
\begin{pspicture}[shift=-0.6](0,-0.7)(3.6,0.7)
\pspolygon[fillstyle=solid,fillcolor=lightlightblue,linewidth=0pt](0,-0.5)(3.6,-0.5)(3.6,0.5)(0,0.5)
\psbezier[linecolor=blue,linewidth=1.5pt]{-}(0.2,-0.5)(0.2,-0.2)(0.2,-0.1)(0,0)
\psbezier[linecolor=blue,linewidth=1.5pt]{-}(0.6,-0.5)(0.6,-0.2)(0.6,-0.1)(0,0)
\psbezier[linecolor=blue,linewidth=1.5pt]{-}(1.0,-0.5)(1.0,-0.2)(1.0,-0.1)(0,0)
\psbezier[linecolor=blue,linewidth=1.5pt]{-}(1.4,-0.5)(1.4,-0.2)(1.4,0)(0,0)
\psarc[linecolor=black,linewidth=0.5pt,fillstyle=solid,fillcolor=darkgreen]{-}(0,0){0.09}{0}{360}
\multiput(0,0)(0.4,0){5}{\psline[linecolor=blue,linewidth=1.5pt]{-}(1.8,-0.5)(1.8,0.5)}
\rput(0.2,-0.75){$_1$}
\rput(0.6,-0.75){$_2$}
\rput(1.0,-0.75){$_{...}$}
\rput(1.4,-0.75){$_{2k}$}
\rput(3.4,-0.75){$_{N+2k}$}
\end{pspicture}\ \ .
\ee
For $k=0$, the operator $\Op_{N}$ acts as the identity.
\medskip

Let $a$, $b$ and $c$ be three heights of $\cal G$. Each state $w\in\repM_{\g,\mu,a,b}(2k)$ can be written in component form as
\begin{equation}
w = \sum_{\boldsymbol a} w_{a_0,a_1,\dots,a_{2k}} \ket{a_0,a_1,\dots,a_{2k}} \,,
\end{equation}
where the sum runs on all states in $\repM_{\g,\mu,a,b}(2k)$. We define the boundary operator $\psi(w)$ as the collection of linear maps $\psi_N(w):\repM_{\g,\mu,b,c}(N) \to \repM_{\g,\mu,a,c}(N+2k)$ given by
\begin{equation}
\psi_N(w)\cdot \ket{b_0,b_1,\dots,b_N}
= \sum_{\boldsymbol a} w_{a_0,a_1,\dots,a_{2k}} \, \ket{a_0,a_1,\dots, a_{2k},b_1,b_2,\dots,b_N} \,.
\end{equation}
This operator satisfies
\begin{subequations}
\begin{alignat}{2}
\psi_N(c_j\cdot w) &= c_j \,\psi_N(w)\qquad &&j=1,2,\dots, 2k-1 \,,
\\[0.1cm] 
\psi_N(c_j^\dag\cdot w) &= c_j^\dag \,\psi_N(w)\qquad &&j=1,2,\dots, 2k+1 \,.
\end{alignat}
\end{subequations}
Moreover, for all states $w\in\repM_{\g,\mu,a,b}(2k)$, we have
\begin{equation}
\psi_N(w) \cdot \ket{b} = w \,.
\end{equation}
Using similar arguments to the case of bulk operators, we obtain the following result. 
\begin{Proposition}
Let $k \in \frac12 \mathbb Z_{\ge 0}$ with $k \le \tfrac{p'}{2}-1$, and $w$ be an insertion state with $2k$ defects in $\repM_{\g,\mu,a,b}$. Then $\psi(w)$ is a boundary operator of type $k$ from $\repM_{\g,\mu,b,c}$ to $\repM_{\g,\mu,a,c}$.
\end{Proposition}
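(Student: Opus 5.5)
The plan is to check the two sets of defining relations for a boundary operator of type $k$ directly from the explicit formula for $\psi_N(w)$. We follow the same pattern as in the bulk case, but there are no $c_0$, $c_0^\dag$ or $\Omega$ relations, so the verification is shorter. We use two ingredients: the equivariance relations $\psi_N(c_j\cdot w)=c_j\,\psi_N(w)$ for $1\le j\le 2k-1$ and $\psi_N(c_j^\dag\cdot w)=c_j^\dag\,\psi_N(w)$ for $1\le j\le 2k+1$, stated just above; and the insertion property $c_j\cdot w=0$ for $j=1,\dots,2k-1$.

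First, for $j\in\{1,\dots,2k-1\}$, we combine the two ingredients to get $c_{N+2k,j}\,\psi_N(w)=\psi_N(c_j\cdot w)=0$. This is the vanishing part of the first relation. To justify the equivariance itself, we compute on a basis state $\ket{b_0,\dots,b_N}$ with $b_0=b$. Since $c_j$ only touches positions $j-1,j,j+1$, all of which lie in $\{0,\dots,2k\}$, it acts on the $w$-part of $\ket{a_0,\dots,a_{2k},b_1,\dots,b_N}$ exactly as on $w$. Here we use $a_{2k}=b=b_0$.

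Second, for $j=2k+1,\dots,N+2k-1$, we evaluate $c_{N+2k,j}$ on $\ket{a_0,\dots,a_{2k},b_1,\dots,b_N}$. It involves positions $j-1,j,j+1\ge 2k$, that is, $b_{j-2k-1},b_{j-2k},b_{j-2k+1}$ with the identification $a_{2k}=b_0$. It therefore produces the factor $\delta_{b_{j-2k-1},b_{j-2k+1}}/S_{b_{j-2k+1}\mu}$ and deletes two heights. This is precisely $\psi_{N-2}(w)\,c_{N,j-2k}$ acting on $\ket{b_0,\dots,b_N}$. The same bookkeeping handles $c^\dag_{N+2k+2,j}\psi_N(w)=\psi_{N+2}(w)\,c^\dag_{N+2,j-2k}$ for $j\ge 2k+1$. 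Here the summed height $a'_j$ is adjacent to $a_{j-1}$, which is $b_{j-2k-1}$ (equal to $b_0=a_{2k}$ when $j=2k+1$), and the weight $S_{a'_j\mu}$ is the same on both sides.

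The remaining case is $c^\dag_{N+2k+2,1}\psi_N(w)=(\id_2\otimes\psi_N(w))\,c^\dag_{N+2,1}$. I expect this one to be the only delicate point, since the right-hand side must be interpreted correctly. On the left, $c_1^\dag$ inserts a pair $a',a_0$ at the start of the state, with $a_0=a$ fixed. The result lies in $\repM_{\g,\mu,a,c}(N+2k+2)$, and its first $2k+2$ heights form $c_1^\dag\cdot w$. I will read $\id_2\otimes\psi_N(w)$ as the map that leaves the first two strands untouched and applies $\psi_N(w)$ to the rest. Unwinding the definitions, both sides then give $\sum_{a'}A_{a,a'}S_{a'\mu}\,w_{\boldsymbol a}\ket{a,a',a_0,\dots,a_{2k},b_1,\dots,b_N}$. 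Alternatively, this identity is just the case $j=1$ of the equivariance relation, written as $\psi_N(c_1^\dag w)=c_1^\dag\,\psi_N(w)$. In neither version is the insertion property needed, and the hypothesis $k\le\tfrac{p'}2-1$ only ensures that nonzero insertion states exist, as in \cref{prop:dim.insertion}.
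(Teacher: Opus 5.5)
Your route is the one the paper intends; its proof only says ``similar arguments to the case of bulk operators''. The vanishing relations for $1\le j\le 2k-1$ follow from the equivariance $\psi_N(c_j\cdot w)=c_j\,\psi_N(w)$ together with the insertion property. The pass-through relations for $j\ge 2k+1$ follow from direct height bookkeeping with $a_{2k}=b=b_0$. Those parts are correct.

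The step that fails is your first treatment of the $j=1$ relation. Under your stated reading, the right-hand side is $(\id_2\otimes\psi_N(w))\,c^\dag_{N+2,1}\ket{b_0,\dots,b_N}=\sum_{b'}A_{b,b'}S_{b'\mu}\sum_{\boldsymbol a}w_{\boldsymbol a}\ket{b,b',a_0,a_1,\dots,a_{2k},b_1,\dots,b_N}$. The untouched first two heights are $b,b'$ with $b'$ a neighbour of $b$, not $a,a'$ with $a'$ a neighbour of $a$. This has the wrong wall height, and it is not an admissible state unless $b'\sim a$. It agrees with the left-hand side only if $a=b$, which never happens for half-integer $k$.

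The reason is that $\psi_N(w)$ changes the height next to the wall from $b$ to $a$, so it cannot be slid past two strands. In the bulk, by contrast, $\phi_N(w)$ carries $\delta_{a_0,b_0}$, and that is exactly why $c^\dag_1\Op_N=\Op_{N+2,2}\,c^\dag_1$ genuinely holds there. So for a boundary operator, the only consistent meaning of $(\id_2\otimes\Op_N)\,c^\dag_{N+2,1}$ is the operator with an extra arch on its first two legs, namely $\psi_N(c_1^\dag\cdot w)$. With that meaning the relation is just the $j=1$ equivariance identity, as in your second version.

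Drop the first reading. State explicitly that you adopt this interpretation, because the definition's $j=1$ relation is carried over from the bulk setting and does not hold in its literal sliding sense here. A minor point: positions $0,\dots,2k+2$ of the left-hand side give $2k+3$ heights forming $c_1^\dag\cdot w$, not $2k+2$.
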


We recall from \cref{prop:Qk.Rk} that the family $\repV_{k}$ with $q = \eE^{-\ir \pi p/p'}$ has the singular vector $v_k$ defined in \eqref{eq:vk}. For $k \le \tfrac{p'}{2}-1$, we have $r_k=1$ and thus $k'=p'-1-k$. The state $v_k$ admits a unique expression of the form $v_k=\lambda_k\cdot u_k$ for some unique $\lambda_k\in\cL_0^{\tinyx k}(2k',2k)$. For example, for $p'=2,3,4$, we have
\begin{subequations}
\begin{alignat}{2}
&p'=2: \quad 
&&\lambda_0 = c_1^\dag\,, 
\qquad \qquad
p'=3: \quad 
\left\{\begin{array}{l}
\lambda_0 = c_2^\dag c_1^\dag + (-1)^{p+1} c_3^\dag c_1^\dag\,,\\[0.1cm]
\lambda_{\frac12} = c_2^\dag + (-1)^{p+1} c_1^\dag\,,
\end{array}\right.
\\[0.1cm]
&p'=4: \quad &&
\left\{\begin{array}{l}
\lambda_0 = c_1^\dag c_2^\dag c_3^\dag + c_1^\dag c_2^\dag c_5^\dag + c_1^\dag c_3^\dag c_4^\dag + [2]\, c_1^\dag c_3^\dag c_5^\dag + [2]\, c_1^\dag c_2^\dag c_4^\dag\,,\\[0.1cm]
\lambda_{\frac12} = c_1^\dag c_2^\dag + c_1^\dag c_4^\dag + c_2^\dag c_3^\dag + [2]\, c_1^\dag c_3^\dag + [2]\, c_2^\dag c_4^\dag\,,\\[0.1cm]
\lambda_{1} = c_1^\dag + [2]\, c_2^\dag + c_3^\dag\,.
\end{array}\right.
\end{alignat}
\end{subequations}

Repeating the argument used in \eqref{eq:sv.relation.derivation}, we obtain in the ADE models the singular-vector relation $\lambda_k\cdot w=0$. We then define
\begin{equation}
\wh\lambda_k=c_{k+k'+1}c_{k+k'+2}\dots c_{2k'}\, (\lambda_k \otimes \id_{k'-k}) \in \tl_{p'-1}(\beta)
\end{equation}
and obtain the local difference equation for $\psi_N$ on $\repM_{\g,\mu,b,c}(N)$
\begin{equation}
(\wh\lambda_k\otimes \id_{N-k'+k}) \, \psi_N = 0 \,, \qquad N \ge k'-k\,.
\end{equation}
For example, for $p'=3,4$, we have
\begin{subequations}
\begin{alignat}{1}
&p'=3: \quad (\id + (-1)^{p+1} e_1)\, \psi_N = 0 \qquad k=0,\tfrac12,
\\[0.16cm]
&p'=4: \quad 
\left\{\begin{array}{ll}
(\id + [2] e_1 + [2]e_2 + e_1 e_2 + e_2 e_1)\, \psi_N = 0 \quad & k=0,\frac12\,, \\[0.15cm]
(\id + [2] e_2 + e_1 e_2)\, \psi_N = 0 \quad & k=1\,, 
\end{array}\right.
\end{alignat}
\end{subequations}
where we recall that the ADE models are only defined for $p' \ge 3$.

%
\section{Conclusion}\label{sec:conclusion}
%

In this work, we obtained the decomposition of the space of states of the critical ADE lattice models in terms of the irreducible modules over the ordinary or periodic Temperley--Lieb algebra. From this decomposition, we recovered the known expressions for the cylinder and torus partition functions in the scaling limit. To each irreducible module in the decomposition, we constructed an associated lattice operator, and derived a linear difference equation for this operator, which is the lattice analog of the null-state condition for degenerate operators in CFT.
\medskip

The linear difference equations that we derived pertain to contours surrounding the insertion points that are of minimal length. For instance, for the periodic case, we constructed an operator $\mu^{\tinyx{\sigma, \eps}}_{0,s} \in \cL(2s,0)$ that annihilates the insertion state $w_\nu$, so in this case the contour has length $2s$. In general, one can construct an operator ${\cal U} \in \tl_N(\beta)$ that annihilates the states $\lambda \cdot w_\nu$, for all $\lambda \in \cL(N,0)$. Let $\wh P_{N,s,x}$ be the projector on the module $\repW_{s,x}(N)$, which in general is not one-dimensional. This operator $\cal U$ is obtained as the lowest order coefficient of $\wh P_{N,s,x}$ in its Laurent series at $q = \eE^{-\ir \pi p/p'}$, and thus evaluates to zero in $\repQ_{k,x}(N)$. The cases considered in this paper correspond to $N=2s$ in which case $\repW_{s,x}(N)$ is one-dimensional and $\wh P_{2s,s,x}=\wh P_{2s,x}$ is the Jones--Wenzl projector on this module. The cases with $N>2s$ are interesting, since one can expect in the scaling limit that $\cal U$ will scale to the combination of Virasoro modes whose action on the highest weight state produces a singular vector, that is set to zero in the irreducible modules $\repK_{r,s}$. Clearly, more work is needed to make this observation more concrete.\medskip

We moreover believe that our results can serve as a starting point for more thorough studies of the correlation functions in ADE lattice models. A lattice analog of the operator product expansion may be obtained by computing the fusion of the families $\repQ_{k,x}$ of irreducibles modules over $\eptl_N(\beta)$ at roots of unity. This is likely to be a difficult task. Furthermore, in their present form the difference equations for the lattice operators involve the action of Temperley--Lieb generators, and for this reason, it is not easy to analyse their solutions. It is however tempting to speculate that, for well-chosen correlation functions, these difference equations can be expressed in terms of lattice derivatives, like for correlation functions of primary operators in CFT \cite{BPZ84}.
\medskip

Another interesting question would be to understand how our approach applies to other lattice models with Temperley--Lieb symmetry, such as the anyonic chains considered in \cite{Blakeney25}.

\subsection*{Acknowledgments}

The authors thank Alexis Langlois-R\'emilllard for going over the manuscript, as well as Jean-Bernard Zuber for useful discussions.

\bigskip

\appendix

%
\section{Formulas for the eigenvector components $\boldsymbol{S_{a\mu}}$}
\label{app:Samu}
%

In this appendix, we give the formulas for the eigenvector components $S_{a\mu}$ for each model.

\paragraph{$\boldsymbol{A_n}$ models.}

The general formula for the components of the orthonormal eigenvectors of the adjacency matrix for the $A_n$ models reads
\begin{equation}
S_{a\mu} = \sqrt{\frac2{n+1}}\sin \left(\frac{\pi a \mu}{n+1}\right) \,, \qquad a,\mu \in \{1,2, \dots, n\}\,,
\end{equation}
and we recall that $p' = n+1$.

\paragraph{$\boldsymbol{D_n}$ models.}

For the $D_n$ models, the general formula for the components of the orthogonal eigenvectors in the basis that also diagonalises $P_{(n-1,n)}$ is
\be
S_{a\mu} = \sqrt{\frac 2{n-1}}\times
\left\{\begin{array}{cl}
\displaystyle\sin \left(\frac{\pi a m_\mu}{2n-2}\right) & a \in \{1,2, \dots, n-2\}\,, \\[0.4cm]
\displaystyle\frac{(-1)^\mu}2 & a \in \{n-1,n\}\,,\\
\end{array}\right.\qquad \,
\ee
for $\mu \in \{1,2, \dots, n-1\}$, and
\be
S_{an} = \frac 1{\sqrt 2} \times
\left\{\begin{array}{cl}
0 & a \in \{1,2, \dots, n-2\}\,,\\[0.1cm]
-1 & a = n-1\,,\\[0.1cm]
1 & a = n\,,
\end{array}\right.
\ee
where we recall that $p' = 2n-2$. For the special case of $D_4$ with $K=P_{(134)}$, the components $S_{a\mu}$ of the orthonormal eigenvectors of $P_{(134)}$ and of the adjacency matrix are given by the matrix
\be
\mathbf{S} = \begin{pmatrix}
\frac{1}{\sqrt{6}} & \frac{1}{\sqrt{3}} & \frac{1}{\sqrt{6}} & \frac{1}{\sqrt{3}} \\
\frac{1}{\sqrt{2}} & 0 & -\frac{1}{\sqrt{2}} & 0 \\
\frac{1}{\sqrt{6}} & \frac{\omega^{-1}}{\sqrt{3}} & \frac{1}{\sqrt{6}} & \frac{\omega}{\sqrt{3}} \\
\frac{1}{\sqrt{6}} & \frac{\omega }{\sqrt{3}} & \frac{1}{\sqrt{6}} & \frac{\omega^{-1}}{\sqrt{3}} 
\end{pmatrix}, \qquad \omega = \eE^{2 \pi \ir /3}\,.
\ee

\paragraph{$\boldsymbol{E_6, E_7, E_8}$ models.}

The formulas for the components of the orthonormal eigenvectors of the adjacency matrix for $E_6$, $E_7$ and $E_8$ are
\begin{subequations}
\begin{alignat}{2}
E_6&: \quad\mathbf{S}_{\mu} = \frac1{\eta_\mu}\Big(
s(5),
s(2)s(5),
s(2)^2 s(3), 
s(2)^3,
s(2)^2,
s(2)s(3)\Big)\,,
\\[0.2cm]
E_7&: \quad\mathbf{S}_{\mu} = 
\left\{\begin{array}{ll}
\displaystyle
\frac1{\eta_\mu}\Big(
s(6),
s(2)s(6),
s(2)^2 s(4), 
s(2)^2 s(3),
s(2)^3,
s(2)^2,
s(2)s(4)
\Big) & \mu \neq 4\,,
\\[0.4cm]
\displaystyle
\frac{1}{\sqrt 3}(0,0,0,1,0,-1,-1) &
\mu = 4\,,
\end{array}\right.
\\[0.2cm]
E_8&: \quad \mathbf{S}_\mu = \frac1{\eta_\mu}\Big(
s(7),
s(2)s(7),
s(2)^2 s(5), 
s(2)^2 s(4), 
s(2)^2 s(3),
s(2)^3,
s(2)^2,
s(2) s(5)
\Big)\,,
\end{alignat}
\end{subequations}
where 
\be
s(k) = \frac{\sin (\frac{\pi k m_\mu}{p'})}{\sin (\frac{\pi m_\mu}{p'})}\,,
\ee
and each $\eta_\mu$ with $\mu = 1,2, \dots, n$ is a normalising constant. We recall that $p' = 12,18,30$ for $n=6,7,8$.

%
\section{Dimension counting for $\boldsymbol{\repM_{\g,\mu,K}}$}
\label{app:dims}
%

In this section, we show that the dimensions on the left and right sides of \eqref{eq:M.decomp} coincide, for all families $\repM_{\g,\mu,K}$. To show this, we start from \eqref{eq:M.dim}, use \eqref{eq:binomial} and find after simplification
\begin{equation}
\label{eq:M.with.Dj}
\dim \, \repM_{\g,\mu,K}(N) = 
\left\{\begin{array}{cl}
\displaystyle d_0(N)\, \tr\, K + 2 \sum_{j=1}^{p'} D_j(N) \sum_{\nu=1}^n\kappa_{\nu} \cos\left(\frac{2\pi j m_\nu}{p'}\right)
& N \textrm{ even,}
\\
\displaystyle 2\sum_{j=1/2}^{p'-1/2} D_j(N) \sum_{\nu =1}^n\kappa_{\nu} \cos\left(\frac{2\pi j m_\nu}{p'}\right)
& N \textrm{ odd,}
\end{array}\right.
\end{equation}
with $D_j(N)$ defined in \eqref{eq:DkN}. The rest of the proof is then done separately for the different models.

\subsection[$A_n$ models]{$\boldsymbol{A_n}$ models}

\paragraph{1) $\boldsymbol{\repM_{A_n,\mu,\id}}$.}

In this case, $N$ is even and $\tr\, K = n$. Using the identity
\begin{equation}
\sum_{\nu=1}^n \cos\left(\frac{2\pi m_\nu j}{p'}\right) = p'\delta_{j,p'}-1\,, 
\qquad j \in \{1,2,\dots,p'\} \,,
\end{equation}
we find
\begin{alignat}{2}
\dim \, \repM_{A_n,\mu,\id}(N) 
&= (p'-1) \big(d_0(N)+2 D_{p'}(N)\big) - 2\sum_{j=1}^{p'-1} D_j(N)
\nn\\&= \sum_{j=1}^{p'-1} \big(D_0(N) - D_j(N) - D_{p'-j}(N) + D_{p'}\big)
\nn\\&
= \sum_{s=1}^{p'-1} \dim \repQ_{0,q^s}(N)\,,
\end{alignat}
where we used the identity
\be
D_0(N) = d_0(N) + D_{p'}(N)\,.
\ee

\paragraph{2) $\boldsymbol{\repM_{A_n,\mu,R}}$ with $\boldsymbol n$ odd.}

In this case, $N$ is even, $\tr\, K = 1$ and $\kappa_\nu = (-1)^{\nu+1}$. Using the identity
\begin{equation}
\sum_{\nu=1}^n (-1)^{\nu}\cos\left(\frac{2\pi m_\nu j}{p'}\right) = p'\delta_{j,p'/2}-1\,, 
\qquad j \in \{1,2,\dots,p'\} \,,
\end{equation}
we find
\begin{alignat}{2}
\dim \, \repM_{A_n,\mu,R}(N) 
&= d_0(N) -2 (p'-1) D_{p'/2}(N) + 2 D_{p'}(N) + \sum_{j=1}^{(p'-2)/2}2\big(D_j(N)+D_{p'-j}(N)\big)
\nn\\&= \big(D_0(N) -2 D_{p'/2}+D_{p'}(N)\big) + \sum_{j=1}^{(p'-2)/2} 2\big(D_j(N) - 2D_{p'/2}(N) + D_{p'-j}(N)\big)
\nn\\&
= \dim \repQ_{0,p'/2}(N) + \sum_{k=1}^{(p'-2)/2} 2\dim \repQ_{k,q^{p'/2}}(N)\,.
\end{alignat}

\paragraph{3) $\boldsymbol{\repM_{A_n,\mu,R}}$ with $\boldsymbol n$ even.}

In this case, $N$ is odd and $\kappa_\nu = (-1)^{\mu+1}$. Using the identity
\begin{equation}
\sum_{\nu=1}^n (-1)^{\nu}\cos\left(\frac{2\pi m_\nu j}{p'}\right) = p'\delta_{j,p'/2}-1\,, 
\qquad j \in \{\tfrac12,\tfrac32,\dots,p'-\tfrac12\} \,,
\end{equation}
we find
\begin{alignat}{2}
\dim \, \repM_{A_n,\mu,R}(N) 
&= -2 (p'-1) D_{p'/2}(N) + \sum_{j=1/2}^{(p'-1)/2}2\big(D_j(N)+D_{p'-j}(N)\big)
\nn\\&= \sum_{j=1/2}^{(p'-1)/2} 2\big(D_j(N) - 2D_{p'/2}(N) + D_{p'-j}(N)\big)
\nn\\&
= \sum_{k=1/2}^{(p'-1)/2} 2\dim \repQ_{k,q^{p'/2}}(N)\,,
\end{alignat}
where the sums run over half-integers.

\subsection[$D_n$ models]{$\boldsymbol{D_n}$ models}

\paragraph{1) $\boldsymbol{\repM_{D_n,\mu,\id}}$.}

In this case, we have $\tr\, K = n$. Using the identity
\begin{equation}
\sum_{\nu=1}^n \cos\left(\frac{2\pi m_\nu j}{p'}\right) = \frac{p'}2\delta_{j,p'}-\frac{p'}2\delta_{j,p'/2}+(-1)^j\,, 
\qquad j \in \{1,2,\dots,p'\} \,,
\end{equation}
we find
\be
\dim \, \repM_{D_n,\mu,\id}(N) 
= \frac{p'+2}2 \big(D_0(N)+D_{p'}(N)\big) - p' D_{p'/2}(N) +\sum_{j=1}^{p'-1}
(-1)^j\big(D_j(N)+D_{p'-j}(N)\big)\,.
\ee
We now split the discussion between $n$ odd and even, corresponding to $p' \equiv 0 \mod 4$ and $p' \equiv 2 \mod 4$ respectively. For $n$ odd, we have
\begin{alignat}{2}
\dim \, \repM_{D_n,\mu,\id}(N) 
&= \big(D_0(N)-2D_{p'/2}+D_{p'}(N)\big)+\sum_{j=1,3,\dots, p'-1} \big(D_0(N)-D_j(N)-D_{p'-j}(N)+D_{p'}(N)\big) 
\nn\\&
+ \sum_{j=2,4,\dots, p'/2-2}2\big(D_j(N)-2D_{p'/2}(N)+D_{p'-j}(N)\big)
\nn\\& 
= \dim \repQ_{0,q^{p'/2}} +\sum_{j=1,3,\dots, 2n-3} \dim \repQ_{0,q^s} + \sum_{s=1}^{(n-3)/2} 2 \dim \repQ_{2s,q^{p'/2}} \,.
\end{alignat}
For $n$ even, we instead have
\begin{alignat}{2}
\dim \, \repM_{D_n,\mu,\id}(N) 
&= \big(D_0(N)-2D_{p'/2}+D_{p'}(N)\big)+\sum_{j=1,3,\dots, p'-1} \big(D_0(N)-D_j(N)-D_{p'-j}(N)+D_{p'}(N)\big) 
\nn\\&
+ \sum_{j=2,4,\dots, p'/2-1}2\big(D_j(N)-2D_{p'/2}(N)+D_{p'-j}(N)\big)
\nn\\& 
= \dim \repQ_{0,q^{p'/2}} +\sum_{j=1,3,\dots, 2n-3} \dim \repQ_{0,q^s} + \sum_{s=1}^{(n-2)/2} 2 \dim \repQ_{2s,q^{p'/2}} \,.
\end{alignat}

\paragraph{2) $\boldsymbol{\repM_{D_n,\mu,P_{(n-1,n)}}}$.}

In this case, $N$ is even and we have $p'=2n-2$ and $\tr\, K = n-2$. The eigenvalues of $K$ are $\kappa_{\mu}=1$ for $\mu=1,2,\dots,n-1$ and $\kappa_n=-1$. Using the identity
\begin{equation}
\sum_{\nu=1}^n \kappa_\nu \cos\left(\frac{2\pi m_\nu j}{p'}\right) = \frac{p'}2\delta_{j,p'}-\frac{p'}2\delta_{j,p'/2}+(-1)^{j+1}\,, 
\qquad j \in \{1,2,\dots,p'\} \,,
\end{equation}
we find
\be
\dim \, \repM_{D_n,\mu,P_{(n-1,n)}}(N) 
= \frac{p'-2}2 \big(D_0(N)+D_{p'}(N)\big) - p' D_{p'/2}(N) - \sum_{j=1}^{p'-1}
(-1)^j\big(D_j(N)+D_{p'-j}(N)\big)\,.
\ee
The discussion again splits between the two parities of $n$. For $n$ odd, we have
\begin{alignat}{2}
\dim \, \repM_{D_n,\mu,P_{(n-1,n)}}(N) 
&= \sum_{j=2,4,\dots, p'-2} \big(D_0(N)-D_j(N)-D_{p'-j}(N)+D_{p'}(N)\big) 
\nn\\&
+ \sum_{j=1,3,\dots, (p'-2)/2}2\big(D_j(N)-2D_{p'/2}(N)+D_{p'-j}(N)\big)
\nn\\& 
= \sum_{j=2,4,\dots, 2n-4} \dim \repQ_{0,q^s} + \sum_{s=1}^{(n-1)/2} 2\dim \repQ_{2s-1,q^{p'/2}} \,.
\end{alignat}
For $n$ even, we instead have
\begin{alignat}{2}
\dim \, \repM_{D_n,\mu,P_{(n-1,n)}}(N) 
&= \sum_{j=2,4,\dots, p'-2} \big(D_0(N)-D_j(N)-D_{p'-j}(N)+D_{p'}(N)\big) 
\nn\\&
+ \sum_{j=1,3,\dots, (p'-4)/2} 2 \big(D_j(N)-2D_{p'/2}(N)+D_{p'-j}(N)\big)
\nn\\& 
= \sum_{j=2,4,\dots, 2n-4} \dim \repQ_{0,q^s} + \sum_{s=1}^{(n-2)/2} 2\dim \repQ_{2s-1,q^{p'/2}} \,.
\end{alignat}

\paragraph{3) $\boldsymbol{\repM_{D_n,\mu,P_{(134)}}}$.}

In this case, we have $\tr\, K = 1$. The exponents are $\{1,3,5,3\}$, and the eigenvectors of $D_4$ are chosen such that they are also eigenstates of $K$ with the respective eigenvalues $\kappa_\mu=\{1,\eE^{2 \pi \ir /3},1,\eE^{-2 \pi \ir /3}\}$. The sums over $\nu$ in \eqref{eq:M.with.Dj} contain only four terms and can be computed explicitly for $j=1,2, \dots, 6$. This yields
\begin{alignat}{2}
\dim \, \repM_{D_4,\mu,P_{(134)}}(N) 
&= D_0(N) + 4 D_1(N) - 4D_2(N) - 2 D_3(N) - 4 D_4(N) + 4 D_5(N) + D_6(N) 
\nn\\[0.1cm]&= \dim \repQ_{0,q^3} + 4 \dim \repQ_{1,q^2}\,.
\end{alignat}

\subsection[$E_6$, $E_7$ and $E_8$ models]{$\boldsymbol{E_6}$, $\boldsymbol{E_7}$ and $\boldsymbol{E_8}$ models}

For the $E_6$, $E_7$ and $E_8$ models, we have $p'=12,18,30$ respectively, and $N$ is even in all cases.

\paragraph{1) $\boldsymbol{\repM_{E_n,\mu,\id}}$ with $\boldsymbol{n=6,7,8}$.}

In this case, we gave $\tr\, K = n$. The sums over~$m$ in \eqref{eq:M.with.Dj} contain $n$~terms and can be evaluated explicitly. This yields
\begingroup
\allowdisplaybreaks
\begin{subequations}
\begin{alignat}{2}
\dim \, \repM_{D_6,\mu,\id}(N) 
&= 6D_0(N) -2 D_1(N) +2 D_2(N) + 4 D_3(N) - 6 D_4(N) - 2 D_5(N) - 4 D_6(N) \nn\\
&-2 D_7(N) - 6 D_8(N) + 4 D_9(N) + 2 D_{10}(N) - 2 D_{11}(N) + 6 D_{12}(N)
\nn\\[0.1cm]&= \sum_{s=1,4,5,7,8,11}\dim \repQ_{0,q^s}(N) + 2 \dim \repQ_{2,q^6}(N) + 4 \dim \repQ_{2,q^4}(N)\,,\\[0.1cm]
\dim \, \repM_{D_7,\mu,\id}(N) 
&= 7D_0(N) -2 D_1(N) +2 D_2(N) + 4 D_3(N) + 2 D_4(N) - 2 D_5(N) - 4 D_6(N) \nn\\
&-2 D_7(N) +2 D_8(N) - 14 D_9(N) + 2 D_{10}(N) - 2 D_{11}(N) -4 D_{12}(N) \nn\\
&-2 D_{13}(N) +2 D_{14}(N) + 4 D_{15}(N) + 2 D_{16}(N) - 2 D_{17}(N) +7 D_{18}(N)
\nn\\[0.1cm]&= \sum_{s=1,5,7,9,11,13,17}\dim \repQ_{0,q^s}(N) + \sum_{k=2,4,8}2 \dim \repQ_{k,q^9}(N) + 4 \dim \repQ_{3,q^6}(N)\,,\\[0.1cm]
\dim \, \repM_{D_8,\mu,\id}(N) 
&= 8D_0(N) -2 D_1(N) +2 D_2(N) + 4 D_3(N) + 2 D_4(N) +8 D_5(N) - 4 D_6(N) \nn\\
&-2 D_7(N) +2 D_8(N) +4 D_9(N) -8 D_{10}(N) - 2 D_{11}(N) -4 D_{12}(N) \nn\\
&-2 D_{13}(N) +2 D_{14}(N) -16 D_{15}(N) + 2 D_{16}(N) - 2 D_{17}(N) -4 D_{18}(N)\nn\\
& -2D_{19}(N) -8 D_{20}(N) +4 D_{21}(N) +2 D_{22}(N) -2 D_{23}(N) -4 D_{24}(N)\nn\\
& +8D_{25}(N) +2 D_{26}(N) +4 D_{27}(N) + 2 D_{28}(N) -2 D_{29}(N) + 8D_{30}(N)\nn\\
\nn\\[0.1cm]&
= \sum_{s=1,7,11,13,17,19,23,29}\dim \repQ_{0,q^s}(N) + \sum_{k=2,4,8,14}2 \dim \repQ_{k,q^{15}}(N) 
\nn\\&
+ 4 \sum_{k=3,9} \dim \repQ_{k,q^{10}}(N) + 4 \sum_{s=6,12} \dim \repQ_{5,q^{s}}(N)\,.
\end{alignat}
\end{subequations}
\endgroup

\paragraph{2) $\boldsymbol{\repM_{E_6,\mu,P_{(15),(24)}}}$.}

In this case, we have $\tr\, K = 2$. The sums over~$m$ in \eqref{eq:M.with.Dj} are evaluated explicitly and yield
\begin{alignat}{2}
\dim \, \repM_{D_6,\mu,P_{(15)(24)}}(N) 
&= 2D_0(N) +2 D_1(N) +6 D_2(N) - 4 D_3(N) - 2 D_4(N) + 2 D_5(N) - 12 D_6(N)\nn\\&
+2 D_7(N) - 2 D_8(N) - 4 D_9(N) + 6 D_{10}(N) + 2 D_{11}(N) + 2 D_{12}(N)
\nn\\[0.1cm]&= \sum_{s=4,8}\dim \repQ_{0,q^s}(N) + \sum_{k=1,2,5} 2 \dim \repQ_{k,q^6}(N) + 4 \dim \repQ_{2,q^3}(N)\,.
\end{alignat}

%
\section{Insertion states for $\boldsymbol{\repM_{\g,\mu,K}}$ with $\boldsymbol{k>0}$}\label{app:insertion}
%

\subsection[The constants $\Gamma_{s,\ell}$]{The constants $\boldsymbol{\Gamma_{s,\ell}}$}
\label{app:GammasL}

In \cite{LRMD22}, the constant $\Gamma_{s,\ell}$ were found to be given by $\Gamma_{0,\ell} = \frac {x^{-\ell}}N$ for $s=0$, and for $1 \le s \le \lfloor\frac{N-1}2\rfloor$ by
\begin{alignat}{2}
\label{eq:Gammakl}
\Gamma_{s,\ell} &= 
\frac{x^{-\ell}}{N(q-q^{-1})^{2s-1}[s]![s-1]!} 
\sum_{\sigma = \pm 1}\sum_{\kappa = 1}^s\sum_{\tau = 0}^{s-\kappa} \frac{(-1)^{s+\kappa} \sigma q^{\sigma (\ell \kappa+N \tau)}}{x^2 q^{\sigma (N-2\kappa)}-1} 
\frac{\left[\begin{matrix} s-1\\ \kappa-1\end{matrix}\right]}
{\left[\begin{matrix} N-\kappa-1\\ N-s-1\end{matrix}\right]}
\nonumber\\[0.15cm]&\hspace{7cm}\times
\left[\begin{matrix} N - s -\ell - \kappa - \tau - 1\\ s-\kappa-\tau\end{matrix}\right]
\left[\begin{matrix} \ell + \tau - 1\\ \tau\end{matrix}\right]\,,
\end{alignat}
where the $q$-factorial and the $q$-binomial are
\be
[n]! = \prod_{j=1}^n [j]\,, 
\qquad
\left[\begin{matrix}
n \\ m
\end{matrix}\right] = \frac{[n]!}{[n-m]![m]!}\,,
\ee
and we use the convention for $\ell = 0$
\be
\left[\begin{matrix} \ell + \tau - 1\\ \tau\end{matrix}\right] \, \xrightarrow{\ell = 0} \, \delta_{\tau,0}\,.
\ee
These formulas are valid for $\eptl_N(\beta,\gamma)$, $\eptl^{\tinyx 1}_N(\beta,\gamma)$, and $\eptl^{\tinyx 2}_N(\beta,\gamma)$. For $\eptl^{\tinyx 1}_N(\beta,\gamma)$, there is the extra constant
\be
\label{eq:Gamma.n/2}
\Gamma_{N/2,0} = \left\{\begin{array}{cl}
\displaystyle-\frac12 \frac1{(q-q^{-1})^{N-2} [\frac{N-2}2]!^2}\frac{1}{\alpha-(q^{N/2}+q^{-N/2})}
& x = 1,
\\[0.5cm]
\displaystyle\frac12 \frac1{(q-q^{-1})^{n-2} [\frac{N-2}2]!^2}\frac{1}{\alpha+(q^{N/2}+q^{-N/2})}
& x=-1,
\\[0.5cm]
0 & \textrm{otherwise.}
\end{array}
\right.
\ee

For the derivations in \cref{app:insertAn,app:insertDn,app:insertE678}, we are particulary interested in the constants $\Gamma_{s,0}$ for $0 \le s \le \lfloor\frac{N-1}2\rfloor$. By expanding the $q$-binomials in \eqref{eq:Gammakl}, we rewrite $\Gamma_{s,0}$ as
\be
\label{eq:Gammas0}
\Gamma_{s,0} 
= \frac1{N (q-q^{-1})^{2s-1}}
\left[\begin{matrix}
N-s-1 \\ s
\end{matrix}\right]
\sum_{\sigma = \pm 1} \sum_{\kappa = 1}^s \frac{(-1)^{s+\kappa}}{x^2 q^{\sigma(N-2\kappa)}-1}
\frac{[N-s-\kappa-1]!}{[s-\kappa]![\kappa-1]![N-\kappa-1]!}\,.
\ee
We now prove two results about these constants.
\begin{Proposition}
The constants $\Gamma_{s,0}$ are equivalently given by
\be
\label{eq:Gammas0simple}
\Gamma_{s,0} = (-1)^s \frac{x^{2s}}N
\left[\begin{matrix}
N-s-1 \\ s
\end{matrix}\right] 
\prod_{\sigma = \pm 1}\prod_{\kappa=1}^s\frac1{x^2 q^{\sigma (N-2 \kappa)} - 1}\,,
\qquad
0 \le s \le \Big\lfloor\frac{N-1}2\Big\rfloor\,.
\ee
\end{Proposition}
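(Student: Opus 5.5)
The plan is to regard \eqref{eq:Gammas0} as a partial-fraction decomposition in the variable $X=x^2$, and to show that it coincides with the decomposition of the product in \eqref{eq:Gammas0simple}. Concretely, set $y_{\sigma,\kappa}=q^{\sigma(N-2\kappa)}$ for $\sigma=\pm1$ and $\kappa=1,\dots,s$, and define
\be
F(X) = \frac{X^{s}}{\prod_{\sigma=\pm1}\prod_{\kappa=1}^s (X y_{\sigma,\kappa}-1)}\,.
\ee
Both sides of the claimed identity share the prefactor $\frac1N\big[\begin{smallmatrix}N-s-1\\ s\end{smallmatrix}\big]$. After cancelling it, the statement becomes
\be
(-1)^s F(X)=\frac{1}{(q-q^{-1})^{2s-1}}\sum_{\sigma,\kappa}\frac{(-1)^{s+\kappa}}{X y_{\sigma,\kappa}-1}\,\frac{[N-s-\kappa-1]!}{[s-\kappa]!\,[\kappa-1]!\,[N-\kappa-1]!}\,.
\ee
For generic $q$ the $2s$ poles $X=y_{\sigma,\kappa}^{-1}$ are distinct, because $N-2\kappa\neq0$ for $\kappa\le s\le\lfloor\frac{N-1}2\rfloor$. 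The numerator has degree $s$ and the denominator has degree $2s$, so $F$ vanishes at infinity and is therefore exactly the sum of its simple-pole parts. The proof thus reduces to matching residues, after which the root-of-unity case follows by continuity wherever both sides are defined.

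The first step is to compute the coefficient of $1/(Xy_{\sigma,\kappa}-1)$ in $F$. Evaluate $X^s/\prod_{(\sigma',\kappa')\neq(\sigma,\kappa)}(Xy_{\sigma',\kappa'}-1)$ at $X=q^{-\sigma(N-2\kappa)}$. Each factor is then $q^{\sigma'(N-2\kappa')-\sigma(N-2\kappa)}-1$, which is $q^{u}(q^{u}-q^{-u})$ times a sign for a suitable half-difference $u$. There are two kinds of factors:
\begin{itemize}
\item[(i)] those with $\sigma'=\sigma$ give $u=\sigma(\kappa-\kappa')$, so their product yields $[\kappa-1]!\,[s-\kappa]!$ times powers of $(q-q^{-1})$ and signs;
\item[(ii)] those with $\sigma'=-\sigma$ give $u=\pm(N-\kappa-\kappa')$, so their product over $\kappa'=1,\dots,s$ yields $[N-\kappa-1]!/[N-\kappa-s-1]!$.
\end{itemize}
These are exactly the $q$-factorials appearing in \eqref{eq:Gammas0}. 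Counting the factors, there are $2s-1$ of them, which accounts for $(q-q^{-1})^{2s-1}$.

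The second step, and the main obstacle, is bookkeeping. One has to check that all leftover powers of $q$ cancel: the $q^{u}$ prefactors combine with $X^s=q^{-s\sigma(N-2\kappa)}$. One also has to check that the signs combine to $(-1)^{s+\kappa}\cdot(-1)^s$, independently of $\sigma$. I would handle this by writing $q^{a}-1=q^{a/2}(q^{a/2}-q^{-a/2})$ uniformly, summing the exponents $a/2$ over all $2s-1$ factors, and verifying that the total equals $s\sigma(N-2\kappa)$. This sum is an arithmetic series in $\kappa'$. The sign coming from reversing $q^{u}-q^{-u}$ for $\kappa'>\kappa$ in group (i) contributes $(-1)^{s-\kappa}$, and this should produce the stated $(-1)^{s+\kappa}$.

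If the direct computation of the exponents becomes messy, I would instead verify the identity for $s=1$ explicitly and then argue by induction on $s$ using the recursion satisfied by the $q$-factorial ratios. I expect, however, that the residue computation closes directly.
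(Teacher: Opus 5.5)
Your strategy is the paper's own: treat both sides as rational functions of $X=x^2$ that vanish at infinity, and match residues at $X=q^{-\sigma(N-2\kappa)}$. As you have set it up, however, the residue check fails. It fails exactly where you expect the signs to combine ``independently of $\sigma$''.

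The cause is that \eqref{eq:Gammas0}, your starting point, has lost the factor $\sigma$ that appears in the summand of \eqref{eq:Gammakl}. At $\ell=0$ the convention forces $\tau=0$, so $\sigma q^{\sigma(\ell\kappa+N\tau)}$ reduces to $\sigma$, not to $1$. Your reduced identity is therefore false, and you can see this without computing anything. The left side $(-1)^sF(X)$ is invariant under $q\to q^{-1}$, since the factors are only permuted. Your right-hand side changes sign: the sum over $\sigma$ is only permuted, while $(q-q^{-1})^{2s-1}$ is odd. Already at $s=1$, with $a=q^{N-2}$, your right side equals $\frac{X(a+a^{-1})-2}{(q-q^{-1})[N-2](Xa-1)(Xa^{-1}-1)}$, which does not even vanish at $X=0$.

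In your own bookkeeping the discrepancy shows up as follows. For $\sigma=+1$, the factors that must be reversed are those with $\kappa'>\kappa$ in group (i) and all $s$ factors of group (ii), giving $(-1)^{s-\kappa}(-1)^s$. For $\sigma=-1$, only the factors with $\kappa'<\kappa$ in group (i) are reversed, giving $(-1)^{\kappa-1}$. The coefficient of $1/(Xy_{\sigma,\kappa}-1)$ in $(-1)^sF$ is thus
\[
\sigma\,\frac{(-1)^{s+\kappa}}{(q-q^{-1})^{2s-1}}\,\frac{[N-s-\kappa-1]!}{[s-\kappa]!\,[\kappa-1]!\,[N-\kappa-1]!}\,.
\]
All powers of $q$ cancel for both values of $\sigma$: the exponents $u$ sum to $-\sigma s(N-2\kappa)$, which matches $X^s$ at the pole. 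Your stated target $+\sigma s(N-2\kappa)$ has the wrong sign.

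The repair is simple. Derive the partial-fraction form from \eqref{eq:Gammakl} at $\ell=0$, keeping the factor $\sigma$; the $q$-binomial simplification is otherwise the one that leads to \eqref{eq:Gammas0}. With that change your residue computation closes and proves \eqref{eq:Gammas0simple}, which is correct as stated. The case $s=0$ needs separate treatment: there $F\equiv 1$ does not vanish at infinity and the $\kappa$-sum is empty. In that case the claim is just $\Gamma_{0,0}=1/N$, which holds directly.
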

\proof
We first remark that the right sides of \eqref{eq:Gammas0} and \eqref{eq:Gammas0simple} are both rational functions in $x^2$ that vanish in the limit $x\to\infty$. The first expression is in fact a partial fraction decomposition of the second. To prove this, it suffices to show that they have the same residues at $x^2 = q^{-\sigma(N-2\kappa)}$ for $\sigma \in \{-1,+1\}$ and $\kappa \in \{1,2,\dots, s\}$, which is a straightforward verification.
\eproof

\begin{Proposition}
Let $\gamma \in \{+1,-1\}$. The constants $\Gamma_{s,0}$ satisfy the identities
\begin{subequations}
\label{eq:sum.Gammas0}
\begin{alignat}{2}
N \textrm{ even}&:\ \sum_{s=0}^{\frac{N-2}2} \frac{(-1)^s \Gamma_{s,0} }{q^{\frac N2-s}+\gamma\, q^{s-\frac N2}}= \frac{1}{N (q^{\frac N2}+\gamma\, q^{-\frac N 2})} \frac{(1-x^N)(1+\gamma\, x^N)}{(1-x^2)(1+\gamma\, x^2)} \prod_{\sigma = \pm 1} \prod_{\kappa=1}^{\frac{N-2}2} \frac1{x^2 q^{\sigma (N-2 \kappa)} - 1}\,,
\label{eq:sum.Gammas0.E}\\\label{eq:sum.Gammas0.O}
N \textrm{ odd}&:\ \sum_{s=0}^{\frac{N-1}2} \frac{(-1)^s \Gamma_{s,0}}{q^{\frac N2-s}+\gamma\, q^{s-\frac N2}} = \frac{1}{N (q^{\frac N2}+ \gamma\, q^{-\frac N2})} \frac{1+\gamma \,x^{2N}}{1+\gamma\, x^2} \prod_{\sigma = \pm 1} \prod_{\kappa=1}^{\frac {N-1}2} \frac1{x^2 q^{\sigma (N-2 \kappa)} - 1}\,.
\end{alignat}
\end{subequations}
\end{Proposition}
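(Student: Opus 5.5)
The approach is the same as in the proof of \eqref{eq:Gammas0simple}: both sides are rational functions of $y=x^2$, so we compare their poles and residues and check that they vanish at infinity. We first insert \eqref{eq:Gammas0simple} into the left side. Writing $s_{\max}=\lfloor\frac{N-1}2\rfloor$, each summand carries the factor $\prod_{\sigma}\prod_{\kappa=1}^{s}(yq^{\sigma(N-2\kappa)}-1)^{-1}$. This divides $\prod_{\sigma}\prod_{\kappa=1}^{s_{\max}}(yq^{\sigma(N-2\kappa)}-1)^{-1}$, which is the product appearing on the right side. We multiply both sides by this full product, which turns the claim into a polynomial identity in $y$:
\begin{equation*}
\sum_{s=0}^{s_{\max}} \frac{y^s}{N}\begin{bmatrix}N-s-1\\ s\end{bmatrix}\frac{\prod_{\sigma}\prod_{\kappa=s+1}^{s_{\max}}(yq^{\sigma(N-2\kappa)}-1)}{q^{\frac N2-s}+\gamma q^{s-\frac N2}}
=\frac{R_N(y)}{N(q^{\frac N2}+\gamma q^{-\frac N2})}\,.
\end{equation*}
Here $(-1)^s(-1)^s=1$ has been used. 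For $N$ even, $R_N(y)=\frac{(1-y^{N/2})(1+\gamma y^{N/2})}{(1-y)(1+\gamma y)}$; for $N$ odd, $R_N(y)=\frac{1+\gamma y^N}{1+\gamma y}$. Both are genuine polynomials: the parities work out because $\gamma=\pm1$ and, for $N$ odd, $N$ is odd.

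Both sides are then polynomials of degree $2s_{\max}$ in $y$, and we have two ways to finish. The first is to check equality at $2s_{\max}+1$ points. The natural choice is the points $y=q^{\mp(N-2\kappa)}$, because at such a point every term with $s<\kappa$ is killed by the remaining product. Combined with $y=0$ and the leading coefficient, this gives a triangular system. The second route, which is cleaner, uses the roots of the right side. For $N$ odd, the roots of $R_N$ are the $N$-th roots of $-\gamma$ other than $-\gamma$ itself. For $N$ even, they are the roots of $y^N=\gamma$ with $y^2\neq\gamma$, excluding $y=\pm1$. Write $y=t^2$ with $t$ such that $t^{2N}=-\gamma$ (up to the exclusions). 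The substitution $y=t^2$ suggests recognising the left side as a $q$-analogue of a Chebyshev-type expansion. Concretely, $\sum_s\frac{[N]}{[N-s]}\begin{bmatrix}N-s\\ s\end{bmatrix}(-1)^s z^{N-2s}$ is the expansion of $q$-symmetric power sums such as $t^N+t^{-N}$. This points to a generating-function identity: after dividing by $t^{N}$, the left side becomes a symmetric Laurent polynomial in $t$ whose value is governed by $t^{N}+\gamma t^{-N}$ or $t^N-\gamma t^{-N}$.

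In practice, I would first normalise both sides at $y=0$. The only surviving term there is $s=0$, which gives $\frac{1}{N(q^{N/2}+\gamma q^{-N/2})}\prod_{\kappa}1$, with the sign coming from $(-1)^{2s_{\max}}$, and this matches $R_N(0)=1$. Next I would compare leading coefficients. Then I would verify vanishing at the roots of $R_N$. For this step I would try to prove, by induction on $N$, a three-term recursion satisfied by the left side as a function of $N$. The $q$-binomials obey the Pascal-type relation $\begin{bmatrix}N-s-1\\ s\end{bmatrix}$-recursion, and $R_N$ satisfies an obvious recursion in $N$. Matching these recursions together with the base cases $N=1,2$ completes the argument.

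The main obstacle is the vanishing/recursion step. The products $\prod_{\kappa>s}$ depend on $N$ through $q^{\sigma(N-2\kappa)}$, so the induction does not simply shift indices. If the recursion becomes unwieldy, I would fall back on the pole/residue comparison in $y$ before clearing denominators, as in the previous proposition. At $y=q^{-\sigma(N-2\kappa)}$, only the terms with $s\ge\kappa$ contribute to the residue. These residues reduce to a finite alternating $q$-binomial sum, which should be evaluable by the $q$-Chu–Vandermonde identity. For the right side, the residue is computed directly: $R_N$ evaluated at that point gives explicit products of $q$-numbers.
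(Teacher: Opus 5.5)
You correctly reduce the claim to a polynomial identity. After inserting \eqref{eq:Gammas0simple} and multiplying by the full product, both sides have degree $2s_{\max}$ in $y=x^2$, and their constant terms and leading coefficients agree. But the argument stops there: the step that carries the content of the identity is only planned, never done.
\begin{itemize}
\item The values at $y=q^{\mp(N-2\kappa)}$ are never computed.
\item Vanishing at the zeros of $R_N$ is never shown.
\item The recursion in $N$ is never written down, and you note yourself that it does not shift cleanly.
\item The residue sums are only said to be ``evaluable'' by $q$-Chu--Vandermonde. With the $s$-dependent factor $1/(q^{\frac N2-s}+\gamma q^{s-\frac N2})$ present, it is not clear that they reduce to it.
\end{itemize}
The sketch also gets the zeros wrong for $N$ even and $\gamma=-1$. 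There $R_N(y)=\big((1-y^{N/2})/(1-y)\big)^2$, whose zeros are the nontrivial $(N/2)$-th roots of unity, each of multiplicity two, not the solutions of $y^N=-1$. So for $N\ge 6$, the zeros plus the constant term and leading coefficient give at most $N/2+1$ of the $N-1$ conditions needed, and you would also have to check derivatives.

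The missing idea is the one the paper uses: evaluate the sum coefficient by coefficient in $y$. Expand the leftover product with the symmetric $q$-binomial theorem. With $n=N-2s-1$,
\begin{equation*}
\prod_{\sigma=\pm1}\prod_{\kappa=s+1}^{s_{\max}}\big(yq^{\sigma(N-2\kappa)}-1\big)=\mu\prod_{j=-\frac{n-1}2}^{\frac{n-1}2}\big(1-yq^{2j}\big)=\mu\sum_{j=0}^{n}(-y)^j\begin{bmatrix}n\\ j\end{bmatrix},
\end{equation*}
where $\mu=1/(1-y)$ for $N$ even and $\mu=1$ for $N$ odd. Setting $t=s+j$ and exchanging the sums, your left side becomes $\frac{\mu}{N}\sum_{t=0}^{N-1}(-y)^t\,C_t$, where
\begin{equation*}
C_t=\sum_{s=0}^{\min(t,N-t-1)}\frac{(-1)^s}{q^{\frac N2-s}+\gamma\,q^{s-\frac N2}}\begin{bmatrix}N-s-1\\ s\end{bmatrix}\begin{bmatrix}N-2s-1\\ t-s\end{bmatrix}=\frac{\gamma^{\min(t,N-t-1)}}{q^{\frac N2}+\gamma\,q^{-\frac N2}}\,.
\end{equation*}
The paper proves this by writing each summand as the residue at $z=q^{N-2s}$ of an explicit rational function $g(z)$ that decays like $z^{-2}$. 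The sum is then minus the residue at the only other pole, $z=-\gamma$. The remaining sum $\sum_{t=0}^{N-1}\gamma^{\min(t,N-t-1)}(-y)^t$ is elementary: it equals $(1-y)R_N(y)$ for $N$ even and $R_N(y)$ for $N$ odd, which is exactly your polynomial identity. Without this evaluation, or an equivalent explicit computation at your interpolation points, the proposal does not prove the proposition.
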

\proof 
Starting from the expression \eqref{eq:Gammas0simple} for $\Gamma_{s,0}$, we first write
\begin{alignat}{2}
\prod_{\sigma = \pm 1} \prod_{\kappa = 1}^s \frac1{x^2 q^{\sigma(N-2\kappa)}-1} &= 
\mu \prod_{\sigma = \pm 1} \prod_{\kappa = 1}^{\lfloor\frac{N-1}2\rfloor} \frac1{x^2 q^{\sigma(N-2\kappa)}-1} \prod_{\kappa = -(\frac N2-s-1)}^{\frac N2-s-1} (1-x^2 q^{2 \kappa})
\nn\\
& = \mu \prod_{\sigma = \pm 1} \prod_{\kappa = 1}^{\lfloor\frac{N-1}2\rfloor} \frac1{x^2 q^{\sigma(N-2\kappa)}-1} \sum_{j=0}^{\frac N2-s-1}(-1)^j x^{2j} 
\left[\begin{matrix}N-2s-1\\j\end{matrix}\right]\,,
\end{alignat}
where 
\be
\mu = \left\{\begin{array}{cl}
\displaystyle\frac1{1-x^2} & N \textrm{ even,}\\[0.3cm]
\displaystyle 1 & N \textrm{ odd,}
\end{array}\right.
\ee
and we used the identity
\be
\sum_{j=0}^n z^j \left[\begin{matrix}n\\j\end{matrix}\right] = \prod_{j=-\frac{n-1}2}^{\frac{n-1}2}(1+z q^2j)\,.
\ee
This yields
\begin{alignat}{2}
\sum_{s=0}^{\lfloor\frac{N-1}2\rfloor} &\frac{(-1)^s \Gamma_{s,0} }{q^{\frac N2-s}+\gamma \, q^{s-\frac N2}}=
\frac \mu N 
\prod_{\sigma = \pm 1} \prod_{\kappa = 1}^{\lfloor\frac{N-1}2\rfloor} \frac1{x^2 q^{\sigma(N-2\kappa)}-1} \sum_{s=0}^{\lfloor\frac{N-1}2\rfloor} \sum_{j=0}^{\frac N2-s-1} \frac{(-1)^j x^{2(s+j)}}{q^{\frac N2-s}+\gamma\, q^{s-\frac N2}} \left[\begin{matrix}N-s-1\\s\end{matrix}\right] \left[\begin{matrix}N-2s-1\\j\end{matrix}\right]
\nn\\[0.15cm]
& \hspace{-0.5cm} = \frac \mu N 
\prod_{\sigma = \pm 1} \prod_{\kappa = 1}^{\lfloor\frac{N-1}2\rfloor} \frac1{x^2 q^{\sigma(N-2\kappa)}-1} \sum_{t=0}^{N-1} \sum_{s=0}^{\min(t,N-t-1)} \frac{(-1)^{s+t}x^{2t}}{q^{\frac N2-s}+\gamma\,q^{s-\frac N2}} \left[\begin{matrix}N-s-1\\s\end{matrix}\right] \left[\begin{matrix}N-2s-1\\t-s\end{matrix}\right],
\label{eq:sum.of.Gammas}
\end{alignat}
where we changed the summation index $j$ to $t = j+s$ and exchanged the order of the sums.\medskip

Next, we introduce the function 
\be
g(z) = \frac1{z+\gamma} \prod_{\kappa = 0}^{\min(t,N-t-1)} \frac 1{z\, q^{-\frac N2+\kappa}-q^{\frac N2-\kappa}} \prod_{\kappa = 1}^{\min(t,N-t-1)} (z\, q^{\frac N2-\kappa}-q^{-\frac N2+\kappa})\,. 
\ee
It has the residues
\be
\textrm{Res}\big(g(z),z \to q^{N-2s}\big) = \frac{(-1)^s}{q^{\frac N2-s}+\gamma\, q^{s-\frac N2}}
\left[\begin{matrix}N-s-1\\s\end{matrix}\right] \left[\begin{matrix}N-2s-1\\t-s\end{matrix}\right]
\ee
for $s = 0,1, \dots, \min(t,N-t-1)$. We can therefore write 
\be
\sum_{s=0}^{\min(t,N-t-1)} \frac{(-1)^{s}}{q^{\frac N2-s}+\gamma\,q^{s-\frac N2}} 
\left[\begin{matrix}N-s-1\\s\end{matrix}\right] \left[\begin{matrix}N-2s-1\\t-s\end{matrix}\right]
= \oint_{\mathcal C} \frac{\dd z}{2 \pi \ir}\, g(z)\,,
\ee
where $\mathcal C$ is a contour encircling the poles at $z=q^{N-2s}$ with $s = 0,1, \dots, \min(t,N-t-1)$ in the counter-clockwise direction, but not the pole at $z=-\gamma$. One can check that the function $\frac1{y^{2}} g(\frac 1y)$ is regular at $y=0$, implying that $g(z)$ has no pole at infinity. We can therefore deform the contour $\mathcal C$ so that it encircles only the pole at $z=-\gamma$ in the clockwise direction. This yields
\be
\sum_{s=0}^{\min(t,N-t-1)} \frac{(-1)^{s}}{q^{\frac N2-s}+q^{s-\frac N2}} 
\left[\begin{matrix}N-s-1\\s\end{matrix}\right] \left[\begin{matrix}N-2s-1\\t-s\end{matrix}\right]
= -\textrm{Res}\big(g(z), z \to -\gamma\big) = \frac {\gamma^{\min(t,N-t-1)}}{q^{\frac N2}+\gamma \, q^{-\frac N2}}\,.
\ee
Inserting this result in \eqref{eq:sum.of.Gammas}, we use 
\be
\sum_{t=0}^{N-1} \gamma^{\min(t,N-t-1)}(-x^2)^t = 
\left\{\begin{array}{cl}
\displaystyle\frac{(1- x^N)(1+\gamma\, x^N)}{1+\gamma\, x^2} & N \textrm{ even,} \\[0.35cm]
\displaystyle\frac{1 + \gamma\, x^{2N}}{1+ \gamma\, x^2} & N \textrm{ odd,} 
\end{array}\right.
\ee 
and obtain the desired expressions \eqref{eq:sum.Gammas0}, ending the proof.
\eproof

In \cref{app:insertAn,app:insertDn,app:insertE678}, we use these results to show that certain overlaps involving the insertion states $w_{k,x}$ are non-zero, for the models $A_n$, $D_n$ and $E_n$, respectively, thus showing that these insertion states are also non-zero.

\subsection[$A_n$ models]{$\boldsymbol{A_n}$ models}\label{app:insertAn}

The decomposition of $\repM_{A_n,\mu,\id}$ only involves factors $\repQ_{0,q^s}$, so there are no insertion state with $k>0$ to construct in this case. For $\repM_{A_n,\mu,R}$, we construct the insertion state $w_{k,x}=\wh P_{2k,x} \cdot w_k$ associated to $\repQ_{k,x}(2k)$, with $x = \eps q^{p'/2}$, and
\be
w_k = \frac1{\sqrt 2}\big(u_1 + \rho\, u_2)\,,
\quad
\left\{\begin{array}{ll}
u_1 = | \frac{n-2k+1}2, \frac{n-2k+3}2, \dots, \frac{n+2k+1}2 \rangle\,,\\[0.15cm]
u_2 = | \frac{n+2k+1}2,\frac{n+2k-1}2, \dots, \frac{n-2k+1}2\rangle\,,
\end{array}\right.
\ee
with
\be
\rho = 
\left\{\begin{array}{ll}
(-1)^{k}& n \textrm{ odd},\, \mu \textrm{ odd},\\[0.1cm]
\eps (-1)^{p/2}& n \textrm{ even},\, \mu \textrm{ odd},\\[0.1cm]
\eps (-1)^{(p+1)k}& n \textrm{ even},\, \mu \textrm{ even},\\[0.1cm]
\end{array}\right.
\ee
and with $k=1,2,\dots, \frac{n-1}2$ for $n$ odd and $k=\frac12, \frac32, \dots, \frac{n-1}2$ for $n$ even. This choice of $\rho$ ensures that $w_k$ is an eigenvector of $K_{2k}$ with eigenvalue $x^{2k}$. We note moreover that $x \in \{+1,-1\}$ for $n$ even and $\mu$~odd, whereas $x \in \{+\ir,-\ir\}$ in the two other cases. Using \eqref{eq:c0.RSOS} and \eqref{eq:bilinear.form}, we find 
\begin{subequations}
\begin{alignat}{2}
\aver{c_0^s \cdot u_1, \Omega^\ell\, c_0^s \cdot u_1} &= \delta_{\ell,0}\,(\kappa_\mu)^s
\prod_{a = (n-2k+3)/2}^{(n-2k+2s+1)/2} \frac1{S_{a\mu}} 
\prod_{a = (n-2k+3)/2}^{(n+2k-2s+1)/2} \frac1{S_{a\mu}}\,,
\\[0.1cm]
\aver{c_0^s \cdot u_2, \Omega^\ell\, c_0^s \cdot u_2} &= \delta_{\ell,0}\,(\kappa_\mu)^s 
\prod_{a = (n-2k+2s+1)/2}^{(n+2k-1)/2} \frac1{S_{a\mu}}
\prod_{a = (n+2k-2s+1)/2}^{(n+2k-1)/2} \frac1{S_{a\mu}}\,,
\\[0.1cm]
\aver{c_0^s \cdot u_1, \Omega^\ell\, c_0^s \cdot u_2} &= \aver{c_0^s \cdot u_2, \Omega^\ell\, c_0^s \cdot u_1} = 0\,.
\end{alignat}
\end{subequations}
Because $\Omega^{2k-2s} c_0^s \cdot u_1 = \kappa_\mu^{k-s} c_0^s \cdot u_2$, we have
\be
\aver{c_0^s u_2, c_0^s u_2} = \kappa_\mu \aver{\Omega^{2k-2s} c_0^s u_1,\Omega^{2k-2s} c_0^s u_1} = \kappa_\mu \aver{c_0^s u_1, c_0^s u_1}\,,
\ee
so we need only compute $\aver{c_0^s \cdot u_1, c_0^s \cdot u_1}$. Using the identities
\begin{subequations}
\begin{alignat}{2}
\label{eq:Samu}
S_{a\mu} &= (\tfrac2{p'})^{1/2} \sin(\tfrac {\pi p}{p'}) (-1)^{a+1} [a] \,, 
\\[0.15cm]
[\tfrac{p'}2+ a] &=
\left\{\begin{array}{cl}
\displaystyle-\ir^{p}\, \frac{q^a + q^{-a}}{q-q^{-1}} & n \textrm{ odd with } \mu \textrm{ odd}, \textrm{ or } n \textrm{ even with } \mu \textrm{ even},
\\[0.35cm]
\displaystyle (-1)^{p/2}\, \frac{q^a - q^{-a}}{q-q^{-1}} & n \textrm{ even with } \mu \textrm{ odd},
\end{array}\right.
\\[0.15cm]
[\tfrac{p'}2+a]![\tfrac{p'}2-a]!&=
\left\{\begin{array}{cl}
[\tfrac{p'}2]![\tfrac{p'}2-1]![\tfrac{p'}2+a] & n \textrm{ odd},\, \mu \textrm{ odd},\, a \in \mathbb Z\,,\\[0.15cm]
(-1)^{a-1/2}[\tfrac{p'-1}2]!^2[\tfrac{p'}2+a] & n \textrm{ even},\, \mu \textrm{ odd},\,a \in \mathbb Z+\frac12\,,
\\[0.15cm]
[\tfrac{p'-1}2]!^2[\tfrac{p'}2+a] & n \textrm{ even},\, \mu \textrm{ even},\,a \in \mathbb Z+\frac12\,,
\end{array}\right.
\end{alignat}
\end{subequations}
we obtain after simplications
\be
\big\langle c_0^s \cdot u_1, c_0^s \cdot u_1 \big\rangle = \frac{2\, (\tfrac{p'}2)^{k}[\frac{p'}2-k]!^2}{\sin(\frac{\pi p}{p'})^{2k-1}}
\left\{\begin{array}{cl}
\displaystyle \frac{(-1)^{(2k+p-1)/2}}{[\frac{p'}2]![\frac{p'}2-1]!}\frac{(-1)^s}{q^{k-s}+q^{s-k}}
& n \textrm{ odd},\, \mu \textrm{ odd}\,, 
\\[0.55cm]
\displaystyle \frac{\ir \, (-1)^{(2k+p+n+1)/2}}{[\frac{p'-1}2]!^2}\frac{(-1)^s}{q^{k-s}-q^{s-k}}
& n \textrm{ even},\, \mu \textrm{ odd}\,, 
\\[0.55cm]
\displaystyle \frac{(-1)^{(p+n-1)/2}}{[\frac{p'-1}2]!^2}\frac{(-1)^s}{q^{k-s}+q^{s-k}}
& n \textrm{ even},\, \mu \textrm{ even}\,. 
\end{array}
\right.
\ee
For $n$ odd and $\mu$ odd, we use \eqref{eq:sum.Gammas0.E} with $\gamma = 1$ and find
\begin{alignat}{2}
\aver{w_k,w_{k,x}} &= \frac{2\, (-1)^{(2k+p-1)/2} (\tfrac{p'}2)^{k}}{\sin(\frac{\pi p}{p'})^{2k-1}} \frac{[\frac{p'}2-k]!^2}{[\frac{p'}2]![\frac{p'}2-1]!} \sum_{s=0}^{\lfloor (2k-1)/2\rfloor} \frac{(-1)^s}{q^{k-s}+q^{s-k}}\,\Gamma_{s,0} \\[0.15cm]\nn
& = \frac{2\, (-1)^{(2k+p-1)/2} (\tfrac{p'}2)^{k}}{\sin(\frac{\pi p}{p'})^{2k-1}} \frac{[\frac{p'}2-k]!^2}{[\frac{p'}2]![\frac{p'}2-1]!} \frac 1{2k} \frac 1{q^{k}+q^{-k}} \frac{1-x^{4k}}{1-x^4} \prod_{\sigma = \pm 1} \prod_{\kappa = 1}^{(2k-2)/2} \frac1{x^2 q^{2\sigma(k-\kappa)}-1}\,,
\end{alignat}
which is nonzero for $x \to \pm \ir$. Similarly, for $n$ even and $\mu$ odd, we use \eqref{eq:sum.Gammas0.O} with $\gamma = -1$ and find
\begin{alignat}{2}
\aver{w_k,w_{k,x}} &= \frac{2 \ir \, (-1)^{(2k+p+n+1)/2} (\tfrac{p'}2)^{k}}{\sin(\frac{\pi p}{p'})^{2k-1}} \frac{[\frac{p'}2-k]!^2}{[\frac{p'-1}2]!^2} \sum_{s=0}^{\lfloor (2k-1)/2\rfloor} \frac{(-1)^s}{q^{k-s}-q^{s-k}}\,\Gamma_{s,0} 
\\[0.15cm]\nn
& = \frac{2 \ir \, (-1)^{(2k+p+n+1)/2} (\tfrac{p'}2)^{k}}{\sin(\frac{\pi p}{p'})^{2k-1}} \frac{[\frac{p'}2-k]!^2}{[\frac{p'-1}2]!^2} \frac 1{2k} \frac 1{q^{k}-q^{-k}} \frac{1-x^{4k}}{1-x^2} \prod_{\sigma = \pm 1} \prod_{\kappa = 1}^{(2k-1)/2} \frac1{x^2 q^{2\sigma(k-\kappa)}-1}\,,
\end{alignat}
which is nonzero for $x \to \pm 1$. Lastly, for $n$ even and $\mu$ even, we define $\bar w_k =\frac 1{\sqrt 2}(u_1 - \rho\, u_2)$, use \eqref{eq:sum.Gammas0.O} with $\gamma = 1$ and find
\begin{alignat}{2}
\aver{\bar w_k,w_{k,x}} &= \frac{2 \, (-1)^{(p+n-1)/2} (\tfrac{p'}2)^{k}}{\sin(\frac{\pi p}{p'})^{2k-1}} \frac{[\frac{p'}2-k]!^2}{[\frac{p'-1}2]!^2} \sum_{s=0}^{\lfloor (2k-1)/2\rfloor} \frac{(-1)^s}{q^{k-s}-q^{s-k}}\,\Gamma_{s,0} 
\\[0.15cm]\nn
& = \frac{2 \, (-1)^{(p+n-1)/2} (\tfrac{p'}2)^{k}}{\sin(\frac{\pi p}{p'})^{2k-1}} \frac{[\frac{p'}2-k]!^2}{[\frac{p'-1}2]!^2} \frac 1{2k} \frac 1{q^{k}+q^{-k}} \frac{1+x^{4k}}{1+x^2} \prod_{\sigma = \pm 1} \prod_{\kappa = 1}^{(2k-1)/2} \frac1{x^2 q^{2\sigma(k-\kappa)}-1}\,,
\end{alignat}
which is nonzero for $x \to \pm \ir$. 

\subsection[$D_n$ models]{$\boldsymbol{D_n}$ models}\label{app:insertDn}

For the $D_n$ models, we consider simultaneously the cases $K=\id$, for which we set $N = 2k= 4t$ with $t=1, 2, \dots, \lfloor\frac{n-2}2\rfloor$, and $K=P_{(n-1,n)}$ with $N = 2k = 4t-2$ with $t=1, 2, \dots, \lfloor\frac{n-1}2\rfloor$. We construct the corresponding insertion states~$w_{k,x}$ for $\repQ_{k,\eps q^{p'/2}}(2k)$ as in \eqref{u.def}, with
\be
w_k = \frac1{\sqrt 2}(u_1 - u_2)\,,
\qquad
\left\{\begin{array}{ll}
u_1 = | n-k-1,n-k, \dots, n-2, n-1, n-2, \dots, n-k-1 \rangle\,,\\[0.15cm]
u_2 = | n-k-1,n-k, \dots, n-2, n, n-2, \dots, n-k-1 \rangle\,.
\end{array}\right.
\ee
It is easy to see that $e_j \cdot u = 0$ for $j = 1, 2, \dots, 2k-1$. With a similar calculation as for the $A_n$ models, we find 
\begin{subequations}
\begin{alignat}{2}
\aver{c_0^s \cdot u_1, \Omega^\ell\, c_0^s \cdot u_1} &= \aver{c_0^s \cdot u_1, \Omega^\ell\, c_0^s \cdot u_1}
= \delta_{\ell,0}\, \frac{(p')^{1/2}}{S_{n-1,\mu}}\frac{(-1)^{k+\mu+s}}{q^{k-s}+q^{s-k}}
 \prod_{a=n-k}^{n-2}\frac1{S^2_{a\mu}}\,, \\[0.1cm]
\aver{c_0^s \cdot u_1, \Omega^\ell\, c_0^s \cdot u_2} &= \aver{c_0^s \cdot u_2, \Omega^\ell\, c_0^s\cdot u_1} = 0\,.
\end{alignat}
\end{subequations}
and therefore
\begin{alignat}{2}
\aver{w_k,w_{k,x}} &= (-1)^{k+\mu}\frac{(p')^{1/2}}{S_{n-1,\mu}}
 \prod_{a=n-k}^{n-2}\frac1{S^2_{a\mu}}\sum_{s=0}^{k-1} \frac{(-1)^s}{q^{k-s}+q^{s-k}}\Gamma_{s,0}
\\\nn&=
(-1)^{k+\mu}\frac{(p')^{1/2}}{S_{n-1,\mu}}
\frac1{2k} \frac1{q^{k}+q^{-k}}
\frac{1-x^{4k}}{1-x^4}
\prod_{a=n-k}^{n-2}\frac1{S^2_{a\mu}} 
 \prod_{\sigma = \pm 1} \prod_{\kappa=1}^{k-1} \frac1{x^2 q^{2\sigma (k-\kappa)} - 1}\,,
\end{alignat}
which is nonzero for $x \to \pm\ir$.\medskip

Lastly, for $\repM_{D_4,\mu,P_{(134)}}$, we construct the states $w_1$ and $w_{1,\eps q^{2 \sigma}}$ of $\repQ_{1,\eps q^{2 \sigma}}(2)$ as
\begin{subequations}
\begin{alignat}{2}
w_1 &= |2,1,2 \rangle + \omega^{-\sigma}|2,3,2 \rangle + \omega^{\sigma}|2,4,2 \rangle
\\[0.15cm]
w_{1,\eps q^{2 \sigma}} &= \wh{P}_{2,\eps q^{2\sigma}} \cdot w_1=\tfrac12 \big(|2,1,2 \rangle + \omega^{-\sigma} |2,3,2 \rangle + \omega^{\sigma} |2,4,2 \rangle - \eps\, \omega^{\sigma} |1,2,3 \rangle - \eps |3,2,4 \rangle - \eps\, \omega^{-\sigma} |4,2,1 \rangle\big)\,.
\end{alignat}
\end{subequations}
The state $w_{1,\eps q^{2 \sigma}}$ is nonzero, and it is easy to check that $w_{1,\eps q^{2 \sigma}}$ satisfies \eqref{eq:def.insertion.kx,k>0} with $x = \eps q^{2 \sigma}$.

\subsection[$E_6$, $E_7$ and $E_8$ models]{$\boldsymbol{E_6}$, $\boldsymbol{E_7}$ and $\boldsymbol{E_8}$ models}\label{app:insertE678}

Below, we construct the states $w_k$ and $w_{k,x}$ for the models $\repM_{E_6,\mu,\id}$, $\repM_{E_6,\mu,P_{(15),(24)}}$, $\repM_{E_7,\mu,\id}$, $\repM_{E_8,\mu,\id}$. The resulting overlaps $\aver{w_k,w_{k,x}}$ are sums of finitely many constants $\Gamma_{s,\ell}$, which can be computed to show that $\aver{w_k,w_{k,x}} \neq 0$ in each case.

\paragraph{$\boldsymbol{\repM_{E_6,\mu,\id}}$.}

We construct the insertion states for $\repQ_{2,\eps q^{6}}(4)$ and $\repQ_{3,\eps q^{4 \sigma}}(6)$ as in \eqref{u.def}, with
\begin{subequations}
\begin{alignat}{2}
w_{2} &= \frac1{\sqrt 2}(u_1 - u_2)\,,
\qquad
&&\left\{\begin{array}{ll}
u_1 = | 6,3,2,3,6 \rangle\,,\\[0.15cm]
u_2 = | 6,3,4,3,6 \rangle\,,
\end{array}\right.
\\[0.2cm]
w_{3} &= \frac12(u_1 - u_2 + u_3 - u_4)\,,
\qquad
&&\left\{\begin{array}{ll}
u_1 = | 1,2,3,4,3,2,1 \rangle\,,\\[0.15cm]
u_2 = | 1,2,3,6,3,2,1 \rangle\,,\\[0.15cm]
u_3 = | 5,4,3,2,3,4,5 \rangle\,,\\[0.15cm]
u_4 = | 5,4,3,6,3,4,5 \rangle\,,
\end{array}\right.
\end{alignat}
\end{subequations}
respectively.

\paragraph{$\boldsymbol{\repM_{E_6,\mu,P_{(15),(24)}}}$.}

We construct the insertion states for $\repQ_{1,\eps q^{6}}(2)$, $\repQ_{2,\eps q^{6}}(4)$, $\repQ_{2,\eps q^{3 \sigma}}(4)$ and $\repQ_{5,\eps q^{4 \sigma}}(10)$ as in \eqref{u.def}, with
\begingroup
\allowdisplaybreaks
\begin{subequations}
\begin{alignat}{2}
w_1 &= \frac1{\sqrt 2}(u_1 - u_2)\,,
\qquad
&&\left\{\begin{array}{ll}
u_1 = | 3,2,3 \rangle\,,\\[0.15cm]
u_2 = | 3,4,3 \rangle\,,
\end{array}\right.
\\[0.2cm]
w_2 &= \frac1{\sqrt 2}(u_1 + u_2)\,,
\qquad
&&\left\{\begin{array}{ll}
u_1 = | 1,2,3,4,5 \rangle\,,\\[0.15cm]
u_2 = | 5,4,3,2,1 \rangle\,,
\end{array}\right.
\\[0.2cm]
w'_2 &= \frac1{\sqrt 2}(u_1 - u_2)\,,
\qquad
&&\left\{\begin{array}{ll}
u_1 = | 6,3,2,3,6 \rangle\,,\\[0.15cm]
u_2 = | 6,3,4,3,6 \rangle\,,
\end{array}\right.
\\[0.2cm]
w_5 &= \Xi_1 \cdot \frac1{\sqrt{12}}\sum_{i=1}^{12} \alpha_i u_i\,,
\qquad
&&\left\{\begin{array}{ll}
u_1 = | 1, 2, 3, 4, 3, 2, 1, 2, 3, 4, 5 \rangle & \alpha_1 = 1\,,\\[0.15cm]
u_2 = | 1, 2, 3, 4, 3, 2, 3, 2, 3, 4, 5 \rangle & \alpha_2 = -1\,,\\[0.15cm]
u_3 = | 1, 2, 3, 4, 3, 2, 3, 6, 3, 4, 5 \rangle & \alpha_3 = 1\,,\\[0.15cm]
u_4 = | 1, 2, 3, 4, 3, 4, 3, 2, 3, 4, 5 \rangle & \alpha_4 = 1\,,\\[0.15cm]
u_5 = | 1, 2, 3, 4, 3, 4, 3, 6, 3, 4, 5 \rangle & \alpha_5 = -1\,,\\[0.15cm]
u_6 = | 1, 2, 3, 4, 5, 4, 3, 2, 3, 4, 5 \rangle & \alpha_6 = -1\,,\\[0.15cm]
u_7 = | 1, 2, 3, 4, 5, 4, 3, 6, 3, 4, 5 \rangle & \alpha_7 = 1\,,\\[0.15cm]
u_8 = | 1, 2, 3, 6, 3, 2, 1, 2, 3, 4, 5 \rangle & \alpha_8 = -1\,,\\[0.15cm]
u_9 = | 1, 2, 3, 6, 3, 2, 3, 2, 3, 4, 5 \rangle & \alpha_9 = 1\,,\\[0.15cm]
u_{10} = | 1, 2, 3, 6, 3, 2, 3, 6, 3, 4, 5 \rangle & \alpha_{10} = -1\,,\\[0.15cm]
u_{11} = | 1, 2, 3, 6, 3, 4, 3, 2, 3, 4, 5 \rangle & \alpha_{11} = -1\,,\\[0.15cm]
u_{12} = | 1, 2, 3, 6, 3, 4, 3, 6, 3, 4, 5 \rangle & \alpha_{12} = 1\,,\\[0.15cm]
\end{array}\right.
\end{alignat}
\end{subequations}
\endgroup
respectively.

\paragraph{$\boldsymbol{\repM_{E_7,\mu,\id}}$.}

We construct the insertion states for $\repQ_{2,\eps q^{9}}(4)$, $\repQ_{3,\eps q^{6 \sigma}}(6)$, $\repQ_{4,\eps q^{9}}(8)$ and $\repQ_{8,\eps q^{9}}(16)$ as in \eqref{u.def}, with
\begingroup
\allowdisplaybreaks
\begin{subequations}
\begin{alignat}{2}
w_2 &= \frac1{\sqrt 2}(u_1 - u_2)\,,
\qquad
&&\left\{\begin{array}{ll}
u_1 = | 4, 3, 2, 3, 4 \rangle\,,\\[0.15cm]
u_2 = | 4, 3, 7, 3, 4 \rangle\,,
\end{array}\right.
\\[0.2cm]
w_3 &= \frac1{\sqrt 2}(u_1 - u_2)\,,
\qquad
&&\left\{\begin{array}{ll}
u_1 = | 5, 4, 3, 2, 3, 4, 5 \rangle\,,\\[0.15cm]
u_2 = | 5, 4, 3, 7, 3, 4, 5 \rangle\,,
\end{array}\right.
\\[0.2cm]
w_4 &= \frac1{\sqrt 2}(u_1 - u_2)\,,
\qquad
&&\left\{\begin{array}{ll}
u_1 = | 6, 5, 4, 3, 2, 3, 4, 5, 6 \rangle\,,\\[0.15cm]
u_2 = | 6, 5, 4, 3, 7, 3, 4, 5, 6 \rangle\,,
\end{array}\right.
\\[0.2cm]
w_8 &= \frac1{\sqrt {78}}\sum_{i=1}^{78} \alpha_i u_i\,,
\qquad
&& 
\left\{
\begin{array}{ll}
u_{1} = |6,5,4,3,2,1,2,3,4,3,2,1,2,3,4,5,6\rangle & \alpha_{1} = 1, \\[0.15cm]
u_{2} = |6,5,4,3,2,1,2,3,4,3,2,3,2,3,4,5,6\rangle & \alpha_{2} = -1, \\[0.15cm]
u_{3} = |6,5,4,3,2,1,2,3,4,3,2,3,7,3,4,5,6\rangle & \alpha_{3} = 1, \\[0.15cm]
u_{4} = |6,5,4,3,2,1,2,3,4,3,4,3,2,3,4,5,6\rangle & \alpha_{4} = 1, \\[0.15cm]
u_{5} = |6,5,4,3,2,1,2,3,4,3,4,3,7,3,4,5,6\rangle & \alpha_{5} = -1, \\[0.15cm]
u_{6} = |6,5,4,3,2,1,2,3,4,5,4,3,2,3,4,5,6\rangle & \alpha_{6} = -1, \\[0.15cm]
u_{7} = |6,5,4,3,2,1,2,3,4,5,4,3,7,3,4,5,6\rangle & \alpha_{7} = 1, \\[0.15cm]
u_{8} = |6,5,4,3,2,1,2,3,7,3,2,1,2,3,4,5,6\rangle & \alpha_{8} = -1, \\[0.15cm]
u_{9} = |6,5,4,3,2,1,2,3,7,3,2,3,2,3,4,5,6\rangle & \alpha_{9} = 1, \\[0.15cm]
u_{10} = |6,5,4,3,2,1,2,3,7,3,2,3,7,3,4,5,6\rangle & \alpha_{10} = -1, \\[0.15cm]
\textrm{\hspace{1.7cm} \dots and $68$ more states}
\end{array}
\right.
\end{alignat}
\end{subequations}
\endgroup
respectively.

\paragraph{$\boldsymbol{\repM_{E_8,\mu,\id}}$.}

We construct the insertion states for $\repQ_{2,\eps q^{15}}(4)$, $\repQ_{3,\eps q^{10 \sigma}}(6)$, $\repQ_{4,\eps q^{15}}(8)$, $\repQ_{5,\eps q^{\sigma s}}(10)$ with $s \in \{6,12\}$, $\repQ_{8,\eps q^{15}}(16)$, $\repQ_{9,\eps q^{10 \sigma}}(18)$ and $\repQ_{14,\eps q^{15}}(28)$ as in \eqref{u.def}, with
\begingroup
\allowdisplaybreaks
\begin{subequations}
\begin{alignat}{2}
w_2 &= \frac1{\sqrt 2}(u_1 - u_2)\,,
\ \
&&\left\{\begin{array}{ll}
u_1 = | 4,3,2,3,4 \rangle\,,\\[0.15cm]
u_2 = | 4,3,8,3,4 \rangle\,,
\end{array}\right.
\\[0.2cm]
w_3 &= \frac1{\sqrt 2}(u_1 - u_2)\,,
\ \
&&\left\{\begin{array}{ll}
u_1 = | 5, 4, 3, 2, 3, 4, 5 \rangle\,,\\[0.15cm]
u_2 = | 5, 4, 3, 8, 3, 4, 5 \rangle\,,
\end{array}\right.
\\[0.2cm]
w_4 &= \frac1{\sqrt 2}(u_1 - u_2)\,,
\ \
&&\left\{\begin{array}{ll}
u_1 = | 6, 5, 4, 3, 2, 3, 4, 5, 6 \rangle\,,\\[0.15cm]
u_2 = | 6, 5, 4, 3, 8, 3, 4, 5, 6 \rangle\,,
\end{array}\right.
\\[0.2cm]
w_5 &= \frac1{\sqrt 2}(u_1 - u_2)\,,
\qquad
&&\left\{\begin{array}{ll}
u_1 = | 7, 6, 5, 4, 3, 2, 3, 4, 5, 6, 7 \rangle\,,\\[0.15cm]
u_2 = | 7, 6, 5, 4, 3, 8, 3, 4, 5, 6, 7 \rangle\,,
\end{array}\right.
\\[0.2cm]
w_8 &= \frac1{\sqrt {78}}\sum_{i=1}^{78} \alpha_i u_i\,,
\ \
&&
\left\{
\begin{array}{ll}
u_{1} = |6,5,4,3,2,1,2,3,4,3,2,1,2,3,4,5,6\rangle & \alpha_{1} = 1, \\[0.15cm]
u_{2} = |6,5,4,3,2,1,2,3,4,3,2,3,2,3,4,5,6\rangle & \alpha_{2} = -1, \\[0.15cm]
u_{3} = |6,5,4,3,2,1,2,3,4,3,2,3,8,3,4,5,6\rangle & \alpha_{3} = 1, \\[0.15cm]
u_{4} = |6,5,4,3,2,1,2,3,4,3,4,3,2,3,4,5,6\rangle & \alpha_{4} = 1, \\[0.15cm]
u_{5} = |6,5,4,3,2,1,2,3,4,3,4,3,8,3,4,5,6\rangle & \alpha_{5} = -1, \\[0.15cm]
u_{6} = |6,5,4,3,2,1,2,3,4,5,4,3,2,3,4,5,6\rangle & \alpha_{6} = -1, \\[0.15cm]
u_{7} = |6,5,4,3,2,1,2,3,4,5,4,3,8,3,4,5,6\rangle & \alpha_{7} = 1, \\[0.15cm]
u_{8} = |6,5,4,3,2,1,2,3,8,3,2,1,2,3,4,5,6\rangle & \alpha_{8} = -1, \\[0.15cm]
u_{9} = |6,5,4,3,2,1,2,3,8,3,2,3,2,3,4,5,6\rangle & \alpha_{9} = 1, \\[0.15cm]
u_{10} = |6,5,4,3,2,1,2,3,8,3,2,3,8,3,4,5,6\rangle & \alpha_{10} = -1, \\[0.15cm]
\textrm{\hspace{1.7cm} \dots and $68$ more states}
\end{array}
\right.\\[0.2cm]
w_9 &= \frac1{\sqrt {78}}\sum_{i=1}^{78} \alpha_i u_i\,,
\ \
&&
\left\{
\begin{array}{ll}
u_{1} = |7,6,5,4,3,2,1,2,3,4,3,2,1,2,3,4,5,6,7\rangle & \alpha_{1} = 1, \\[0.15cm]
u_{2} = |7,6,5,4,3,2,1,2,3,4,3,2,3,2,3,4,5,6,7\rangle & \alpha_{2} = -1, \\[0.15cm]
u_{3} = |7,6,5,4,3,2,1,2,3,4,3,2,3,8,3,4,5,6,7\rangle & \alpha_{3} = 1, \\[0.15cm]
u_{4} = |7,6,5,4,3,2,1,2,3,4,3,4,3,2,3,4,5,6,7\rangle & \alpha_{4} = 1, \\[0.15cm]
u_{5} = |7,6,5,4,3,2,1,2,3,4,3,4,3,8,3,4,5,6,7\rangle & \alpha_{5} = -1, \\[0.15cm]
u_{6} = |7,6,5,4,3,2,1,2,3,4,5,4,3,2,3,4,5,6,7\rangle & \alpha_{6} = -1, \\[0.15cm]
u_{7} = |7,6,5,4,3,2,1,2,3,4,5,4,3,8,3,4,5,6,7\rangle & \alpha_{7} = 1, \\[0.15cm]
u_{8} = |7,6,5,4,3,2,1,2,3,8,3,2,1,2,3,4,5,6,7\rangle & \alpha_{8} = -1, \\[0.15cm]
u_{9} = |7,6,5,4,3,2,1,2,3,8,3,2,3,2,3,4,5,6,7\rangle & \alpha_{9} = 1, \\[0.15cm]
u_{10} = |7,6,5,4,3,2,1,2,3,8,3,2,3,8,3,4,5,6,7\rangle & \alpha_{10} = -1, \\[0.15cm]
\textrm{\hspace{1.7cm} \dots and $68$ more states}
\end{array}\right.\\[0.2cm]
w_{14} &= \frac1{\sqrt {13110}}\sum_{i=1}^{13110} \alpha_i u_i\,,
\ \
&&
\left\{\begin{array}{ll}
u_{1} = |7,6,5,4,3,2,1,2,3,4,3,2,1,2,3,4,3,2,3,4,3,2,1,2,3,4,5,6,7\rangle & \alpha_{1} = 1, \\[0.15cm]
u_{2} = |7,6,5,4,3,2,1,2,3,4,3,2,1,2,3,4,3,2,3,4,3,2,3,2,3,4,5,6,7\rangle & \alpha_{2} = -1, \\[0.15cm]
u_{3} = |7,6,5,4,3,2,1,2,3,4,3,2,1,2,3,4,3,2,3,4,3,2,3,8,3,4,5,6,7\rangle & \alpha_{3} = 1, \\[0.15cm]
u_{4} = |7,6,5,4,3,2,1,2,3,4,3,2,1,2,3,4,3,2,3,4,3,4,3,2,3,4,5,6,7\rangle & \alpha_{4} = 1, \\[0.15cm]
u_{5} = |7,6,5,4,3,2,1,2,3,4,3,2,1,2,3,4,3,2,3,4,3,4,3,8,3,4,5,6,7\rangle & \alpha_{5} = -1, \\[0.15cm]
u_{6} = |7,6,5,4,3,2,1,2,3,4,3,2,1,2,3,4,3,2,3,4,5,4,3,2,3,4,5,6,7\rangle & \alpha_{6} = -1, \\[0.15cm]
u_{7} = |7,6,5,4,3,2,1,2,3,4,3,2,1,2,3,4,3,2,3,4,5,4,3,8,3,4,5,6,7\rangle & \alpha_{7} = 1, \\[0.15cm]
u_{8} = |7,6,5,4,3,2,1,2,3,4,3,2,1,2,3,4,3,2,3,8,3,2,1,2,3,4,5,6,7\rangle & \alpha_{8} = -1, \\[0.15cm]
u_{9} = |7,6,5,4,3,2,1,2,3,4,3,2,1,2,3,4,3,2,3,8,3,2,3,2,3,4,5,6,7\rangle & \alpha_{9} = 1, \\[0.15cm]
u_{10} = |7,6,5,4,3,2,1,2,3,4,3,2,1,2,3,4,3,2,3,8,3,2,3,8,3,4,5,6,7\rangle & \alpha_{10} = -1, \\[0.15cm]
\textrm{\hspace{1.7cm} \dots and $13100$ more states}
\end{array}\right.
\end{alignat}
\end{subequations}
\endgroup
respectively.

%
\section{Conformal partition functions for $\boldsymbol{\repM_{\g,\mu,K}}$}\label{app:Z}
%

In this section, we use \eqref{eq:Z.decomp.general}, \eqref{eq:charQK} and the decompositions \eqref{eq:M.decomp} to compute the conformal partition functions for each model. We repeatedly use the identity
\be
\sum_{r=1}^{p-1} \chit_{r,s} \chib_{r,s'} = 
\sum_{r=1}^{p-1} \chit_{r,p'-s} \chib_{r,p'-s'}\,,
\ee
obtained by using $h_{r,s} = h_{p-r,p'-s}$ and changing the summation variable $r \mapsto p-r$.

\subsection[$A_n$ models]{$\boldsymbol{A_n}$ models}

\paragraph{1) $\boldsymbol{Z_{A_n,\mu,\id,\id}}$\,.} 

In this case, we have $\eta = 0$, and
\be
Z_{A_n,\mu,\id,\id} = \sum_{s=1}^{p'-1} \chit_{\repQ_{0,(-1)^{s \mu} q^s}}
= \sum_{r=1}^{p-1} \sum_{s=1}^{p'-1} \chit_{r,s}\chib_{r,s}\,.
\ee

\paragraph{2) $\boldsymbol{Z_{A_n,\mu,\id,R}}$\,.} 

In this case, we have
\be
Z_{A_n,\mu,\id,R} = \sum_{s=1}^{p'-1} \kappa_{0,(-1)^{s \mu}q^s} \chit_{\repQ_{0,(-1)^{s \mu}q^s}}
\ee
with $\kappa_{0,(-1)^{s \mu}q^s} = (-1)^{s+1}$. For $p'$ even, we have $\eta = 0$ and 
\be
Z_{A_n,\mu,\id,R} = \sum_{r=1}^{p-1} \sum_{s=1}^{p'-1} (-1)^{s+1} \chit_{r,s}\chib_{r,s}\,.
\ee
For $p'$ odd, we instead have $\eta = 1$, and 
\be
Z_{A_n,\mu,\id,R} = \sum_{r=1}^{p-1} \sum_{s=1}^{p'-1} (-1)^{r+p s+1} \chit_{r,s}\chib_{r,s}\,.
\ee
We express these two results together as
\be
Z_{A_n,\mu,\id,R} = (-1)^{pp'}\sum_{r=1}^{p-1} \sum_{s=1}^{p'-1} (-1)^{\lambda_{r,s}} \chit_{r,s}\chib_{r,s}\,,
\ee
with $\lambda_{r,s}$ defined in \eqref{eq:lambda.rs} and satisfying
\be
(-1)^{\lambda_{r,s}+pp'} = 
\left\{\begin{array}{ll}
(-1)^{s+1} & p' \textrm{ even,} \\[0.1cm]
(-1)^{r+ps+1} & p' \textrm{ odd.} 
\end{array}\right.
\ee

\paragraph{3) $\boldsymbol{Z_{A_n,\mu,R,\id}}$\,.} 

In this case, we have $\eta = 0$. For $p'$ even, we have
\begin{alignat}{2}
Z_{A_n,\mu,R,\id} &=\chit_{\repQ_{0,q^{p'/2}}} + \sum_{k=1}^{(p'-2)/2} \sum_{\eps = \pm 1} \chit_{\repQ_{k,\eps q^{p'/2}}}
= \sum_{r=1}^{p-1} \chit_{r,p'/2} \chib_{r,p'/2}
+2\sum_{r=1}^{p-1} \sum_{k=1}^{(p'-2)/2} \chit_{r,p'/2+k}\chib_{r,p'/2-k}
\nn\\
&= \sum_{r=1}^{p-1} \sum_{s=1}^{p'-1} \chit_{r,s}\chib_{r,p'-s}\,.
\end{alignat}
For $p'$ odd, we have
\begin{alignat}{2}
Z_{A_n,\mu,R,\id} &=\sum_{k=1/2}^{(p'-2)/2} \sum_{\eps = \pm 1}\chit_{\repQ_{k,\eps q^{p'/2}}}
= 2\sum_{r=1}^{p-1} \sum_{k=1/2}^{(p'-2)/2} \chit_{r,p'/2+k}\chib_{r,p'/2-k}
= \sum_{r=1}^{p-1} \sum_{s=1}^{p'-1} \chit_{r,s}\chib_{r,p'-s}\,.
\end{alignat}

\paragraph{4) $\boldsymbol{Z_{A_n,\mu,R,R}}$\,.} 

For $p'$ even, we have $\eta = 0$ and
\begin{alignat}{2}
Z_{A_n,\mu,R,R} &=\kappa_{0,q^{p'/2}}\chit_{\repQ_{0,q^{p'/2}}} 
+\sum_{k=1}^{(p'-2)/2} \sum_{\eps = \pm 1} \kappa_{k,\eps q^{p'/2}}\chit_{\repQ_{k,\eps q^{p'/2}}}
\nn\\
&= \sum_{r=1}^{p-1} \chit_{r,p'/2} \chib_{r,p'/2}
+2\sum_{r=1}^{p-1} \sum_{k=1}^{(p'-2)/2} (-1)^k\chit_{r,p'/2+k}\chib_{r,p'/2-k}
\nn\\
&= \sum_{r=1}^{p-1} \sum_{s=1}^{p'-1} (-1)^{s+p'/2}\chit_{r,s}\chib_{r,p'-s}\,,
\end{alignat}
where we used $\kappa_{k,\eps q^{p'/2}} = (-1)^k$.
For $p'$ odd, we have $\eta = 1$ and
\begin{alignat}{2}
Z_{A_n,\mu,R,R} &=\sum_{k=1/2}^{(p'-2)/2} \sum_{\eps = \pm 1}\kappa_{k,\eps q^{p'/2}} \chit_{\repQ_{k,\eps q^{p'/2}}}
= 2\sum_{r=1}^{p-1} \sum_{k=1/2}^{(p'-2)/2} \ir^{-2 p k} (-1)^r \chit_{r,p'/2+k}\chib_{r,p'/2-k}
\nn\\
&= \ir^{-p p'} \sum_{r=1}^{p-1} \sum_{s=1}^{p'-1} (-1)^{r+p s} \chit_{r,s}\chib_{r,p'-s}\,,
\end{alignat} 
where we used 
$\kappa_{k,\eps q^{p'/2}} = \eps\, \ir^{-2 p k}$.
We express these two results together as
\be
Z_{A_n,\mu,R,R} = \ir^{pp'}\sum_{r=1}^{p-1} \sum_{s=1}^{p'-1} (-1)^{\lambda_{r,s}+1} \chit_{r,s}\chib_{r,p'-s}\,,
\ee
with
\be
\ir^{pp'}(-1)^{\lambda_{r,s}+1} = 
\left\{\begin{array}{ll}
(-1)^{s+p'/2} & p' \textrm{ even,} \\[0.1cm]
\ir^{-pp'}(-1)^{r+ps} & p' \textrm{ odd.} 
\end{array}\right.
\ee

\subsection[$D_n$ models]{$\boldsymbol{D_n}$ models}

Here, we have $\eta = 0$ in all cases.

\paragraph{1) $\boldsymbol{Z_{D_n,\mu,\id,\id}}$ and $\boldsymbol{Z_{D_n,\mu,\id,P_{(n-1,n)}}}$\,.} 

For $v \in \{0,1\}$, we have
\begin{alignat}{2}
&Z_{D_n,\mu,\id,(P_{(n-1,n)})^v} =(\wh\kappa_{0,q^{p'/2}})^v\chit_{\repQ_{0,q^{p'/2}}} 
+ \sum_{s = 1,3,\dots, p'-1} (\kappa_{0,q^s})^v\chit_{\repQ_{0,q^s}}
+ \sum_{\ell=1}^{\lfloor (p'-2)/4 \rfloor} \sum_{\eps = \pm 1} (\kappa_{2\ell,\eps q^{p'/2}})^v\chit_{\repQ_{2\ell,\eps q^{p'/2}}}
\nn\\
&= (-1)^v\sum_{r=1}^{p-1} \chit_{r,p'/2} \chib_{r,p'/2}
+ \sum_{r=1}^{p-1} \sum_{s = 1,3,\dots, p'-1} \chit_{r,s}\chib_{r,s}
+2(-1)^v\sum_{r=1}^{p-1} \sum_{\ell=1}^{\lfloor (p'-2)/4 \rfloor} \chit_{r,p'/2+2 \ell}\chib_{r,p'/2-2\ell}
\nn\\
&= \sum_{r=1}^{p-1} \sum_{s=1,3,...,p'-1} \chit_{r,s}\chib_{r,s} + (-1)^v \sum_{r=1}^{p-1}\sum_{\substack{s=1\\[0.05cm]s\,\equiv\, n+1 \mmod 2}}^{p'-1} \chit_{r,s}\chib_{r,p'-s}
\end{alignat}
where we used $\wh\kappa_{0,q^{p'/2}} = -1$, $\kappa_{0,q^s} = 1$ for all $s$ and $\kappa_{2\ell,\eps q^{p'/2}} = -1$ for all $\ell$.

\paragraph{2) $\boldsymbol{Z_{D_n,\mu,P_{(n-1,n)},\id}}$ and $\boldsymbol{Z_{D_n,\mu,P_{(n-1,n)},P_{(n-1,n)}}}$\,.} 

For $v \in \{0,1\}$, we have
\begin{alignat}{2}
Z_{D_n,\mu,P_{(n-1,n)},(P_{(n-1,n)})^v} 
&= \sum_{s = 2,4,\dots, p'-2} (\kappa_{0,q^s})^v\chit_{\repQ_{0,q^s}}
+ \sum_{\eps = \pm 1}\sum_{\ell=1}^{\lfloor p'/4 \rfloor} (\kappa_{2\ell-1,\eps q^{p'/2}})^v\chit_{\repQ_{2\ell-1,\eps q^{p'/2}}}
\nn\\
&= \sum_{r=1}^{p-1} \sum_{s = 2,4,\dots, p'-2} \chit_{r,s}\chib_{r,s}
+2(-1)^v\sum_{r=1}^{p-1} \sum_{\ell=1}^{\lfloor p'/4 \rfloor} \chit_{r,p'/2+2 \ell-1}\chib_{r,p'/2-2\ell+1}
\nn\\
&= \sum_{r=1}^{p-1} \sum_{s=2,4,...,p'-2} \chit_{r,s}\chib_{r,s} + (-1)^v \sum_{r=1}^{p-1}\sum_{\substack{s=1\\[0.05cm]s\,\equiv\, n \mmod 2}}^{p'-1} \chit_{r,s}\chib_{r,p'-s}
\end{alignat}
where we used $\kappa_{0,q^s} = 1$ for all $s$ and $\kappa_{2\ell-1,\eps q^{p'/2}} = -1$ for all $\ell$.

\subsection[$E_6$, $E_7$ and $E_8$ models]{$\boldsymbol{E_6}$, $\boldsymbol{E_7}$ and $\boldsymbol{E_8}$ models}

For these partition functions, we have $\eta = 0$ in all cases.

\paragraph{1) $\boldsymbol{Z_{E_6,\mu,\id,\id}}$ and $\boldsymbol{Z_{E_6,\mu,\id,P_{(15),(24)}}}$\,.} 

For $v \in \{0,1\}$, we have
\begin{alignat}{2}
Z_{E_6,\mu,P_{(15),(24)},(P_{(15),(24)})^v} &= \sum_{s=1,4,5,7,8,11} (\kappa_{0,q^s})^v\chit_{\repQ_{0,q^s}} + \sum_{\eps = \pm 1} (\kappa_{2,\eps q^6})^v\chit_{\repQ_{2,\eps q^6}} + \sum_{\sigma,\eps = \pm 1} (\kappa_{3,\eps q^{4 \sigma}})^v\chit_{\repQ_{3,\eps q^{4 \sigma}}}
\nn\\&= \sum_{r=1}^{p-1}\sum_{s=1,5,7,11} \chit_{r,s}\chib_{r,s} 
+ (-1)^v \sum_{r=1}^{p-1}\sum_{s=4,8} \chit_{r,s}\chib_{r,s} 
\nn\\&+ 2 \sum_{r=1}^{p-1} \Big( 
(-1)^v\chit_{r,8}\chib_{r,4} 
+ \chit_{r,7}\chib_{r,1} 
+ \chit_{r,11}\chib_{r,5}
\Big)
\nn\\& 
= \sum_{r=1}^{p-1}\Big(|\chit_{r,1}+\chit_{r,7}|^2+(-1)^v |\chit_{r,4}+\chit_{r,8}|^2+|\chit_{r,5}+\chit_{r,11}|^2\Big)\,,
\end{alignat}
where we used 
\be
\kappa_{0,q^s} = \left\{\begin{array}{cl}
1 & s \in \{1,5,7,11\}\,, \\[0.1cm]
-1 & s \in \{4,8\}\,,
\end{array}\right.
\qquad 
\kappa_{2, \eps q^6} = -1\,,
\qquad
\kappa_{3, \eps q^{4 \sigma}} = 1\,.
\ee

\paragraph{2) $\boldsymbol{Z_{E_6,\mu,P_{(15),(24)},\id}}$ and $\boldsymbol{Z_{E_6,\mu,P_{(15),(24)},P_{(15),(24)}}}$\,.} 

For $v \in \{0,1\}$, we have
\begin{alignat}{2}
Z_{E_6,\mu,P_{(15),(24)},(P_{(15),(24)})^v} &= \sum_{s=4,8} (\kappa_{0,q^s})^v\chit_{\repQ_{0,q^s}} + \sum_{k=1,2,5}\sum_{\eps = \pm 1} (\kappa_{k,\eps q^6})^v\chit_{\repQ_{k,\eps q^6}} + \sum_{\sigma,\eps = \pm 1} (\kappa_{2,\eps q^{3 \sigma}})^v\chit_{\repQ_{2,\eps q^{3 \sigma}}}
\nn\\&= \sum_{r=1}^{p-1}\sum_{s=4,8} \chit_{r,s}\chib_{r,s} 
+ 2 \sum_{r=1}^{p-1} \Big( 
(-1)^v\chit_{r,7}\chib_{r,5} 
+ \chit_{r,8}\chib_{r,4} 
+ (-1)^v \chit_{r,11}\chib_{r,1}
\nn\\&\hspace{5cm}+ (-1)^v \chit_{r,5}\chib_{r,1}
+ (-1)^v \chit_{r,11}\chib_{r,7}
\Big)
\nn\\& 
= \sum_{r=1}^{p-1} \Big(
(-1)^{v+1}|\chit_{r,1}+\chit_{r,7}|^2
+|\chit_{r,4}+\chit_{r,8}|^2
+(-1)^{v+1}|\chit_{r,5}+\chit_{r,11}|^2
\nn\\[0.1cm]
&\hspace{1.3cm}
+(-1)^v |\chit_{r,1}+\chit_{r,5}+\chit_{r,7}+\chit_{r,11}|^2
\Big)\,,
\end{alignat}
where we used 
\be
\kappa_{0,q^4} = \kappa_{0,q^8} = 1
\qquad 
\kappa_{k,\eps q^6} = \left\{\begin{array}{cl}
-1 & k \in \{1,5\}\,,\\[0.1cm]
1 & k =2\,, 
\end{array}\right.
\qquad
\kappa_{2, \eps q^{3 \sigma}} = -1\,.
\ee

\paragraph{3) $\boldsymbol{Z_{E_7,\mu,\id,\id}}$\,.} 

We have
\begin{alignat}{2}
Z_{E_7,\mu,\id,\id} &= \sum_{s=1,5,7,9,11,13,17} \chit_{\repQ_{0,q^s}} + \sum_{k=2,4,8}\sum_{\eps = \pm 1} \chit_{\repQ_{k,\eps q^9}} + \sum_{\sigma,\eps = \pm 1} \chit_{\repQ_{3,\eps q^{6 \sigma}}}
\nn\\&= \sum_{r=1}^{p-1}\sum_{s=1,5,7,9,11,13,17} \chit_{r,s}\chib_{r,s} 
\nn\\&+ 2 \sum_{r=1}^{p-1} \Big( \chit_{r,11}\chib_{r,7} 
+ \chit_{r,13}\chib_{r,5} 
+ \chit_{r,17}\chib_{r,1}
+ \chit_{r,9}\chib_{r,3}
+ \chit_{r,15}\chib_{r,9}
\Big)
\nn\\& 
= \sum_{r=1}^{p-1}\Big(|\chit_{r,1}+\chit_{r,17}|^2-|\chit_{r,3}+\chit_{r,15}|^2+|\chit_{r,5}+\chit_{r,13}|^2+|\chit_{r,7}+\chit_{r,11}|^2
\\& \hspace{1cm}+ |\chit_{r,3}+\chit_{r,9}+\chit_{r,15}|^2\Big)\,.
\end{alignat}

\paragraph{4) $\boldsymbol{Z_{E_8,\mu,\id,\id}}$\,.} 

We have
\begin{alignat}{2}
Z_{E_8,\mu,\id,\id} &= \sum_{s=1,7,11,13,17,19,23,29} \chit_{\repQ_{0,q^s}} + \sum_{k=2,4,8,14}\sum_{\eps = \pm 1} \chit_{\repQ_{k,\eps q^{15}}} + \sum_{k=3,9}\sum_{\sigma,\eps = \pm 1} \chit_{\repQ_{k,\eps q^{10 \sigma}}}
+ \sum_{s=6,12}\sum_{\sigma,\eps = \pm 1} \chit_{\repQ_{5,\eps q^{\sigma s}}}
\nn\\&= \sum_{r=1}^{p-1}\sum_{s=1,7,11,13,17,19,23,29} \chit_{r,s}\chib_{r,s} 
\nn\\&+ 2 \sum_{r=1}^{p-1} \Big( 
\chit_{r,17}\chib_{r,13} 
+ \chit_{r,19}\chib_{r,11} 
+ \chit_{r,23}\chib_{r,7}
+ \chit_{r,29}\chib_{r,1}
+ \chit_{r,13}\chib_{r,7}
+ \chit_{r,23}\chib_{r,17}
\nn\\&\hspace{1.4cm}
+ \chit_{r,19}\chib_{r,1}
+ \chit_{r,29}\chib_{r,11}
+ \chit_{r,11}\chib_{r,1}
+ \chit_{r,29}\chib_{r,19}
+ \chit_{r,17}\chib_{r,7}
+ \chit_{r,23}\chib_{r,13}
\Big)
\nn\\& 
= \sum_{r=1}^{p-1}\Big(|\chit_{r,1}+\chit_{r,11}+\chit_{r,19}+\chit_{r,29}|^2+|\chit_{r,7}+\chit_{r,13}+\chit_{r,17}+\chit_{r,23}|^2\Big)\,.
\end{alignat}

\subsection[$D_4$ model]{$\boldsymbol{D_4}$ models}\label{app:Z.D4}

For these partition functions, we have $\eta = 0$ in all cases.

\paragraph{1) $\boldsymbol{Z_{D_4,1,\id,\id}}$, $\boldsymbol{Z_{D_4,1,\id,P_{(13)}}}$ and $\boldsymbol{Z_{D_4,\mu,\id,P_{(134)}}}$\,.} 

For $K \in \{\id,P_{(13)},P_{(134)}\}$, we have
\begin{alignat}{2}
Z_{D_4,1,\id,K} &= \wh\kappa_{0,q^3}\chit_{\repQ_{0,q^3}} + \sum_{s=1,3,5} \kappa_{0, q^s}\chit_{\repQ_{0,q^s}} + \sum_{\eps = \pm 1}\kappa_{2,\eps q^3}\chit_{\repQ_{2,\eps q^3}}
\nn\\&= 
\sum_{r=1}^4 \Big( 
\kappa_{0,q}\chit_{r,1}\chib_{r,1} 
+ (\kappa_{0,q^3} + \wh\kappa_{0,q^3})\chit_{r,3}\chib_{r,3}
+ \kappa_{0,q^5}\chit_{r,5}\chib_{r,5}
+ (\kappa_{2,q^3}+\kappa_{2,-q^3}) \chit_{r,1}\chib_{r,5}
\Big)
\nn\\&= 
\left\{\begin{array}{cl}
\displaystyle\sum_{r=1}^{p-1} \big(|\chit_{r,1}+\chit_{r,5}|^2+2|\chit_{r,3}|^2\big) & K = \id\,, \\
\displaystyle\sum_{r=1}^{p-1} |\chit_{r,1}-\chit_{r,5}|^2 & K = P_{(13)}\,, \\
\displaystyle\sum_{r=1}^{p-1} \big(|\chit_{r,1}+\chit_{r,5}|^2-|\chit_{r,3}|^2\big)
& K = P_{(134)}\,, \\
\end{array}\right.
\end{alignat}
where we have $\kappa_{0,q} = \kappa_{0,q^5} = 1$ in all cases, and
\be
\kappa_{0,q^3} = 
\left\{\begin{array}{cl}
1 & K = \id\,, \\[0.1cm]
1 & K = P_{(13)}\,, \\[0.1cm]
\omega & K = P_{(134)}\,, \\[0.1cm]
\end{array}\right.
\quad \ 
\wh\kappa_{0,q^3} = 
\left\{\begin{array}{cl}
1 & K = \id\,, \\[0.1cm]
-1 & K = P_{(13)}\,, \\[0.1cm]
\omega^{-1} & K = P_{(134)}\,, \\[0.1cm]
\end{array}\right.
\quad \ 
\kappa_{2, \eps q^3} = 
\left\{\begin{array}{cl}
1 & K = \id\,, \\[0.1cm]
-1 & K = P_{(13)}\,, \\[0.1cm]
1 & K = P_{(134)}\,. \\[0.1cm]
\end{array}\right.
\ee
with $\omega = \eE^{2 \pi \ir/3}$.

\paragraph{2) $\boldsymbol{Z_{D_4,1,P_{(13)},\id}}$ and $\boldsymbol{Z_{D_4,1,P_{(13)},P_{(13)}}}$\,.} 

For $K \in \{\id,P_{(13)}\}$, we have
\begin{alignat}{2}
Z_{D_4,1,P_{(13)},K} &= \sum_{s=2,4} \kappa_{0, q^s}\chit_{\repQ_{0,q^s}} + \sum_{\eps = \pm 1}\kappa_{1,\eps q^3}\chit_{\repQ_{1,\eps q^3}}
\nn\\&= 
\sum_{r=1}^4 \Big( 
\kappa_{0,q^2}\chit_{r,2}\chib_{r,2}
+ \kappa_{0,q^4}\chit_{r,4}\chib_{r,4} 
+ (\kappa_{1,q^3} + \kappa_{1,-q^3})\chit_{r,4}\chib_{r,2}
\Big)
\nn\\&= 
\left\{\begin{array}{cl}
\displaystyle\sum_{r=1}^{p-1} |\chit_{r,2}+\chit_{r,4}|^2 & K = \id\,, \\
\displaystyle\sum_{r=1}^{p-1} |\chit_{r,2}-\chit_{r,4}|^2 & K = P_{(13)}\,, 
\end{array}\right.
\end{alignat}
where we used
\be
\kappa_{0,q^2} = \kappa_{0,q^4} = 1\,,
\qquad
\kappa_{1,\eps q^3} = 
\left\{\begin{array}{cl}
1 & K = \id\,, \\[0.1cm]
-1 & K = P_{(13)}\,. 
\end{array}\right.
\ee

\paragraph{3) $\boldsymbol{Z_{D_4,1,P_{(134)},\id}}$, $\boldsymbol{Z_{D_4,1,P_{(134)},P_{(134)}}}$ and $\boldsymbol{Z_{D_4,1,P_{(134)},P_{(143)}}}$\,.} 

For $K \in \{\id,P_{(134)},P_{(143)}\}$, we have
\begin{alignat}{2}
&Z_{D_4,1,P_{(134)},K} = \kappa_{0,q^3}\chit_{\repQ_{0,q^3}} + \sum_{\sigma,\eps = \pm 1}\kappa_{1,\eps q^{2\sigma}}\chit_{\repQ_{1,\eps q^{2\sigma}}}
\nn\\&= 
\sum_{r=1}^4 \Big( 
\kappa_{0,q^3}\chit_{r,3}\chib_{r,3} 
+ (\kappa_{1,q^2} + \kappa_{1,-q^2})\chit_{r,3}\chib_{r,1}
+ (\kappa_{1,q^{-2}} + \kappa_{1,-q^{-2}})\chit_{r,5}\chib_{r,3}\Big)
\nn\\&= 
\left\{\begin{array}{cl}
\displaystyle\sum_{r=1}^{4} \big(|\chit_{r,1}+\chit_{r,3}|^2+|\chit_{r,3}+\chit_{r,5}|^2-|\chit_{r,1}|^2-|\chit_{r,3}|^2-|\chit_{r,5}|^2\big) 
& K = \id\,, \\[0.3cm]
\displaystyle\sum_{r=1}^{4} \big(|\chit_{r,1}+\omega^{-1}\chit_{r,3}|^2+|\chit_{r,3}+\omega\,\chit_{r,5}|^2-|\chit_{r,1}|^2-|\chit_{r,3}|^2-|\chit_{r,5}|^2\big)
& K = P_{(134)}\,, \\[0.3cm]
\displaystyle\sum_{r=1}^{4} \big(|\chit_{r,1}+\omega\,\chit_{r,3}|^2+|\chit_{r,3}+\omega^{-1}\chit_{r,5}|^2-|\chit_{r,1}|^2-|\chit_{r,3}|^2-|\chit_{r,5}|^2\big)
& K = P_{(143)}\,, \\
\end{array}\right.
\end{alignat}
where $\omega = \eE^{2 \pi \ir/3}$ and we used
\be
\kappa_{0,q^3} = 1\,
\qquad
\kappa_{1,\eps q^2} = 
\left\{\begin{array}{cl}
1 & K = \id\,, \\[0.1cm]
\omega^{-1} & K = P_{(134)}\,, \\[0.1cm]
\omega & K = P_{(143)}\,, \\[0.1cm]
\end{array}\right.
\qquad
\kappa_{1,\eps q^{-2}} = 
\left\{\begin{array}{cl}
1 & K = \id\,, \\[0.1cm]
\omega & K = P_{(134)}\,, \\[0.1cm]
\omega^{-1} & K = P_{(143)}\,. 
\end{array}\right.
\ee

%
\section{Decompositions and partition functions for $\boldsymbol{E_6}$, $\boldsymbol{E_7}$ and $\boldsymbol{E_8}$}\label{app:E678}
%

\subsection{Decompositions}\label{eq:MTL.E678}

The decomposition of $\repM_{E_6,\mu,a,b}$, $\repM_{E_7,\mu,a,b}$ and $\repM_{E_8,\mu,a,b}$ given by \cref{thm:M.decomp.TL} is
\begingroup
\allowdisplaybreaks
\begin{subequations}
\begin{alignat}{2}
\label{eq:Mab.decomp.E6}
\repM_{E_6,\mu,a,b} & \simeq 
\left\{\begin{array}{cl}
\repQ_0 \oplus \repQ_3
& (a,b)=(1,1), (5,5)
\\[0.15cm] 
\repQ_0 \oplus \repQ_1 \oplus \repQ_2 \oplus 2\repQ_3 \oplus \repQ_4
& (a,b)=(2,2), (4,4)
\\[0.15cm] 
\repQ_0 \oplus 2\repQ_1 \oplus 3\repQ_2 \oplus 3\repQ_3 \oplus 2\repQ_4 \oplus \repQ_5
& (a,b)=(3,3)
\\[0.15cm] 
\repQ_0 \oplus \repQ_2 \oplus \repQ_3 \oplus \repQ_5
& (a,b)=(6,6)
\\[0.15cm] 
\repQ_1 \oplus \repQ_2 \oplus \repQ_3 \oplus \repQ_4
& (a,b)=(1,3),(3,5)
\\[0.15cm] 
\repQ_2 \oplus \repQ_5
& (a,b)=(1,5)
\\[0.15cm] 
\repQ_1 \oplus 2\repQ_2 \oplus \repQ_3 \oplus \repQ_4 \oplus \repQ_5
& (a,b)=(2,4)
\\[0.15cm] 
\repQ_1 \oplus \repQ_2 \oplus \repQ_3 \oplus \repQ_4
& (a,b)=(2,6)
\\[0.15cm] 
\repQ_{\frac12} \oplus \repQ_{\frac52} \oplus \repQ_{\frac72}
& (a,b)=(1,2),(4,5)
\\[0.15cm] 
\repQ_{\frac32} \oplus \repQ_{\frac52} \oplus \repQ_{\frac92}
& (a,b)=(1,4),(2,5)
\\[0.15cm] 
\repQ_{\frac32} \oplus \repQ_{\frac72}
& (a,b)=(1,6),(5,6)
\\[0.15cm] 
\repQ_{\frac12} \oplus 2\repQ_{\frac32} \oplus 2\repQ_{\frac52} \oplus 2\repQ_{\frac72} \oplus \repQ_{\frac92}
& (a,b)=(2,3),(3,4)
\\[0.15cm] 
\repQ_{\frac12} \oplus \repQ_{\frac32} \oplus 2\repQ_{\frac52} \oplus \repQ_{\frac72} \oplus \repQ_{\frac92}
& (a,b)=(3,6)
\end{array}\right.
\\[0.4cm]\label{eq:Mab.decomp.E7}
\repM_{E_7,\mu,a,b} & \simeq 
\left\{\begin{array}{cl}
\repQ_0 \oplus \repQ_3 \oplus \repQ_5 \oplus \repQ_8
& (a,b)=(1,1)
\\[0.15cm] 
\repQ_0 \oplus \repQ_1 \oplus \repQ_2 \oplus 2\repQ_3 \oplus 2\repQ_4 \oplus 2\repQ_5 \oplus \repQ_6 \oplus \repQ_7 \oplus \repQ_8 
& (a,b)=(2,2)
\\[0.15cm] 
\repQ_0 \oplus 2\repQ_1 \oplus 3\repQ_2 \oplus 4\repQ_3 \oplus 4\repQ_4 \oplus 4\repQ_5 \oplus 3\repQ_6 \oplus 2\repQ_7 \oplus \repQ_8
& (a,b)=(3,3)
\\[0.15cm] 
\repQ_0 \oplus \repQ_1 \oplus 2\repQ_2 \oplus 2\repQ_3 \oplus 3\repQ_4 \oplus 2\repQ_5 \oplus 2\repQ_6 \oplus \repQ_7 \oplus \repQ_8 
& (a,b)=(4,4)
\\[0.15cm] 
\repQ_0 \oplus \repQ_1 \oplus \repQ_3 \oplus 2\repQ_4 \oplus \repQ_5 \oplus \repQ_7 \oplus \repQ_8 
& (a,b)=(5,5)
\\[0.15cm] 
\repQ_0 \oplus \repQ_4 \oplus \repQ_8 
& (a,b)=(6,6)
\\[0.15cm] 
\repQ_0 \oplus \repQ_2 \oplus \repQ_3 \oplus \repQ_4 \oplus \repQ_5 \oplus \repQ_6 \oplus \repQ_8 
& (a,b)=(7,7)
\\[0.15cm] 
\repQ_1 \oplus \repQ_2 \oplus \repQ_3 \oplus 2\repQ_4 \oplus \repQ_5 \oplus \repQ_6 \oplus \repQ_7
& (a,b)=(1,3)
\\[0.15cm] 
\repQ_2 \oplus \repQ_3 \oplus \repQ_5 \oplus \repQ_6
& (a,b)=(1,5)
\\[0.15cm] 
\repQ_1 \oplus 2\repQ_2 \oplus 2\repQ_3 \oplus 2\repQ_4 \oplus 2\repQ_5 \oplus 2\repQ_6 \oplus \repQ_7
& (a,b)=(2,4)
\\[0.15cm] 
\repQ_2 \oplus \repQ_3 \oplus \repQ_5 \oplus \repQ_6
& (a,b)=(2,6)
\\[0.15cm] 
\repQ_1 \oplus \repQ_2 \oplus \repQ_3 \oplus 2\repQ_4 \oplus \repQ_5 \oplus \repQ_6 \oplus \repQ_7
& (a,b)=(2,7)
\\[0.15cm] 
\repQ_1 \oplus 2\repQ_2 \oplus 2\repQ_3 \oplus 2\repQ_4 \oplus 2\repQ_5 \oplus 2\repQ_6 \oplus 2\repQ_7
& (a,b)=(3,5)
\\[0.15cm] 
\repQ_1 \oplus \repQ_3 \oplus \repQ_4 \oplus \repQ_5 \oplus \repQ_7 
& (a,b)=(4,6)
\\[0.15cm] 
\repQ_1 \oplus \repQ_2 \oplus 2\repQ_3 \oplus \repQ_4 \oplus 2\repQ_5 \oplus \repQ_6 \oplus \repQ_7
& (a,b)=(4,7)
\\[0.15cm] 
\repQ_2 \oplus \repQ_4 \oplus \repQ_6
& (a,b)=(6,7)
\\[0.15cm] 
\repQ_{\frac12} \oplus \repQ_{\frac52} \oplus \repQ_{\frac72} \oplus \repQ_{\frac92} \oplus \repQ_{\frac{11}2} \oplus \repQ_{\frac{15}2}
& (a,b)=(1,2)
\\[0.15cm] 
\repQ_{\frac32} \oplus \repQ_{\frac52} \oplus \repQ_{\frac72} \oplus \repQ_{\frac92} \oplus \repQ_{\frac{11}2} \oplus \repQ_{\frac{13}2}
& (a,b)=(1,4)
\\[0.15cm] 
\repQ_{\frac52} \oplus \repQ_{\frac{11}2}
& (a,b)=(1,6)
\\[0.15cm] 
\repQ_{\frac32} \oplus \repQ_{\frac72} \oplus \repQ_{\frac92} \oplus \repQ_{\frac{13}2}
& (a,b)=(1,7)
\\[0.15cm] 
\repQ_{\frac12} \oplus 2\repQ_{\frac32} \oplus 2\repQ_{\frac52} \oplus 3\repQ_{\frac72} \oplus 3\repQ_{\frac92} \oplus 2\repQ_{\frac{11}2} \oplus 2\repQ_{\frac{13}2} \oplus \repQ_{\frac{15}2} 
& (a,b)=(2,3)
\\[0.15cm] 
\repQ_{\frac32} \oplus 2\repQ_{\frac52} \oplus \repQ_{\frac72} \oplus \repQ_{\frac92} \oplus 2\repQ_{\frac{11}2} \oplus \repQ_{\frac{13}2}
& (a,b)=(2,5)
\\[0.15cm] 
\repQ_{\frac12} \oplus 2\repQ_{\frac32} \oplus 3\repQ_{\frac52} \oplus 3\repQ_{\frac72} \oplus 3\repQ_{\frac92} \oplus 3\repQ_{\frac{11}2} \oplus 2\repQ_{\frac{13}2} \oplus \repQ_{\frac{15}2} 
& (a,b)=(3,4)
\\[0.15cm] 
\repQ_{\frac32} \oplus \repQ_{\frac52} \oplus \repQ_{\frac72} \oplus \repQ_{\frac92} \oplus \repQ_{\frac{11}2} \oplus \repQ_{\frac{13}2} 
& (a,b)=(3,6)
\\[0.15cm] 
\repQ_{\frac12} \oplus \repQ_{\frac32} \oplus 2\repQ_{\frac52} \oplus 2\repQ_{\frac72} \oplus 2\repQ_{\frac92} \oplus 2\repQ_{\frac{11}2} \oplus \repQ_{\frac{13}2} \oplus \repQ_{\frac{15}2} 
& (a,b)=(3,7)
\\[0.15cm] 
\repQ_{\frac12} \oplus \repQ_{\frac32} \oplus \repQ_{\frac52} \oplus 2\repQ_{\frac72} \oplus 2\repQ_{\frac92} \oplus \repQ_{\frac{11}2} \oplus \repQ_{\frac{13}2} \oplus \repQ_{\frac{15}2} 
& (a,b)=(4,5)
\\[0.15cm] 
\repQ_{\frac12} \oplus \repQ_{\frac72} \oplus \repQ_{\frac92} \oplus \repQ_{\frac{15}2} 
& (a,b)=(5,6)
\\[0.15cm] 
\repQ_{\frac32} \oplus \repQ_{\frac52} \oplus \repQ_{\frac72} \oplus \repQ_{\frac92} \oplus \repQ_{\frac{11}2} \oplus \repQ_{\frac{13}2} 
& (a,b)=(5,7)
\end{array}\right.
\\[0.4cm]\label{eq:Mab.decomp.E8}
\repM_{E_8,\mu,a,b} & \simeq 
\left\{\begin{smallmatrix}
\repQ_0 \oplus \repQ_3 \oplus \repQ_5 \oplus \repQ_6 \oplus \repQ_8 \oplus \repQ_9 \oplus \repQ_{11} \oplus \repQ_{14}
& (a,b)=(1,1)
\\[0.15cm] 
\repQ_0 \oplus \repQ_1 \oplus \repQ_2 \oplus 2\repQ_3 \oplus 2\repQ_4 \oplus 3\repQ_5 \oplus 3\repQ_6 \oplus 2\repQ_7 \oplus 3\repQ_8 \oplus 3\repQ_9 \oplus 2\repQ_{10} \oplus 2\repQ_{11} \oplus \repQ_{12} \oplus \repQ_{13} \oplus \repQ_{14}
& (a,b)=(2,2)
\\[0.15cm] 
\repQ_0 \oplus 2\repQ_1 \oplus 3\repQ_2 \oplus 4\repQ_3 \oplus 5\repQ_4 \oplus 6\repQ_5 \oplus 6\repQ_6 \oplus 6\repQ_7 \oplus 6\repQ_8 \oplus 6\repQ_9 \oplus 5\repQ_{10} \oplus 4\repQ_{11} \oplus 3\repQ_{12} \oplus 2\repQ_{13} \oplus \repQ_{14}
& (a,b)=(3,3)
\\[0.15cm] 
\repQ_0 \oplus \repQ_1 \oplus 2\repQ_2 \oplus 3\repQ_3 \oplus 3\repQ_4 \oplus 4\repQ_5 \oplus 4\repQ_6 \oplus 4\repQ_7 \oplus 4\repQ_8 \oplus 4\repQ_9 \oplus 3\repQ_{10} \oplus 3\repQ_{11} \oplus 2\repQ_{12} \oplus \repQ_{13} \oplus \repQ_{14}
& (a,b)=(4,4)
\\[0.15cm] 
\repQ_0 \oplus \repQ_1 \oplus \repQ_2 \oplus \repQ_3 \oplus 2\repQ_4 \oplus 3\repQ_5 \oplus 2\repQ_6 \oplus 2\repQ_7 \oplus 2\repQ_8 \oplus 3\repQ_9 \oplus 2\repQ_{10} \oplus \repQ_{11} \oplus \repQ_{12} \oplus \repQ_{13} \oplus \repQ_{14}
& (a,b)=(5,5)
\\[0.15cm] 
\repQ_0 \oplus \repQ_1 \oplus \repQ_4 \oplus 2\repQ_5 \oplus \repQ_6 \oplus \repQ_8 \oplus 2\repQ_9 \oplus \repQ_{10} \oplus \repQ_{13} \oplus \repQ_{14}
& (a,b)=(6,6)
\\[0.15cm] 
\repQ_0 \oplus \repQ_5 \oplus \repQ_9 \oplus \repQ_{14}
& (a,b)=(7,7)
\\[0.15cm] 
\repQ_0 \oplus \repQ_2 \oplus \repQ_3 \oplus \repQ_4 \oplus 2\repQ_5 \oplus \repQ_6 \oplus 2\repQ_7 \oplus \repQ_8 \oplus 2\repQ_9 \oplus \repQ_{10} \oplus \repQ_{11} \oplus \repQ_{12} \oplus \repQ_{14}
& (a,b)=(8,8)
\\[0.15cm] 
\repQ_1 \oplus \repQ_2 \oplus \repQ_3 \oplus 2\repQ_4 \oplus 2\repQ_5 \oplus 2\repQ_6 \oplus 2\repQ_7 \oplus 2\repQ_8 \oplus 2\repQ_9 \oplus 2\repQ_{10} \oplus \repQ_{11} \oplus \repQ_{12} \oplus \repQ_{13}
& (a,b)=(1,3)
\\[0.15cm] 
\repQ_2 \oplus \repQ_3 \oplus \repQ_4 \oplus \repQ_5 \oplus \repQ_6 \oplus 2\repQ_7 \oplus \repQ_8 \oplus \repQ_9 \oplus \repQ_{10} \oplus \repQ_{11} \oplus \repQ_{12} 
& (a,b)=(1,5)
\\[0.15cm] 
\repQ_3 \oplus \repQ_6 \oplus \repQ_8 \oplus \repQ_{11}
& (a,b)=(1,7)
\\[0.15cm] 
\repQ_1 \oplus 2\repQ_2 \oplus 2\repQ_3 \oplus 3\repQ_4 \oplus 3\repQ_5 \oplus 3\repQ_6 \oplus 4\repQ_7 \oplus 3\repQ_8 \oplus 3\repQ_9 \oplus 3\repQ_{10} \oplus 2\repQ_{11} \oplus 2\repQ_{12} \oplus \repQ_{13}
& (a,b)=(2,4)
\\[0.15cm] 
\repQ_2 \oplus 2\repQ_3 \oplus \repQ_4 \oplus \repQ_5 \oplus 2\repQ_6 \oplus 2\repQ_7 \oplus 2\repQ_8 \oplus \repQ_9 \oplus \repQ_{10} 2\oplus \repQ_{11} \oplus \repQ_{12}
& (a,b)=(2,6)
\\[0.15cm] 
\repQ_1 \oplus \repQ_2 \oplus \repQ_3 \oplus 2\repQ_4 \oplus 2\repQ_5 \oplus 2\repQ_6 \oplus 2\repQ_7 \oplus 2\repQ_8 \oplus 2\repQ_9 \oplus 2\repQ_{10} \oplus \repQ_{11} \oplus \repQ_{12} \oplus \repQ_{13}
& (a,b)=(2,8)
\\[0.15cm] 
\repQ_1 \oplus 2\repQ_2 \oplus 3\repQ_3 \oplus 3\repQ_4 \oplus 3\repQ_5 \oplus 4\repQ_6 \oplus 4\repQ_7 \oplus 4\repQ_8 \oplus 3\repQ_9 \oplus 3\repQ_{10} \oplus 3\repQ_{11} \oplus 2\repQ_{12} \oplus \repQ_{13}
& (a,b)=(3,5)
\\[0.15cm] 
\repQ_2 \oplus \repQ_3 \oplus \repQ_4 \oplus \repQ_5 \oplus \repQ_6 \oplus 2\repQ_7 \oplus \repQ_8 \oplus \repQ_9 \oplus \repQ_{10} \oplus \repQ_{11} \oplus \repQ_{12} 
& (a,b)=(3,7)
\\[0.15cm] 
\repQ_1 \oplus \repQ_2 \oplus \repQ_3 \oplus 2\repQ_4 \oplus 2\repQ_5 \oplus 2\repQ_6 \oplus 2\repQ_7 \oplus 2\repQ_8 \oplus 2\repQ_9 \oplus 2\repQ_{10} \oplus \repQ_{11} \oplus \repQ_{12} \oplus \repQ_{13}
& (a,b)=(4,6)
\\[0.15cm] 
\repQ_1 \oplus \repQ_2 \oplus 2\repQ_3 \oplus 2\repQ_4 \oplus 2\repQ_5 \oplus 3\repQ_6 \oplus 2\repQ_7 \oplus 3\repQ_8 \oplus 2\repQ_9 \oplus 2\repQ_{10} \oplus 2\repQ_{11} \oplus \repQ_{12} \oplus \repQ_{13}
& (a,b)=(4,8)
\\[0.15cm] 
\repQ_1 \oplus \repQ_4 \oplus \repQ_5 \oplus \repQ_6 \oplus \repQ_8 \oplus \repQ_9 \oplus \repQ_{10} \oplus \repQ_{13}
& (a,b)=(5,7)
\\[0.15cm] 
\repQ_2 \oplus \repQ_3 \oplus \repQ_4 \oplus \repQ_5 \oplus \repQ_6 \oplus 2\repQ_7 \oplus \repQ_8 \oplus \repQ_9 \oplus \repQ_{10} \oplus \repQ_{11} \oplus \repQ_{12} 
& (a,b)=(6,8)
\\[0.15cm] 
\repQ_{\frac12} \oplus \repQ_{\frac52} \oplus \repQ_{\frac72} \oplus \repQ_{\frac92} \oplus 2\repQ_{\frac{11}2} \oplus \repQ_{\frac{13}2} \oplus \repQ_{\frac{15}2} \oplus 2\repQ_{\frac{17}2} \oplus \repQ_{\frac{19}2} \oplus \repQ_{\frac{21}2} \oplus \repQ_{\frac{23}2} \oplus \repQ_{\frac{27}2} 
& (a,b)=(1,2)
\\[0.15cm] 
\repQ_{\frac32} \oplus \repQ_{\frac52} \oplus \repQ_{\frac72} \oplus 2\repQ_{\frac92} \oplus \repQ_{\frac{11}2} \oplus 2\repQ_{\frac{13}2} \oplus 2\repQ_{\frac{15}2} \oplus \repQ_{\frac{17}2} \oplus 2\repQ_{\frac{19}2} \oplus \repQ_{\frac{21}2} \oplus \repQ_{\frac{23}2} \oplus \repQ_{\frac{25}2}
& (a,b)=(1,4)
\\[0.15cm] 
\repQ_{\frac52} \oplus \repQ_{\frac72} \oplus \repQ_{\frac{11}2} \oplus \repQ_{\frac{13}2} \oplus \repQ_{\frac{15}2} \oplus \repQ_{\frac{17}2} \oplus \repQ_{\frac{21}2} \oplus \repQ_{\frac{23}2}
& (a,b)=(1,6)
\\[0.15cm] 
\repQ_{\frac32} \oplus \repQ_{\frac72} \oplus \repQ_{\frac92} \oplus \repQ_{\frac{11}2} \oplus \repQ_{\frac{13}2} \oplus \repQ_{\frac{15}2} \oplus \repQ_{\frac{17}2} \oplus \repQ_{\frac{19}2} \oplus \repQ_{\frac{21}2} \oplus \repQ_{\frac{25}2}
& (a,b)=(1,8)
\\[0.15cm] 
\repQ_{\frac12} \oplus 2\repQ_{\frac32} \oplus 2\repQ_{\frac52} \oplus 3\repQ_{\frac72} \oplus 4\repQ_{\frac92} \oplus 4\repQ_{\frac{11}2} \oplus 4\repQ_{\frac{13}2} \oplus 4\repQ_{\frac{15}2} \oplus 4\repQ_{\frac{17}2} \oplus 4\repQ_{\frac{19}2} \oplus 3\repQ_{\frac{21}2} \oplus 2\repQ_{\frac{23}2} \oplus 2\repQ_{\frac{25}2} \oplus \repQ_{\frac{27}2} 
& (a,b)=(2,3)
\\[0.15cm] 
\repQ_{\frac32} \oplus 2\repQ_{\frac52} \oplus 2\repQ_{\frac72} \oplus 2\repQ_{\frac92} \oplus 2\repQ_{\frac{11}2} \oplus 3\repQ_{\frac{13}2} \oplus 3\repQ_{\frac{15}2} \oplus 2\repQ_{\frac{17}2} \oplus 2\repQ_{\frac{19}2} \oplus 2\repQ_{\frac{21}2} \oplus 2\repQ_{\frac{23}2} \oplus \repQ_{\frac{25}2}
& (a,b)=(2,5)
\\[0.15cm] 
\repQ_{\frac52} \oplus \repQ_{\frac72} \oplus \repQ_{\frac{11}2} \oplus \repQ_{\frac{13}2} \oplus \repQ_{\frac{15}2} \oplus \repQ_{\frac{17}2} \oplus \repQ_{\frac{21}2} \oplus \repQ_{\frac{23}2}
& (a,b)=(2,7)
\\[0.15cm] 
\repQ_{\frac12} \oplus 2\repQ_{\frac32} \oplus 3\repQ_{\frac52} \oplus 4\repQ_{\frac72} \oplus 4\repQ_{\frac92} \oplus 5\repQ_{\frac{11}2} \oplus 5\repQ_{\frac{13}2} \oplus 5\repQ_{\frac{15}2} \oplus 5\repQ_{\frac{17}2} \oplus 4\repQ_{\frac{19}2} \oplus 4\repQ_{\frac{21}2} \oplus 3\repQ_{\frac{23}2} \oplus 2\repQ_{\frac{25}2} \oplus \repQ_{\frac{27}2} 
& (a,b)=(3,4)
\\[0.15cm] 
\repQ_{\frac32} \oplus 2\repQ_{\frac52} \oplus 2\repQ_{\frac72} \oplus 2\repQ_{\frac92} \oplus 2\repQ_{\frac{11}2} \oplus 3\repQ_{\frac{13}2} \oplus 3\repQ_{\frac{15}2} \oplus 2\repQ_{\frac{17}2} \oplus 2\repQ_{\frac{19}2} \oplus 2\repQ_{\frac{21}2} \oplus 2\repQ_{\frac{23}2} \oplus \repQ_{\frac{25}2}
& (a,b)=(3,6)
\\[0.15cm] 
\repQ_{\frac12} \oplus \repQ_{\frac32} \oplus 2\repQ_{\frac52} \oplus 2\repQ_{\frac72} \oplus 3\repQ_{\frac92} \oplus 3\repQ_{\frac{11}2} \oplus 3\repQ_{\frac{13}2} \oplus 3\repQ_{\frac{15}2} \oplus 3\repQ_{\frac{17}2} \oplus 3\repQ_{\frac{19}2} \oplus 2\repQ_{\frac{21}2} \oplus 2\repQ_{\frac{23}2} \oplus \repQ_{\frac{25}2} \oplus \repQ_{\frac{27}2} 
& (a,b)=(3,8),(4,5)
\\[0.15cm] 
\repQ_{\frac32} \oplus \repQ_{\frac72} \oplus \repQ_{\frac92} \oplus \repQ_{\frac{11}2} \oplus \repQ_{\frac{13}2} \oplus \repQ_{\frac{15}2} \oplus \repQ_{\frac{17}2} \oplus \repQ_{\frac{19}2} \oplus \repQ_{\frac{21}2} \oplus \repQ_{\frac{25}2}
& (a,b)=(4,7)
\\[0.15cm] 
\repQ_{\frac12} \oplus \repQ_{\frac32}\oplus \repQ_{\frac72} \oplus 2\repQ_{\frac92} \oplus 2\repQ_{\frac{11}2} \oplus \repQ_{\frac{13}2} \oplus \repQ_{\frac{15}2} \oplus 2\repQ_{\frac{17}2} \oplus 2\repQ_{\frac{19}2} \oplus \repQ_{\frac{21}2} \oplus \repQ_{\frac{25}2} \oplus \repQ_{\frac{27}2} 
& (a,b)=(5,6)
\\[0.15cm] 
\repQ_{\frac32} \oplus \repQ_{\frac52} \oplus 2\repQ_{\frac72} \oplus \repQ_{\frac92} \oplus 2\repQ_{\frac{11}2} \oplus 2\repQ_{\frac{13}2} \oplus 2\repQ_{\frac{15}2} \oplus 2\repQ_{\frac{17}2} \oplus \repQ_{\frac{19}2} \oplus 2\repQ_{\frac{21}2} \oplus \repQ_{\frac{23}2} \oplus \repQ_{\frac{25}2}
& (a,b)=(5,8)
\\[0.15cm] 
\repQ_{\frac12} \oplus \repQ_{\frac92} \oplus \repQ_{\frac{11}2} \oplus \repQ_{\frac{17}2} \oplus \repQ_{\frac{19}2} \oplus \repQ_{\frac{27}2} 
& (a,b)=(6,7)
\\[0.15cm] 
\repQ_{\frac52} \oplus \repQ_{\frac92} \oplus \repQ_{\frac{13}2} \oplus \repQ_{\frac{15}2} \oplus \repQ_{\frac{19}2} \oplus \repQ_{\frac{23}2}
& (a,b)=(7,8)
\end{smallmatrix}\right.
\end{alignat}
\end{subequations}
\endgroup

\subsection{Untwisted conformal partition functions}\label{eq:ZTL.E678}

The conformal partition functions for $E_6$, $E_7$ and $E_8$ with $K=\id$ given in \eqref{eq:ZTL.untwisted} read
\begingroup
\allowdisplaybreaks
\begin{subequations}
\begin{alignat}{2}
Z_{E_6,\mu,a,b,\id}(\qq) & = \left\{\begin{array}{cl}
\chit_{1,1} + \chit_{1,7}
& (a,b)=(1,1), (5,5)
\\[0.15cm] 
\chit_{1,1} + \chit_{1,3} + \chit_{1,5} + 2\chit_{1,7} + \chit_{1,9}
& (a,b)=(2,2), (4,4)
\\[0.15cm] 
\chit_{1,1} + 2\chit_{1,3} + 3\chit_{1,5} + 3\chit_{1,7} + 2\chit_{1,9} + \chit_{1,11}
& (a,b)=(3,3)
\\[0.15cm] 
\chit_{1,1} + \chit_{1,5} + \chit_{1,7} + \chit_{1,11}
& (a,b)=(6,6)
\\[0.15cm] 
\chit_{1,1} + \chit_{1,5} + \chit_{1,7} + \chit_{1,9}
& (a,b)=(1,3),(3,5)
\\[0.15cm] 
\chit_{1,5} + \chit_{1,11}
& (a,b)=(1,5)
\\[0.15cm] 
\chit_{1,3} + 2\chit_{1,5} + \chit_{1,7} + \chit_{1,9} + \chit_{1,11}
& (a,b)=(2,4)
\\[0.15cm] 
\chit_{1,3} + \chit_{1,5} + \chit_{1,7} + \chit_{1,9}
& (a,b)=(2,6)
\\[0.15cm] 
\chit_{1,2} + \chit_{1,6} + \chit_{1,8}
& (a,b)=(1,2),(4,5)
\\[0.15cm] 
\chit_{1,4} + \chit_{1,6} + \chit_{1,10}
& (a,b)=(1,4),(2,5)
\\[0.15cm] 
\chit_{1,4} + \chit_{1,8}
& (a,b)=(1,6),(5,6)
\\[0.15cm] 
\chit_{1,2} + 2\chit_{1,4} + 2\chit_{1,6} + 2\chit_{1,8} + \chit_{1,10}
& (a,b)=(2,3),(3,4)
\\[0.15cm] 
\chit_{1,2} + \chit_{1,4} + 2\chit_{1,\repK_{6}} + \chit_{1,8} + \chit_{1,10}
& (a,b)=(3,6)
\end{array}\right.
\end{alignat}
\end{subequations}

\be 
Z_{E_7,\mu,a,b,\id} =
\left\{\begin{array}{cl}
\chit_{1,1} + \chit_{1,7} + \chit_{1,11} + \chit_{1,17}
& (a,b)=(1,1)
\\[0.15cm] 
\chit_{1,1} + \chit_{1,3} + \chit_{1,5} + 2\chit_{1,7} + 2\chit_{1,9} + 2\chit_{1,11} + \chit_{1,13} + \chit_{1,15} + \chit_{1,17} 
& (a,b)=(2,2)
\\[0.15cm] 
\chit_{1,1} + 2\chit_{1,3} + 3\chit_{1,5} + 4\chit_{1,7} + 4\chit_{1,9} + 4\chit_{1,11} + 3\chit_{1,13} + 2\chit_{1,15} + \chit_{1,17}
& (a,b)=(3,3)
\\[0.15cm] 
\chit_{1,1} + \chit_{1,3} + 2\chit_{1,5} + 2\chit_{1,7} + 3\chit_{1,9} + 2\chit_{1,11} + 2\chit_{1,13} + \chit_{1,15} + \chit_{1,17} 
& (a,b)=(4,4)
\\[0.15cm] 
\chit_{1,1} + \chit_{1,3} + \chit_{1,7} + 2\chit_{1,9} + \chit_{1,11} + \chit_{1,15} + \chit_{1,17} 
& (a,b)=(5,5)
\\[0.15cm] 
\chit_{1,1} + \chit_{1,9} + \chit_{1,17}
& (a,b)=(6,6)
\\[0.15cm] 
\chit_{1,1} + \chit_{1,5} + \chit_{1,7} + \chit_{1,9} + \chit_{1,11} + \chit_{1,13} + \chit_{1,17} 
& (a,b)=(7,7)
\\[0.15cm] 
\chit_{1,3} + \chit_{1,5} + \chit_{1,7} + 2\chit_{1,9} + \chit_{1,11} + \chit_{1,13} + \chit_{1,15}
& (a,b)=(1,3)
\\[0.15cm] 
\chit_{1,5} + \chit_{1,7} + \chit_{1,11} + \chit_{1,13}
& (a,b)=(1,5)
\\[0.15cm] 
\chit_{1,3} + 2\chit_{1,5} + 2\chit_{1,7} + 2\chit_{1,9} + 2\chit_{1,11} + 2\chit_{1,13} + \chit_{1,15}
& (a,b)=(2,4)
\\[0.15cm] 
\chit_{1,5} + \chit_{1,7} + \chit_{1,11} + \chit_{1,13}
& (a,b)=(2,6)
\\[0.15cm] 
\chit_{1,3} + \chit_{1,5} + \chit_{1,7} + 2\chit_{1,9} + \chit_{1,11} + \chit_{1,13} + \chit_{1,15}
& (a,b)=(2,7)
\\[0.15cm] 
\chit_{1,3} + 2\chit_{1,5} + 2\chit_{1,7} + 2\chit_{1,9} + 2\chit_{1,11} + 2\chit_{1,13} + 2\chit_{1,15}
& (a,b)=(3,5)
\\[0.15cm] 
\chit_{1,3} + \chit_{1,7} + \chit_{1,9} + \chit_{1,11} + \chit_{1,15}
& (a,b)=(4,6)
\\[0.15cm] 
\chit_{1,3} + \chit_{1,5} + 2\chit_{1,7} + \chit_{1,9} + 2\chit_{1,11} + \chit_{1,13} + \chit_{1,15}
& (a,b)=(4,7)
\\[0.15cm] 
\chit_{1,5} + \chit_{1,9} + \chit_{1,13}
& (a,b)=(6,7)
\\[0.15cm] 
\chit_{1,2} + \chit_{1,6} + \chit_{1,8} + \chit_{1,10} + \chit_{1,12} + \chit_{1,16}
& (a,b)=(1,2)
\\[0.15cm] 
\chit_{1,4} + \chit_{1,6} + \chit_{1,8} + \chit_{1,10} + \chit_{1,12} + \chit_{1,14}
& (a,b)=(1,4)
\\[0.15cm] 
\chit_{1,6} + \chit_{1,12}
& (a,b)=(1,6)
\\[0.15cm] 
\chit_{1,4} + \chit_{1,8} + \chit_{1,10} + \chit_{1,14}
& (a,b)=(1,7)
\\[0.15cm] 
\chit_{1,2} + 2\chit_{1,4} + 2\chit_{1,6} + 3\chit_{1,8} + 3\chit_{1,10} + 2\chit_{1,12} + 2\chit_{1,14} + \chit_{1,16} 
& (a,b)=(2,3)
\\[0.15cm] 
\chit_{1,4} + 2\chit_{1,6} + \chit_{1,8} + \chit_{1,10} + 2\chit_{1,12} + \chit_{1,14}
& (a,b)=(2,5)
\\[0.15cm] 
\chit_{1,2} + 2\chit_{1,4} + 3\chit_{1,6} + 3\chit_{1,8} + 3\chit_{1,10} + 3\chit_{1,12} + 2\chit_{1,14} + \chit_{1,16} 
& (a,b)=(3,4)
\\[0.15cm] 
\chit_{1,4} + \chit_{1,6} + \chit_{1,8} + \chit_{1,10} + \chit_{1,12} + \chit_{1,14} 
& (a,b)=(3,6)
\\[0.15cm] 
\chit_{1,2} + \chit_{1,4} + 2\chit_{1,6} + 2\chit_{1,8} + 2\chit_{1,10} + 2\chit_{1,12} + \chit_{1,14} + \chit_{1,16} 
& (a,b)=(3,7)
\\[0.15cm] 
\chit_{1,2} + \chit_{1,4} + \chit_{1,6} + 2\chit_{1,8} + 2\chit_{1,10} + \chit_{1,12} + \chit_{1,14} + \chit_{1,16}
& (a,b)=(4,5)
\\[0.15cm] 
\chit_{1,2} + \chit_{1,8} + \chit_{1,10} + \chit_{1,16}
& (a,b)=(5,6)
\\[0.15cm] 
\chit_{1,4} + \chit_{1,6} + \chit_{1,8} + \chit_{1,10} + \chit_{1,12} + \chit_{1,14} 
& (a,b)=(5,7)
\end{array}\right.
\ee
\be
Z_{E_8,\mu,a,b,\id} =
\left\{\begin{smallmatrix}
\chit_{1,1} + \chit_{1,7} + \chit_{1,11} + \chit_{1,13} + \chit_{1,17} + \chit_{1,19} + \chit_{1,23} + \chit_{1,29}
& (a,b)=(1,1)
\\[0.15cm] 
\chit_{1,1} + \chit_{1,3} + \chit_{1,5} + 2\chit_{1,7} + 2\chit_{1,9} + 3\chit_{1,11} + 3\chit_{1,13} + 2\chit_{1,15} 
\\[0.05cm]+ 3\chit_{1,17}+ 3\chit_{1,19} + 2\chit_{1,21} + 2\chit_{1,23} + \chit_{1,25} + \chit_{1,27} + \chit_{1,29}
& (a,b)=(2,2)
\\[0.15cm] 
\chit_{1,1} + 2\chit_{1,3} + 3\chit_{1,5} + 4\chit_{1,7} + 5\chit_{1,9} + 6\chit_{1,11} + 6\chit_{1,13} 
\\[0.05cm] + 6\chit_{1,15} + 6\chit_{1,17} + 6\chit_{1,19} + 5\chit_{1,21} + 4\chit_{1,23} + 3\chit_{1,25} + 2\chit_{1,27} + \chit_{1,29}
& (a,b)=(3,3)
\\[0.15cm] 
\chit_{1,1} + \chit_{1,3} + 2\chit_{1,5} + 3\chit_{1,7} + 3\chit_{1,9} + 4\chit_{1,11} + 4\chit_{1,13} + 4\chit_{1,15} +
\\[0.05cm] 4\chit_{1,17} + 4\chit_{1,19} + 3\chit_{1,21} + 3\chit_{1,23} + 2\chit_{1,25} + \chit_{1,27} + \chit_{1,29}
& (a,b)=(4,4)
\\[0.15cm] 
\chit_{1,1} + \chit_{1,3} + \chit_{1,5} + \chit_{1,7} + 2\chit_{1,9} + 3\chit_{1,11} + 2\chit_{1,13} 
\\[0.05cm]
+ 2\chit_{1,15} + 2\chit_{1,17} + 3\chit_{1,19} + 2\chit_{1,21} + \chit_{1,23} + \chit_{1,25} + \chit_{1,27} + \chit_{1,29}
& (a,b)=(5,5)
\\[0.15cm] 
\chit_{1,1} + \chit_{1,3} + \chit_{1,9} + 2\chit_{1,11} + \chit_{1,13} + \chit_{1,17} + 2\chit_{1,19} + \chit_{1,21} + \chit_{1,27} + \chit_{1,29}
& (a,b)=(6,6)
\\[0.15cm] 
\chit_{1,1} + \chit_{1,11} + \chit_{1,19} + \chit_{1,29}
& (a,b)=(7,7)
\\[0.15cm] 
\chit_{1,1} + \chit_{1,5} + \chit_{1,7} + \chit_{1,9} + 2\chit_{1,11} + \chit_{1,13} + 2\chit_{1,15} 
\\[0.15cm]
+ \chit_{1,17} + 2\chit_{1,19} + \chit_{1,21} + \chit_{1,23} + \chit_{1,25} + Q_{14}
& (a,b)=(8,8)
\\[0.15cm] 
\chit_{1,3} + \chit_{1,5} + \chit_{1,7} + 2\chit_{1,9} + 2\chit_{1,11} + 2\chit_{1,13} + 2\chit_{1,15} 
\\[0.15cm]
+ 2\chit_{1,17} + 2\chit_{1,19} + 2\chit_{1,21} + \chit_{1,23} + \chit_{1,25} + \chit_{1,27}
& (a,b)=(1,3)
\\[0.15cm] 
\chit_{1,5} + \chit_{1,7} + \chit_{1,9} + \chit_{1,11} + \chit_{1,13} + 2\chit_{1,15} + \chit_{1,17} + \chit_{1,19} + \chit_{1,21} + \chit_{1,23} + \chit_{1,25} 
& (a,b)=(1,5)
\\[0.15cm] 
\chit_{1,7} + \chit_{1,13} + \chit_{1,17} + \chit_{1,23}
& (a,b)=(1,7)
\\[0.15cm] 
\chit_{1,3} + 2\chit_{1,5} + 2\chit_{1,7} + 3\chit_{1,9} + 3\chit_{1,11} + 3\chit_{1,13} + 4\chit_{1,15} 
\\[0.15cm]
+ 3\chit_{1,17} + 3\chit_{1,19} + 3\chit_{1,21} + 2\chit_{1,23} + 2\chit_{1,25} + \chit_{1,27}
& (a,b)=(2,4)
\\[0.15cm] 
\chit_{1,5} + 2\chit_{1,7} + \chit_{1,9} + \chit_{1,11} + 2\chit_{1,13} + 2\chit_{1,15} + 2\chit_{1,17} + \chit_{1,19} + \chit_{1,21} 2+ \chit_{1,23} + \chit_{1,25}
& (a,b)=(2,6)
\\[0.15cm] 
\chit_{1,3} + \chit_{1,5} + \chit_{1,7} + 2\chit_{1,9} + 2\chit_{1,11} + 2\chit_{1,13} + 2\chit_{1,15} 
\\[0.05cm]
+ 2\chit_{1,17} + 2\chit_{1,19} + 2\chit_{1,21} + \chit_{1,23} + \chit_{1,25} + \chit_{1,27}
& (a,b)=(2,8)
\\[0.15cm] 
\chit_{1,3} + 2\chit_{1,5} + 3\chit_{1,7} + 3\chit_{1,9} + 3\chit_{1,11} + 4\chit_{1,13} + 4\chit_{1,15} 
\\[0.05cm]+ 4\chit_{1,17} + 3\chit_{1,19} + 3\chit_{1,21} + 3\chit_{1,23} + 2\chit_{1,25} + \chit_{1,27}
& (a,b)=(3,5)
\\[0.15cm] 
\chit_{1,5} + \chit_{1,7} + \chit_{1,9} + \chit_{1,11} + \chit_{1,13} + 2\chit_{1,15} + \chit_{1,17} + \chit_{1,19} + \chit_{1,21} + \chit_{1,23} + \chit_{1,25} 
& (a,b)=(3,7)
\\[0.15cm] 
\chit_{1,3} + \chit_{1,5} + \chit_{1,7} + 2\chit_{1,9} + 2\chit_{1,11} + 2\chit_{1,13} + 2\chit_{1,15} 
\\[0.05cm]+ 2\chit_{1,17} + 2\chit_{1,19} + 2\chit_{1,21} + \chit_{1,23} + \chit_{1,25} + \chit_{1,27}
& (a,b)=(4,6)
\\[0.15cm] 
\chit_{1,3} + \chit_{1,5} + 2\chit_{1,7} + 2\chit_{1,9} + 2\chit_{1,11} + 3\chit_{1,13} + 2\chit_{1,15} 
\\[0.05cm]+ 3\chit_{1,17} + 2\chit_{1,19} + 2\chit_{1,21} + 2\chit_{1,23} + \chit_{1,25} + \chit_{1,27}
& (a,b)=(4,8)
\\[0.15cm] 
\chit_{1,3} + \chit_{1,9} + \chit_{1,11} + \chit_{1,13} + \chit_{1,17} + \chit_{1,19} + \chit_{1,21} + \chit_{1,27}
& (a,b)=(5,7)
\\[0.15cm] 
\chit_{1,5} + \chit_{1,7} + \chit_{1,9} + \chit_{1,11} + \chit_{1,13} + 2\chit_{1,15} + \chit_{1,17} + \chit_{1,19} + \chit_{1,21} + \chit_{1,23} + \chit_{1,25} 
& (a,b)=(6,8)
\end{smallmatrix}\right.
\ee
for $N$ even, and
\be
Z_{E_8,\mu,a,b,\id} = 
\left\{\begin{smallmatrix}
\chit_{1,2} + \chit_{1,6} + \chit_{1,8} + \chit_{1,10} + 2\chit_{1,12} + \chit_{1,14} 
\\[0.05cm]
+ \chit_{1,16} + 2\chit_{1,18} + \chit_{1,20} + \chit_{1,22} + \chit_{1,24} + \chit_{1,28} 
& (a,b)=(1,2)
\\[0.15cm] 
\chit_{1,4} + \chit_{1,6} + \chit_{1,8} + 2\chit_{1,10} + \chit_{1,12} + 2\chit_{1,14} 
\\[0.05cm]
+ 2\chit_{1,16} + \chit_{1,18} + 2\chit_{1,20} + \chit_{1,22} + \chit_{1,24} + \chit_{1,26}
& (a,b)=(1,4)
\\[0.15cm] 
\chit_{1,6} + \chit_{1,8} + \chit_{1,12} + \chit_{1,14} + \chit_{1,16} + \chit_{1,18} + \chit_{1,22} + \chit_{1,24}
& (a,b)=(1,6)
\\[0.15cm] 
\chit_{1,4} + \chit_{1,8} + \chit_{1,10} + \chit_{1,12} + \chit_{1,14} + \chit_{1,16} + \chit_{1,18} + \chit_{1,20} + \chit_{1,22} + \chit_{1,26}
& (a,b)=(1,8)
\\[0.15cm] 
\chit_{1,2} + 2\chit_{1,4} + 2\chit_{1,6} + 3\chit_{1,8} + 4\chit_{1,10} + 4\chit_{1,12} + 4\chit_{1,14} 
\\[0.05cm]
+ 4\chit_{1,16} + 4\chit_{1,18} + 4\chit_{1,20} + 3\chit_{1,22} + 2\chit_{1,24} + 2\chit_{1,26} + \chit_{1,28} 
& (a,b)=(2,3)
\\[0.15cm] 
\chit_{1,4} + 2\chit_{1,6} + 2\chit_{1,8} + 2\chit_{1,10} + 2\chit_{1,12} + 3\chit_{1,14} 
\\[0.05cm]
+ 3\chit_{1,16} + 2\chit_{1,18} + 2\chit_{1,20} + 2\chit_{1,22} + 2\chit_{1,24} + \chit_{1,26}
& (a,b)=(2,5)
\\[0.15cm] 
\chit_{1,6} + \chit_{1,8} + \chit_{1,12} + \chit_{1,14} + \chit_{1,16} + \chit_{1,18} + \chit_{1,22} + \chit_{1,24}
& (a,b)=(2,7)
\\[0.15cm] 
\chit_{1,2} + 2\chit_{1,4} + 3\chit_{1,6} + 4\chit_{1,8} + 4\chit_{1,10} + 5\chit_{1,12} 
\\[0.05cm]
+ 5\chit_{1,14} + 5\chit_{1,16} + 5\chit_{1,18} + 4\chit_{1,20} + 4\chit_{1,22} + 3\chit_{1,24} + 2\chit_{1,26} + \chit_{1,28} 
& (a,b)=(3,4)
\\[0.15cm] 
\chit_{1,4} + 2\chit_{1,6} + 2\chit_{1,8} + 2\chit_{1,10} + 2\chit_{1,12} + 3\chit_{1,14} 
\\[0.05cm]
+ 3\chit_{1,16} + 2\chit_{1,18} + 2\chit_{1,20} + 2\chit_{1,22} + 2\chit_{1,24} + \chit_{1,26}
& (a,b)=(3,6)
\\[0.15cm] 
\chit_{1,2} + \chit_{1,4} + 2\chit_{1,6} + 2\chit_{1,8} + 3\chit_{1,10} + 3\chit_{1,12} + 3\chit_{1,14} 
\\[0.05cm]
+ 3\chit_{1,16} + 3\chit_{1,18} + 3\chit_{1,20} + 2\chit_{1,22} + 2\chit_{1,24} + \chit_{1,26} + \chit_{1,28} 
& (a,b)=(3,8),(4,5)
\\[0.15cm] 
\chit_{1,4} + \chit_{1,8} + \chit_{1,10} + \chit_{1,12} + \chit_{1,14} + \chit_{1,16} + \chit_{1,18} + \chit_{1,20} + \chit_{1,22} + \chit_{1,26}
& (a,b)=(4,7)
\\[0.15cm] 
\chit_{1,2} + \chit_{1,4}+ \chit_{1,8} + 2\chit_{1,10} + 2\chit_{1,12} + \chit_{1,14} 
\\[0.05cm]
+ \chit_{1,16} + 2\chit_{1,18} + 2\chit_{1,20} + \chit_{1,22} + \chit_{1,26} + \chit_{1,28} 
& (a,b)=(5,6)
\\[0.15cm] 
\chit_{1,4} + \chit_{1,6} + 2\chit_{1,8} + \chit_{1,10} + 2\chit_{1,12} + 2\chit_{1,14} 
\\[0.05cm]
+ 2\chit_{1,16} + 2\chit_{1,18} + \chit_{1,20} + 2\chit_{1,22} + \chit_{1,24} + \chit_{1,26}
& (a,b)=(5,8)
\\[0.15cm] 
\chit_{1,2} + \chit_{1,10} + \chit_{1,12} + \chit_{1,18} + \chit_{1,20} + \chit_{1,28} 
& (a,b)=(6,7)
\\[0.15cm] 
\chit_{1,6} + \chit_{1,10} + \chit_{1,14} + \chit_{1,16} + \chit_{1,20}+ \chit_{1,24}
& (a,b)=(7,8)
\end{smallmatrix}\right.
\ee
for $N$ odd.
\endgroup


\end{document}